\documentclass[a4paper,UKenglish,thm-restate,autoref,cleveref,numberwithinsect,pdfa]{lipics-v2021}

\pdfoutput=1
\hideLIPIcs
\nolinenumbers
\allowdisplaybreaks
\bibliographystyle{plainurl}

\usepackage{mathtools}
\usepackage{xfrac}
\usepackage{tikz}
\usetikzlibrary{arrows, automata}

%macros
\newcommand{\prob}{P}
\newcommand{\xfrac}[2]{%
	\mbox{\raisebox{0.3ex}{\ensuremath{\displaystyle #1}\hspace{-0.2ex}}%
		/%
		\raisebox{-0.3ex}{\footnotesize{\ensuremath{#2}}}%  
	}%
}
\newcommand{\Until}{\! \mbox{$\, {\sf U}\,$} \!}
\newcommand{\Next}{\bigcirc}

%metadata
\title{A Spectrum of Approximate Probabilistic Bisimulations} 

\author{Timm Spork}{Technische Universität Dresden, Dresden, Germany}{timm.spork@tu-dresden.de}{https://orcid.org/0009-0008-4461-0667}{}

\author{Christel Baier}{Technische Universität Dresden, Dresden, Germany}{christel.baier@tu-dresden.de}{https://orcid.org/0000-0002-5321-9343}{}

\author{Joost-Pieter Katoen}{RWTH Aachen University, Aachen, Germany}{katoen@cs.rwth-aachen.de}{https://orcid.org/0000-0002-6143-1926}{}

\author{Jakob Piribauer}{Technische Universität Dresden, Dresden, Germany \and Universität Leipzig, Leipzig, Germany}{jakob.piribauer@tu-dresden.de}{https://orcid.org/0000-0003-4829-0476}{}

\author{Tim Quatmann}{RWTH Aachen University, Aachen, Germany}{tim.quatmann@cs.rwth-aachen.de}{https://orcid.org/0000-0002-2843-5511}{This research was funded by a KI-Starter grant from the Ministerium f\"ur Kultur und Wissenschaft NRW.}
	
\authorrunning{T. Spork, C. Baier, J.-P. Katoen, J. Piribauer and T. Quatmann} 

\Copyright{Timm Spork, Christel Baier, Joost-Pieter Katoen, Jakob Piribauer, and Tim Quatmann}

\ccsdesc[500]{Theory of computation~Logic and verification}
\ccsdesc[500]{Theory of computation~Abstraction}
\ccsdesc[100]{Theory of computation~Random walks and Markov chains}

\keywords{Markov chains, Approximate bisimulation, Abstraction, Model checking}

\relatedversion{This paper is the full version of a paper accepted for publication at CONCUR 2024} 

\funding{\textit{Christel Baier, Jakob Piribauer} and \textit{Timm Spork}: This work was partly funded by the DFG Grant 389792660 as part of TRR 248 (Foundations of Perspicuous Software Systems) and the Cluster of Excellence EXC 2050/1 (CeTI, project ID 390696704, as part of Germany’s Excellence Strategy).}

\acknowledgements{We thank the reviewers for their helpful feedback, comments and suggestions. In particular, we thank the reviewer who pointed out the work in \cite{FMOSCSMDP,RDPTLCSS} on approximate simulation relations which we were previously not aware of.}

\begin{document}

	%%%%%%%%%%%%%%%%%%%%%%%%%%%%%%%%	
	
	\maketitle

	%%%%%%%%%%%%%%%%%%%%%%%%%%%%%%%%	
	
	\begin{abstract}
		This paper studies various notions of approximate probabilistic bisimulation on labeled Markov chains (LMCs). We introduce approximate versions of weak and branching bisimulation, as well as a notion of $\varepsilon$-perturbed bisimulation that relates LMCs that can be made (exactly) probabilistically bisimilar by small perturbations of their transition probabilities. We explore how the notions interrelate and establish their connections to other well-known notions like $\varepsilon$-bisimulation. 
	\end{abstract}

	%%%%%%%%%%%%%%%%%%%%%%%%%%%%%%%%	
	
	\section{Introduction}\label{Section: Introduction}
	Probabilistic model checking is widely used for the automatic verification of probabilistic models, like labeled Markov chains (LMC), against properties specified in (temporal) logics like $\mathrm{PCTL}^*$ \cite{PoMC}. In practice, a big obstacle is the \emph{state space explosion problem}: the number of states required to model a system can make its verification intractable \cite{BMMSUPMC,PoMC,PMCL}.  
	
	To circumvent this issue, a well-established approach is the use of abstractions. For a given LMC $\mathcal{M}$, an abstraction $\mathcal{A}$ is a model derived from $\mathcal{M}$ that is (oftentimes) smaller than $\mathcal{M}$ and preserves some properties of interest. Instead of verifying a formula on $\mathcal{M}$, one does so on $\mathcal{A}$ and afterwards transfers the result back to the original model \cite{PoMC,RPCTLMC,ABM}.
	
	A prominent type of abstraction are probabilistic bisimulation quotients. They are constructed w.r.t. probabilistic bisimulations, a class of behavioral equivalences introduced by Larsen and Skou \cite{BTPT} as an extension of Milner's bisimulation \cite{CCS} to probabilistic models. A probabilistic bisimulation is an equivalence $R$ on the state space of an LMC $\mathcal{M}$ that only relates states that behave exactly the same, i.e., that have the same local properties, and transition to $R$-equivalence classes with equal probability. The coarsest probabilistic bisimulation $\sim$, called (probabilistic) bisimilarity, is the union of all probabilistic bisimulations in $\mathcal{M}$ \cite{PoMC}. The bisimilarity relation can be computed efficiently \cite{PTATPBS,OSSLMC,SOTMCL} and preserves $\mathrm{PCTL}^*$ state formulas \cite{IUWTLSS,LRTR}. Since verifying $\mathrm{PCTL}^*$ on bisimulation quotients can significantly speed up the verification process \cite{BMMSUPMC}, their use is a vital part of probabilistic model checkers such as, e.g., \textsc{Storm} \cite{PMCS}. 
	
	Other notions of behavioral equivalence are \emph{weak} and \emph{branching} probabilistic bisimulations \cite{CC,BTABS,WBFPP,BSRMC,GBA,CABBSPLPC}, which were introduced  with the intention to abstract from sequences of \emph{internal actions} or \emph{stutter steps} a model can perform. Intuitively, these notions can abstract from the possibility of a state to, for some time, only visit equally labeled states (weak) or stay in its own equivalence class (branching) \cite{NLTAWBMC}. 
	It is well-known that weak and branching probabilistic bisimilarity, denoted $\approx^w$ and $\approx^b$, respectively, coincide for LMCs \cite{WBFPP}, and that they characterize satisfaction equivalence for a variant of $\mathrm{PCTL}^*$ \cite{WBSCPCTL}.
	
	A problem with all of the above notions lies, however, in their lack of robustness against errors in the transition probabilities. The requirement of related states to have \emph{exactly} the same transition probabilities to equivalence classes implies that even an infinitesimally small perturbation of any of these probabilities can cause two bisimilar states to become non-bisimilar, resulting in larger quotients \cite{MfLMS,CISBCMC,RPCTLMC}. 
	This disadvantage was first observed in \cite{ARfPCS}, where the use of \emph{approximate} notions of bisimulation is suggested for its mitigation. 
	
	The literature proposes various types of approximate bisimilarity, the most well-known and well-studied one being $\varepsilon$-bisimilarity ($\sim_\varepsilon$) \cite{AAPP}. Other notions include approximate probabilistic bisimilarity with precision $\varepsilon$ ($\equiv_\varepsilon$), or $\varepsilon$-APB for short \cite{RPCTLMC,AMBPBGSSMP, PMCLMPFAB},  up-to-$(n, \varepsilon)$-bisimilarity ($\sim_\varepsilon^n$) \cite{AAPP,SAAPBPCTL}, or $\varepsilon$-lumpability of a given LMC \cite{EOLFMC,BQLMC,CRQLMC}. Here, we propose definitions for approximate versions of weak ($\approx_\varepsilon^w$) and branching probabilistic bisimilarity ($\approx_\varepsilon^b$). Similar notions have, to the best of our knowledge, only been discussed sporadically in the context of noninterference under the term ``weak bisimulation with precision $\varepsilon$'' \cite{TFAANP,SAPNRP,PAAAPN,QANPS,MCPS,NAWPB}. Moreover, we introduce $\varepsilon$-perturbed bisimilarity ($\simeq_\varepsilon$) which relates two LMCs if they can be made bisimilar by small perturbations of their transition probabilities. Implicitly, this relation arises in the work \cite{ABM} on a type of abstraction called $\varepsilon$-quotients. With our definition, two LMCs are $\varepsilon$-perturbed bisimilar iff they have bisimilar $\varepsilon$-quotients. 
	
	All of the approximate notions have in common that they allow a small tolerance, say $\varepsilon > 0$, in the transition probabilities of related states, but differ in the specifics of where and how this tolerance is put to use. Broadly speaking, we can distinguish two groups of relations: while $\sim_\varepsilon$, $\equiv_\varepsilon, \sim_\varepsilon^n$ and $\approx_\varepsilon^w$ are additive in their tolerances and are closer to classic process relations, the notions underlying $\sim_\varepsilon^*$ and $\equiv_\varepsilon^*$, denoting \emph{transitive} $\varepsilon$-bisimilarity and \emph{transitive} $\varepsilon$-APB, respectively, as well as $\simeq_\varepsilon$  and $\approx_\varepsilon^b$ are better suited for the construction of abstractions since they are required to be equivalences. Collapsing the equivalence classes of such a relation into single states yields quotient models, which in some cases are such that formulas given in specific (fragments of) logics are (approximately) preserved between the original LMC and its quotient. However, as we will see later, requiring transitivity can cause some unnatural behavior like the possibility to distinguish probabilistically bisimilar LMCs and a lack of additivity. Furthermore, the induced bisimilarity relations, which are again defined as the union of all corresponding relations in the model $\mathcal{M}$ (e.g., $\approx_\varepsilon^b$ is the union of all branching $\varepsilon$-bisimulations in $\mathcal{M}$) might themselves not be of the respective type anymore (e.g., $\approx_\varepsilon^b$ is not necessarily a branching $\varepsilon$-bisimulation). This contrasts the non-transitive case, where the induced bisimilarity relations are always of the respective type. We summarize the relations we consider, together with some of their properties, in \Cref{Table}. 
	\begin{table}[t]
		\caption{Overview of the notions of approximate bisimulation we consider and some of their properties. Being suitable for ``quotienting'' is meant w.r.t. the underlying bisimulation relation.}
		\centering
		\resizebox{!}{0.09\textheight}{
			\begin{tabular}{l c c c c}
				Notion & Symbol for Union&& Additive & Quotienting \\ \hline
				$\varepsilon$-Bisimulation \cite{AAPP,RBBTEAPC} & $\sim_\varepsilon$ & \multirow{4}{*}{$\left.\begin{array}{l}
						\\
						\\
						\\
						\\
					\end{array}\right\rbrace$} & \multirow{4}{*}{\checkmark} & \multirow{4}{*}{$\times$} \\
				$\varepsilon$-APB \cite{RPCTLMC,AMBPBGSSMP,PMCLMPFAB} & $\equiv_\varepsilon$ &&  & \\
				Up-To-$(n, \varepsilon)$-Bisimulation \cite{AAPP,SAAPBPCTL} & $\sim_\varepsilon^n$ && & \\
				Weak $\varepsilon$-Bisimulation & $\approx_\varepsilon^w$  && &  \\ \hline
				Transitive $\varepsilon$-Bisimulation & $\sim_\varepsilon^*$  &\multirow{4}{*}{$\left.\begin{array}{l}
						\\
						\\
						\\
						\\
					\end{array}\right\rbrace$}& \multirow{4}{*}{$\times$} & \multirow{4}{*}{\checkmark} \\
				Transitive $\varepsilon$-APB & $\equiv_\varepsilon^*$ && &\\
				Branching $\varepsilon$-Bisimulation & $\approx_\varepsilon^b$ &&&\\
				$\varepsilon$-Perturbed Bisimulation \cite{ABM} & $\simeq_\varepsilon$ &&  &  
		\end{tabular}}
		\label{Table}
	\end{table}
	
	\smallskip
	\noindent
	\textbf{Main Contributions.} The main contributions are as follows: 
	\begin{enumerate}
		\item Starting with the classic notion of $\varepsilon$-bisimilarity, we show tightness of a bound from \cite{RDPTLCSS} on the absolute difference of unbounded reachability probabilities in $\varepsilon$-bisimilar states (\Cref{ex:unbounded_reach}). 
		\item We introduce \emph{$\varepsilon$-perturbed bisimilarity} ($\simeq_\varepsilon$), a notion that relates two LMCs if they have bisimilar $\varepsilon$-quotients á la \cite{ABM}, i.e., if they can be made probabilistically bisimilar by small perturbations of their transition probabilities. We show that $\simeq_\varepsilon$ is strictly finer than (transitive) $\varepsilon$-bisimilarity $\sim_\varepsilon^{(*)}$ (\Cref{Corollary: Perturbed Finer Than Transitive,Theorem: Epsilon-Bisimulation does not imply existence of common 1/4-quotient}) and that deciding both $\simeq_\varepsilon$ and $\sim_\varepsilon^*$ is \textsf{NP}-complete (\Cref{Theorem: NP-completeness of common epsilon-quotient}). Furthermore, we characterize $\simeq_\varepsilon$ in terms of transitive $\varepsilon$-bisimulations satisfying a \emph{centroid property} (\Cref{Theorem: Characterisation Perturbed Epsilon Bisimilar and Transitive Epsilon Bisimulation}) and discuss some anomalies of $\simeq_\varepsilon$: the relation is not always an $\varepsilon$-perturbed bisimulation itself, it is not additive in $\varepsilon$ and it can distinguish bisimilar LMCs (\Cref{Proposition: Simeq can distinguish bisimilar states}). 
		\item We define approximate versions of weak ($\approx^w_\varepsilon$) and branching probabilistic bisimilarity ($\approx_\varepsilon^b$). Our definitions can be evaluated locally and coincide with the exact notions $\approx^b$ and $\approx^w$, respectively, if $\varepsilon = 0$. We discuss how $\approx_\varepsilon^w$ and $\approx_\varepsilon^b$ are related to one another, as well as to $\varepsilon$-bisimilarity (\Cref{Proposition: Branching and Weak are incomparable,Proposition: Weak and Branching Epsilon-Bisimulation and Epsilon-Bisimulation are incomparable}). Moreover, we extend the bounds for reachability probabilities of  \Cref{thm:unbounded_reach} to states related by $\approx_\varepsilon^w$ and $\approx_\varepsilon^b$ (\Cref{Corollary: Unbounded Reach for Branching Epsilon Bisim,Corollary: Bound for weak bisim}), and prove that deciding $\approx_\varepsilon^b$ is \textsf{NP}-complete (\Cref{Theorem: Deciding Branching Bisimilarity is NP-complete}).
	\end{enumerate} 
	
	Together with various known results from the literature and some easy observations, our results complete the relation between several notions of approximate probabilistic bisimulation, as summarized in \Cref{Figure: Visualization of Results}. 
	
	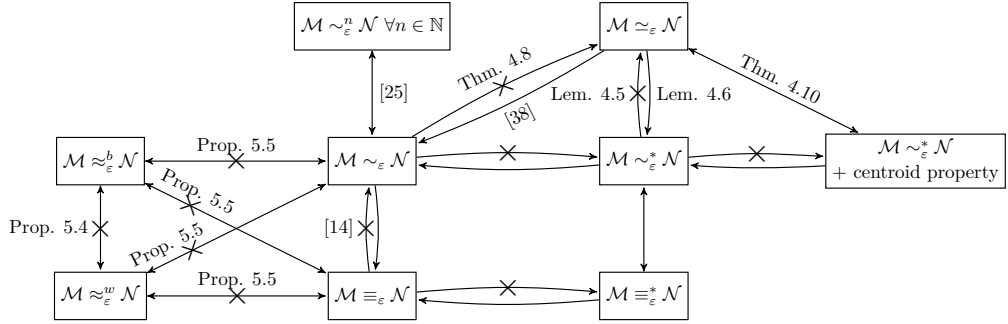
\begin{figure}[t!]
		\centering
		\resizebox{!}{0.19\textheight}{
			\begin{tikzpicture}[->,>=stealth',shorten >=1pt,auto, semithick]
				\tikzstyle{every state} = [text = black]
				\node[state, rectangle, draw] (epsilon-bisimulation) [] {$\mathcal{M} \sim_\varepsilon \mathcal{N}$}; 
				\node[state, rectangle, draw] (transitive-epsilon-bisimulation) [right of = epsilon-bisimulation, node distance = 5cm] {$\mathcal{M} \sim_\varepsilon^* \mathcal{N}$};
				\node[state, rectangle, draw, align = center] (transitive-epsilon-bisimulation-tau-star) [right of = transitive-epsilon-bisimulation, node distance = 5cm] {$\mathcal{M} \sim_\varepsilon^* \mathcal{N}$ \\$+ \text{ centroid property}$}; 
				\node[state, rectangle, draw, align=left] (epsilon-APB) [below of = epsilon-bisimulation, node distance = 2.5cm] {$\mathcal{M} \equiv_\varepsilon \mathcal{N}$};
				\node[state, rectangle, draw, align=left] (transitive-epsilon-APB) [right of = epsilon-APB, node distance = 5cm] {$\mathcal{M} \equiv_\varepsilon^* \mathcal{N}$};
				\node[state, rectangle, draw, align=left] (up-to-bisimulation) [above of = epsilon-bisimulation, node distance = 2.5cm] {$\mathcal{M} \sim_\varepsilon^n \mathcal{N} \, \, \forall n \in \mathbb{N}$};
				\node[state, rectangle, draw, align=left] (perturbed-epsilon-bisimilar) [above of = transitive-epsilon-bisimulation, node distance = 2.5cm] {$\mathcal{M} \simeq_\varepsilon \mathcal{N}$};	
				\node[state, rectangle, draw, align=left] (branching-epsilon-bisimulation) [left of = epsilon-bisimulation, node distance = 5cm] {$\mathcal{M} \approx_\varepsilon^b \mathcal{N}$};
				\node[state, rectangle, draw, align=left] (weak-epsilon-bisimulation) [left of = epsilon-APB, node distance = 5cm] {$\mathcal{M} \approx_\varepsilon^w \mathcal{N}$};
				
				\path[<->]
				(up-to-bisimulation) edge node [right] {\cite{AAPP}}(epsilon-bisimulation)
				(perturbed-epsilon-bisimilar) edge node [above, sloped, pos = 0.5] {Thm. \ref{Theorem: Characterisation Perturbed Epsilon Bisimilar and Transitive Epsilon Bisimulation}} (transitive-epsilon-bisimulation-tau-star)
				(transitive-epsilon-APB) edge (transitive-epsilon-bisimulation)
				
				(weak-epsilon-bisimulation) edge [] node [above, sloped, pos = 0.15] {Prop. \ref{Proposition: Weak and Branching Epsilon-Bisimulation and Epsilon-Bisimulation are incomparable}} node [auto = false, sloped, pos = 0.25] {\LARGE$\times$} (epsilon-bisimulation)
				
				(weak-epsilon-bisimulation) edge [] node [above] {Prop. \ref{Proposition: Weak and Branching Epsilon-Bisimulation and Epsilon-Bisimulation are incomparable}} node [auto = false, sloped, pos = 0.5] {\LARGE$\times$} (epsilon-APB)
				
				(branching-epsilon-bisimulation) edge [] node [above] {Prop. \ref{Proposition: Weak and Branching Epsilon-Bisimulation and Epsilon-Bisimulation are incomparable}} node [auto = false, sloped, pos = 0.5] {\LARGE$\times$} (epsilon-bisimulation)
				
				(branching-epsilon-bisimulation) edge [] node [above, sloped, pos = 0.25] {Prop. \ref{Proposition: Weak and Branching Epsilon-Bisimulation and Epsilon-Bisimulation are incomparable}} node [auto = false, sloped, pos = 0.25] {\LARGE$\times$} (epsilon-APB)
				
				(branching-epsilon-bisimulation) edge [] node [left, xshift = -0.1cm] {Prop. \ref{Proposition: Branching and Weak are incomparable}} node [auto = false, sloped] {\LARGE$\times$} (weak-epsilon-bisimulation)
				;
				
				\path
				(epsilon-bisimulation) edge [bend left = 5] node [sloped,pos = 0.5, above, yshift = 0.1cm] {Thm. \ref{Theorem: No common nepsilon quotient with graph isomorphism}} node [auto = false, sloped] {\LARGE$\times$} (perturbed-epsilon-bisimilar)
				
				(epsilon-bisimulation) edge [bend left = 7] node [pos = 0.5] {} (epsilon-APB)
				
				(epsilon-APB) edge [bend left = 7] node [left, xshift = -0.1cm] {\cite{RBBTEAPC}} node [auto = false, sloped, pos = 0.5] {\LARGE$\times$} (epsilon-bisimulation)
				
				(epsilon-APB) edge [bend left = 5] node [auto = false, sloped, pos = 0.5] {\LARGE$\times$} (transitive-epsilon-APB)
				
				(transitive-epsilon-APB) edge [bend left = 5] (epsilon-APB)
				
				(epsilon-bisimulation) edge [bend left = 5] node [auto = false, sloped, pos = 0.5] {\LARGE$\times$} (transitive-epsilon-bisimulation)
				
				(transitive-epsilon-bisimulation) edge [bend left = 5] (epsilon-bisimulation)
				
				(transitive-epsilon-bisimulation) edge [out = 5, in = 177, looseness = 1.15] node [auto = false, sloped, pos = 0.5] {\LARGE$\times$} (transitive-epsilon-bisimulation-tau-star)
				
				(transitive-epsilon-bisimulation-tau-star) edge [out = 183, in = 355, looseness = 0.85] (transitive-epsilon-bisimulation)
				
				(perturbed-epsilon-bisimilar) edge [bend left = 5] node [sloped, below] {\cite{ABM}} (epsilon-bisimulation)
				
				(transitive-epsilon-bisimulation) edge [bend left = 7] node [left, xshift = -0.1cm] {Lem. \ref{Corollary: Perturbed Epsilon Bisim implies Epsilon Bisim}} node [auto = false, sloped, pos = 0.5] {\LARGE$\times$} (perturbed-epsilon-bisimilar)
				
				(perturbed-epsilon-bisimilar) edge [bend left = 7] node [right] {Lem. \ref{Corollary: Perturbed Finer Than Transitive}} (transitive-epsilon-bisimulation)
				;
		\end{tikzpicture}}
		\caption{The relationship of different approximate probabilistic bisimulations.}
		\label{Figure: Visualization of Results}
	\end{figure}
	
	\smallskip 
	\noindent
	\textbf{Structure.} \Cref{Section: Preliminaries} presents preliminaries. \Cref{Section: Epsilon-Bisimulation} considers $\varepsilon$-bisimulations, $\varepsilon$-APBs and up-to-$(n, \varepsilon)$-bisimulations. \Cref{Section: Epsilon-Quotients} introduces and analyzes $\varepsilon$-perturbed bisimulations. \Cref{Section: Weak and Branching Epsilon Bisimulation} introduces weak and branching $\varepsilon$-bisimulations and establishes how they relate to $\varepsilon$-bisimulations. \Cref{Section: Conclusion} summarizes our results and points out future work.

	%%%%%%%%%%%%%%%%%%%%%%%%%%%%%%%%
	
	\section{Preliminaries}\label{Section: Preliminaries}
	\textbf{Distributions.} $\mathit{Distr}(S) = \{\mu \colon S \to [0,1] \mid \sum_{s \in S} \mu(s) = 1  \}$ is the set of \emph{distributions} over countable $S \neq \emptyset$. $\mu \in \mathit{Distr}(S)$ has \emph{support} $\mathit{supp}(\mu) = \{s \in S \mid \mu(s) > 0\}$, and for $A \subseteq S$ we set $\mu(A) = \sum_{s\in A}\mu(s)$. The \emph{$L_1$-distance} of $\mu, \nu \in \mathit{Distr}(S)$ is $\Vert \mu - \nu \Vert_1 = \sum_{s \in S} \vert \mu(s) - \nu(s) \vert$. 
	
	\smallskip
	\noindent
	\textbf{Labeled Markov chains.} Fix a countable set $AP$ of \emph{atomic propositions}. A \emph{labeled Markov chain} (\emph{LMC}) $\mathcal{M} = (S, \prob, s_{init}, l)$ has a countable set of \emph{states} $S \neq \emptyset$, a \emph{transition distribution function} $\prob \colon S \to \mathit{Distr}(S)$, a unique \emph{initial state} $s_{init}$, and a \emph{labeling function} $l\colon S \to 2^{AP}$. We use $\mathcal{M}$ and $\mathcal{N}$ to range over LMCs. For $s \in S$, let $L(s) = \{t \in S \mid l(s) = l(t)\}$.
	$\mathcal{M}$ is \emph{finitely branching} if $\vert \mathit{supp}(\prob(s)) \vert < \infty$ for all $s \in S$, and $\mathcal{M}$ is \emph{finite} if $\vert S \vert < \infty$. The \emph{direct sum} $\mathcal{M} \oplus \mathcal{N}$ is the LMC obtained from the disjoint union of $\mathcal{M}$ and $\mathcal{N}$. The initial state of $\mathcal{M} \oplus \mathcal{N}$ is not relevant for our purposes.
	
	For $s,t \in S$, $\prob(s)(t)$ denotes the probability to move from $s$ to $t$ in a single step. We write $Succ(s) = \mathit{supp}(P(s))$ for the set of direct successors of $s$. 
	$\pi = s_0s_1\dots \in S^{\omega}$ is an (\emph{infinite}) \emph{path} of $\mathcal{M}$ if $s_{i+1} \in Succ(s_i)$ for all $i \in \mathbb{N}$.
	$\pi[i]=s_i$ is the state at position $i$ of $\pi$, and  $trace(\pi) = l(s_0)l(s_1)\dots \in (2^{AP})^\omega$ is the \emph{trace} of $\pi$.
	The set of infinite paths is  $\mathit{Paths}(\mathcal{M})$. \emph{Finite} paths $\pi = s_0s_1\dots s_k \in S^{k+1}$ for some $k \in \mathbb{N}$ and their traces are defined analogously.
	
	Let $s \in S$. We consider the standard probability measure $\mathrm{Pr}_s^{\mathcal{M}}$ on sets of infinite paths of LMCs, defined via \emph{cylinder sets} $\mathit{Cyl}(\rho) = \{ \pi \in \mathit{Paths}(\mathcal{M}) \mid \rho \text{ is a prefix of } \pi \}$ of finite paths $\rho  \in S^*$.
	See \cite{PoMC} for details.
	For $\rho = s_0s_{1} \dots s_n$, we abbreviate $\mathrm{Pr}_s^{\mathcal{M}}(\mathit{Cyl}(\rho))$ by $\mathrm{Pr}_s^{\mathcal{M}}(\rho)$ and the measure yields $\mathrm{Pr}_s^{\mathcal{M}}(\rho) = 0$ if $s_0 \neq s$ and $\mathrm{Pr}_s^{\mathcal{M}}(\rho)= \prod_{j = 0}^{n-1} \prob(s_j)(s_{j+1})$ otherwise.
	We write $\mathrm{Pr}_{}^{\mathcal{M}}$ for $\mathrm{Pr}_{s_{init}}^{\mathcal{M}}$ and drop the superscript if $\mathcal{M}$ is clear from the context. Given a set of finite traces $T \subseteq (2^{AP})^{k+1}$ for some $k \in \mathbb{N}$, $\mathrm{Pr}_s(T)$ denotes the probability to follow, when starting in $s$, a finite path $\pi = ss_1\dots s_{k-1}$ with $trace(\pi) \in T$. 
	$\mathbb{E}^{\mathcal{M}}_s(X)$ or simply $\mathbb{E}_s(X)$ denotes the \emph{expected value} of a random variable $X$ on $\mathit{Paths}(\mathcal{M})$ w.r.t. $\mathrm{Pr}^{\mathcal{M}}_s$.
	
	\smallskip
	\noindent
	\textbf{LTL.} A popular logic for the specification of desired properties of LMCs is the \emph{linear temporal logic} (LTL) which can be used to, e.g., specify properties such as reachability, safety or liveness \cite{TLP,PoMC}. For $a \in AP$, $\mathrm{LTL}$ formulas are formed w.r.t. the grammar 
	\begin{align*}
		\varphi \Coloneqq \mathit{true} \mid a \mid \lnot \varphi \mid \varphi_1 \lor \varphi_2 \mid \Next \varphi \mid \varphi_1 \Until \varphi_2.
	\end{align*}
	Here, $\Next$ is the \emph{next} operator, so $\pi \in \mathit{Paths}(\mathcal{M})$ satisfies $\Next \varphi$ iff $\varphi$ is true in $\pi[1]$. For the \emph{until} operator $\Until$, $\pi$ satisfies $\varphi_1 \Until \varphi_2$ iff, alongside $\pi$, $\varphi_1$ holds until $\varphi_2$ is true. As syntactic sugar we define the \emph{reachability} operator $\lozenge \varphi \equiv \mathit{true} \Until \varphi$ and the \emph{always} operator $\Box \varphi \equiv \lnot \lozenge \lnot \varphi$. 
	
	For $B, C \subseteq S$ and $s \in S$, $\mathrm{Pr}_s(B \Until C)$ is the probability to reach a state in $C$ via a (finite) path from $s$ that only consists of states in $B$. Moreover, $\mathrm{Pr}_s(\lozenge^{\leq n} \varphi)$ denotes the probability to reach a state satisfying $\varphi$ from $s$ in at most $n \in \mathbb{N}$ steps. For details on LTL, see \cite{PoMC}.
	
	\smallskip
	\noindent
	\textbf{Relations.}
	Given a relation $R \subseteq S \times S$ and an $A \subseteq S$, $R(A) = \{t \in S \mid \exists \, s \in A \colon (s,t) \in R \}$ is the \emph{image of $A$ under $R$}. If $R$ is reflexive then $A \subseteq R(A)$, and $A$ is called \emph{$R$-closed} if $R(A) \subseteq A$. 
	When $R$ is an equivalence, i.e., when it is reflexive, symmetric and transitive, the \emph{equivalence class} of $s \in S$ is $[s]_R = R(\{s\}) = \{t \in S \mid (s,t) \in R\}$, and we set $\xfrac{S}{R} = \{[s]_R \mid s \in S\}$. For an equivalence $R$, the $R$-closed sets are precisely the (unions of) $R$ equivalence classes.
	
	\smallskip
	\noindent
	\textbf{Bisimulation.} An equivalence $R \subseteq S \times S$ is a (\emph{probabilistic}) \emph{bisimulation} on $\mathcal{M}$ if for all $(s,t) \in R$ and all $R$-equivalence classes $C$ it holds that $l(s) = l(t)$ and $\prob(s)(C) = \prob(t)(C)$.
	States $s, t \in S$ are (\emph{probabilistically}) \emph{bisimilar},  written $s \sim^{\mathcal{M}} t$ or simply $s \sim t$, if there is a bisimulation $R$ on $\mathcal{M}$ with $(s,t) \in R$. 
	We call two LMCs $\mathcal{M}, \mathcal{N}$ \emph{bisimilar}, written $\mathcal{M} \sim \mathcal{N}$, if  $s_{init}^\mathcal{M} \sim s_{init}^\mathcal{N}$ in $\mathcal{M} \oplus \mathcal{N}$.
	An alternative characterization of bisimulations can be found in, e.g., \cite{ALMP,AAPP,RPCTLMC,RBBTEAPC}: an equivalence $R$ is a bisimulation iff for all $(s,t) \in R$ and all $R$-closed sets $A \subseteq S$ it holds that $l(s) = l(t)$ and $\prob(s)(A) = \prob(t)(A)$. 
	The (\emph{probabilistic bisimulation}) \emph{quotient} of $\mathcal{M}$ is the LMC $\xfrac{\mathcal{M}}{\sim} = (\xfrac{S}{\sim}, \prob_\sim, [s_{init}]_\sim, l_\sim)$ with $l_\sim([s]_\sim) = l(s)$, and $\prob_\sim([s]_\sim)([t]_\sim) = \sum_{q \in [t]_\sim} \prob(s)(q)$ for all $[s]_\sim, [t]_\sim \in \xfrac{S}{\sim}$. It holds that $\mathcal{M} \sim \xfrac{\mathcal{M}}{\sim}$. An important result is that \emph{bisimilarity} $\sim$ preserves the satisfaction of $\mathrm{PCTL}^*$ state formulas \cite{IUWTLSS}.
	
	We also consider \emph{weak} and \emph{branching probabilistic bisimulations} \cite{CCS,BTABS,WBFPP,NLTAWBMC}. An equivalence $R$ is a \emph{weak probabilistic bisimulation} if, for all $(s,t) \in R$ and all $R$-equivalence classes $C \neq [s]_R = [t]_R$, it holds that $l(s) = l(t)$ and $\mathrm{Pr}_s(L(s) \Until C) = \mathrm{Pr}_t(L(t) \Until C)$. $R$ is a \emph{branching probabilistic bisimulation} if, instead of the second condition in the previous definition, $\mathrm{Pr}_s([s]_R \Until C) = \mathrm{Pr}_t([t]_R \Until C)$ holds. \emph{Weak probabilistic bisimilarity} $\approx^w$ and \emph{branching probabilistic bisimilarity} $\approx^b$  are defined like $\sim$, and lifted to LMCs in the same way.

	%%%%%%%%%%%%%%%%%%%%%%%%%%%%%%%%
	
	\section{\texorpdfstring{$\varepsilon$-Bisimulation, $\varepsilon$-APB and Up-To-$(n, \varepsilon)$-Bisimulation}{Epsilon-Bisimulation, Epsilon-APB and Up-To-(n, Epsilon)-Bisimulation}}\label{Section: Epsilon-Bisimulation}
	If not specified otherwise, we always assume $\varepsilon \in [0,1]$ and $\mathcal{M} = (S, \prob, s_{init}, l)$ to be finitely branching. This section summarizes various notions of approximate probabilistic bisimulation from the literature. We first provide their formal definitions and discuss how the notions interrelate. Afterwards, in \Cref{Section: Properties of Known Relations}, we present some logical preservation results.

	%%%%%%%%%%%%%%%%%%%%%%%%%%%%%%%%
	
	\subsection{Definitions and Interrelation}\label{Section: Definitions and Interrelation of Known Notions}
	We start with the seminal notion of \emph{$\varepsilon$-bisimulations} of Desharnais \emph{et al.} \cite{AAPP}. While originally introduced for \emph{labeled Markov processes} \cite{LCBLMP,BfLMP}, $\varepsilon$-bisimulations were later adapted to other models like LMCs \cite{RBBTEAPC,ABM} or Segala's \emph{probabilistic automata} \cite{MVRDRTS,CDBPA}. 
	
	\begin{definition}[\cite{AAPP,RBBTEAPC}]\label{Definition: Epsilon-Bisimulation}
		A reflexive\footnote{In contrast to \cite{AAPP,RBBTEAPC} we require reflexivity of $\varepsilon$-bisimulations. This is a rather natural assumption (a state should always simulate itself) that does not affect $\sim_{\varepsilon}$.} and symmetric relation $R \subseteq S \times S$ is an \emph{$\varepsilon$-bisimulation} if for all $(s,t) \in R$ and any $A \subseteq S$ it holds that 
		\begin{align*}
			(\text{\emph{i}}) \, \, l(s) = l(t) \quad \text{ and } \quad (\text{\emph{ii}}) \, \, \prob(s)(A) \leq \prob(t)(R(A)) + \varepsilon. 
		\end{align*}
		States $s,t$ are \emph{$\varepsilon$-bisimilar}, denoted $s \sim_{\varepsilon} t$, if there is an $\varepsilon$-bisimulation $R$ with $(s,t) \in R$. LMCs $\mathcal{M}, \mathcal{N}$ are $\varepsilon$-bisimilar, denoted $\mathcal{M} \sim_\varepsilon \mathcal{N}$, if $s_{init}^\mathcal{M} \sim_\varepsilon s_{init}^\mathcal{N}$ in $\mathcal{M} \oplus \mathcal{N}$.
	\end{definition}
	
	Intuitively, $s \sim_\varepsilon t$ if both states can mimic the other's transition probabilities to any $A \subseteq S$ by transitioning to the (potentially bigger) set ${\sim_\varepsilon}(A)$ with a probability that is smaller by at most $\varepsilon$ than the original one. The parameter $\varepsilon$ describes how much the behavior of related states may differ: for $\varepsilon$ close to $1$ more states can  be related, while for $\varepsilon \approx 0$ related states behave almost equivalently. In the extreme case of $\varepsilon = 0$, we have ${\sim_0} = {\sim}$ \cite{AAPP, RBBTEAPC}. 
	
	Instead of being transitive, $\varepsilon$-bisimulations are \emph{additive} in their tolerances: $s \sim_{\varepsilon_1} t$ and $t \sim_{\varepsilon_2} u$ implies $s \sim_{\varepsilon'} u$ for some $0 \leq \varepsilon' \leq \min\{1,\varepsilon_1 + \varepsilon_2\}$ \cite{AAPP}.
	As the next example suggests, transitivity is not always desirable for $\varepsilon$-bisimulations if $\varepsilon > 0$. 
	
	\begin{figure}[t]
		\centering
		\resizebox{!}{0.067\textheight}{
			\begin{tikzpicture}[->,>=stealth',shorten >=1pt,auto, semithick]
				\tikzstyle{every state} = [text = black]
				\node[state] (s0) {$s_0$};
				\node[state] (s1) [right of = s0, node distance = 2cm] {$s_1$};
				\node[state] (s2) [right of = s1, node distance = 2cm] {$s_2$};
				\node (temp) [right of = s2, node distance = 2cm] {\dots};
				\node[state] (sn) [right of = temp, node distance = 2cm] {$s_n$};
				\node[state] (x) [below of = temp, node distance = 0.95cm, fill = yellow] {$x$};
				
				\node (sinit) [left of = s0, node distance = 1.2cm] {};
				
				\path 
				(sinit) edge (s0)
				(s0) edge node [above, pos = 0.4] {$1$} (s1)
				(s1) edge node [above, pos = 0.4] {$1-\varepsilon$} (s2) 
				(s1) edge [bend right = 15] node [left, xshift = -0.4cm] {$\varepsilon$} (x)
				
				(s2) edge node [above, pos = 0.4] {$1-2\varepsilon$} (temp)
				(s2) edge node [right, xshift = 0.3cm, pos = 0.3] {$2\varepsilon$} (x)
				
				(temp) edge node [above] {$\varepsilon$} (sn)
				(sn) edge node [left, xshift = -0.3cm, pos = 0.3] {$1$} (x)
				
				(x) edge [loop right] node {$1$} (x)
				;
				
				%label
				\node[below of = s0, node distance = 0.65cm] {$\emptyset$};
				\node[below of = s1, node distance = 0.65cm] {$\emptyset$};
				\node[below right of = s2, node distance = 0.65cm, yshift = -0.05cm] {$\emptyset$};
				\node[below of = sn, node distance = 0.65cm] {$\emptyset$};
				\node[above of = x, node distance = 0.65cm] {$\{a\}$};
		\end{tikzpicture}}
		\caption{The LMC used in \Cref{Example: Epsilon-bisimulation not transitive}.}
		\label{Figure: Example epsilon-bisimulation not transitive}
	\end{figure}

	\begin{example}\label{Example: Epsilon-bisimulation not transitive}
		Let $\varepsilon = \frac{1}{n}$ for $n \geq 1$ and consider the LMC of \Cref{Figure: Example epsilon-bisimulation not transitive}. There, the reflexive and symmetric closure of $R = \{(s_i, s_{i+1}) \mid 0 \leq i \leq n-1\}$ is an $\varepsilon$-bisimulation. 
		Hence, $s_0$ and $s_n$ are related by a chain of $\varepsilon$-bisimilar states, even though they behave completely different: $s_0$ transitions to the $\{a\}$-labeled state $x$ with probability $0$, $s_n$ does so with probability $1$. 
	\end{example}
	
	There are different ways to characterize $\varepsilon$-bisimulations. For example, Desharnais \textit{et al.} \cite{AAPP} provide a characterization in terms of the values of maximum flows in specific flow networks. Their result is well-suited for algorithmic purposes, but is restricted to finite models. Another way to express condition (ii) of \Cref{Definition: Epsilon-Bisimulation} is via the existence of \emph{weight functions} $\Delta \colon S \to Distr(S)$ that describe how to split the successor probabilities of related states. This approach is used in, e.g., \cite{CDBPA,ABM,ABMfull,FMOSCSMDP}. The following lemma describes one possible way to define such weight functions in order to characterize $\varepsilon$-bisimulations. 
	
	\begin{restatable}{lemma}{LemmaEpsilonBisim}
		\label{lem:epsilon-bisimulation-distribution}
		A reflexive and symmetric relation $R \subseteq S \times S$ that only relates states with the same label is an $\varepsilon$-bisimulation iff for all $(s,t) \in R$ there is a map $\Delta\colon \mathit{Succ}(s) \to \mathit{Distr}(\mathit{Succ}(t))$ such that 
		\begin{enumerate}
			\item
			for all $t^\prime \in\mathit{Succ}(t) $ we have $\prob(t)(t^\prime)= \sum_{s^\prime\in \mathit{Succ}(s) } \prob(s)(s^\prime)\cdot\Delta(s^\prime)(t^\prime) $, and
			\item 
			$
			\sum_{s^\prime\in \mathit{Succ}(s)} \prob(s)(s^\prime)\cdot \Delta(s^\prime)(R(s^\prime) \cap \mathit{Succ}(t)) \geq 1 -\varepsilon
			$.
		\end{enumerate}
		
	\end{restatable}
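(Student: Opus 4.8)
The plan is to recognize the map $\Delta$ as (the conditional form of) a \emph{coupling} of the successor distributions and to reduce the whole statement to a discrete transport problem solved by the max-flow--min-cut theorem. Fix $(s,t)\in R$; since the label condition $l(s)=l(t)$ is assumed outright, only condition (ii) of \Cref{Definition: Epsilon-Bisimulation} is at stake. Given any $\Delta\colon\mathit{Succ}(s)\to\mathit{Distr}(\mathit{Succ}(t))$, set $\omega(s',t')=\prob(s)(s')\cdot\Delta(s')(t')$. Because each $\Delta(s')$ is a distribution, $\omega$ automatically has $s$-marginal $\prob(s)$; item~1 says exactly that its $t$-marginal is $\prob(t)$, and item~2 says exactly that $\omega$ places mass $\ge 1-\varepsilon$ on the set of $R$-related pairs $\{(s',t')\mid t'\in R(s')\}$. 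Conversely, any coupling $\omega$ of $\prob(s)$ and $\prob(t)$ supported on $\mathit{Succ}(s)\times\mathit{Succ}(t)$ yields such a $\Delta$ by normalizing, $\Delta(s')(t')=\omega(s',t')/\prob(s)(s')$, which is well defined since $\prob(s)(s')>0$ on $\mathit{Succ}(s)$. So the lemma is equivalent to: condition (ii) holds for $(s,t)$ and all $A$ iff there is a coupling of $\prob(s)$ and $\prob(t)$ placing at most $\varepsilon$ mass off $R$.

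For the ``$\Leftarrow$'' direction I would argue by a direct estimate. Given $\Delta$ and an arbitrary $A\subseteq S$, expand $\prob(t)(R(A))=\sum_{s'\in\mathit{Succ}(s)}\prob(s)(s')\cdot\Delta(s')(R(A)\cap\mathit{Succ}(t))$ using item~1, restrict the sum to $s'\in A$, and use $R(s')\subseteq R(A)$ for such $s'$ to bound it below by $\sum_{s'\in A}\prob(s)(s')\cdot\Delta(s')(R(s')\cap\mathit{Succ}(t))$. Item~2 together with $\sum_{s'}\prob(s)(s')=1$ and $\Delta(s')(\cdot)\le 1$ then gives $\sum_{s'\in A}\prob(s)(s')\cdot\Delta(s')(R(s')\cap\mathit{Succ}(t))\ge\prob(s)(A)-\varepsilon$, whence $\prob(s)(A)\le\prob(t)(R(A))+\varepsilon$. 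This is pure bookkeeping and I do not expect difficulties.

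The ``$\Rightarrow$'' direction is the crux. Here I use that $\mathcal{M}$ is finitely branching, so $\mathit{Succ}(s)$ and $\mathit{Succ}(t)$ are finite, and I build a finite flow network: a source $\sigma$, a sink $\tau$, one node per $s'\in\mathit{Succ}(s)$ and one per $t'\in\mathit{Succ}(t)$, with capacities $\prob(s)(s')$ on $\sigma\to s'$, $\prob(t)(t')$ on $t'\to\tau$, and capacity $\infty$ on $s'\to t'$ whenever $(s',t')\in R$. A finite cut must leave every infinite $R$-edge uncut, so it is determined by the set $A\subseteq\mathit{Succ}(s)$ of source edges it keeps; it is then forced to cut $t'\to\tau$ for all $t'\in R(A)\cap\mathit{Succ}(t)$, at total cost $(1-\prob(s)(A))+\prob(t)(R(A))$. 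Condition (ii), instantiated at such $A$, says precisely that this cost is at least $1-\varepsilon$, so the minimum cut---and hence, by max-flow--min-cut, the maximum flow value $F$---is at least $1-\varepsilon$.

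It remains to turn a maximum flow $f$ into the required coupling. Setting $\omega_R(s',t')=f(s'\to t')$ gives a nonnegative function supported on $R$-pairs with total mass $F\ge 1-\varepsilon$ and marginals dominated by $\prob(s)$ and $\prob(t)$; I complete it to an honest coupling $\omega$ by spreading the residual mass $\delta=1-F\le\varepsilon$ as the normalized product of the two residual marginals. Then $\omega$ has the correct marginals and carries at most $\delta\le\varepsilon$ mass off $R$, and normalizing yields the desired $\Delta$. The one delicate point, which I would state and justify carefully, is the combinatorial description of the minimum cut as ranging exactly over the sets $R(A)$; the invocation of the finite max-flow--min-cut theorem (legitimate precisely because of finite branching) and the residual-mass completion are then routine.
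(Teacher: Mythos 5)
Your proof is correct, and while your ``$\Leftarrow$'' direction coincides with the paper's own bookkeeping argument (expand $\prob(t)(R(A))$ via item~1, restrict to successors in $A$, use $R(s')\subseteq R(A)$ together with item~2), your ``$\Rightarrow$'' direction takes a genuinely different route. The paper follows the measure-theoretic construction of Bian and Abate: it builds an atomless measure space on $\mathit{Succ}(t)\times[0,1]$ augmented by a slack region $\{\ast\}\times[0,\varepsilon]$, invokes a representation theorem of Bollob\'as to extract almost-disjoint sets $Y_{s'}$ with $\mu(Y_{s'})=\prob(s)(s')$, and then reshuffles the slack into sets $W_{s'}$ to define $\Delta$. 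You instead set up a finite transportation network---legitimate because the standing assumption of Section~3 (finite branching) makes $\mathit{Succ}(s)$ and $\mathit{Succ}(t)$ finite---and apply max-flow--min-cut; your cut analysis is sound (finite cuts avoid the infinite $R$-edges, so any such cut costs $(1-\prob(s)(A))+\prob(t)(R(A))\geq 1-\varepsilon$ by instantiating condition (ii) of \Cref{Definition: Epsilon-Bisimulation} at $A\subseteq\mathit{Succ}(s)$), as is the completion of the partial coupling $\omega_R$ by the normalized product of the residual marginals, since both residuals carry total mass $1-F=\delta\leq\varepsilon$, so the normalization is well defined when $\delta>0$ and unnecessary when $\delta=0$. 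As for what each approach buys: yours is elementary and constructive, essentially a quantitative version of the flow characterization of Desharnais \emph{et al.} that the paper itself mentions as an alternative, and it makes the computability of $\Delta$ transparent; the paper's Bollob\'as-based argument only ever checks its hypothesis on \emph{finite} subfamilies of $\mathit{Succ}(s)$ and thus would survive countably infinite successor sets, so it is less tightly wedded to finite branching than your network construction. Under the lemma's stated hypotheses both proofs are complete.
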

	
	Intuitively, \Cref{lem:epsilon-bisimulation-distribution} tells us that, if $s \sim_\varepsilon t$, the successors $s^\prime$ of $s$ can be mapped to distributions $\Delta(s^\prime)$, i.e., convex combinations, of successors of $t$. More precisely, it shows that (i) if we move from $s$ to a successor $s^\prime$ with probability $\prob(s)(s^\prime)$ and, afterwards, from $s^\prime$ to a successor $t^\prime$ of $t$ with probability $\Delta(s^\prime)(t^\prime)$, then we reach $t^\prime$ with probability $\prob(t)(t^\prime)$, and that (ii) the overall probability that the states $s^\prime$ and $t^\prime$ are $\varepsilon$-bisimilar is at least $1-\varepsilon$.
	
	\smallskip
	
	A second notion of approximate probabilistic bisimulation are $\varepsilon$-APBs, which stands short for \emph{approximate probabilistic bisimulations with precision $\varepsilon$} \cite{RPCTLMC,AMBPBGSSMP, PMCLMPFAB}. In contrast to $\varepsilon$-bisimulations, where the differences in transition probabilities of related states are bounded w.r.t. all subsets $A \subseteq S$, an $\varepsilon$-APB $R$ only requires a difference of at most $\varepsilon$ for the probabilities of related states to transition to $R$-closed subsets of $S$.
	\begin{definition}[\cite{RPCTLMC}]\label{Definition: Epsilon-APB}
		A reflexive and symmetric relation $R \subseteq S \times S$ is an \emph{$\varepsilon$-APB} if for all $(s,t) \in R$ and any $R$-closed set $A \subseteq S$ it holds that 
		\begin{align*} 
			(\text{\emph{i}}) \, \, l(s) = l(t) \quad \text{ and } \quad (\text{\emph{ii}}) \,  \left \vert \prob(s)(A) - \prob(t)(A) \right \vert \leq \varepsilon. 
		\end{align*}
		We write $s \equiv_\varepsilon t$ if $s$ and $t$ are related by any $\varepsilon$-APB, and $\mathcal{M} \equiv_\varepsilon \mathcal{N}$ if $s_{init}^\mathcal{M} \equiv_\varepsilon s_{init}^\mathcal{N}$ in $\mathcal{M} \oplus \mathcal{N}$.
	\end{definition}
	
	Like $\sim_\varepsilon$, $\varepsilon$-APBs are additive in their tolerances, and we have ${\equiv_0} = {\sim} = {\sim_0}$ \cite{AAPP,RPCTLMC}. 
	
	\begin{figure}[t]
		\centering
		\resizebox{!}{0.077\textheight}{
			\begin{tikzpicture}[->,>=stealth',shorten >=1pt,auto, semithick]
				\tikzstyle{every state} = [text = black]
				\node[state] (s) [fill = yellow] {$s$}; 
				\node[state] (u0) [fill = yellow, right of = s, node distance = 2.5cm] {$u_0$};
				\node[state] (u1) [fill = yellow, right of = u0, node distance = 2.5cm] {$u_1$};
				\node[] (utemp) [right of = u1, node distance = 1.5cm] {\dots};
				\node[state] (ui) [right of = utemp, node distance = 1.5cm, fill = yellow] {$u_i$};
				\node[] (utemp2) [right of = ui, node distance = 1.5cm] {\dots};
				\node[state] (uN1) [right of = utemp2, node distance = 1.5cm, fill = yellow, inner sep = 0cm] {$u_{n-1}$};
				\node[state] (uN) [right of = uN1, node distance = 2.5cm, fill = yellow] {$u_n$};
				\node[state] (t) [right of = uN, node distance = 2.5cm, fill = yellow] {$t$};
				\node[state] (x) [below of = utemp, node distance = 1.25cm, fill = green] {$x$};
				\node[state] (y) [below of = utemp2, node distance = 1.25cm] {$y$};
				
				\node[] (sinit) [left of = s, node distance = 1.2cm] {};
				
				\path
				(sinit) edge (s)
				(s) edge node [above] {$1$} (u0)
				(u0) edge node [left, pos = 0.3, xshift = -0.2cm, yshift = -0.1cm] {$1$} (x)
				(u1) edge node [left, pos = 0.45, yshift = -0.05cm, xshift = 0.08cm] {$1{-}\varepsilon$} (x)
				(u1) edge [bend right = 10] node [right, pos = 0.15, xshift = 0.2cm, yshift = 0.1cm] {$\varepsilon$} (y)
				(ui) edge [bend left = 10] node [left, pos = 0.1] {$1 {-} i\varepsilon$} (x)
				(ui) edge [bend right = 10] node [right, pos = 0.15] {$i \varepsilon$} (y)
				(uN1) edge [bend left = 10] node [left, pos = 0.15, xshift = -0.2cm, yshift = 0.1cm] {$\varepsilon$} (x)
				(uN1) edge node [right, pos = 0.45, xshift = 0.05cm] {$1{-}\varepsilon$} (y)
				(uN) edge node [right, pos = 0.3, xshift = 0.2cm, yshift = -0.1cm] {$1$} (y)
				(x) edge [loop left] node [] {$1$} (x)
				(y) edge [loop right] node [] {$1$} (y)
				(t) edge node [above] {$1$} (uN)
				
				;
				
				%label
				\node [below of = s, node distance = 0.7cm] {$\{a\}$};
				\node [below of = t, node distance = 0.7cm] {$\{a\}$};
				\node [below of = u0, node distance = 0.7cm] {$\{a\}$};
				\node [left of = u1, node distance = 0.75cm] {$\{a\}$};
				\node [below of = ui, node distance = 0.65cm] {$\{a\}$};
				\node [right of = uN1, node distance = 0.75cm] {$\{a\}$};
				\node [below of = uN, node distance = 0.7cm] {$\{a\}$};
				\node [left of = y, node distance = 0.7cm, yshift = -0.2cm] {$\{c\}$};
				\node [right of = x, node distance = 0.7cm, yshift = -0.2cm] {$\{b\}$};
		\end{tikzpicture}}
		\caption{The LMC used in \Cref{Example: Epsilon-APBs}, adapted from \cite{RBBTEAPC}.}
		\label{Figure: LMC not all epsilon-APB are epsilon-bisim}
	\end{figure}
	
	\begin{example}[\cite{RBBTEAPC}]\label{Example: Epsilon-APBs}
		Let $\varepsilon \in (0, 1]$ and $n = \left \lceil \frac{1}{\varepsilon} \right \rceil \in \mathbb{N}$. Consider $\mathcal{M}$ as in \Cref{Figure: LMC not all epsilon-APB are epsilon-bisim}, and let $R$ be the reflexive and symmetric closure of $\{(s,t),(x,x), (y,y)\} \cup \{(u_i, u_{i+1}) \mid 0 \leq i \leq n-1\}$.
		The $R$-closed sets in $\mathcal{M}$ are $\{s,t\}, \{x\}, \{y\}, \{u_i \mid 0 \leq i \leq n\}$ and their unions. For all $(p,q) \in R$ and $R$-closed sets $A$ it holds that $\vert \prob(p)(A) - \prob(q)(A) \vert \leq \varepsilon$, 
		so $R$ is an $\varepsilon$-APB. 
	\end{example}
	
	\Cref{Example: Epsilon-APBs} illustrates that the use of $\varepsilon$-APBs as a notion that relates states with almost equivalent behavior is questionable: even though states $s$ and $t$ in \Cref{Figure: LMC not all epsilon-APB are epsilon-bisim} are related by $\equiv_\varepsilon$, they behave completely different. This is caused by the set $\{u_0, \dots, u_n\}$ of (unreachable) states being $R$-closed, which in turn allows to relate $s$ and $t$ by the relation $R$ from the example. Such an anomaly cannot occur for $\sim_\varepsilon$, and we in fact have $s \nsim_\varepsilon t$ in \Cref{Figure: LMC not all epsilon-APB are epsilon-bisim} for every $\varepsilon \in (0,1)$. In particular, this shows that $\sim_\varepsilon$ can be strictly finer than $\equiv_\varepsilon$ for $\varepsilon \in (0,1)$.
	
	Lastly, we introduce \emph{up-to-$(n, \varepsilon)$-bisimulations} \cite{AAPP,SAAPBPCTL}, which are relations that require the behaviors of related states to differ by at most $\varepsilon$ for at least $n$ steps.
	\begin{definition}[\cite{AAPP,SAAPBPCTL}]\label{Definition: Up-To-(nepsilon)-Bisimulation}
		The \emph{up-to-$(n, \varepsilon)$-bisimulation} ${\sim_{\varepsilon}^n} \subseteq S \times S$ is inductively defined on $n$ via $s \sim_{\varepsilon}^0 t$ for all $s, t \in S$ and, for $n \geq 0$, $s \sim_{\varepsilon}^{n+1} t$ iff for all $A \subseteq S$
		\begin{align*}
			\text{\emph{(i)}} \, \, l(s) = l(t), \hspace{0.2cm} \text{\emph{(ii)}} \, \, \prob(s)(A) \leq \prob(t)({\sim_{\varepsilon}^n}(A)) + \varepsilon \hspace{0.2cm} \text{ and } \hspace{0.2cm} \text{\emph{(iii)}}  \, \, \prob(t)(A) \leq \prob(s)({\sim_{\varepsilon}^n}(A)) + \varepsilon.
		\end{align*} 
	\end{definition}
	
	States $s, t$ are \emph{$(n, \varepsilon)$-bisimilar} if $s \sim_{\varepsilon}^{n} t$, and the notion is lifted to LMCs as usual. Similar to $\sim_\varepsilon$ and $\equiv_\varepsilon$, $\sim_{\varepsilon}^n$ is reflexive and symmetric, but not transitive. Instead, it is additive in the tolerances and monotonic in $n$ and $\varepsilon$, i.e., for $n \geq n'$ and $\varepsilon \leq \varepsilon'$, $s \sim_{\varepsilon}^n t$ implies $s \sim_{\varepsilon'}^{n'} t$ \cite{SAAPBPCTL}.
	
	It is clear that $s \sim_\varepsilon^n t$ for a fixed $n$ does not necessarily imply $s \sim_\varepsilon t$ or $s \equiv_\varepsilon t$, as $(n, \varepsilon)$-bisimilarity only restricts the behavior of related states for $n$ steps. However, considering the limit $n \to \infty$ makes $\sim_\varepsilon$ and $\sim_\varepsilon^n$ coincide, i.e., $s \sim_{\varepsilon} t $ iff $s \sim_{\varepsilon}^n t$ for all $n \in \mathbb{N}$ \cite{AAPP}.
	
	We now make precise the relationship between $\varepsilon$-APBs and up-to-$(n, \varepsilon)$-bisimulations.
	\begin{restatable}{proposition}{TheoremRelationshipEpsilonAPBAndUpToBisimilarity}\label{Theorem: Relationship epsilon-APB and up-to-bisimilarity with precise n}
		If $\varepsilon \in (0,1)$, $s \equiv_\varepsilon t$ implies $s \sim_{\varepsilon}^n t$ if $n \leq 2$, but not necessarily if $n \geq 3$.
	\end{restatable}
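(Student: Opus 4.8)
The plan is to prove the two halves separately: first that $s \equiv_\varepsilon t$ implies $s \sim_\varepsilon^n t$ for $n \in \{0,1,2\}$, and then to give a counterexample for $n = 3$ that propagates to all $n \geq 3$.

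For the positive part I would first simplify $\sim_\varepsilon^1$. Since $\sim_\varepsilon^0(A) = S$ for every nonempty $A$, conditions (ii) and (iii) of \Cref{Definition: Up-To-(nepsilon)-Bisimulation} reduce to $\prob(s)(A) \leq 1 + \varepsilon$ and are vacuous; hence $s \sim_\varepsilon^1 t$ holds iff $l(s) = l(t)$. The cases $n = 0$ and $n = 1$ then follow immediately, the latter from condition (i) of \Cref{Definition: Epsilon-APB}. The heart of the argument is $n = 2$: fix an $\varepsilon$-APB $R$ with $(s,t) \in R$. Because $\sim_\varepsilon^1$ is exactly label-equivalence, for any $A \subseteq S$ the image $\sim_\varepsilon^1(A) = \bigcup_{w \in A} L(w)$ is the union of the label classes occurring in $A$. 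The key observation I would establish is that this set is $R$-closed: an $\varepsilon$-APB relates only equally-labeled states, so its image never leaves a union of label classes. Given this, condition (ii) of \Cref{Definition: Epsilon-APB} applied to the $R$-closed set $\sim_\varepsilon^1(A)$ yields $\prob(s)(\sim_\varepsilon^1(A)) \leq \prob(t)(\sim_\varepsilon^1(A)) + \varepsilon$, and since $A \subseteq \sim_\varepsilon^1(A)$ this gives $\prob(s)(A) \leq \prob(t)(\sim_\varepsilon^1(A)) + \varepsilon$, i.e. condition (ii) of $\sim_\varepsilon^2$. Condition (iii) follows symmetrically using symmetry of $R$, and (i) is the label condition, so $s \sim_\varepsilon^2 t$.

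For the negative part I would reuse the LMC of \Cref{Figure: LMC not all epsilon-APB are epsilon-bisim}, where \Cref{Example: Epsilon-APBs} already shows $s \equiv_\varepsilon t$, and exhibit a violation of $\sim_\varepsilon^3$ at $A = \{u_0\}$. Condition (ii) of $\sim_\varepsilon^3$ would require $1 = \prob(s)(\{u_0\}) \leq \prob(t)(\sim_\varepsilon^2(\{u_0\})) + \varepsilon$; as $t$ moves to $u_n$ with probability $1$, the right-hand side equals $\varepsilon$ unless $u_0 \sim_\varepsilon^2 u_n$. I would then rule this out: $u_0$ reaches the $\{b\}$-state $x$ with probability $1$ while $u_n$ reaches the $\{c\}$-state $y$ with probability $1$, so testing $\{x\}$ (with $\sim_\varepsilon^1(\{x\}) = \{x\}$) gives $1 > \varepsilon = \prob(u_n)(\{x\}) + \varepsilon$, hence $u_0 \not\sim_\varepsilon^2 u_n$. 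Thus the required inequality reads $1 \leq \varepsilon$, which fails since $\varepsilon < 1$, so $s \not\sim_\varepsilon^3 t$; monotonicity of $\sim_\varepsilon^n$ in $n$ then extends this to every $n \geq 3$.

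The step I expect to be the main obstacle is the $n = 2$ case, specifically the observation that the set $\sim_\varepsilon^1(A)$ appearing in the definition of $\sim_\varepsilon^2$ is automatically $R$-closed for every $\varepsilon$-APB $R$. This is exactly what transfers the $R$-closed-set guarantee of an $\varepsilon$-APB into the label-closure bound demanded by $\sim_\varepsilon^2$; once it is in place the remaining inequalities are routine, and the counterexample merely requires re-reading the already-verified \Cref{Example: Epsilon-APBs}.
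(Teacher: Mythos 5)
Your proposal is correct and follows essentially the same route as the paper's proof: the cases $n \leq 2$ via the observation that ${\sim_\varepsilon^1}(A)$ is a union of label classes and hence closed under any label-preserving $\varepsilon$-APB, and the case $n \geq 3$ via the LMC of \Cref{Example: Epsilon-APBs}. If anything, you are slightly more thorough than the paper, which only asserts that $s \nsim_\varepsilon^n t$ for $n \geq 3$ is ``easy to see'' where you explicitly verify $u_0 \nsim_\varepsilon^2 u_n$ and invoke monotonicity in $n$, and your use of a fixed $\varepsilon$-APB $R$ (rather than $\equiv_\varepsilon$ itself) sidesteps any reliance on $\equiv_\varepsilon$ being an $\varepsilon$-APB.
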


	%%%%%%%%%%%%%%%%%%%%%%%%%%%%%%%%	
	
	\subsection{Preservation of Logical Properties}\label{Section: Properties of Known Relations}
	A key application of exact probabilistic bisimilarity $\sim$ is the use of quotients $\mathcal{Q} = \xfrac{\mathcal{M}}{\sim}$ to speed up $\mathrm{PCTL}^*$ model checking \cite{BMMSUPMC,PMCL}. As abstractions built by grouping states related by approximate probabilistic bisimulations can be smaller than $\mathcal{Q}$ \cite{RPCTLMC}, these notions might prove useful to combat the \emph{state space explosion problem} of model checking \cite{BMMSUPMC,PoMC,PMCL}. It is hence of interest to see which logical properties these relations preserve.
	
	We start by considering $\sim_\varepsilon$. As shown by Bian and Abate \cite{RBBTEAPC}, $\varepsilon$-bisimilarity induces bounds on the absolute difference of satisfaction probabilities of \emph{finite horizon} properties, i.e., of properties that only depend on traces of finite length, in related states. 
	\begin{theorem}[\cite{RBBTEAPC}]\label{Theorem: Main Theorem RBBTEAPC}
		Let $s \sim_{\varepsilon} t$, $k \in \mathbb{N}$ and $T \subseteq (2^{AP})^{k+1}$ a set of traces of length $k + 1$. Then $\vert \mathrm{Pr}_s(T) - \mathrm{Pr}_t(T) \vert \leq 1 - (1 - \varepsilon)^k$.
	\end{theorem}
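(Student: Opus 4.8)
The plan is to prove the bound by induction on $k$, using the weight-function characterisation of $\varepsilon$-bisimulations from \Cref{lem:epsilon-bisimulation-distribution}. Fix an $\varepsilon$-bisimulation $R$ with $(s,t)\in R$, so that in particular $l(s)=l(t)$. Since the first letter of any trace of a path from $s$ (resp.\ $t$) is $l(s)$ (resp.\ $l(t)$), all traces in $T$ whose first letter differs from $l(s)=l(t)$ contribute $0$ to both $\mathrm{Pr}_s(T)$ and $\mathrm{Pr}_t(T)$; I would discard these and write $T'\subseteq(2^{AP})^{k}$ for the set of length-$k$ suffixes of the surviving traces. The base case $k=0$ is immediate: a length-$1$ trace is just a label, so both probabilities equal $1$ if $l(s)\in T$ and $0$ otherwise, whence $\vert \mathrm{Pr}_s(T)-\mathrm{Pr}_t(T)\vert = 0 = 1-(1-\varepsilon)^0$.

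For the inductive step, assume the bound $1-(1-\varepsilon)^{k-1}$ holds for all sets of traces of length $k$, and let $T\subseteq(2^{AP})^{k+1}$. First I would peel off one transition, obtaining $\mathrm{Pr}_s(T)=\sum_{s'\in\mathit{Succ}(s)}\prob(s)(s')\,\mathrm{Pr}_{s'}(T')$ and analogously for $t$. The key move is to re-express $\mathrm{Pr}_t(T)$ over the successors of $s$ via the weight map $\Delta\colon\mathit{Succ}(s)\to\mathit{Distr}(\mathit{Succ}(t))$ furnished by \Cref{lem:epsilon-bisimulation-distribution}: substituting $\prob(t)(t')=\sum_{s'}\prob(s)(s')\,\Delta(s')(t')$ and using $\sum_{t'}\Delta(s')(t')=1$ (so that $\mathrm{Pr}_{s'}(T')=\sum_{t'}\Delta(s')(t')\,\mathrm{Pr}_{s'}(T')$) telescopes the difference into
\[
\mathrm{Pr}_s(T)-\mathrm{Pr}_t(T)=\sum_{s'\in\mathit{Succ}(s)}\prob(s)(s')\sum_{t'\in\mathit{Succ}(t)}\Delta(s')(t')\bigl(\mathrm{Pr}_{s'}(T')-\mathrm{Pr}_{t'}(T')\bigr).
\]

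Applying the triangle inequality, I would then split the inner sum according to whether $t'\in R(s')$. For $t'\in R(s')\cap\mathit{Succ}(t)$ we have $s'\sim_\varepsilon t'$, so the induction hypothesis bounds $\vert\mathrm{Pr}_{s'}(T')-\mathrm{Pr}_{t'}(T')\vert$ by $1-(1-\varepsilon)^{k-1}$; for the remaining pairs I would use the trivial bound $1$. Writing $p=\sum_{s'}\prob(s)(s')\,\Delta(s')(R(s')\cap\mathit{Succ}(t))$ for the total mass on related pairs, condition~2 of \Cref{lem:epsilon-bisimulation-distribution} gives $p\geq 1-\varepsilon$, while the complementary mass is $1-p$. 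The estimate collapses to
\[
\vert\mathrm{Pr}_s(T)-\mathrm{Pr}_t(T)\vert\leq p\bigl(1-(1-\varepsilon)^{k-1}\bigr)+(1-p)=1-p(1-\varepsilon)^{k-1}\leq 1-(1-\varepsilon)^{k},
\]
where the last step uses $p\geq 1-\varepsilon$ and $(1-\varepsilon)^{k-1}\geq 0$, completing the induction.

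The routine parts are the one-step decomposition and the final arithmetic; the part that needs care is the convex-combination rewriting of $\mathrm{Pr}_t(T)$ through $\Delta$ and, above all, the observation that condition~2 of \Cref{lem:epsilon-bisimulation-distribution} is exactly what caps the ``mismatched'' transition mass (the pairs $(s',t')\notin R$, which inherit only the trivial bound $1$) at $\varepsilon$. Getting this bookkeeping right is what yields the clean recurrence and hence the tight constant $1-(1-\varepsilon)^k$.
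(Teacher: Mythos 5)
Your proof is correct. Note first that the paper itself does not prove \Cref{Theorem: Main Theorem RBBTEAPC} -- it is imported from \cite{RBBTEAPC}, whose original argument runs through an explicit measure-theoretic coupling (the atomless-measure-space construction that this paper reuses in the appendix to prove \Cref{lem:epsilon-bisimulation-distribution}). What you have done is factor that coupling through \Cref{lem:epsilon-bisimulation-distribution} and then run a clean induction on the trace length, and this is essentially the same template the paper uses in its appendix proof of \Cref{thm:unbounded_reach}: peel off one transition, rewrite the $t$-side via statement~1 of the lemma (i.e., $\prob(t)(t^\prime)=\sum_{s^\prime}\prob(s)(s^\prime)\Delta(s^\prime)(t^\prime)$), split the doubly indexed sum by whether $(s^\prime,t^\prime)\in R$, apply the induction hypothesis on related pairs and the trivial bound on the rest, and cap the mismatched mass at $\varepsilon$ via statement~2. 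Your recurrence $1-p(1-\varepsilon)^{k-1}\leq 1-(1-\varepsilon)^k$ with $p\geq 1-\varepsilon$ is exactly right and yields the tight constant; the paper's proof of \Cref{thm:unbounded_reach} instead tracks $\varepsilon\cdot\sum_{i\leq k}\mathrm{Pr}_s(N\geq i)$ through two separate one-sided inductions, which your symmetric triangle-inequality formulation compresses. Two small bookkeeping points, neither a gap: the suffix set should formally be $T'=\{w\in(2^{AP})^k\mid l(s)\cdot w\in T\}$, which is well defined for both states precisely because $l(s)=l(t)$ lets you discard mismatched first letters from both probabilities simultaneously, as you do; and your induction hypothesis must be (and implicitly is) quantified over \emph{all} $\varepsilon$-bisimilar pairs, since the inductive step invokes it at the successor pairs $(s^\prime,t^\prime)\in R$, not at $(s,t)$ -- the same strengthening the paper uses for \Cref{thm:unbounded_reach}.
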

	
	Since any finite horizon LTL formula coincides with a set of traces of finite length, \Cref{Theorem: Main Theorem RBBTEAPC} in particular bounds the satisfaction probabilities of such formulas in $\varepsilon$-bisimilar states. Furthermore, as argued in \cite{RBBTEAPC} and the next example, this bound is tight.

	\begin{figure}[t]
		\centering
		
		\resizebox{!}{0.068\textheight}{
			\begin{tikzpicture}[->,>=stealth',shorten >=1pt,auto, semithick]
				\tikzstyle{every state} = [text = black]
				
				\node[state] (s0) [fill = yellow] {$s_0$}; 
				\node[state] (s1) [right of = s0, node distance = 1.7cm, fill = green] {$s_1$};
				\node[] (temp) [right of = s1, node distance = 1.7cm] {\dots};
				\node[state] (sn) [fill = cyan!60, right of = temp, node distance = 1.7cm] {$s_n$};
				\node[state] (g1) [right of = sn, node distance = 1.7cm] {$G_1$};
				
				\node[state] (t0) [fill = yellow, right of = g1, node distance = 2cm] {$t_0$}; 
				\node[state] (t1) [right of = t0, node distance = 1.7cm, fill = green] {$t_1$};
				\node[] (temp2) [right of = t1, node distance = 1.7cm] {\dots};
				\node[state] (tn) [fill = cyan!60, right of = temp2, node distance = 1.7cm] {$t_n$};
				\node[state] (trap) [fill = orange, below of = temp2, node distance = 1cm] {$F$};
				\node[state] (g2) [right of = tn, node distance = 1.7cm] {$G_2$};

				\node(sinit) [left of = s0, node distance = 1cm] {};
				\node(tinit) [left of = t0, node distance = 1cm] {};
				
				\path 
				(sinit) edge (s0)
				(s0) edge node [above, pos = 0.4] {$1$} (s1)
				(s1) edge node [above, pos = 0.4] {$1$} (temp)
				(temp) edge node [above, pos = 0.4] {$1$} (sn)
				(sn) edge node [above, pos = 0.4] {$1$} (g1)
				(g1) edge [loop below] node [right, pos = 0.12] {$1$} (g1)
				
				(tinit) edge (t0)
				(t0) edge node [above, pos = 0.4] {$1{-}\varepsilon$} (t1)
				(t0) edge [bend right = 25] node [below, pos = 0.3] {$\varepsilon$} (trap)
				(t1) edge node [above, pos = 0.4] {$1{-}\varepsilon$} (temp2)
				(t1) edge [] node [right, pos = 0.3, xshift = 0.1cm] {$\varepsilon$} (trap)
				(temp2) edge node [above, pos = 0.4] {$1{-}\varepsilon$} (tn)
				(tn) edge node [above, pos = 0.4] {$1{-}\varepsilon$} (g2)
				(tn) edge [] node [left, pos = 0.3, xshift = -0.1cm] {$\varepsilon$} (trap)
				(g2) edge [loop below] node [right, pos = 0.12] {$1$} (g2)
				(trap) edge [loop right] node [right] {$1$} (trap)
				;

				%label
				\node [below of = s0, node distance = 0.7cm] {$\{a_0\}$}; 
				\node [below of = s1, node distance = 0.7cm] {$\{a_1\}$}; 
				\node [below of = sn, node distance = 0.7cm] {$\{a_n\}$}; 
				\node [below of = g1, node distance = 0.45cm, xshift=-0.6cm] {$\{g\}$}; 
				\node [below of = t0, node distance = 0.7cm] {$\{a_0\}$}; 
				\node [below of = t1, node distance = 0.7cm] {$\{a_1\}$};  
				\node [below of = tn, node distance = 0.7cm] {$\{a_n\}$}; 
				\node [below of = g2, node distance = 0.45cm, xshift=-0.6cm] {$\{g\}$};  
				\node [above of = trap, node distance = 0.7cm] {$\{f\}$}; 
		\end{tikzpicture}}
		\caption{An LMC in which $s_0 \sim_\varepsilon t_0$ and $\vert \mathrm{Pr}_{s_0}(\lozenge^{\leq n+1} g) - \mathrm{Pr}_{t_0}(\lozenge^{\leq n+1} g) \vert =  1 - (1-\varepsilon)^{n+1}$.}
		\label{Figure: LMCs for maximal difference in reachability probabilities in Branching or Weak Epsilon Bisimilar States}
	\end{figure}
	
	\begin{example}\label{Example: Maximal difference in Reachability Probabilities in Branching or Weak Epsilon Bisimilar States}
		Consider \Cref{Figure: LMCs for maximal difference in reachability probabilities in Branching or Weak Epsilon Bisimilar States}. For $i \in \{0,\dots,n\}$, let $l(s_i) = l(t_i) = a_i$ for pairwise distinct $a_i$, $l(G_1) = g = l(G_2)$ and $l(F) = f$ for some $f \neq g$.
		Then $s_0 \sim_\varepsilon t_0$ and the upper bound of \Cref{Theorem: Main Theorem RBBTEAPC} is met exactly: $\vert \mathrm{Pr}_{s_0}(\lozenge^{\leq n +1} g) - \mathrm{Pr}_{t_0}(\lozenge^{\leq n +1} g) \vert= 1 - (1-\varepsilon)^{n+1}$.
	\end{example}
	
	A disadvantage of the bound provided in \Cref{Theorem: Main Theorem RBBTEAPC} is, however, that it rapidly converges to $1$ for increasing $k$ and is thus not suitable when reasoning about long (or infinite) time horizons. In fact, it is the case that---without further assumptions---even simple unbounded reachability probabilities in $\varepsilon$-bisimilar states can strongly deviate. 
	
	\begin{figure}[t]
		\centering
		\resizebox{!}{0.051\textheight}{
			\begin{tikzpicture}[->,>=stealth',shorten >=1pt,auto, semithick]
				\tikzstyle{every state} = [text = black]
				\node[state, fill=yellow] (s0) {$s_0$};
				\node[state] (s2) [right of = s0, node distance = 2cm, fill = yellow] {$s_2$};
				\node[state] (s1) [left of = s0, node distance = 2cm, fill = yellow] {$s_1$};
				\node[state] (s3) [right of = s2, node distance = 2cm, fill = green] {$s_3$};
				
				\node (sinit) [above left of = s0, node distance = 1.2cm] {};
				
				\path 
				(sinit) edge (s0)
				(s0) edge node [above] {$\frac{1}{2}$} (s1)
				(s0) edge node [above] {$\frac{1}{2}$} (s2)
				(s2) edge node [above] {$\varepsilon$} (s3)
				(s2) edge [in=20,out=50,loop] node[above, yshift = -0.08cm] {$1{-}\varepsilon$} (s2)
				(s1) edge [loop left] node {$1$} (s1)
				(s3) edge [loop right] node {$1$} (s3);
				
				%label
				\node[above of = s0, node distance = 0.7cm] {$\{a\}$};
				\node[above of = s1, node distance = 0.7cm] {$\{a\}$};
				\node[above of = s2, node distance = 0.7cm] {$\{a\}$};
				\node[above of = s3, node distance = 0.7cm] {$\{g\}$};
		\end{tikzpicture}}
		
		\caption{The LMC used in \Cref{Example: cex thm:unbounded_reach}.}
		\label{Figure: Example cex thm:unbounded_reach}
	\end{figure}
	
	\begin{example}\label{Example: cex thm:unbounded_reach}
		Let $\varepsilon \geq 0$. The states $s_0$, $s_1$, and $s_2$ in \Cref{Figure: Example cex thm:unbounded_reach} are pairwise $\varepsilon$-bisimilar.
		However, if $\varepsilon > 0$, we have $\mathrm{Pr}_{s_0}(\lozenge g) = \frac{1}{2}$, $\mathrm{Pr}_{s_1}(\lozenge g) = 0$, and $\mathrm{Pr}_{s_2}(\lozenge g) = 1$.
	\end{example}
	
	The difference in reachability probabilities observed in the last example is caused by $\sim_\varepsilon$ relating states that are able to reach a goal state $g$ with positive probability to those that can not reach $g$ at all. One way to avoid this issue is to require that states from which $g$ is not reachable are labeled with a distinct label $f$. The existence of such a label $f$ is a rather natural assumption, as a typical preprocessing step when computing reachability probabilities is to identify the states from which no goal state is reachable, i.e., to identify the states we assume to be labeled with $f$ \cite{PoMC}.
	A result in the spirit of \Cref{Theorem: Main Theorem RBBTEAPC} that deals with \emph{unbounded} reachability properties can then be obtained as follows. 
	
	\begin{restatable}[\cite{FMOSCSMDP,RDPTLCSS}]{theorem}{TheoremUnboundedReach}
		\label{thm:unbounded_reach}
		Let some states in $\mathcal{M}$ be labeled with $g$, and let exactly the states that cannot reach a $g$-labeled state be labeled with $f$.
		Further, let $s \sim_{\varepsilon} t$, and let $N$ be the random variable that counts the number of steps until reaching a $g$- or $f$-labeled state.
		Then,
		\begin{align*}
			\vert \mathrm{Pr}_s(\lozenge g) - \mathrm{Pr}_t(\lozenge g) \vert \leq \varepsilon \cdot \mathbb{E}_s(N).
		\end{align*}
	\end{restatable}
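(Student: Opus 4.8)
The plan is to bound the difference by constructing an explicit \emph{coupling} of the chains started in $s$ and $t$, driven by the weight functions of \Cref{lem:epsilon-bisimulation-distribution}, and to charge the difference in reachability probabilities against the probability that this coupling ``breaks'' before an absorbing state is reached. Since $s \sim_\varepsilon t$ forces $l(s)=l(t)$, if $s$ (hence $t$) is labeled $g$ or $f$ then both reachability probabilities lie in $\{0,1\}$ and agree, so I may assume $s,t$ carry neither label. Fix an $\varepsilon$-bisimulation $R$ with $(s,t)\in R$. For computing $\lozenge g$ I treat every $g$- and $f$-state as absorbing: a $g$-state already witnesses $\lozenge g$, and by assumption an $f$-state can never reach $g$.

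I then build a joint process on pairs. Starting from $(s,t)$, while the current pair $(p,q)$ is related and unabsorbed I invoke \Cref{lem:epsilon-bisimulation-distribution}: pick a successor $p'$ of $p$ with probability $\prob(p)(p')$ and then $q'$ with probability $\Delta(p')(q')$. Property~1 makes the second marginal of the step exactly $\prob(q)$, and the first marginal is $\prob(p)$ because each $\Delta(p')$ is a distribution; hence the first component is distributed as the $\mathcal{M}$-chain from $s$ and the second as the $\mathcal{M}$-chain from $t$. Property~2 ensures that, conditioned on a related unabsorbed pair, the successor pair again lies in $R$ with probability at least $1-\varepsilon$. Once a step leaves $R$ (a \emph{break}) I let the two components evolve independently under $\prob$, preserving both marginals.

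The key observation is that a related pair carries one label, since $(p,q)\in R$ implies $l(p)=l(q)$; thus the components reach $g$- and $f$-states \emph{simultaneously} as long as no break has occurred. Hence, if an absorbing state is reached before any break, the two components agree on $\lozenge g$ (both reach $g$, or both sit in $f$ and never reach $g$). Writing $A_1,A_2$ for the events that the first respectively second component reaches $g$ and $\mathbb{P}$ for the coupling measure, this gives $A_1\triangle A_2\subseteq\{\text{break before absorption}\}$ and therefore $|\mathrm{Pr}_s(\lozenge g)-\mathrm{Pr}_t(\lozenge g)|=|\mathbb{P}(A_1)-\mathbb{P}(A_2)|\le\mathbb{P}(\text{break before absorption})$. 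Letting $E_i$ be the event ``related and unabsorbed throughout times $0,\dots,i$'', a break at step $i{+}1$ requires $E_i$ and, by Property~2, happens with conditional probability at most $\varepsilon$, so $\mathbb{P}(\text{break before absorption})\le\varepsilon\sum_{i\ge 0}\mathbb{P}(E_i)$. On $E_i$ the first component shares the label of the unabsorbed second one and so has not reached $g$ or $f$, giving $E_i\subseteq\{N>i\}$ and $\mathbb{P}(E_i)\le\mathrm{Pr}_s(N>i)$ via the first marginal. The identity $\mathbb{E}_s(N)=\sum_{i\ge 0}\mathrm{Pr}_s(N>i)$ then yields the claimed bound $\varepsilon\cdot\mathbb{E}_s(N)$ (trivially if this is infinite).

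I expect the main obstacle to be making the coupling rigorous: one must check that the recipe ``use $\Delta$ while related, run independently after a break'' really defines a single probability space whose marginals are $\mathrm{Pr}_s$ and $\mathrm{Pr}_t$, and handle the associated measure-theoretic details (countability and finite branching help). A second, more delicate point is the exact bookkeeping of breaks, the events $E_i$, and absorption so that $\mathbb{P}(\text{break at }i{+}1)\le\varepsilon\,\mathbb{P}(E_i)$ and $E_i\subseteq\{N>i\}$ hold without off-by-one errors; here the $f$-labeling assumption is precisely what upgrades agreement at an absorbing state to agreement on the \emph{unbounded} event $\lozenge g$.
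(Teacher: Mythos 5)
Your proposal is correct, but it takes a genuinely different route from the paper. The paper also starts from the weight functions of \Cref{lem:epsilon-bisimulation-distribution}, but then proceeds purely analytically: it proves, by two separate inductions on the horizon $k$ (one for the lower and one for the upper bound), the step-bounded refinement $\vert \mathrm{Pr}_s(\lozenge^{\leq k} g) - \mathrm{Pr}_t(\lozenge^{\leq k} g) \vert \leq \varepsilon \cdot \sum_{i=1}^{k} \mathrm{Pr}_s(N \geq i)$, unfolding one transition via $\Delta$ in each induction step and bounding the ``unmatched'' probability mass (the pairs leaving $R$) crudely by $\varepsilon$, before letting $k \to \infty$ and using $\mathbb{E}_s(N) = \sum_{i \geq 1}\mathrm{Pr}_s(N \geq i)$. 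You replace this double induction by a single coupling-until-first-failure argument: run the pair process with $\Delta$ while related and unabsorbed, decouple after a break, and bound $\vert \mathbb{P}(A_1) - \mathbb{P}(A_2)\vert \leq \mathbb{P}(A_1 \triangle A_2) \leq \mathbb{P}(\text{break before absorption}) \leq \varepsilon \sum_{i \geq 0} \mathbb{P}(E_i)$ with $E_i \subseteq \{N > i\}$ via the first marginal. Your marginal computations are sound --- the second component has one-step law exactly $\prob(q)$ at every related pair by Property~1, uniformly in the first component's state, so both marginals are the correct chains --- and your indexing $\mathbb{E}_s(N) = \sum_{i \geq 0} \mathrm{Pr}_s(N > i)$ matches the paper's bound. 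What each approach buys: your coupling is conceptually more transparent, handles the two directions of the absolute value in one stroke, and localizes the role of the $f$-labeling assumption (agreement at absorption upgrades to agreement on the unbounded event $\lozenge g$); the paper's induction avoids constructing the joint process altogether (no Ionescu--Tulcea-style measure-theoretic overhead, which you correctly flag as the remaining work in your sketch) and yields the finite-horizon bound for each $k$ as an explicit intermediate result, which is of independent interest. One small point to make explicit when writing your argument up: making $g$- and $f$-states absorbing changes the model, so you should note, as you implicitly do, that $\Delta$ is only ever invoked at related \emph{unabsorbed} pairs, where the transition structure agrees with $\mathcal{M}$, and that neither $\mathrm{Pr}(\lozenge g)$ nor the law of $N$ is affected by the modification.
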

	
	\begin{remark}
		A result similar to \Cref{thm:unbounded_reach} is derived by Haesaert \textit{et al.} in \cite{FMOSCSMDP,RDPTLCSS} in the context of policy synthesis in control theory. In fact, their result is more general, as it considers all properties that can be described as the language of a deterministic finite automaton. These properties include, among others, the \emph{syntactically co-safe} LTL formulas \cite{MCSP}, which form a fragment of LTL built according to the grammar 
		\begin{align*}
			\varphi \Coloneqq \mathit{true} \mid a \mid \lnot a \mid \varphi_1 \lor \varphi_2 \mid \varphi_1 \land \varphi_2 \mid \Next \varphi \mid \varphi_1 \Until \varphi_2,
		\end{align*}
		where $a \in AP$. As unbounded reachability $\lozenge g$ is a syntactically co-safe LTL formula, the results of \cite{FMOSCSMDP,RDPTLCSS} extend the bound in \Cref{thm:unbounded_reach} to a broader class of properties. 
	\end{remark}
	
	Next, we show that the bound described in \Cref{thm:unbounded_reach} is actually tight.
	\begin{figure}[t]
		\centering
		\resizebox{!}{0.052\textheight}{
			\begin{tikzpicture}[->,>=stealth',shorten >=1pt,auto, semithick]
				\tikzstyle{every state} = [text = black]
				
				\node[state] (s0) [fill = yellow] {$s$};
				\node[state] (s1) [fill = red!65, right of = s0, node distance = 2.5cm] {$r$};
				\node[state] (s2) [fill = green, left of = s0, node distance = 2.5cm] {$q$};
				\node (sinit) [above left of = s0, node distance = 1.2cm] {};

				\node[state] (s0p) [fill = yellow, right of = s2, node distance = 11cm] {$t$};
				\node[state] (s1p) [fill = red!65, right of = s0p, node distance = 2.5cm] {$r'$};
				\node[state] (s2p) [fill = green, left of = s0p, node distance = 2.5cm] {$q'$};
				\node (sinit2) [above left of = s0p, node distance = 1.2cm] {};

				\path 
				(sinit) edge (s0)
				(s0) edge node [above] {$\frac{p}{2}$} (s1)
				(s0) edge [in=20,out=50,loop] node [above, yshift = -0.1cm] {$1{-}p$} (s0)
				(s0) edge node [above] {$\frac{p}{2}$} (s2)
				(s1) edge [loop right] node [right] {$1$} (s1) 
				(s2) edge [loop left] node [left] {$1$} (s2)
				
				(sinit2) edge (s0p)
				(s0p) edge node [above, pos = 0.65] {$\frac{p}{2}{+}\varepsilon$} (s1p)
				(s0p) edge [in=20,out=50,loop] node [above, yshift = -0.1cm] {$1 {-} p$} (s0p)
				(s0p) edge node [above] {$\frac{p}{2}{-}\varepsilon$} (s2p)
				(s1p) edge [loop right] node [right] {$1$} (s1p) 
				(s2p) edge [loop left] node [left] {$1$} (s2p)
				;
				
				%label
				\node[above of = s0, node distance = 0.7cm] {$\{a\}$};
				\node[above of = s2, node distance = 0.7cm] {$\{g\}$};
				\node[above of = s1, node distance = 0.7cm] {$\{f\}$};
				
				\node[above of = s0p, node distance = 0.7cm] {$\{a\}$};
				\node[above of = s2p, node distance = 0.7cm] {$\{g\}$};
				\node[above of = s1p, node distance = 0.7cm] {$\{f\}$};
				
		\end{tikzpicture}}
		\caption{The LMC used in  \Cref{ex:unbounded_reach}. The states $s$ and $t$ are $\varepsilon$-bisimilar.}
		\label{Figure: Example unbounded reachability}
	\end{figure}
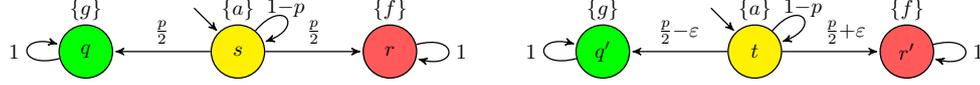
	
	\begin{example}
		\label{ex:unbounded_reach}
		Let $p\in (0,1)$, $\varepsilon< \frac{p}{2}$ and
		consider \Cref{Figure: Example unbounded reachability}, where $s \sim_\varepsilon t$.
		There,
		$\mathrm{Pr}_s (\lozenge   g) = \frac{1}{2}, \mathrm{Pr}_t (\lozenge   g) =  \frac{1}{2} - \frac{\varepsilon}{p}$ and $\mathbb{E}_s(N) = \mathbb{E}_t(N)  = \frac{1}{p}$.
		Hence, the bound in \Cref{thm:unbounded_reach} is met exactly: $\vert \mathrm{Pr}_s(\lozenge g) - \mathrm{Pr}_t(\lozenge g) \vert = \frac{\varepsilon}{p} = \varepsilon \cdot \mathbb{E}_s(N) = \varepsilon \cdot \mathbb{E}_t(N)$. 
	\end{example}
	
	Regarding $\varepsilon$-APBs and up-to-$(n, \varepsilon)$-bisimilarity, some preservation results w.r.t. the (approximate or robust) satisfaction of PCTL state-formulas can be found in the literature. Since, as we have seen in \Cref{Section: Definitions and Interrelation of Known Notions}, $s \sim_\varepsilon t$ implies both $s \equiv_\varepsilon t$ and $s \sim_\varepsilon^n t$ for any $n \in \mathbb{N}$ and any two states $s,t$, the following results also hold for $\varepsilon$-bisimilar states.
	
	An important property of $\varepsilon$-APBs is that related states satisfy the same \emph{$\varepsilon$-robust PCTL state formulas} $\Phi_{\mathrm{robust}}$, i.e., that $s \equiv_\varepsilon t$ implies that $s \vDash \Phi_{\mathrm{robust}}$ iff $t \vDash \Phi_{\mathrm{robust}}$, where $\vDash$ is the usual PCTL \emph{satisfaction relation} \cite{PoMC}. Intuitively, $\Phi_{\mathrm{robust}}$ is $\varepsilon$-robust if for all subformulas $\phi$ of $\Phi_{\mathrm{robust}}$ and all $s \in S$ either a strengthened version of $\phi$, obtained by making $\phi$'s probability thresholds harder to meet, holds in $s$, or even relaxing $\phi$'s probability thresholds is not sufficient to ensure that $s$ satisfies $\phi$. For details, see \cite{RPCTLMC}. 
	
	Furthermore, it was shown in \cite{SAAPBPCTL} that $(n, \varepsilon)$-bisimilar states approximately satisfy the same bounded PCTL state formulas. The fragment of PCTL considered does not allow unbounded until, and requires all until operator appearing in a formula to have the same time bound $k \in \mathbb{N}$. Under these assumptions, the precision of the approximation of satisfaction probabilities between $(n, \varepsilon)$-bisimilar states is proved to depend linearly on the parameters $n$ and $\varepsilon$, as well as the common step bound $k$ of the until operators. For details, see \cite{SAAPBPCTL}.

	%%%%%%%%%%%%%%%%%%%%%%%%
	
	\section{\texorpdfstring{$\varepsilon$-Perturbed Bisimulation}{Epsilon-Perturbed Bisimulation}}\label{Section: Epsilon-Quotients}
	In this section we consider finite LMCs. In \cite{ABM}, Kiefer and Tang define the notion of \emph{$\varepsilon$-quotients} for $\varepsilon \ge 0$.
	Their goal is to construct, from a given \emph{perturbed} LMC $\mathcal{M}'$, an abstraction that is as close as possible to the exact bisimulation quotient of an unknown, unperturbed LMC $\mathcal{M}$ corresponding to $\mathcal{M}'$. This inspires us to introduce \emph{$\varepsilon$-perturbed bisimulations}, which relate two LMCs iff they can be made probabilistically bisimilar by small perturbations of their transition probabilities. Since we require $\varepsilon$-perturbed bisimulations to be equivalences, these relations are well-suited for the construction of quotients of a given model.
	
	Like the $\varepsilon$-quotients of \cite{ABM}, we base our definition on \emph{$\varepsilon$-perturbations} of LMCs.  
	\begin{definition}[\cite{ABM}]\label{Definition: Epsilon-Perturbation of LMCs}
		$\mathcal{M}' = (S, \prob', s_{init}, l)$ is an \emph{$\varepsilon$-perturbation} of
		$\mathcal{M} = (S, \prob, s_{init}, l)$ if $\Vert \prob(s) - \prob'(s) \Vert_1 \leq \varepsilon$ for all $s \in S$. 
	\end{definition}
	
	$\mathcal{M}$ and any of its $\varepsilon$-perturbations $\mathcal{M}'$ have the same state space and labeling, and we often write $S' = \{s' \mid s \in S\}$ for the state space of $\mathcal{M}'$. Hence, $\mathcal{M}$ and $\mathcal{M}'$ only differ in their transition distribution functions. 
	However, $\mathcal{M}'$ does not need to preserve the structure of $\mathcal{M}$, i.e., there can be transitions in $\mathcal{M}$ that have probability 0 in $\mathcal{M}'$ and vice versa. As the next lemma shows, the total probability mass of these transitions cannot exceed $\frac{\varepsilon}{2}$.
	\begin{restatable}{lemma}{LemmaDifferenceOfEpsilonPerturbationProbabilities}\label{Lemma: Difference of Epsilon-Perturbation Probabilities}
		For all $s \in S$ and $A \subseteq S$ it holds that $\vert \prob(s)(A) - \prob'(s')(A') \vert \leq \frac{\varepsilon}{2}$.
	\end{restatable}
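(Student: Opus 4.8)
The plan is to recognize the claimed bound as the well-known identity relating the $L_1$ distance of two distributions to their maximal disagreement on a single event (the total variation distance), specialized to the distributions $\prob(s)$ and $\prob'(s')$. First I would fix $s \in S$ and $A \subseteq S$, and---using that $\mathcal{M}'$ shares the state space of $\mathcal{M}$, so that $s'$ is identified with $s$ and $A'$ with $A$---abbreviate $\mu = \prob(s)$ and $\nu = \prob'(s')$. By \Cref{Definition: Epsilon-Perturbation of LMCs} we then have $\Vert \mu - \nu \Vert_1 \leq \varepsilon$, and the goal reduces to showing $\vert \mu(A) - \nu(A) \vert \leq \tfrac{\varepsilon}{2}$.

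The key step exploits that both $\mu$ and $\nu$ are distributions, so the mass that $\mu$ puts on $A$ in excess of $\nu$ is exactly compensated on the complement $S \setminus A$. Writing $\delta(q) = \mu(q) - \nu(q)$, I would use $\sum_{q \in S}\delta(q) = 1 - 1 = 0$ to conclude $\mu(A) - \nu(A) = \sum_{q \in A}\delta(q) = -\sum_{q \notin A}\delta(q) = -\bigl(\mu(S \setminus A) - \nu(S \setminus A)\bigr)$. Hence the two partial sums $\sum_{q\in A}\delta(q)$ and $\sum_{q\notin A}\delta(q)$ have equal absolute value. Splitting the $L_1$ norm along $A$ and its complement and applying the triangle inequality to each part yields
\begin{align*}
\Vert \mu - \nu \Vert_1 = \sum_{q \in A} \vert \delta(q) \vert + \sum_{q \notin A} \vert \delta(q) \vert \;\geq\; \Bigl\vert \sum_{q \in A}\delta(q)\Bigr\vert + \Bigl\vert \sum_{q \notin A}\delta(q)\Bigr\vert = 2\,\vert \mu(A) - \nu(A)\vert .
\end{align*}
Rearranging and using $\Vert \mu - \nu\Vert_1 \le \varepsilon$ gives $\vert \mu(A) - \nu(A)\vert \le \tfrac{\varepsilon}{2}$, which is the assertion.

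I do not expect a serious obstacle here; the only point that needs a little care is the bookkeeping of the notational identification between $S$ and $S'$ (and between $A$ and $A'$), which is what lets us read $\prob'(s')(A')$ as $\nu(A)$. The substantive content is entirely the ``mass is conserved on the complement'' observation, which is precisely the factor $\tfrac{1}{2}$ that separates the total variation distance from the $L_1$ distance.
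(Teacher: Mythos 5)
Your proof is correct and rests on the same core computation as the paper's: splitting the $L_1$ norm along $A$ and $S \setminus A$, applying the triangle inequality to each part, and using that both distributions have total mass $1$ so the two partial sums of $\delta$ cancel. The only difference is presentational --- the paper establishes the two one-sided bounds separately by contradiction, whereas your direct version yields $\Vert \mu - \nu \Vert_1 \geq 2\,\vert \mu(A) - \nu(A) \vert$ in one stroke, which is arguably tidier but not a different argument.
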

	
	We now define the novel notion of \emph{$\varepsilon$-perturbed bisimulation}.
	\begin{definition}\label{Definition: Perturbed Epsilon Bisimulation}
		An equivalence $R \subseteq S \times S$ is called an \emph{$\varepsilon$-perturbed bisimulation} on $\mathcal{M}$ if there is an $\varepsilon$-perturbation $\mathcal{M}'$ of $\mathcal{M}$ such that $R$ is a bisimulation on $\mathcal{M}'$. Two states $s,t \in S$ are \emph{$\varepsilon$-perturbed bisimilar}, denoted $s \simeq_\varepsilon t$, if they are related by some $\varepsilon$-perturbed bisimulation. Given LMCs $\mathcal{M}$ and $\mathcal{N}$, then $\mathcal{M} \simeq_\varepsilon \mathcal{N}$ if $s_{init}^\mathcal{M} \simeq_\varepsilon s_{init}^\mathcal{N}$ in $\mathcal{M} \oplus \mathcal{N}$.
	\end{definition}
	
	In the terminology of \cite{ABM}, $\mathcal{M} \simeq_\varepsilon \mathcal{N}$ iff $\mathcal{M}$ and $\mathcal{N}$ have bisimilar $\varepsilon$-perturbations iff there are bisimilar $\varepsilon$-quotients of $\mathcal{M}$ and $\mathcal{N}$. If all states of $\mathcal{M}$ and $\mathcal{N}$ are reachable, even the stronger characterization $\mathcal{M} \simeq_\varepsilon \mathcal{N}$ iff $\mathcal{M}$ and $\mathcal{N}$ have \emph{isomorphic} $\varepsilon$-perturbations iff there are \emph{isomorphic} $\varepsilon$-quotients of $\mathcal{M}$ and $\mathcal{N}$ holds. Since the unique $0$-perturbation of any LMC is the LMC itself, $\mathcal{M} \simeq_0 \mathcal{N}$ iff $\mathcal{M} \sim \mathcal{N}$. Moreover, $\simeq_\varepsilon$ is symmetric and reflexive, but not transitive, which implies that $\simeq_\varepsilon$ is not necessarily an $\varepsilon$-perturbed bisimulation itself. 
	
	Let $s \sim_\varepsilon^* t$ denote that states $s$ and $t$ are related by a \emph{transitive} $\varepsilon$-bisimulation. We remark that both $\simeq_\varepsilon$ and $\sim_\varepsilon^*$ are definitions in the spirit of a notion called \emph{$\varepsilon$-lumpability} (or \emph{quasi-lumpability}), which describes that a LMC can be made exactly lumpable w.r.t. a given equivalence by slight changes (up to $\varepsilon$ in each value) of its transition probabilities \cite{EOLFMC,BQLMC,CRQLMC}. In contrast to the non-transitive case, any transitive $\varepsilon$-APB is also an $\varepsilon$-bisimulation. 
	
	\begin{figure}[t]
		\centering
		\resizebox{!}{0.07\textheight}{\begin{tikzpicture}[->,>=stealth',shorten >=1pt,auto, semithick]
				\tikzstyle{every state} = [text = black]
				
				\node[state] (s) [] {$s$}; 
				\node[] (temp1) [right of = s, node distance = 2cm] {};
				\node[state] (t) [right of = temp1, node distance = 2cm] {$t$};
				\node[] (temp2) [right of = t, node distance = 2cm] {};
				\node[state] (u) [right of = temp2, node distance = 2cm] {$u$};
				\node[state] (x) [below of = temp1, node distance = 1.1cm, fill = green] {$x$};
				\node[state] (y) [below of = temp2, node distance = 1.1cm, fill = yellow] {$y$}; 
				\node (sinit) [right of = t, node distance = 1.2cm] {};
				
				\path 
				(sinit) edge (t)
				(s) edge [bend right = 10] node [left, xshift = -0.1cm] {$0.5{-}\varepsilon$} (x)
				(s) edge node [above, pos = 0.2, xshift = 0.1cm] {$0.5{+}\varepsilon$} (y)
				
				(t) edge [bend left = 10] node [above, pos =0.3, xshift = -0.2cm] {$0.5$} (x)
				(t) edge [bend right = 10] node [above, pos =0.3, xshift = 0.2cm] {$0.5$} (y)
				
				(u) edge node [above, pos =0.2, xshift = -0.2cm] {$0.5{+}\varepsilon$} (x)
				(u) edge [bend left = 10] node [right, xshift = 0.1cm] {$0.5{-}\varepsilon$} (y)
				
				(x) edge [loop left  ]node [left] {$1$} (x)
				(y) edge [loop right ]node [right] {$1$} (y)
				;
				
				%label
				\node [left of = s, node distance = 0.65cm] {$\emptyset$}; 
				\node [left of = t, node distance = 0.65cm] {$\emptyset$}; 
				\node [right of = u, node distance = 0.65cm] {$\emptyset$}; 
				\node [right of = x, node distance = 0.75cm, yshift = -0.2cm] {$\{a\}$}; 
				\node [left of = y, node distance = 0.75cm, yshift = -0.2cm] {$\{b\}$}; 
		\end{tikzpicture}}
		\caption{An LMC in which there is no unique maximal transitive $\varepsilon$-bisimulation.}
		\label{Figure: Maximal Branching Bisimulation is not unique}
	\end{figure}
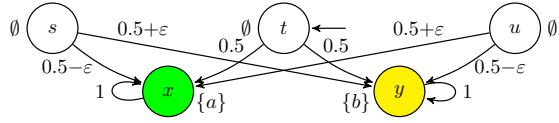
	
	The requirement of transitivity comes with the downside that there is not always a \emph{unique} largest transitive $\varepsilon$-bisimulation: in \Cref{Figure: Maximal Branching Bisimulation is not unique}, no transitive $\varepsilon$-bisimulation $R$ can contain both $(s,t)$ and $(t,u)$, as otherwise also $(s,u) \in R$ must hold. However, $s \sim_\varepsilon^* t$ and $t \sim_\varepsilon^* u$ as $R_1 = \{\{s,t\}, \{u\}, \{x\}, \{y\} \}$ and $R_2 = \{\{s\}, \{t,u\}, \{x\}, \{y\}\}$ are transitive $\varepsilon$-bisimulations. Hence, the union of all transitive $\varepsilon$-bisimulations in a given model is thus not always a transitive $\varepsilon$-bisimulation itself. This is different than in the non-transitive case, where $\sim_\varepsilon$ is always an $\varepsilon$-bisimulation \cite{AAPP}. Since $s \simeq_\varepsilon t$ and $t \simeq_\varepsilon u$ but $s \not \simeq_\varepsilon u$ in \Cref{Figure: Maximal Branching Bisimulation is not unique}, it follows that there is also not always a unique largest $\varepsilon$-perturbed bisimulation.
	
	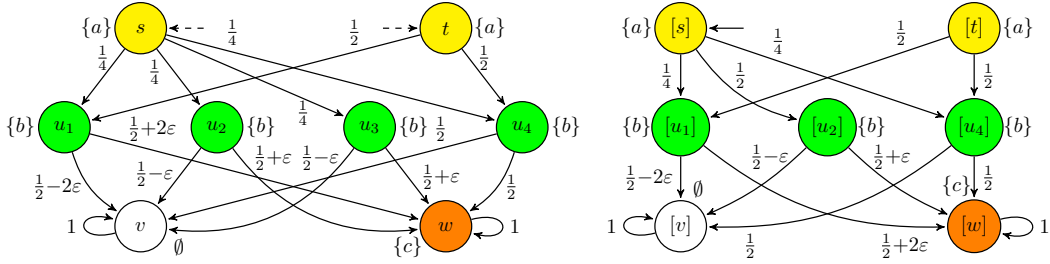
\begin{figure}[t]
		\centering
		\resizebox{!}{0.157\textheight}{\begin{tikzpicture}[->,>=stealth',shorten >=1pt,auto, semithick]
				\tikzstyle{every state} = [text = black]
				
				\node[state] (s0) [fill = yellow] {$s$}; 
				\node[] (tempu) [below of = s0, node distance = 1.7cm] {};
				\node[state] (u1) [left of = tempu, node distance = 1.3cm, fill = green] {$u_1$};
				\node[state] (u2) [right of = tempu, node distance = 1.3cm, fill = green] {$u_2$};
				\node[state] (u3) [right of = tempu, node distance = 3.9cm, fill = green] {$u_3$};
				\node[state] (u4) [right of = tempu, node distance = 6.5cm, fill = green] {$u_4$};
				\node (tempvw1) [right of = u2, node distance = 1.3cm] {};
				\node (tempvw) [below of = tempvw1, node distance = 1.7cm] {};
				\node[state] (v) [left of = tempvw, node distance = 2.6cm] {$v$}; 
				\node[state] (w) [right of = tempvw, node distance = 2.6cm, fill = orange] {$w$};
				\node (tempt) [right of = u3, node distance = 1.3cm] {};
				\node[state] (t) [above of = tempt, node distance = 1.7cm, fill = yellow] {$t$};
				\node (sinit1) [right of = s0, node distance = 1.2cm] {};
				\node (sinit2) [left of = t, node distance = 1.2cm] {};
				
				\path[dashed]
				(sinit1) edge (s0)
				(sinit2) edge (t);
				
				\path 
				(s0) edge node [above] {$\frac{1}{4}$} (u1)
				(s0) edge node [left, pos = 0.4, xshift = -0.1cm] {$\frac{1}{4}$} (u2)
				(s0) edge node [below, pos =0.7, xshift = 0.2cm] {$\frac{1}{4}$} (u3)
				(s0) edge node [above, pos = 0.2] {$\frac{1}{4}$} (u4)
				
				(t) edge [] node [above, pos = 0.2] {$\frac{1}{2}$} (u1)
				(t) edge [] node [above] {$\frac{1}{2}$} (u4)
				
				(v) edge [loop left] node {$1$} (v)
				(w) edge [loop right] node {$1$} (w)
				
				(u1) edge [bend right = 20] node [left] {$\frac{1}{2}{-}2\varepsilon$} (v)
				(u1) edge node [above, pos = 0.17, xshift = 0.1cm] {$\frac{1}{2}{+}2\varepsilon$} (w)
				
				(u2) edge node [left, pos = 0.45] {$\frac{1}{2}{-}\varepsilon$} (v)
				(u2) edge [bend right = 30] node [right, pos = 0.05, xshift = 0.1cm] {$\frac{1}{2}{+}\varepsilon$} (w)
				
				(u3) edge [bend left = 30] node [left, pos = 0.05] {$\frac{1}{2}{-}\varepsilon$} (v)
				(u3) edge node [right, pos = 0.5, xshift = 0.1cm] {$\frac{1}{2}{+}\varepsilon$} (w)
				
				(u4) edge node [above, pos = 0.17] {$\frac{1}{2}$} (v)
				(u4) edge [bend left = 20] node [right] {$\frac{1}{2}$} (w)
				
				;
				
				%label
				\node [left of = s0, node distance = 0.75cm] {$\{a\}$}; 
				\node [right of = t, node distance = 0.75cm] {$\{a\}$}; 
				\node [left of = u1, node distance = 0.75cm] {$\{b\}$}; 
				\node [right of = u2, node distance = 0.75cm] {$\{b\}$}; 
				\node [right of = u3, node distance = 0.75cm] {$\{b\}$}; 
				\node [right of = u4, node distance = 0.75cm] {$\{b\}$}; 
				\node [right of = v, node distance = 0.65cm, yshift = -0.35cm] {$\emptyset$}; 
				\node [left of = w, node distance = 0.65cm, yshift = -0.35cm] {$\{c\}$}; 
				
				%rhs
				\node[state] (s0) [fill = yellow, right of = t, node distance = 4cm] {$[s]$}; 
				\node[state] (u1) [below of = s0, node distance = 1.7cm, fill = green] {$[u_1]$};
				\node[state] (u2) [right of = u1, node distance = 2.5cm, fill = green] {$[u_2]$};
				\node[state] (u4) [right of = u2, node distance = 2.5cm, fill = green] {$[u_4]$};
				\node[state] (v) [below of = u1, node distance = 1.7cm] {$[v]$}; 
				\node[state] (w) [below of = u4, node distance = 1.7cm, fill = orange] {$[w]$};
				\node[state] (t) [above of = u4, node distance = 1.7cm, fill = yellow] {$[t]$};
				
				\node (initr) [right of = s0, node distance = 1.2cm] {};
				
				\path 
				(initr) edge (s0)
				(s0) edge node [left] {$\frac{1}{4}$} (u1)
				(s0) edge [bend right = 20] node [above] {$\frac{1}{2}$} (u2)
				(s0) edge [] node [above, pos = 0.3] {$\frac{1}{4}$} (u4)
				
				(t) edge [] node [above, pos = 0.2] {$\frac{1}{2}$} (u1)
				(t) edge [] node [right] {$\frac{1}{2}$} (u4)
				
				(v) edge [loop left] node {$1$} (v)
				(w) edge [loop right] node {$1$} (w)
				
				(u1) edge [] node [left] {$\frac{1}{2}{-}2\varepsilon$} (v)
				(u1) edge [bend right = 20] node [below, pos = 0.85] {$\frac{1}{2}{+}2\varepsilon$} (w)
				
				(u2) edge [bend left = 10 ]node [left, pos = 0.1] {$\frac{1}{2}{-}\varepsilon$} (v)
				(u2) edge [bend right = 10] node [right, pos = 0.1, xshift = 0.1cm] {$\frac{1}{2}{+}\varepsilon$} (w)
				
				(u4) edge [bend left = 20] node [below, pos = 0.85] {$\frac{1}{2}$} (v)
				(u4) edge node [right] {$\frac{1}{2}$} (w)
				
				;
				
				\node [left of = s0, node distance = 0.75cm] {$\{a\}$}; 
				\node [right of = t, node distance = 0.75cm] {$\{a\}$}; 
				\node [left of = u1, node distance = 0.75cm] {$\{b\}$}; 
				\node [right of = u2, node distance = 0.75cm] {$\{b\}$}; 
				\node [right of = u4, node distance = 0.75cm] {$\{b\}$}; 
				\node [above of = v, node distance = 0.65cm, xshift = 0.3cm] {$\emptyset$}; 
				\node [above of = w, node distance = 0.7cm, xshift = -0.3cm] {$\{c\}$}; 
				
		\end{tikzpicture}}
		\caption{Two LMCs $\mathcal{M}_s$ and $\mathcal{M}_t$ (left) with initial states $s$ and $t$, respectively, and $\mathcal{Q} = \xfrac{\mathcal{M}_s}{\sim}$ (right), demonstrating that $\simeq_\varepsilon$ and $\sim_\varepsilon^*$ can differentiate bisimilar models and are not additive.}
		\label{Figure: Example that simeq can differentiate bisimilar LMC}
	\end{figure}
	
	Now consider, for $\varepsilon < \frac{1}{4}$, the LMCs $\mathcal{M}_s$ and $\mathcal{M}_t$ on the left of \Cref{Figure: Example that simeq can differentiate bisimilar LMC}, with initial states $s$ and $t$, respectively. In both models, $\sim$ is the finest equivalence that contains $(u_2, u_3)$. Let $R_1$ be the finest equivalence that contains $(s,t), (u_1, u_2), (u_3, u_4)$, and let $R_2$ be the one that contains $(s,t), (u_1, u_3),(u_2, u_4)$. Both $R_1$ and $R_2$ are transitive $\varepsilon$-bisimulations, and since $u_1 \nsim_\varepsilon u_4$ no other transitive $\varepsilon$-bisimulation can contain $(s,t)$. Hence, no such relation contains $(u_2, u_3)$. 
	Let $\mathcal{Q} = \xfrac{\mathcal{M}_s}{\sim}$ be as on the right of the figure. 
	Then $\mathcal{M}_s \sim \mathcal{Q}$ and $\mathcal{M}_s \simeq_\varepsilon \mathcal{M}_t$ as, e.g., the $\varepsilon$-perturbations $\mathcal{M}_s'$ and $\mathcal{M}_t'$ that enforce $u_1' \sim u_2'$ and $u_3' \sim u_4'$ and are otherwise unchanged are bisimilar. However, there are no bisimilar $\varepsilon$-perturbations of $\mathcal{M}_t$ and $\mathcal{Q}$, i.e., $\mathcal{M}_t \not \simeq_\varepsilon \mathcal{Q}$. Since ${\simeq_0} = {\sim}$ this observation additionally yields that $\simeq_\varepsilon$ cannot be additive, as otherwise $\mathcal{M}_s \sim \mathcal{Q}$ and $\mathcal{M}_s \simeq_\varepsilon \mathcal{M}_t$ would have to imply $\mathcal{M}_t \simeq_\varepsilon \mathcal{Q}$. All in all, this leads to the following result.
	
	\begin{restatable}{proposition}{PropAnomalySimeq}\label{Proposition: Simeq can distinguish bisimilar states}
		The relation $\simeq_\varepsilon$ is not additive in the tolerances and can distinguish bisimilar LMCs in the following sense: there are LMCs $\mathcal{M}_1, \mathcal{M}_2$ and $\mathcal{N}$ such that $\mathcal{M}_1 \sim \mathcal{M}_2$ and $\mathcal{M}_1 \simeq_\varepsilon \mathcal{N}$, but $\mathcal{M}_2 \not \simeq_\varepsilon \mathcal{N}$.
	\end{restatable}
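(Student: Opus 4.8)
The plan is to use the three LMCs from \Cref{Figure: Example that simeq can differentiate bisimilar LMC}, setting $\mathcal{M}_1 := \mathcal{M}_s$, $\mathcal{M}_2 := \mathcal{Q} = \xfrac{\mathcal{M}_s}{\sim}$ and $\mathcal{N} := \mathcal{M}_t$. Since every LMC is bisimilar to its quotient, $\mathcal{M}_1 \sim \mathcal{M}_2$ holds for free, so everything reduces to the two claims $\mathcal{M}_s \simeq_\varepsilon \mathcal{M}_t$ and $\mathcal{Q} \not\simeq_\varepsilon \mathcal{M}_t$. These immediately give the distinguishing statement with the chosen $\mathcal{M}_1,\mathcal{M}_2,\mathcal{N}$. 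Non-additivity then follows from ${\simeq_0} = {\sim}$: we have $\mathcal{Q} \simeq_0 \mathcal{M}_s$ and $\mathcal{M}_s \simeq_\varepsilon \mathcal{M}_t$, so if $\simeq$ were additive it would force $\mathcal{Q} \simeq_\varepsilon \mathcal{M}_t$, contradicting the second claim.

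For $\mathcal{M}_s \simeq_\varepsilon \mathcal{M}_t$ I would exhibit explicit $\varepsilon$-perturbations, leaving the $a$-states $s,t$ untouched and only moving the outgoing distributions of the $b$-states. Concretely, I would perturb so that $u_1,u_2$ acquire the common probability $\tfrac12 - \tfrac32\varepsilon$ of moving to the $\emptyset$-state $v$, and $u_3,u_4$ the common value $\tfrac12 - \tfrac12\varepsilon$; each such move has $L_1$-cost exactly $\varepsilon$ and is thus legal. In the perturbed chains the equivalence merging $\{u_1,u_2\}$ and $\{u_3,u_4\}$ is a bisimulation, and under it both $s$ and $t$ move to each of the two resulting $b$-classes with probability $\tfrac12$; hence the perturbations are bisimilar.

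The crux is $\mathcal{Q} \not\simeq_\varepsilon \mathcal{M}_t$, which I would prove by contradiction: assume an $\varepsilon$-perturbation of $\mathcal{Q} \oplus \mathcal{M}_t$ carries a bisimulation $R$ with $([s],t) \in R$. Every $b$-state reaches only the $\emptyset$-class (through $v$) and the $c$-class (through $w$), so its $R$-class is pinned down by its perturbed probability $q$ of reaching $c$, and by \Cref{Lemma: Difference of Epsilon-Perturbation Probabilities} this $q$ stays within $\tfrac\varepsilon2$ of the unperturbed value. As $u_1,[u_1]$ have unperturbed $c$-probability $\tfrac12 + 2\varepsilon$ while $u_4,[u_4]$ have $\tfrac12$, these two groups differ by $2\varepsilon$, whereas two perturbed values can each move by at most $\tfrac\varepsilon2$ and so close a gap of at most $\varepsilon$; hence a ``high'' and a ``low'' state can never share a class. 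I would then match the class-probabilities $\prob'([s])(C) = \prob'(t)(C)$: the class of $[u_2]$ carries mass $\approx\tfrac12$ from $[s]$, so to balance it this class must absorb $u_1$ or $u_4$; but then the class of the remaining state $[u_1]$ (resp. $[u_4]$) still carries mass $\approx\tfrac14$ from $[s]$ while receiving only $O(\varepsilon)$ mass from $t$, since it cannot merge with any $\mathcal{M}_t$-successor of the opposite group. This is impossible for small $\varepsilon$.

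The main obstacle is exactly this last step: turning the intuition into a rigorous, exhaustive case distinction over the finitely many admissible ways the perturbation can fuse the five $b$-classes and leak probability across the disjoint union $\mathcal{Q} \oplus \mathcal{M}_t$, while carefully controlling the $O(\varepsilon)$ error terms that arise from perturbing the outgoing distributions of $[s]$ and $t$ themselves, so that every case yields a genuine numerical contradiction.
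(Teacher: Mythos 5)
Your proposal is correct and follows essentially the same route as the paper: the paper's proof uses exactly the three LMCs of \Cref{Figure: Example that simeq can differentiate bisimilar LMC} with $\mathcal{M}_1 = \mathcal{M}_s$, $\mathcal{M}_2 = \mathcal{Q} = \xfrac{\mathcal{M}_s}{\sim}$, $\mathcal{N} = \mathcal{M}_t$, the same merging perturbation ($u_1' \sim u_2'$ and $u_3' \sim u_4'$) for $\mathcal{M}_s \simeq_\varepsilon \mathcal{M}_t$, and the identical non-additivity argument via ${\simeq_0} = {\sim}$. The only difference is that the paper merely asserts $\mathcal{M}_t \not\simeq_\varepsilon \mathcal{Q}$ without proof, whereas you sketch the verification; your sketch is sound (the high/low separation via \Cref{Lemma: Difference of Epsilon-Perturbation Probabilities} and the mass-balance bounds of the form $\tfrac{1}{4} - \tfrac{\varepsilon}{2} > \tfrac{\varepsilon}{2}$ for $\varepsilon < \tfrac{1}{4}$ close every case, including cross-transitions in $\mathcal{Q} \oplus \mathcal{M}_t$), so the case distinction you flag as the main obstacle does go through.
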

	
	This behavior of $\simeq_\varepsilon$ is in contrast to, e.g., $\sim_\varepsilon$, as $s_1 \sim s_2$ and $s_1 \sim_\varepsilon t$ always implies $s_2 \sim_\varepsilon t$. In particular, the non-additivity does not hinge on the existence of bisimilar states in the model. To see this consider, e.g., slight perturbations $\mathcal{M}_s'$ and $\mathcal{M}_t'$ of the LMCs on the left of \Cref{Figure: Example that simeq can differentiate bisimilar LMC}, where for some $\delta < \varepsilon$ we set $\prob(u_2)(v) = \frac{1}{2}-\varepsilon -\delta$ and $\prob(u_2)(w) = \frac{1}{2} + \varepsilon + \delta$, and leave the rest of the models unchanged. Then $u_2 \nsim u_3$ in $\mathcal{M}_s'$ and $\mathcal{M}_t'$, but still $\mathcal{M}'_s \simeq_\delta \mathcal{Q}$ and $\mathcal{M}'_s \simeq_\varepsilon \mathcal{M}'_t$ while $\mathcal{M}'_t \not \simeq_{\varepsilon + \delta} \mathcal{Q}$, where $\mathcal{Q}$ is again the (unperturbed) LMC on the right of the figure. Similar results hold for $\sim_\varepsilon^*$, as $\mathcal{M}_s \sim_\varepsilon^* \mathcal{M}_t$ and $\mathcal{M}_s \sim \mathcal{Q}$, but $\mathcal{M}_t \nsim_\varepsilon^* \mathcal{Q}$.
	
	We now discuss how $\simeq_{\varepsilon}$ relates to $\sim_\varepsilon$ and $\sim_\varepsilon^*$, starting with the direction from left to right. From \cite{ABMfull} it follows directly that $\mathcal{M} \simeq_\varepsilon \mathcal{N}$ implies $\mathcal{M} \sim_\varepsilon \mathcal{N}$. As we show next, the claim also holds when considering the stronger requirement of \emph{transitive} $\varepsilon$-bisimilarity.
	
	\begin{restatable}{lemma}{PerturbedEpsilonBisimImpliesEpsilonBisim}\label{Corollary: Perturbed Epsilon Bisim implies Epsilon Bisim}
		$\mathcal{M} \simeq_\varepsilon \mathcal{N}$ implies $\mathcal{M} \sim_\varepsilon^* \mathcal{N}$.
	\end{restatable}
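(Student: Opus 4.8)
The plan is to show that the very relation witnessing $\mathcal{M} \simeq_\varepsilon \mathcal{N}$ is already a transitive $\varepsilon$-bisimulation. Working inside $\mathcal{M} \oplus \mathcal{N}$, unfolding $\mathcal{M} \simeq_\varepsilon \mathcal{N}$ gives an $\varepsilon$-perturbed bisimulation $R$ relating $s_{init}^{\mathcal{M}}$ and $s_{init}^{\mathcal{N}}$; by definition $R$ is an equivalence and there is an $\varepsilon$-perturbation $\mathcal{M}'$ of $\mathcal{M} \oplus \mathcal{N}$ on which $R$ is an \emph{exact} bisimulation. Since $R$ is an equivalence it is in particular transitive, so it suffices to verify that $R$ is also an $\varepsilon$-bisimulation on the unperturbed chain; then $R$ itself is a transitive $\varepsilon$-bisimulation containing $(s_{init}^{\mathcal{M}}, s_{init}^{\mathcal{N}})$, which is exactly $\mathcal{M} \sim_\varepsilon^* \mathcal{N}$.

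For the label condition of \Cref{Definition: Epsilon-Bisimulation} there is nothing to do: $R$ is a bisimulation on $\mathcal{M}'$ and $\mathcal{M}'$ has the same labeling as $\mathcal{M} \oplus \mathcal{N}$, so $(s,t) \in R$ already forces $l(s) = l(t)$. The crux is condition (ii), i.e.\ $\prob(s)(A) \leq \prob(t)(R(A)) + \varepsilon$ for every $(s,t) \in R$ and every $A \subseteq S$. The key observation I would exploit is that, although $A$ need not be $R$-closed, its image $R(A)$ is: because $R$ is an equivalence, $R(A)$ is a union of $R$-classes, and by reflexivity $A \subseteq R(A)$. This is what lets me bring the exact bisimulation equality available on $\mathcal{M}'$ into play on the $R$-closed set $R(A)$.

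Concretely, I would chain four estimates. Monotonicity together with $A \subseteq R(A)$ gives $\prob(s)(A) \leq \prob(s)(R(A))$. \Cref{Lemma: Difference of Epsilon-Perturbation Probabilities} then bounds the effect of the perturbation by $\frac{\varepsilon}{2}$ per state (identifying the shared state space of $\mathcal{M}$ and $\mathcal{M}'$), so $\prob(s)(R(A)) \leq \prob'(s)(R(A)) + \frac{\varepsilon}{2}$. Since $R(A)$ is $R$-closed and $R$ is a bisimulation on $\mathcal{M}'$, the alternative ($R$-closed-set) characterization of bisimulation from the preliminaries yields $\prob'(s)(R(A)) = \prob'(t)(R(A))$. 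Applying \Cref{Lemma: Difference of Epsilon-Perturbation Probabilities} once more gives $\prob'(t)(R(A)) \leq \prob(t)(R(A)) + \frac{\varepsilon}{2}$. Concatenating the four steps produces $\prob(s)(A) \leq \prob(t)(R(A)) + \varepsilon$, as required, and the statement then follows by instantiating $(s,t)$ with the two initial states.

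I do not expect a genuine obstacle here; the argument is a short chain of inequalities. The only point that requires care — and the one idea the proof really turns on — is routing the comparison through the $R$-closed set $R(A)$ rather than through $A$ itself, since the exact equality on the perturbed chain is only available for $R$-closed sets, and orienting the two one-sided perturbation bounds so that the two $\frac{\varepsilon}{2}$ contributions add up to exactly $\varepsilon$.
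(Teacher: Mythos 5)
Your proof is correct and takes essentially the same route as the paper's: both pass the comparison through an $R$-closed set ($R(A)$, respectively a $\sim'$-closed $C$) and apply \Cref{Lemma: Difference of Epsilon-Perturbation Probabilities} twice around the exact bisimulation equality on the perturbed chain, splitting the tolerance into two $\frac{\varepsilon}{2}$ contributions. The only cosmetic difference is that the paper works with bisimilarity $\sim'$ on the perturbed model and factors the argument through the notion of $\varepsilon$-APB (concluding via \Cref{Lemma: Epsilon-Bisim = Epsilon-APB in equivalence case}), whereas you use the witnessing $\varepsilon$-perturbed bisimulation $R$ directly and inline the step $\prob(s)(A) \leq \prob(s)(R(A))$ of that auxiliary lemma into a single chain of inequalities.
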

	
	It is thus possible to transfer known results for $\sim_\varepsilon$ like, e.g., the preservation of approximate satisfaction of bounded PCTL state formulas \cite{SAAPBPCTL}, the exact preservation of $\varepsilon$-robust PCTL \cite{RPCTLMC}, or the bounds on finite horizon \cite{RBBTEAPC} and syntactically co-safe \cite{FMOSCSMDP,RDPTLCSS} LTL satisfaction probabilities to $\varepsilon$-perturbed bisimilar LMCs.
	
	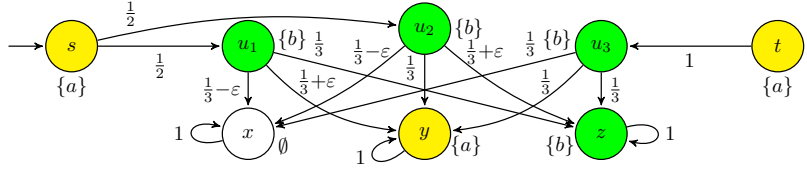
\begin{figure}[t]
		\centering
		\resizebox{!}{0.11\textheight}{\begin{tikzpicture}[->,>=stealth',shorten >=1pt,auto, semithick]
				\tikzstyle{every state} = [text = black]
				
				\node[state] (s) [fill = yellow] {$s$}; 
				\node[state] (u1) [right of = s, node distance = 3cm, fill = green] {$u_1$};
				\node[state] (u2) [right of = u1, node distance = 3cm, fill = green, yshift = 0.3cm] {$u_2$};
				\node[state] (u3) [right of = u1, node distance = 6cm, fill = green] {$u_3$};
				\node[state] (x) [below of = u1, node distance = 1.5cm] {$x$};
				\node[state] (y) [right of = x, node distance = 3cm, fill = yellow] {$y$};
				\node[state] (z) [right of = y, node distance = 3cm, fill = green] {$z$}; 
				\node[state] (t) [right of = u3, node distance = 3cm, fill = yellow] {$t$};
				\node (sinit) [left of = s, node distance = 1.2cm] {};
				
				\path 
				(sinit) edge (s)
				(s) edge node [below] {$\frac{1}{2}$} (u1)
				(s) edge [bend left = 10] node [above, pos = 0.1] {$\frac{1}{2}$} (u2)
				(t) edge node [below] {$1$} (u3)
				
				(u1) edge node [left] {$\frac{1}{3}{-}\varepsilon$} (x)
				(u1) edge [bend right = 20] node [right, pos = 0.2, yshift = 0.1cm] {$\frac{1}{3}{+}\varepsilon$} (y)
				(u1) edge node [above, pos = 0.15] {$\frac{1}{3}$} (z)
				
				(u2) edge [bend left = 10] node [left, pos = 0.05] {$\frac{1}{3}{-}\varepsilon$} (x)
				(u2) edge node [left, pos = 0.25] {$\frac{1}{3}$} (y)
				(u2) edge [bend right = 10] node [right, pos = 0.05, xshift = 0.1cm] {$\frac{1}{3}{+}\varepsilon$} (z)
				
				(u3) edge node [above, pos = 0.15] {$\frac{1}{3}$} (x)
				(u3) edge [bend left = 20] node [left, pos = 0.2, yshift = 0.1cm] {$\frac{1}{3}$} (y)
				(u3) edge node [right] {$\frac{1}{3}$} (z)
				
				(x) edge [loop left] node {$1$} (x)
				(y) edge [in=190,out=220,loop] node [left] {$1$} (y)
				(z) edge [loop right] node {$1$} (z)
				
				;
				%label
				\node [below of = s, node distance = 0.7cm] {$\{a\}$}; 
				\node [below of = t, node distance = 0.7cm] {$\{a\}$}; 
				\node [right of = u1, node distance = 0.75cm, yshift = 0.1cm] {$\{b\}$}; 
				\node [right of = u2, node distance = 0.75cm] {$\{b\}$}; 
				\node [left of = u3, node distance = 0.75cm, yshift = 0.1cm] {$\{b\}$}; 
				\node [right of = x, node distance = 0.6cm, yshift = -0.2cm] {$\emptyset$}; 
				\node [right of = y, node distance = 0.7cm, yshift = -0.2cm] {$\{a\}$}; 
				\node [left of = z, node distance = 0.7cm, yshift = -0.2cm] {$\{b\}$};
		\end{tikzpicture}}
		\caption{An LMC that demonstrates that $\simeq_\varepsilon$ is strictly finer than $\sim_\varepsilon^*$.}
		\label{Figure: Example that transitive epsilon bisim is not enough for simeq}
	\end{figure}
	
	Regarding the reverse implication, consider \Cref{Figure: Example that transitive epsilon bisim is not enough for simeq}. There, the finest equivalence that relates $u_1, u_2$ and $u_3$ and contains $(s,t)$ is a transitive $\varepsilon$-bisimulation. However, there is no $\varepsilon$-perturbation of the LMC in which $s$ and $t$ are bisimilar. Hence, $s \sim_\varepsilon^* t$, but $s \not \simeq_\varepsilon t$.
	
	\begin{lemma}\label{Corollary: Perturbed Finer Than Transitive}
		$\simeq_\varepsilon$ is strictly finer than $\sim_\varepsilon^*$.
	\end{lemma}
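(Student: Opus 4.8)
The claim splits into two parts: the inclusion $\simeq_\varepsilon \subseteq \sim_\varepsilon^*$ (\emph{finer}) and its strictness. For the inclusion I would simply invoke \Cref{Corollary: Perturbed Epsilon Bisim implies Epsilon Bisim}, which already establishes $s \simeq_\varepsilon t \Rightarrow s \sim_\varepsilon^* t$. Strictness is where the real work lies, and the LMC of \Cref{Figure: Example that transitive epsilon bisim is not enough for simeq} (with any fixed $\varepsilon \in (0,\tfrac13)$, so that all listed probabilities are well-defined) is the intended witness: I would show that there $s \sim_\varepsilon^* t$ while $s \not\simeq_\varepsilon t$.

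For $s \sim_\varepsilon^* t$ I would exhibit the equivalence $R$ with nontrivial classes $\{s,t\}$ and $\{u_1,u_2,u_3\}$ and singletons $\{x\},\{y\},\{z\}$, and verify it is an $\varepsilon$-bisimulation (transitivity is built in). The pair $(s,t)$ is trivial, since both states move into the class $\{u_1,u_2,u_3\}$ with probability $1$. For the pairs among the $u_i$, as $x,y,z$ lie in singleton classes we have $R(A)=A$ for $A\subseteq\{x,y,z\}$, so condition (ii) of \Cref{Definition: Epsilon-Bisimulation} reduces to $|\prob(u_i)(A)-\prob(u_j)(A)|\le\varepsilon$; inspecting the three successor distributions, the largest such difference is exactly $\varepsilon$ (e.g.\ $A=\{y\}$ for $u_1$ versus $u_3$). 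Hence $R$ is a transitive $\varepsilon$-bisimulation containing $(s,t)$.

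The core of the proof is $s \not\simeq_\varepsilon t$, which I would establish by contradiction: suppose an $\varepsilon$-perturbation $\mathcal{M}'$ admits a bisimulation $R'$ with $(s,t)\in R'$. First I would pin down the class structure of $\mathcal{M}'$ for $\varepsilon<\tfrac13$: $x$ is the unique $\emptyset$-state, and the self-loops $y$ and $z$ cannot merge with any equally-labeled state, since by \Cref{Lemma: Difference of Epsilon-Perturbation Probabilities} a perturbation shifts at most $\tfrac\varepsilon2$ of the relevant mass, whereas merging would require shifting more than $\tfrac13$; thus $\{x\},\{y\},\{z\}$ are singleton classes. Next, writing $B$ for the $R'$-class of $u_3$, the $\tfrac\varepsilon2$-bound gives $\prob'(t)(B)\ge 1-\tfrac\varepsilon2$; matching $s$ and $t$ on $B$ together with $\prob'(s)(\{u_1,u_2\})\ge 1-\tfrac\varepsilon2$ forces $\prob'(s)(B\cap\{u_1,u_2\})\ge 1-\varepsilon$, which for $\varepsilon<\tfrac13$ is possible only if both $u_1,u_2\in B$. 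Hence $u_1,u_2,u_3$ all lie in $B$ and are pairwise bisimilar in $\mathcal{M}'$.

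Being pairwise bisimilar, $u_1,u_2,u_3$ must agree on the singleton classes $\{x\},\{y\},\{z\}$, so $\prob'(u_i)$ assigns common values $\alpha,\beta,\gamma$ to $x,y,z$. I would then close the argument with a short $L_1$-counting step. Setting $d_i = |\alpha-\prob(u_i)(x)| + |\beta-\prob(u_i)(y)| + |\gamma-\prob(u_i)(z)|$ and $m = 1-\alpha-\beta-\gamma$ (the common leaked mass), each perturbation budget reads $d_i + m \le \varepsilon$, so $\sum_i d_i + 3m \le 3\varepsilon$. On the other hand, a coordinatewise median bound yields $\sum_i d_i \ge 3\varepsilon$, with equality only at $(\alpha,\beta,\gamma) = (\tfrac13-\varepsilon,\tfrac13,\tfrac13)$. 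Combining the two gives $\sum_i d_i = 3\varepsilon$ and $m = 0$, yet the forced values satisfy $\alpha+\beta+\gamma = 1-\varepsilon$, i.e.\ $m=\varepsilon>0$ — a contradiction. I expect this tension between the median-optimal center and the normalization $\alpha+\beta+\gamma=1$ to be the main obstacle; the only other delicate point is confirming that redirecting successor mass of $s$, $t$, or the $u_i$ away from its original support cannot evade the argument, which the $\tfrac\varepsilon2$-budget bounds handle.
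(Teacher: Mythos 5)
Your proposal is correct and follows the paper's own route: the inclusion is delegated to \Cref{Corollary: Perturbed Epsilon Bisim implies Epsilon Bisim}, and strictness is witnessed by the LMC of \Cref{Figure: Example that transitive epsilon bisim is not enough for simeq} with the equivalence $\{s,t\},\{u_1,u_2,u_3\},\{x\},\{y\},\{z\}$, exactly as in the paper. The paper only asserts that no $\varepsilon$-perturbation makes $s$ and $t$ bisimilar, and your $L_1$-budget/coordinatewise-median argument (forcing $u_1,u_2,u_3$ into one class and deriving $m=0$ versus $m=\varepsilon$) is a valid, complete proof of that assertion --- in effect verifying that the class $\{u_1,u_2,u_3\}$ admits no common center, i.e., violates the centroid property of \Cref{Theorem: Characterisation Perturbed Epsilon Bisimilar and Transitive Epsilon Bisimulation}.
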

	
	In fact, $\varepsilon$-bisimilarity is not even guaranteed to imply $\delta$-perturbed bisimilarity if $\varepsilon \ll \delta$, or if the Markov chains in question are graph-isomorphic. 
	
	\begin{restatable}{theorem}{TheoremEpsilonBisimilarityDoesNotImplyExistenceOfCommonQuotient}\label{Theorem: Epsilon-Bisimulation does not imply existence of common 1/4-quotient}
		Let $\varepsilon \in \left (0, \frac{1}{4}\right]$. There are LMCs $\mathcal{M}$ and $\mathcal{N}$ with $\mathcal{M} \sim_\varepsilon \mathcal{N}$ but $\mathcal{M} \not \simeq_{\frac{1}{4}} \mathcal{N}$.
	\end{restatable}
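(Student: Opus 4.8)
The plan is to prove the statement by an explicit construction, designing two LMCs $\mathcal{M}$ and $\mathcal{N}$ whose separation exploits the fundamental mismatch between the two notions: an $\varepsilon$-bisimulation in the sense of \Cref{Definition: Epsilon-Bisimulation} is only reflexive and symmetric and may spend a fresh tolerance of $\varepsilon$ together with the blow-up $R(A)$ at every comparison, whereas a $\frac{1}{4}$-perturbed bisimulation (\Cref{Definition: Perturbed Epsilon Bisimulation}) demands a single global perturbation under which an honest, \emph{transitive} bisimulation matches all class probabilities \emph{exactly}. Accordingly, I would route all interesting behaviour through internal states that share one common label, so that the image $R(A)$ can grow and absorb the per-step slack, while funnelling the eventual outcomes into two absorbing states carrying \emph{distinct} labels, say $\{v\}$ and $\{w\}$. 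The distinct labels are the lever for the negative part: no candidate (perturbed) bisimulation may ever merge a $v$-state with a $w$-state, so the probability of reaching the $\{w\}$-class becomes a quantity that every bisimulation must reproduce exactly.

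For the positive direction $\mathcal{M}\sim_\varepsilon\mathcal{N}$, I would arrange the internal states in a fine ``spectrum'' whose one-step split ratios towards $w$ increase in steps of at most $\varepsilon$, and take the $\varepsilon$-bisimulation to be the reflexive–symmetric closure of the adjacency relation on this spectrum, augmented by the pair of initial states. The verification is most cleanly done through the weight functions of \Cref{lem:epsilon-bisimulation-distribution}: each successor of $s$ is mapped by $\Delta$ to an adjacent spectrum state of $t$, so that condition~1 reproduces $\prob(t)$ and condition~2 loses at most $\varepsilon$ of the mass to the (at most $\varepsilon$-distant) neighbour. Because this adjacency tolerance is replenished at every layer, a suitably layered version of the chain idea of \Cref{Example: Epsilon-bisimulation not transitive} lets the initial states of $\mathcal{M}$ and $\mathcal{N}$ be linked by \emph{one} $\varepsilon$-bisimulation even though their behaviours are separated by an accumulated offset far larger than $\varepsilon$; this is what makes $\mathcal{M}\sim_\varepsilon\mathcal{N}$ hold for every $\varepsilon\in(0,\frac14]$ (indeed for arbitrarily small $\varepsilon$).

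For $\mathcal{M}\not\simeq_{\frac14}\mathcal{N}$ I would argue by contradiction: assume some $\frac14$-perturbation of $\mathcal{M}\oplus\mathcal{N}$ makes the initial states bisimilar. Applying \Cref{Lemma: Difference of Epsilon-Perturbation Probabilities} with $\varepsilon=\frac14$ shows that at every state the probability of reaching the $\{w\}$-class changes by at most $\frac18$; hence two internal states that are lumped into a single class of the perturbed bisimulation must have original split ratios within $\frac14$ of each other. This granularity bound caps how far the accumulated offset can be ``folded back'', and I would choose the layered offsets so that reconciling the two initial states would force the lumping of internal states whose ratios differ by strictly more than $\frac14$ — which the bound forbids — yielding the contradiction.

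The hard part is the robustness of this last step: the impossibility must hold against \emph{every} $\frac14$-perturbation and \emph{every} lumping it induces simultaneously, not just against a natural one. The subtlety is sharpened by \Cref{thm:unbounded_reach}, which forces the two initial reachabilities to agree up to $O(\varepsilon)$, so the obstruction cannot be a gap in reachability \emph{values} (any such gap is either tiny or, if amplified by long cycles, repairable by a perturbation, as the self-looping examples show). The separation must instead rest on a structural invariant of the forced class partition — essentially that the $\frac18$ perturbation slack is too coarse to glue together the spectrum positions that matching the two initial distributions would require — and the main effort will be to pin down offsets and split ratios making this invariant provably unrepairable while keeping the $\varepsilon$-bisimulation of the positive direction intact.
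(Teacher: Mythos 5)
Your outline reproduces the paper's strategy in spirit --- the paper's proof (via the auxiliary \Cref{Theorem: Epsilon-bisimilarity does not imply common approximate quotients with small tolerance} and the family in \Cref{Figure: LMC-family used in the proof of Theorem: Epsilon-bisimilarity does not imply common approximate quotients with small tolerance}) likewise builds a ``spectrum'' of states $s_1,\dots,s_{n+1}$ and $t_1,\dots,t_n$ whose self-loop/exit ratios to a single absorbing state differ in steps of $2\varepsilon$, relates adjacent spectrum positions ($s_i \sim_\varepsilon t_i \sim_\varepsilon s_{i+1}$) for the positive direction, and for the negative direction combines transitivity of the perturbed bisimulation with the bound of \Cref{Lemma: Difference of Epsilon-Perturbation Probabilities} (each state's class probability moves by at most $\delta/2$) to conclude that identifying the spectrum endpoints, whose ratios differ by $2n\varepsilon$, needs $\delta \geq 2n\varepsilon$; choosing $n = \lfloor 1/(4\varepsilon)\rfloor$ then gives $2n\varepsilon > \frac{1}{2}\cdot\frac{n}{n+1} \geq \frac{1}{4}$. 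You also correctly diagnose, via \Cref{thm:unbounded_reach}, that the obstruction cannot be a gap in reachability \emph{values} (in the paper's construction all spectrum states reach the goal with probability $1$) and must be a structural invariant of the forced class partition.

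However, there is a genuine gap, and it is exactly the step you flag as ``the main effort'': you never exhibit a mechanism that forces \emph{every} $\frac{1}{4}$-perturbed bisimulation to chain identifications across the whole spectrum. Relating the two initial states by an adjacency-closure $\varepsilon$-bisimulation does not by itself compel any perturbed (exact, transitive) bisimulation to lump more than adjacent positions --- a perturbation is free to realign probabilities so that only a few, or entirely different, blocks form, and your granularity bound (``lumped states have original ratios within $\frac14$'') constrains individual blocks but not which blocks must exist. The paper closes this with a concrete gadget your sketch lacks: each spectrum pair is guarded by predecessor states $a_i$ and $b_i$ carrying a \emph{pairwise distinct} label $l_i$, where $a_i$ branches $\frac12$--$\frac12$ to $s_i, s_{i+1}$ while $b_i$ moves deterministically to $t_i$; a preceding uniform-sampling component $\mathbf{D}(n)$ guarantees all $a_i, b_i$ stay reachable under any $\delta < 1$. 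The distinct labels force $a_i' \sim b_i'$ in any bisimilar perturbation, and a case analysis shows that any matching in which $s_i' \nsim t_i'$ or $s_{i+1}' \nsim t_i'$ already costs a perturbation of at least $\frac12$ (the class probabilities $\prob(a_i)([t_i'])$ and $\prob(b_i)([t_i']) = 1$ differ by $\frac12$ or $1$); hence w.l.o.g.\ $s_i' \sim t_i' \sim s_{i+1}'$ for all $i$, and transitivity yields $s_1' \sim s_{n+1}'$, triggering the $2n\varepsilon$ lower bound. Without this (or an equivalent) forcing layer, together with the quantitative instantiation of $n$ against the target tolerance $\frac14$, your argument remains a plausible plan rather than a proof: the decisive impossibility claim is asserted, not established.
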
 
	
	\begin{figure}[t!]
		\centering
		\resizebox{!}{0.137\textheight}{
			\begin{tikzpicture}[->,>=stealth',shorten >=1pt,auto, semithick]
				\tikzstyle{every state} = [text = black]
				
				%lhs
				\node[state] (s) [fill = yellow] {$s$};
				\node[state] (s2) [fill = green, right of = s, node distance = 2.5cm] {$s_2$};
				\node[state] (s1) [above of = s2, node distance = 1.5cm, fill = green] {$s_{1}$};
				\node[state] (s3) [below of = s2, node distance = 1.5cm, fill = green] {$s_3$};
				\node[state] (x) [right of = s2, node distance = 2.5cm] {$x$};
				\node (sinit) [above of = s, node distance = 1cm] {};
				
				\path
				(sinit) edge (s)
				
				(s) edge node [above] {$\frac{4}{9}$} (s1)
				(s) edge node [above] {$\frac{4}{9}$} (s2)
				(s) edge node [below] {$\frac{1}{9}$} (s3)
				
				(s1) edge node [above,sloped] {$0.5 {+} \varepsilon$} (x)
				(s2) edge node [above, pos = 0.4] {$0.5 {-} \varepsilon$} (x)
				(s3) edge node [below,sloped] {$0.5 {-} 3\varepsilon$} (x)
				
				(s1) edge [loop right] node {$0.5 {-} \varepsilon$} (s1)
				(s2) edge [loop below] node [right, pos =0.1] {$0.5 {+} \varepsilon$} (s2)
				(s3) edge [loop right] node {$0.5 {+} 3\varepsilon$} (s3)
				
				(x) edge [loop above] node {$1$} (x)
				;
				
				%rhs
				\node[state] (t) [fill = yellow, right of = x, node distance = 2cm] {$t$};
				\node[state] (t2) [right of = t, node distance = 2.5cm, fill = green] {$t_2$};
				\node[state] (t1) [above of = t2, node distance = 1.5cm, fill = green] {$t_1$};
				\node[state] (t3) [below of = t2, node distance = 1.5cm, fill = green] {$t_3$};
				\node[state] (y) [right of = t2, node distance = 2.5cm] {$y$};
				\node (tinit) [above of = t, node distance = 1cm] {};
				
				\path
				(tinit) edge (t)
				
				(t) edge node [above] {$\frac{1}{9}$} (t1)
				(t) edge node [above] {$\frac{4}{9}$} (t2)
				(t) edge node [below] {$\frac{4}{9}$} (t3)
				
				(t1) edge node [above,sloped] {$0.5 {+} 2\varepsilon$} (y)
				(t2) edge node [above, pos = 0.4] {$0.5$} (y)
				(t3) edge node [below,sloped] {$0.5 {-} 2\varepsilon$} (y)
				
				(t1) edge [loop right] node {$0.5 {-} 2\varepsilon$} (t1)
				(t2) edge [loop below] node [right, pos =0.1] {$0.5$} (t2)
				(t3) edge [loop right] node {$0.5 {+} 2\varepsilon$} (t3)
				
				(y) edge [loop above] node {$1$} (y)
				;
				
				%label
				\node[below of = s, node distance = 0.7cm] {$\{a\}$};
				\node[left of = s1, node distance = 0.75cm] {$\{b\}$};
				\node[above of = s2, node distance = 0.7cm] {$\{b\}$};
				\node[left of = s3, node distance = 0.75cm] {$\{b\}$};
				\node[below of = x, node distance = 0.7cm] {$\{c\}$};
				
				\node[below of = t, node distance = 0.7cm] {$\{a\}$};
				\node[left of = t1, node distance = 0.75cm] {$\{b\}$};
				\node[above of = t2, node distance = 0.7cm] {$\{b\}$};
				\node[left of = t3, node distance = 0.75cm] {$\{b\}$};
				\node[below of = y, node distance = 0.7cm] {$\{c\}$};
		\end{tikzpicture}}
		\caption{The LMCs $\mathcal{M}_2$ (left) and $\mathcal{N}_2$ (right), as used in the proof of \Cref{Theorem: No common nepsilon quotient with graph isomorphism}.}
		\label{Figure: Example LMCs for case n = 2 for construction in Theorem: No common nepsilon quotient with graph isomorphism}
	\end{figure}
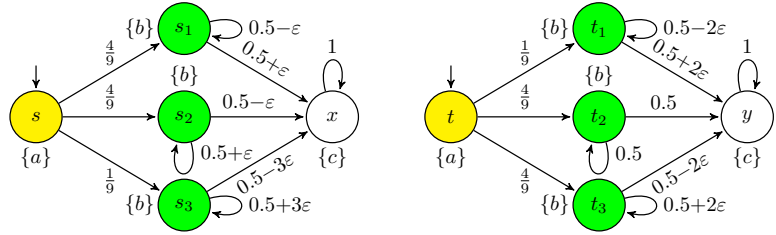
	
	\begin{restatable}{theorem}{NoCommonNEpsilonQuotientWithGraphIsomorphism}\label{Theorem: No common nepsilon quotient with graph isomorphism}
		There is a family $\mathcal{F} = \{(\mathcal{M}_n, \mathcal{N}_n) \mid n \in \mathbb{N}_{\geq 1}\}$ of pairs of finite LMCs such that, for all $n \in \mathbb{N}_{\geq 1}$ and $\varepsilon \in \left(0, \frac{1}{n \cdot (n+1)^2}\right]$, $\mathcal{M}_n$ and $\mathcal{N}_n$ are graph-isomorphic and $\varepsilon$-bisimilar, but $\mathcal{M}_n \not \simeq_\delta \mathcal{N}_n$ for any $\delta < n \varepsilon$.
	\end{restatable}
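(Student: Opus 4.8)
The plan is to generalise the pair $(\mathcal{M}_2,\mathcal{N}_2)$ of \Cref{Figure: Example LMCs for case n = 2 for construction in Theorem: No common nepsilon quotient with graph isomorphism} to a symmetric family. Each $\mathcal{M}_n$ (resp.\ $\mathcal{N}_n$) consists of an $a$-labelled initial state $s$ (resp.\ $t$) feeding $n+1$ middle states $s_1,\dots,s_{n+1}$ (resp.\ $t_1,\dots,t_{n+1}$), all labelled $b$, each of which has a self-loop and one transition into a single absorbing $c$-sink $x$ (resp.\ $y$). I set $\prob(s_k)(x)=\tfrac12+c_k\varepsilon$ and $\prob(t_k)(y)=\tfrac12+d_k\varepsilon$ with two interleaved arithmetic progressions of coefficients — odd multiples of $\varepsilon$ for $\mathcal{M}_n$, even ones for $\mathcal{N}_n$, with $d_k=c_k+1$ — so that the values alternate along a window of width $\Theta(n\varepsilon)$. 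The initial probabilities are integer multiples of the \emph{mass unit} $(n+1)^{-2}$, mirror-imaged between the two chains and tuned so that the bidiagonal coupling below carries all the mass. The bound $\varepsilon\le\frac{1}{n(n+1)^2}$ keeps all numbers in $[0,1]$ and, crucially, forces $\delta<n\varepsilon\le(n+1)^{-2}$ to stay strictly below one mass unit. Graph-isomorphism of $\mathcal{M}_n$ and $\mathcal{N}_n$ is immediate, as both chains have identical shape.

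\textbf{$\varepsilon$-bisimilarity.} To show $\mathcal{M}_n\sim_\varepsilon\mathcal{N}_n$ I would relate $s\sim_\varepsilon t$, $x\sim_\varepsilon y$, and each $s_k$ with the two states $t_k,t_{k+1}$ whose coefficients differ from $c_k$ by exactly one; with the reflexive-symmetric closure this gives a relation $R$ that only relates equally labelled states. I then invoke \Cref{lem:epsilon-bisimulation-distribution}: for $(s,t)$ the weight function is the bidiagonal coupling (each $s_k$ splits its mass onto $t_k,t_{k+1}$), which matches marginals and places all mass on $R$-related pairs; for a middle pair $(s_k,t_k)$ the weight function sends $x\mapsto y$ and $s_k$ to a convex combination of $t_k$ and $y$ fixed by the marginal, and a short computation shows condition~2 of \Cref{lem:epsilon-bisimulation-distribution} holds \emph{with equality}, the $1-\varepsilon$ slack being exactly consumed by the coefficient gap of one. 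By the mirror symmetry the reverse pairs behave identically, so $R$ is an $\varepsilon$-bisimulation.

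\textbf{Ruling out $\delta$-perturbed bisimilarity.} For the core claim I argue by contradiction: suppose $\mathcal{M}_n,\mathcal{N}_n$ have $\delta$-perturbations $\mathcal{M}'_n,\mathcal{N}'_n$ that are bisimilar, with $\delta<n\varepsilon$. Working in $\mathcal{M}'_n\oplus\mathcal{N}'_n$, let $A_c$ be the bisimulation class of $c$-sinks. Two observations drive the argument. First, by \Cref{Lemma: Difference of Epsilon-Perturbation Probabilities} the perturbed one-step probability into $A_c$ of any middle state differs from its original value $\tfrac12+c\varepsilon$ by at most $\delta/2$; since bisimilar states agree on the probability into $A_c$, any two middle states in one class have original coefficients differing by at most $\delta/\varepsilon<n$. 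Second, $\prob'(s)(C)=\prob'(t)(C)$ for every class $C$ forces each class to carry equal $\mathcal{M}$- and $\mathcal{N}$-initial mass; as these masses are integer multiples of the unit $(n+1)^{-2}$ and the whole budget $\delta$ is below one unit, the block masses are \emph{rigidly} pinned to exact unit-sums. A counting argument on these rigid masses then shows that the class containing the lowest-coefficient middle state of $\mathcal{M}_n$ must, for mass-balancing reasons, also contain a middle state of $\mathcal{N}_n$ whose coefficient is at least $n$ larger. This class spans coefficient range $\ge n$, contradicting the first observation; hence $\mathcal{M}_n\not\simeq_\delta\mathcal{N}_n$.

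\textbf{Main obstacle.} The delicate step is the final combinatorial one: to show, uniformly in $n$ and for \emph{every} admissible perturbation, that rigidity of the block masses forces one bisimulation class to span coefficient range $\ge n$. This must be robust against perturbations that alter the graph (e.g.\ redirecting self-loop mass to other middle states), but the first observation is structure-independent — it only uses the probability into $A_c$ — so the problem reduces cleanly to a statement about mass-balanced partitions of the two mirror-imaged, quantised mass vectors sitting on interleaved positions, from which the span $\ge n$ (in fact $\ge 2n+1$ for the family above, so the stated bound is not tight) has to be extracted. Checking the numerical side-conditions, namely that all perturbed probabilities remain in $[0,1]$ and that $\delta<(n+1)^{-2}$ under $\varepsilon\le\frac{1}{n(n+1)^2}$, is then routine.
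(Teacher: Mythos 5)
Your construction, your $\varepsilon$-bisimulation, and the two observations driving the impossibility argument all coincide with the paper's proof: the paper uses exactly this interleaved family (\Cref{Figure: LMCs for even case of Theorem: No common nepsilon quotient with graph isomorphism}), exactly the relation $\{(s,t),(x,y)\}\cup\{(s_i,t_i),(s_i,t_{i+1})\}$, and its lower-bound argument likewise rests on (a) the coefficient-gap bound for bisimilar middle states via their one-step probability into the sink class together with \Cref{Lemma: Difference of Epsilon-Perturbation Probabilities}, and (b) the quantization of the initial masses in units of $(n+1)^{-2}$ with $\delta < n\varepsilon \le (n+1)^{-2}$. (Your verification of $\varepsilon$-bisimilarity via \Cref{lem:epsilon-bisimulation-distribution} is in fact more explicit than the paper's, which asserts it by inspection.) The one genuine gap is the step you yourself flag as the main obstacle: you assert, but do not derive, that mass balance forces the class of the lowest-coefficient middle state of $\mathcal{M}_n$ to contain an $\mathcal{N}_n$-middle state with coefficient at least $n$ larger, and you present this as a delicate combinatorial statement about all mass-balanced partitions.

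That step is where the paper's proof does its actual work, and from your own two observations it closes in two lines rather than requiring any uniform partition analysis. Since $s'\sim t'$ (they are the initial states of the bisimilar perturbations), $\prob'(s')(C)=\prob'(t')(C)$ for every class $C$; combined with \Cref{Lemma: Difference of Epsilon-Perturbation Probabilities} and $\delta<(n+1)^{-2}$, your rigidity observation says the \emph{unperturbed} $\mathcal{M}$- and $\mathcal{N}$-masses of $C$ agree exactly, as integer unit counts. Now the class $C$ of $s_{n+1}'$ has $\mathcal{M}$-mass $1+k(n+2)$ units, while if $t_1'\notin C$ its $\mathcal{N}$-mass is $l(n+2)$ units; equality is impossible since $1\not\equiv 0 \pmod{n+2}$. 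Hence $t_1'\in C$, i.e., $s_{n+1}'\sim t_1'$, and your first observation gives $(2n+1)\varepsilon\le\delta$, contradicting $\delta<n\varepsilon$. (One sub-case you gloss: the comparison of coefficients across chains needs $x'\sim y'$, but this is automatic whenever a class mixes middle states of both chains, since otherwise their one-step probabilities into $[x']_\sim$ differ by roughly $\tfrac12\gg\delta$.) The paper packages this same computation as a four-case intersection test of the intervals of radius $\delta/2$ around the two class masses for $[s_{n+1}']_\sim$, plus the separate case $t_1'\in[s_{n+1}']_\sim$; your rigidity-plus-counting formulation is arguably cleaner once completed, and your remark that the construction really yields a gap of $(2n+1)\varepsilon$, so the stated bound $n\varepsilon$ is not tight, is consistent with what both arguments actually deliver.
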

	\begin{proof}[Proof sketch]
		We sketch the case $n = 2$, with $\mathcal{M}_2$ and $\mathcal{N}_2$ as in \Cref{Figure: Example LMCs for case n = 2 for construction in Theorem: No common nepsilon quotient with graph isomorphism}, $\varepsilon \in \left(0, \frac{1}{18}\right]$ and $\sim_\varepsilon$ the symmetric and reflexive closure of $\{(s,t), (s_1, t_1), (s_1, t_2), (s_2, t_2), (s_2, t_3), (s_3, t_3), (x,y)\}$. 
		Any bisimilar perturbations $\mathcal{M}_2'$ and $\mathcal{N}_2'$ must ensure $s' \sim t'$. The smallest (w.r.t. the required tolerances) perturbations that achieve this make $s_2', s_3'$ and $t_3'$, as well as $s_1', t_1'$ and $t_2'$, bisimilar, and set the total probability mass from $s'$ (resp. $t'$) to reach these (sets of) state(s) to $\frac{1}{2}$ each. 
		But this requires a perturbation by at least $\delta = \frac{1}{9} \geq 2 \varepsilon$. 
	\end{proof}
	
	\begin{remark}
		\Cref{Theorem: Epsilon-Bisimulation does not imply existence of common 1/4-quotient,Theorem: No common nepsilon quotient with graph isomorphism} seem to resemble results of \cite{ABM,ABMfull}. There, an LMC is presented in which a specific order of merging $\varepsilon$-bisimilar states results in an approximate quotient that requires tolerance $\geq \frac{1}{4}$, and a family of LMCs is provided \cite[Thm. 12]{ABMfull} in which merging $\varepsilon$-bisimilar states yields an approximate quotient that requires tolerance $\geq n \varepsilon$. Our results differ in that we consider the existence of bisimilar $\varepsilon$-perturbations of two LMCs, and in that we show that no suitable \emph{smaller} tolerance exists.
	\end{remark}

	The observation that $\simeq_\varepsilon$ is strictly finer than $\sim_\varepsilon$ (and even $\sim_\varepsilon^*$) raises the question whether there are logical properties which are preserved under $\simeq_\varepsilon$, but not necessarily under $\sim_\varepsilon^{(*)}$. It is future work to make this precise. Here, we note that the bound for reachability probabilities from \Cref{thm:unbounded_reach} remains tight under $\simeq_\varepsilon$: the LMCs $\mathcal{M}$ and $\mathcal{N}$ in  \Cref{Figure: Example unbounded reachability} satisfy $\mathcal{M} \simeq_\varepsilon \mathcal{N}$, but the bounds are tight by \Cref{ex:unbounded_reach}. 
	
	The following theorem characterizes $\simeq_\varepsilon$ in terms of transitive $\varepsilon$-bisimulations that satisfy an additional \emph{centroid property} specified as in \Cref{centroid condition} below.
	
	\begin{restatable}{theorem}{CharacterizationPertubedEpsBisimTransitiveEpsBisim}\label{Theorem: Characterisation Perturbed Epsilon Bisimilar and Transitive Epsilon Bisimulation}
		The following statements are equivalent: 
		\begin{itemize}
			\item [\emph{(i)}] $\mathcal{M} \simeq_\varepsilon \mathcal{N}$. 
			\item [\emph{(ii)}] There is an $\varepsilon$-perturbation of $\mathcal{M} \oplus \mathcal{N}$ in which $s_{init}^\mathcal{M} \sim s_{init}^\mathcal{N}$. 
			\item [\emph{(iii)}] There is a transitive $\varepsilon$-bisimulation $R$ on $\mathcal{M} \oplus \mathcal{N}$ with $(s_{init}^\mathcal{M}, s_{init}^\mathcal{N}) \in R$ such that for each $A \in \xfrac{S}{R}$, where $S$ is the disjoint union of $S^\mathcal{M}$ and $S^\mathcal{N}$, there is a $\prob_A^* \in Distr(\xfrac{S}{R})$ with 
			\begin{align}
				\vert \prob(s)(C) - \prob_A^*(C) \vert \leq \frac{\varepsilon}{2} \text{ for all} s \in A \text{ and all } R \text{-closed sets } C. \label{centroid condition}
			\end{align}
		\end{itemize}
	\end{restatable}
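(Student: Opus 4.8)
The plan is to prove the cycle $(\text{i}) \Leftrightarrow (\text{ii})$ and $(\text{ii}) \Leftrightarrow (\text{iii})$, writing $\mathcal{D} = \mathcal{M} \oplus \mathcal{N}$ with state space $S$ and perturbed transition function $\prob'$. The equivalence of (i) and (ii) is essentially a reformulation of \Cref{Definition: Perturbed Epsilon Bisimulation}: $\mathcal{M} \simeq_\varepsilon \mathcal{N}$ holds iff $s_{init}^\mathcal{M}$ and $s_{init}^\mathcal{N}$ are related by some $\varepsilon$-perturbed bisimulation $R$, i.e.\ iff there is an $\varepsilon$-perturbation of $\mathcal{D}$ on which $R$ is an exact bisimulation containing $(s_{init}^\mathcal{M}, s_{init}^\mathcal{N})$. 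Since a bisimulation relating two states witnesses their bisimilarity and, conversely, $\sim$ on any perturbation is itself such a bisimulation, this is exactly statement (ii). The real content therefore lies in relating (ii) and (iii).

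For $(\text{ii}) \Rightarrow (\text{iii})$, I would take the perturbation $\mathcal{D}'$ witnessing (ii) and let $R$ be the bisimilarity relation $\sim$ on $\mathcal{D}'$; it is an equivalence containing $(s_{init}^\mathcal{M}, s_{init}^\mathcal{N})$ that relates only equally labelled states. To see that $R$ is a transitive $\varepsilon$-bisimulation on the \emph{unperturbed} $\mathcal{D}$, observe that for $(s,t) \in R$ and any $R$-closed $C$ we have $\prob'(s)(C) = \prob'(t)(C)$, so by \Cref{Lemma: Difference of Epsilon-Perturbation Probabilities} and the triangle inequality $\vert \prob(s)(C) - \prob(t)(C) \vert \leq \frac{\varepsilon}{2} + 0 + \frac{\varepsilon}{2} = \varepsilon$; hence $R$ is a transitive $\varepsilon$-APB, which for equivalences is a transitive $\varepsilon$-bisimulation as noted above. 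The centroid is then the quotient distribution: for each class $A$ set $\prob_A^*(B) = \prob'(s)(B)$ for any $s \in A$ (well defined since $R$ is a bisimulation on $\mathcal{D}'$), so that $\prob_A^*(C) = \prob'(s)(C)$ for $R$-closed $C$ and \Cref{Lemma: Difference of Epsilon-Perturbation Probabilities} yields $\vert \prob(s)(C) - \prob_A^*(C) \vert = \vert \prob(s)(C) - \prob'(s)(C) \vert \leq \frac{\varepsilon}{2}$, i.e.\ \eqref{centroid condition}.

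The interesting direction is $(\text{iii}) \Rightarrow (\text{ii})$. Given $R$ and the centroids $\prob_A^*$, I would build $\mathcal{D}'$ state-wise: for each $s$ with $A = [s]_R$, redistribute the outgoing mass of $\prob(s)$ so that the probability assigned to each class $B$ becomes exactly $\prob_A^*(B)$, reallocating mass freely \emph{within} each class (adding mass to some state of $B$ where $\prob_A^*(B) > \prob(s)(B)$ and only removing mass where $\prob_A^*(B) < \prob(s)(B)$). This makes the class-marginals of $\prob'(s)$ depend only on $[s]_R$, so $R$ becomes an exact bisimulation on $\mathcal{D}'$ (using that $R$ relates only equally labelled states) and hence $s_{init}^\mathcal{M} \sim s_{init}^\mathcal{N}$ there. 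It remains to verify that $\mathcal{D}'$ is a genuine $\varepsilon$-perturbation. The within-class reallocation can be arranged at $L_1$-cost exactly $\sum_{B \in \xfrac{S}{R}} \vert \prob_A^*(B) - \prob(s)(B) \vert$, so everything reduces to bounding this sum by $\varepsilon$.

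This last bound is the crux and the main obstacle. Writing $\delta_B = \prob_A^*(B) - \prob(s)(B)$, both $\prob_A^*$ and the class-marginal of $\prob(s)$ are probability distributions over $\xfrac{S}{R}$, so $\sum_B \delta_B = 0$; thus the positive and negative parts carry equal total mass and $\sum_B \vert \delta_B \vert = 2 \sum_{B : \delta_B > 0} \delta_B$. The union $C^+ = \bigcup \{B : \delta_B > 0\}$ is $R$-closed, and $\sum_{B : \delta_B > 0} \delta_B = \prob_A^*(C^+) - \prob(s)(C^+) \leq \frac{\varepsilon}{2}$ by the centroid property \eqref{centroid condition}. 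Hence $\Vert \prob(s) - \prob'(s) \Vert_1 = \sum_B \vert \delta_B \vert \leq \varepsilon$, as required. The key insight is that the per-set tolerance $\frac{\varepsilon}{2}$ must be applied to the \emph{aggregated} $R$-closed set $C^+$ rather than to individual classes: together with $\sum_B \delta_B = 0$ this is exactly what upgrades the local $\frac{\varepsilon}{2}$-guarantee to a global $L_1$-budget of $\varepsilon$, and it explains the factor $\tfrac{1}{2}$ appearing in \eqref{centroid condition}.
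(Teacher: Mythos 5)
Your proposal is correct and takes essentially the same route as the paper's own proof: the same reformulation for (i)~$\Leftrightarrow$~(ii), the same argument for (ii)~$\Rightarrow$~(iii) via bisimilarity on the perturbation together with \Cref{Lemma: Difference of Epsilon-Perturbation Probabilities} and the fact that a transitive $\varepsilon$-APB is a transitive $\varepsilon$-bisimulation, with the centroid taken as the perturbed class-marginal. Your (iii)~$\Rightarrow$~(ii) construction also matches the paper's: within-class mass reallocation making the class-marginals equal to $\prob_A^*$, with the $L_1$-cost bounded by exactly the paper's key computation, namely aggregating the positively deviating classes into one $R$-closed set $C^+$ and applying the centroid condition there to get $2\cdot\frac{\varepsilon}{2}=\varepsilon$.
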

	
	From the next lemma it follows immediately that, for a given equivalence $R \subseteq S \times S$, the centroid property in \Cref{centroid condition} can be checked efficiently. 
	
	\begin{restatable}{lemma}{TauStarConditionEquivalence}\label{Lemma: Tau-Star Condition Equivalence}
		For a finite set $X$ and $\mu_1, \dots, \mu_k \in Distr(X)$, the following are equivalent: 
		\begin{itemize}
			\item [\emph{(i)}] There exists $\mu^* \in Distr(X)$ with $\vert \mu_l(B) - \mu^*(B) \vert \leq \frac{\varepsilon}{2}$ for all $l \in \{1,\dots, k\}$ and $B \subseteq X$.
			\item [\emph{(ii)}]There exists $\mu \in Distr(X)$ with $\Vert \mu_l - \mu \Vert_1 \leq \varepsilon$ for all $l \in \{1, \dots, k\}$.
			\item [\emph{(iii)}] The following linear constraint system over non-negative variables $\delta_{l, i}$ and $x_i$ for $l \in \{1,\dots, k\}$ and $i \in X$ is solvable: 
			\begin{align*}
				\sum_{i \in X} x_i = 1 \quad \text{and} \quad  x_i - \mu_l(i) \leq \delta_{l, i} \quad \text{and} \quad \mu_l(i) - x_i \leq \delta_{l,i} \quad \text{and} \quad  \sum_{i \in X} \delta_{l,i} \leq \varepsilon.
			\end{align*}
		\end{itemize}
		The equivalence to \emph{(iii)} further implies that $\mu^* = \mu$ can be computed in polynomial time.
	\end{restatable}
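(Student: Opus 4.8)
The plan is to prove the cycle by carrying a \emph{single} distribution across all three conditions, so that it simultaneously serves as $\mu^*$ in (i), as $\mu$ in (ii), and as the $x$-part of a solution in (iii); this shared witness is exactly what yields the concluding claim that $\mu^* = \mu$ is computable. Concretely, I would show (i) $\Leftrightarrow$ (ii) and then (ii) $\Leftrightarrow$ (iii), each time preserving the witness.

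For (i) $\Leftrightarrow$ (ii), the key tool is the classical identity relating total variation distance to the $L_1$-norm: for any $\mu, \nu \in Distr(X)$ with $X$ finite,
\[
  \max_{B \subseteq X} \vert \mu(B) - \nu(B) \vert = \tfrac{1}{2} \Vert \mu - \nu \Vert_1 .
\]
I would justify this by noting that $\sum_{i \in X}(\mu(i) - \nu(i)) = 0$ since both are distributions, so the total positive deviation equals the total negative deviation; hence $\vert \mu(B) - \nu(B) \vert$ is maximized at $B^+ = \{i \in X \mid \mu(i) > \nu(i)\}$, where it equals half of $\sum_{i} \vert \mu(i) - \nu(i) \vert$. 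Applying this with $\nu = \mu^* = \mu$, the condition ``$\vert \mu_l(B) - \mu^*(B) \vert \le \varepsilon/2$ for all $B$'' is equivalent to ``$\Vert \mu_l - \mu^* \Vert_1 \le \varepsilon$'', which is precisely (ii); both directions use the same distribution.

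For (ii) $\Leftrightarrow$ (iii), I would argue by direct translation. Given $\mu$ from (ii), set $x_i = \mu(i)$ and $\delta_{l,i} = \vert x_i - \mu_l(i) \vert$; then $\sum_i x_i = 1$, the inequalities $x_i - \mu_l(i) \le \delta_{l,i}$ and $\mu_l(i) - x_i \le \delta_{l,i}$ hold by definition of absolute value, and $\sum_i \delta_{l,i} = \Vert \mu - \mu_l \Vert_1 \le \varepsilon$, so (iii) is solvable. Conversely, any non-negative solution of (iii) satisfies $\sum_i x_i = 1$, so $\mu(i) = x_i$ defines an element of $Distr(X)$; the two inequalities force $\delta_{l,i} \ge \vert x_i - \mu_l(i) \vert$, whence $\Vert \mu - \mu_l \Vert_1 = \sum_i \vert x_i - \mu_l(i) \vert \le \sum_i \delta_{l,i} \le \varepsilon$, which is (ii) with the same $\mu$.

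Finally, for the computability claim I would observe that (iii) is a linear feasibility problem over $O(k \cdot \vert X \vert)$ non-negative variables with rational coefficients, hence solvable in polynomial time by any polynomial-time linear programming method, and that a returned feasible point gives $\mu = \mu^*$ via $\mu^*(i) = x_i$. I do not anticipate a real obstacle: the only step needing an actual argument is the total-variation/$L_1$ identity underlying (i) $\Leftrightarrow$ (ii), and even that is standard; everything else is bookkeeping. The one point I would state explicitly is that the \emph{same} distribution witnesses all three conditions, since this is what the last sentence of the lemma relies on.
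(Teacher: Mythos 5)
Your proposal is correct and takes essentially the same route as the paper: your (ii)~$\Leftrightarrow$~(iii) translation (set $x_i = \mu(i)$, $\delta_{l,i} = \vert x_i - \mu_l(i)\vert$, and conversely read $\mu$ off a feasible point) is verbatim the paper's, and your total-variation identity $\max_{B \subseteq X}\vert \mu(B) - \nu(B)\vert = \frac{1}{2}\Vert \mu - \nu \Vert_1$ is exactly the paper's argument packaged as one statement --- the $B_l^+/B_l^-$ decomposition is the paper's proof of (i)~$\Rightarrow$~(ii), while the converse direction is what the paper imports from \Cref{Lemma: Difference of Epsilon-Perturbation Probabilities} (there proved by contraposition). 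Your explicit observation that a single distribution witnesses all three conditions, together with LP feasibility for the polynomial-time claim, matches the paper's proof in substance throughout.
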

	
	However, as we show next, for given $\mathcal{M}, \mathcal{N}$ and $\varepsilon$ it is \textsf{NP}-complete to decide if $\mathcal{M} \simeq_\varepsilon \mathcal{N}$ and if $\mathcal{M} \sim_\varepsilon^* \mathcal{N}$. This stands in contrast to the polynomial time computability of $\sim_\varepsilon$ \cite{AAPP}, which is possible in $\mathcal{O}(\vert S \vert^7)$ by iteratively solving maximum flow problems á la \cite{PSfPP,PTATPBS}. Our proofs are inspired by \cite[Thm. 1]{ABMfull}, which proves that deciding if a LMC has an $\varepsilon$-quotient with a fixed number of states is \textsf{NP}-complete. 
	
	\begin{restatable}{theorem}{TheoremNpCompleteCommonEpsilonQuotient}\label{Theorem: NP-completeness of common epsilon-quotient}
		For given finite LMCs $\mathcal{M}$ and $\mathcal{N}$ and given $\varepsilon \in (0,1]$, it is \emph{\textsf{NP}}-complete to decide if \emph{(i)} $\mathcal{M} \simeq_\varepsilon \mathcal{N}$ and to decide if \emph{(ii)} $\mathcal{M} \sim_\varepsilon^* \mathcal{N}$.
	\end{restatable}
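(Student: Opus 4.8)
\textbf{Membership in \textsf{NP}.} The plan is a guess-and-check argument for both statements. A witness in either case is an equivalence relation $R$ on the disjoint state space $S = S^\mathcal{M} \sqcup S^\mathcal{N}$ of $\mathcal{M} \oplus \mathcal{N}$, given as a partition into classes and hence of size polynomial in $|S|$. I would guess such an $R$ with $(s_{init}^\mathcal{M}, s_{init}^\mathcal{N}) \in R$ and verify the remaining conditions in polynomial time. The key observation is that, since $R$ is transitive, checking whether $R$ is an $\varepsilon$-bisimulation reduces to the $\varepsilon$-APB condition: for any $A \subseteq S$ the image $R(A)$ is the smallest $R$-closed superset of $A$, and enlarging $A$ to the union of all classes it meets only increases $\prob(s)(A)$ while leaving $R(A)$ fixed. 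Thus condition (ii) of \Cref{Definition: Epsilon-Bisimulation} is equivalent to requiring $|\prob(s)(A) - \prob(t)(A)| \le \varepsilon$ for all $(s,t) \in R$ and all $R$-closed $A$, and for a fixed pair the maximal violation equals $\sum_{C \in \xfrac{S}{R}} \max\{0, \prob(s)(C) - \prob(t)(C)\}$, computable by one pass over the classes. Verifying this together with $l(s)=l(t)$ for the $O(|S|^2)$ pairs in $R$ settles statement (ii).

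For statement (i) I would additionally use the characterization in \Cref{Theorem: Characterisation Perturbed Epsilon Bisimilar and Transitive Epsilon Bisimulation} and check that $R$ satisfies the centroid property of \Cref{centroid condition}. For each class $A \in \xfrac{S}{R}$ this means deciding whether the finitely many distributions $\{\prob(s) : s \in A\}$, viewed as distributions over $\xfrac{S}{R}$, admit a common centroid $\prob_A^*$ within $L_1$-distance $\varepsilon$; by \Cref{Lemma: Tau-Star Condition Equivalence} this is decidable in polynomial time by solving a linear program. As every check is polynomial, both $\simeq_\varepsilon$ and $\sim_\varepsilon^*$ lie in \textsf{NP}.

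\textbf{\textsf{NP}-hardness.} For the lower bound I would reduce from a suitable \textsf{NP}-complete combinatorial problem, adapting the construction behind \cite[Thm. 1]{ABMfull}. The idea is to build $\mathcal{M}$ and $\mathcal{N}$ whose initial states branch into families of successor states whose one-step distributions encode the "items" of the combinatorial instance, arranged so that $s_{init}^\mathcal{M}$ and $s_{init}^\mathcal{N}$ can be made bisimilar by an $\varepsilon$-perturbation exactly when the items admit a valid grouping. States merged into one class must have distributions lying within a common $L_1$-ball of radius $\varepsilon$ (the centroid property), while the requirement to match the probability masses emitted from $s_{init}^\mathcal{M}$ against those emitted from $s_{init}^\mathcal{N}$ dictates which states are allowed to be grouped together.

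The main obstacle is the gadget design: the successor probabilities and the value of $\varepsilon$ must be calibrated so that the $\varepsilon$-tolerance captures the combinatorial constraint exactly, establishing both soundness and completeness of the reduction. A further point I would take care of is making a single family of instances witness hardness of both $\simeq_\varepsilon$ and $\sim_\varepsilon^*$. To this end I would arrange that in yes-instances the natural transitive $\varepsilon$-bisimulation relating the initial states automatically fulfils the centroid property, whereas in no-instances not even a transitive $\varepsilon$-bisimulation relating the initial states exists. By \Cref{Theorem: Characterisation Perturbed Epsilon Bisimilar and Transitive Epsilon Bisimulation} the two decision problems then agree on all constructed instances, so the same reduction yields hardness for both simultaneously.
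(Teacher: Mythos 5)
Your \textsf{NP}-membership arguments are correct and essentially identical to the paper's: guess the partition on $S^\mathcal{M} \oplus S^\mathcal{N}$, use the fact that for a (reflexive, symmetric) transitive relation the $\varepsilon$-bisimulation condition collapses to the $\varepsilon$-APB condition on $R$-closed sets, and evaluate the worst-case violation per pair via the union of classes $C$ with $\prob(s)(C) > \prob(t)(C)$ --- the paper's set $A(s,t)$, whose excess mass is exactly your $\sum_{C} \max\{0, \prob(s)(C) - \prob(t)(C)\}$. For (i) your use of \Cref{Theorem: Characterisation Perturbed Epsilon Bisimilar and Transitive Epsilon Bisimulation} together with the linear program of \Cref{Lemma: Tau-Star Condition Equivalence} is sound; note only that the centroid property with tolerance $\frac{\varepsilon}{2}$ on $R$-closed sets already implies the $\varepsilon$-APB condition by the triangle inequality, so your separate transitive-$\varepsilon$-bisimulation check is redundant (the paper's algorithm for (i) checks only the centroid condition on each block and then builds the perturbations explicitly via a lemma from \cite{ABMfull}).

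The hardness half, however, is a plan rather than a proof, and this is a genuine gap. You name no source problem, exhibit no gadget, and give no calibration of $\varepsilon$; indeed you explicitly flag the gadget design --- the entire technical content of this direction --- as an unresolved obstacle. The paper reduces from \textsc{SubsetSum}, adapting \cite[Thm. 1]{ABMfull}: for an instance $(P,N)$ with $T = \sum_i p_i$ and $\varepsilon = \frac{1}{2T}$, the state $s$ branches to $s_1, \dots, s_n$ with probabilities $\frac{p_i}{T}$, each $s_i$ moving to two absorbing states with probability $\frac{1}{2}$ each, while $t$ branches to $t_y$ and $t_n$ with probabilities $\frac{N}{T}$ and $1 - \frac{N}{T}$, whose one-step distributions are shifted by $\pm\varepsilon$; any bisimilar $\varepsilon$-perturbations must sort each $s_i$ into the class of $t_y$ or $t_n$, and matching the masses emitted at the initial states forces an index set $I$ with $\sum_{i \in I} p_i = N$. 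Your idea of making one family witness hardness of both (i) and (ii) is the right one and coincides with the paper's, but the paper's justification is instance-specific: because all $s_i$ share the same next-state distribution, the only way to relate them to $t_y$ or $t_n$ is a uniform perturbation, so on these instances $s \sim_\varepsilon^* t$ holds iff $\mathcal{M} \simeq_\varepsilon \mathcal{N}$ --- precisely the coincidence your sketch postulates but does not prove. Until the gadget, the value of $\varepsilon$, and the soundness/completeness argument are supplied, the \textsf{NP}-hardness of both problems remains unestablished in your proposal.
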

	
	Nevertheless, one can check in polynomial time if a \emph{given} equivalence $R$ is a transitive $\varepsilon$-bisimulation or an $\varepsilon$-perturbed bisimulation. Since constructing quotients w.r.t. these relations by collapsing equivalence classes into single states can be done efficiently as well, the notions are therefore still suitable for constructing abstractions in practical applications.
	
	\begin{restatable}{proposition}{PropPolyTime}
		Given an equivalence $R$, one can decide in polynomial time if \emph{(i)} $R$ is a transitive $\varepsilon$-bisimulation and if \emph{(ii)} $R$ is an $\varepsilon$-perturbed bisimulation. 
	\end{restatable}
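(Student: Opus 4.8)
The plan is to reduce both decision problems to polynomially many small, easily checkable conditions. Throughout, write $\xfrac{S}{R}$ for the (polynomially many) equivalence classes of the given equivalence $R$ and, for a state $s$, let $\bar{\prob}(s) \in \mathit{Distr}(\xfrac{S}{R})$ denote the induced \emph{lumped} distribution $\bar{\prob}(s)(C) = \prob(s)(C)$ on classes.

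For \emph{(i)}: since $R$ is already an equivalence, $R$ is a transitive $\varepsilon$-bisimulation iff it satisfies the two conditions of \Cref{Definition: Epsilon-Bisimulation}. Condition (i) (equal labels on related states) is checked directly for all $\mathcal{O}(\vert S \vert^2)$ related pairs. For condition (ii), I would first argue that, because $R$ is an equivalence, it suffices to quantify over $R$-closed sets: for any $A$, the set $R(A)$ is $R$-closed and $\prob(s)(A) \leq \prob(s)(R(A))$ while $\prob(t)(R(R(A))) = \prob(t)(R(A))$ is unchanged, so the worst case of $\prob(s)(A) - \prob(t)(R(A))$ is attained at $R$-closed $A$. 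As an $R$-closed set is a union of classes, each class $C$ contributing $\prob(s)(C) - \prob(t)(C)$ independently, the supremum over $R$-closed sets equals $\sum_{C \in \xfrac{S}{R}} \max\{0, \prob(s)(C) - \prob(t)(C)\} = \tfrac{1}{2}\Vert \bar{\prob}(s) - \bar{\prob}(t) \Vert_1$. Thus condition (ii) reduces to checking $\tfrac{1}{2}\Vert \bar{\prob}(s) - \bar{\prob}(t) \Vert_1 \leq \varepsilon$ for every related pair $(s,t)$, taking $\mathcal{O}(\vert S \vert)$ time per pair and hence polynomial time overall.

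For \emph{(ii)}: by \Cref{Definition: Perturbed Epsilon Bisimulation}, $R$ is an $\varepsilon$-perturbed bisimulation iff there is a transition function $\prob'$ with $\Vert \prob(s) - \prob'(s) \Vert_1 \leq \varepsilon$ for all $s$ such that $R$ is an (exact) bisimulation on the resulting LMC, i.e., such that all states of a common class $A$ share the same lumped distribution over $\xfrac{S}{R}$. The key observation is a decoupling: for a single state $s$ and a fixed target $\mu \in \mathit{Distr}(\xfrac{S}{R})$, the minimal achievable $\Vert \prob(s) - \prob'(s) \Vert_1$ among all $\prob'(s) \in \mathit{Distr}(S)$ whose lumped distribution equals $\mu$ is exactly $\Vert \bar{\prob}(s) - \mu \Vert_1$. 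The lower bound is the triangle inequality applied within each class, $\sum_{s' \in C} \vert \prob(s)(s') - \prob'(s)(s') \vert \geq \vert \bar{\prob}(s)(C) - \mu(C) \vert$; the matching upper bound follows by redistributing mass within each class (adding the required surplus to an arbitrary state of $C$, or removing the deficit from states of $C$ carrying positive mass), which stays non-negative and costs only $\vert \bar{\prob}(s)(C) - \mu(C) \vert$ per class.

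Consequently, since perturbations of different states are independent and states are coupled only through the shared per-class target, $R$ is an $\varepsilon$-perturbed bisimulation iff for every class $A \in \xfrac{S}{R}$ there is a common $\mu_A \in \mathit{Distr}(\xfrac{S}{R})$ with $\Vert \bar{\prob}(s) - \mu_A \Vert_1 \leq \varepsilon$ for all $s \in A$. This is precisely statement (ii) of \Cref{Lemma: Tau-Star Condition Equivalence}, instantiated with $X = \xfrac{S}{R}$ and the family $\{\bar{\prob}(s) \mid s \in A\}$; by that lemma the existence of $\mu_A$ is decidable via a linear constraint system in polynomial time. Running this check once for each of the polynomially many classes decides \emph{(ii)} in polynomial time. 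I expect the main obstacle to be the decoupling lemma for \emph{(ii)}: rigorously proving that the minimal perturbation cost of a single state equals the lumped $L_1$ distance while maintaining a valid non-negative distribution, and arguing that the per-state and per-class feasibility problems do not interact — whereas the subset-to-$R$-closed-set reduction in \emph{(i)} is comparatively routine.
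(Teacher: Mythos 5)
Your proposal is correct and takes essentially the same route as the paper: your check for (i) is exactly the paper's test on the worst-case $R$-closed set $A(s,t)$ (the union of classes where $\prob(s)$ exceeds $\prob(t)$, equivalently your lumped half-$L_1$ criterion), and your reduction for (ii) to a per-class feasibility check via \Cref{Lemma: Tau-Star Condition Equivalence} is the paper's argument, with your ``decoupling lemma'' being precisely the mass-redistribution construction underlying the (iii)$\Rightarrow$(ii) direction of \Cref{Theorem: Characterisation Perturbed Epsilon Bisimilar and Transitive Epsilon Bisimulation}. One small omission: in (ii) you must additionally verify that $R$ relates only equally labeled states, since perturbations never change labels — as stated, your ``iff'' would wrongly accept an equivalence relating differently labeled states whose lumped distributions admit a common centroid; this is a trivial $\mathcal{O}(\vert S \vert^2)$ extra check (the paper gets it for free by first verifying that $R$ is a transitive $\varepsilon$-bisimulation).
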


	%%%%%%%%%%%%%%%%%%%%%%%%%%%%%%%%
	
	\section{\texorpdfstring{Branching and Weak $\varepsilon$-Bisimulation}{Branching and Weak Epsilon-Bisimulation}}\label{Section: Weak and Branching Epsilon Bisimulation}
	We now introduce approximate versions of branching and weak probabilistic bisimulation. A similar approach has been discussed sporadically in the context of noninterference under the term ``weak bisimulation with precision $\varepsilon$'' \cite{TFAANP,SAPNRP,PAAAPN,QANPS,MCPS,NAWPB}. While our notion of branching $\varepsilon$-bisimilarity is a branching variant of transitive $\varepsilon$-bisimilarity $\sim_\varepsilon^*$, the weak $\varepsilon$-bisimilarity we propose is a weak variant of $\sim_\varepsilon$. Hence, the former is tailored to the construction of quotients of a given model, while the latter is closer to classic process relations. 
	\begin{definition}\label{Definition: Branching Epsilon Bisimulation}
		An equivalence $R \subseteq S \times S$ is a \emph{branching $\varepsilon$-bisimulation} if for all $(s,t) \in R$ and all $R$-closed sets $A \subseteq S$ it holds that
		\begin{align*}
			\text{\emph{(i)} } l(s) = l(t) \qquad \text{ and } \qquad \text{\emph{(ii)} } \vert \mathrm{Pr}_s([s]_R \Until A) - \mathrm{Pr}_t([t]_R \Until A) \vert \leq \varepsilon.
		\end{align*}
		We call $s,t \in S$ \emph{branching $\varepsilon$-bisimilar}, written $s \approx^b_\varepsilon t$, if they are related by a branching $\varepsilon$-bisimulation. LMCs $\mathcal{M}$ and $\mathcal{N}$ are \emph{branching $\varepsilon$-bisimilar}, written $\mathcal{M} \approx_\varepsilon^b \mathcal{N}$, if $s_{init}^\mathcal{M} \approx_\varepsilon^b s_{init}^\mathcal{N}$ in $\mathcal{M} \oplus \mathcal{N}$.
	\end{definition}
	
	We require branching $\varepsilon$-bisimulations to be equivalences, as their goal is to abstract from stutter steps inside a state's equivalence class. Because of transitivity, \Cref{Definition: Branching Epsilon Bisimulation} can also be formulated in the style of \Cref{Definition: Epsilon-Bisimulation} and should thus not be understood as an explicit extension of \Cref{Definition: Epsilon-APB}. With the same arguments as for $\sim_\varepsilon^*$ and $\simeq_\varepsilon$, transitivity causes that there may not be a unique maximal branching $\varepsilon$-bisimulation, that $\approx_\varepsilon^b$ is not additive in the tolerances, and that it can differentiate bisimilar models: the first claim follows from $s \approx_\varepsilon^b t$ and $t \approx_\varepsilon^b u$ but $s \not \approx_\varepsilon^b u$ in \Cref{Figure: Maximal Branching Bisimulation is not unique}, the others from ${\sim_\varepsilon^*} = {\approx_\varepsilon^b}$ in \Cref{Figure: Example that simeq can differentiate bisimilar LMC}.
	
	\begin{definition}
		A reflexive and symmetric relation $R \subseteq S \times S$ is a \emph{weak $\varepsilon$-bisimulation} if for all $(s,t) \in R$ and all $A \subseteq S$ it holds that 
		\begin{align*}
			\text{\emph{(i)} } l(s) = l(t) \qquad \text{ and } \qquad \text{\emph{(ii) }} \mathrm{Pr}_s(L(s) \Until A) \leq \mathrm{Pr}_t(L(t) \Until R(A)) + \varepsilon.
		\end{align*}
		We call $s,t \in S$ \emph{weakly $\varepsilon$-bisimilar}, written $s \approx^w_\varepsilon t$, if they are related by a weak $\varepsilon$-bisimulation. LMCs $\mathcal{M}$ and $\mathcal{N}$ are \emph{weakly $\varepsilon$-bisimilar}, written $\mathcal{M} \approx_\varepsilon^w \mathcal{N}$, if $s_{init}^\mathcal{M} \approx_\varepsilon^w s_{init}^\mathcal{N}$ in $\mathcal{M} \oplus \mathcal{N}$.
	\end{definition}
	
	In contrast to branching $\varepsilon$-bisimulations, we do not require transitivity for weak $\varepsilon$-bisimulations. As it turns out, $\approx_\varepsilon^w$ is instead additive in the tolerances.
	\begin{restatable}{lemma}{LemAdditivityWeakEpsilonBisim}
		$s \approx_\varepsilon^w t$ and $t \approx_\delta^w u$ implies $s \approx_{\varepsilon + \delta}^w u$. 	
	\end{restatable}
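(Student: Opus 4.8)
The plan is to prove additivity via relational composition, as is standard for approximate bisimulations. Let $R_1$ be a weak $\varepsilon$-bisimulation with $(s,t) \in R_1$ and $R_2$ a weak $\delta$-bisimulation with $(t,u) \in R_2$, which exist by assumption. Since the relational composition of two reflexive and symmetric relations is reflexive but generally not symmetric, I would define $R$ to be the symmetric closure of the composition, i.e. $R = (R_1 \circ R_2) \cup (R_2 \circ R_1)$, where $R_1 \circ R_2 = \{(a,c) \mid \exists\, b \colon (a,b) \in R_1, (b,c) \in R_2\}$. This $R$ is reflexive (both $R_i$ contain the identity, so $R_1 \circ R_2$ does too), symmetric by construction, and contains $(s,u)$ because $(s,t) \in R_1$ and $(t,u) \in R_2$. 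It remains to show that $R$ is a weak $(\varepsilon + \delta)$-bisimulation.

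For the label condition, any $(a,c) \in R$ factors through some $b$ with $l(a) = l(b) = l(c)$ by condition (i) for $R_1$ and $R_2$; in particular $L(a) = L(c)$, so the constraining sets in the until-formulas on both sides of condition (ii) coincide. For condition (ii), I would first record two elementary facts. First, $(R_1 \circ R_2)(A) = R_2(R_1(A))$ and $(R_2 \circ R_1)(A) = R_1(R_2(A))$, and both are contained in $R(A)$. Second, the map $B \mapsto \mathrm{Pr}_c(L(c) \Until B)$ is monotone: if $B \subseteq B'$ then $\mathrm{Pr}_c(L(c) \Until B) \leq \mathrm{Pr}_c(L(c) \Until B')$, since any path witnessing $L(c) \Until B$ also witnesses $L(c) \Until B'$.

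The core of the argument is then a two-step chaining of the defining inequalities. Take $(a,c) \in R_1 \circ R_2$ with witness $b$, so $(a,b) \in R_1$ and $(b,c) \in R_2$, and fix $A \subseteq S$. Condition (ii) for $R_1$ applied to $(a,b)$ and $A$ gives $\mathrm{Pr}_a(L(a) \Until A) \leq \mathrm{Pr}_b(L(b) \Until R_1(A)) + \varepsilon$, and condition (ii) for $R_2$ applied to $(b,c)$ and the set $R_1(A)$ gives $\mathrm{Pr}_b(L(b) \Until R_1(A)) \leq \mathrm{Pr}_c(L(c) \Until R_2(R_1(A))) + \delta$. Combining the two and using $R_2(R_1(A)) \subseteq R(A)$ together with monotonicity yields $\mathrm{Pr}_a(L(a) \Until A) \leq \mathrm{Pr}_c(L(c) \Until R(A)) + (\varepsilon + \delta)$. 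The symmetric case $(a,c) \in R_2 \circ R_1$ is handled identically, with the roles of $R_1, R_2$ and of $\varepsilon, \delta$ swapped, using $R_1(R_2(A)) \subseteq R(A)$. This establishes condition (ii) for every pair of $R$, so $R$ is a weak $(\varepsilon+\delta)$-bisimulation witnessing $s \approx_{\varepsilon+\delta}^w u$.

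The one point requiring care — and the reason the bare composition does not suffice — is symmetry: the definition demands a reflexive and symmetric relation, so I must pass to the symmetric closure and verify the bisimulation condition on both $R_1 \circ R_2$ and $R_2 \circ R_1$. The monotonicity step, which lets me replace the composed images $R_2(R_1(A))$ and $R_1(R_2(A))$ by the larger image $R(A)$ of the symmetric closure, is precisely what makes this enlargement harmless.
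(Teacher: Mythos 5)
Your proof is correct and follows essentially the same route as the paper's: compose the two relations, chain the two defining inequalities through the intermediate state (using that $l(a)=l(b)=l(c)$ makes the constraining sets of the until-formulas coincide), and absorb the composed image $R_2(R_1(A))$ into $R(A)$ via monotonicity of $B \mapsto \mathrm{Pr}_c(L(c) \Until B)$. You are in fact slightly more careful than the paper, which composes the bisimilarity relations $\approx_\varepsilon^w$ and $\approx_\delta^w$ directly and verifies condition (ii) in one direction only, leaving the symmetry of the resulting relation implicit; your explicit symmetric closure $(R_1 \circ R_2) \cup (R_2 \circ R_1)$ together with the two-case verification closes that small gap.
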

	
	Further, $\approx^w_0$ and $\approx^b_0$ coincide with $\approx^w$ and $\approx^b$, respectively, so our notions are conservative extensions of their exact counterparts. In particular, as ${\approx^w} = {\approx^b}$ for LMCs \cite{WBFPP}, it follows that ${\approx_0^w} = {\approx_0^b}$. For $\varepsilon > 0$ the notions can, however, become incomparable. This is different compared to the nonprobabilistic case, where $\approx^b$ is strictly finer than $\approx^w$ \cite{BTABS}.
	
	\begin{figure}[t]
		\centering
		\resizebox{!}{0.14\textheight}{\begin{tikzpicture}[->,>=stealth',shorten >=1pt,auto, semithick]
				\tikzstyle{every state} = [text = black]
				
				%lhs
				\node[state] (s1) [fill = yellow] {$s$}; 
				\node[state] (s2) [above of = s1, fill = yellow, node distance = 2.25cm] {$s_1$};
				\node[state] (x) [right of = s1, node distance = 2.75cm] {$x$};
				\node[state] (y) [fill = green, above of = x, node distance = 2.25cm] {$y$};
				\node[state] (t1) [fill = yellow, right of = x, node distance = 2.75cm] {$t$};
				\node[state] (t2) [fill = yellow, above of = t1, node distance = 2.25cm] {$t_1$};
				\node (temp1) [above of = t1, node distance = 1.25cm] {};
				\node (initl) [left of = s, node distance = 1.2cm] {};
				
				\path 
				(initl) edge (s)
				(s1) edge node [above] {$\frac{1}{2}$} (x)
				(s1) edge node [left] {$\frac{1}{2}$} (s2)
				(s2) edge node [above, xshift = 0.1cm] {$\frac{1}{4}$} (x)
				(s2) edge node [above] {$\frac{3}{4}$} (y)
				(t1) edge node [above] {$\frac{1}{2}{+}\varepsilon$} (x)
				(t1) edge node [right] {$\frac{1}{2}{-}\varepsilon$} (t2)
				(t2) edge node [above] {$\frac{3}{4} {-} \varepsilon$} (y)
				(t2) edge node [above, xshift = -0.1cm] {$\frac{1}{4} {+} \varepsilon$} (x)
				(y) edge [loop below] node [right, pos = 0.2] {$1$} (y)
				(x) edge [loop above] node [left, pos = 0.2] {$1$} (x)
				;

				\node [right of = s1, node distance = 0.7cm, yshift = -0.3cm] {$\{a\}$};
				\node [left of = s2, node distance = 0.75cm] {$\{a\}$};
				\node [right of = x, node distance = 0.7cm, yshift = -0.3cm] {$\emptyset$};
				\node [left of = y, node distance = 0.65cm, yshift = -0.4cm] {$\{b\}$};
				\node [right of = t1, node distance = 0.75cm] {$\{a\}$};
				\node [right of = t2, node distance = 0.75cm] {$\{a\}$};
				
				%rhs
				\node[] (temp) [right of = temp1, node distance = 3.5cm] {};
				\node[state] (s) [fill = yellow, above of = temp, node distance = 0.75cm] {$s$}; 
				\node[state] (t) [fill = yellow, below of = temp, node distance = 0.75cm] {$t$};
				\node[state] (v) [right of = temp, node distance = 3cm, fill = yellow] {$v$};
				\node[state] (u) [above of = v, node distance = 1.4cm, fill = yellow] {$u$};
				\node[state] (w) [below of = v, node distance = 1.4cm, fill = yellow] {$w$};
				\node[state] (x) [right of = s, node distance = 6cm] {$x$};
				\node[state] (y) [right of = t, node distance = 6cm, fill = green] {$y$};
				\node (initr) [above of = s, node distance = 1.2cm] {};
				
				\path 
				(initr) edge (s)
				(s) edge node [above] {$\frac{1}{2}$} (u)
				(s) edge node [below, pos = 0.1] {$\frac{1}{2}$} (v)
				(t) edge node [above, pos = 0.1] {$\frac{1}{2}$} (v)
				(t) edge node [below] {$\frac{1}{2}$} (w)
				
				(u) edge node [above] {$\frac{1}{2}{+}\varepsilon$} (x)
				(u) edge [bend left = 10] node [below, pos = 0.08] {$\frac{1}{2}{-}\varepsilon$} (y)
				
				(w) edge node [below] {$\frac{1}{2}{+}\varepsilon$} (y)
				(w) edge [bend right = 10] node [above, pos = 0.08] {$\frac{1}{2}{-}\varepsilon$} (x)
				
				(v) edge [bend right = 10] node [above, pos = 0.1] {$\frac{1}{2}$} (x)
				(v) edge [bend left = 10] node [below, pos = 0.1] {$\frac{1}{2}$} (y)
				
				(x) edge [loop right] node {$1$} (x)
				(y) edge [loop right] node {$1$} (y)
				;		
				
				\node [left of = s, node distance = 0.75cm] {$\{a\}$}; 
				\node [left of = t, node distance = 0.75cm] {$\{a\}$}; 
				\node [right of = u, node distance = 0.75cm, yshift = 0.2cm] {$\{a\}$}; 
				\node [right of = w, node distance = 0.75cm, yshift = -0.2cm] {$\{a\}$}; 
				\node [above of = v, node distance = 0.65cm, xshift = -0.2cm] {$\{a\}$}; 
				\node [above of = x, node distance = 0.7cm] {$\emptyset$}; 
				\node [below of = y, node distance = 0.7cm] {$\{b\}$}; 
				
		\end{tikzpicture}}
		\caption{The LMCs used in the proof of \Cref{Proposition: Branching and Weak are incomparable}.}
		\label{Figure: Combined LMCs for proof of Proposition: Branching and Weak are incomparable}
	\end{figure}
	
	\begin{restatable}{proposition}{ProbBranchingDoesNotImplyWeak}\label{Proposition: Branching and Weak are incomparable}
		For $0 < \varepsilon < \frac{1}{4}$, $s \approx^b_\varepsilon t \nRightarrow s \approx^w_\varepsilon t$ and $s \approx^w_\varepsilon t \nRightarrow s \approx_\varepsilon^b t$.
	\end{restatable}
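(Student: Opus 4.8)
The plan is to prove both non-implications with the two LMCs depicted in \Cref{Figure: Combined LMCs for proof of Proposition: Branching and Weak are incomparable}, using the left one for $\approx^b_\varepsilon \nRightarrow \approx^w_\varepsilon$ and the right one for $\approx^w_\varepsilon \nRightarrow \approx^b_\varepsilon$. In both LMCs the states $x$ and $y$ carry the unique labels $\emptyset$ and $\{b\}$, so condition \emph{(i)} forces every relation under consideration to keep them as singleton classes; in particular $R(\{y\}) = \{y\}$ for any candidate $R$. Since $L(\cdot)$ is determined by the labeling alone, this pins down the reachability probabilities that witness the failures independently of the chosen relation.

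For the left LMC I would first exhibit the equivalence $R$ with classes $\{s,t\}, \{s_1,t_1\}, \{x\}, \{y\}$ and check that it is a branching $\varepsilon$-bisimulation: for every related pair and every $R$-closed set $A$ the quantity $\mathrm{Pr}_{(\cdot)}([\cdot]_R \Until A)$ differs by at most $\varepsilon$, the worst case being exactly $\varepsilon$ (e.g.\ for $A \in \{\{x\},\{y\},\{s_1,t_1\}\}$), giving $s \approx^b_\varepsilon t$. To see $s \not\approx^w_\varepsilon t$ I take $A = \{y\}$: as $L(s) = L(t)$ is the set of all $a$-labeled states and $R(\{y\})=\{y\}$, the weak condition would require $\mathrm{Pr}_s(L(s)\Until\{y\}) \le \mathrm{Pr}_t(L(t)\Until\{y\}) + \varepsilon$, i.e.\ $\tfrac38 \le (\tfrac12-\varepsilon)(\tfrac34-\varepsilon)+\varepsilon = \tfrac38 - \tfrac\varepsilon4 + \varepsilon^2$, which fails for $\varepsilon < \tfrac14$. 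Because this argument only uses that $y$ is the unique $b$-state, it rules out every weak $\varepsilon$-bisimulation containing $(s,t)$.

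For the right LMC the key feature is that the successors $u,v,w$ move to the fixed classes $\{x\}$ and $\{y\}$ with probabilities spread symmetrically around $\tfrac12$, so that $u$ and $w$ differ by $2\varepsilon$. To establish $s \approx^w_\varepsilon t$ I use the (non-transitive) reflexive symmetric closure of $\{(s,t),(u,v),(v,w)\}$ and verify the weak condition for all $A$; the chain $u\,R\,v\,R\,w$ makes the images $R(\{u\}), R(\{v\}), R(\{w\})$ just large enough to absorb the $2\varepsilon$-gap while each individual comparison stays within $\varepsilon$.

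The main obstacle, and the part I would treat most carefully, is showing $s \not\approx^b_\varepsilon t$: since branching $\varepsilon$-bisimulations must be equivalences, I cannot inspect a single candidate but have to exclude all of them. The plan is a short case analysis on the shared class $C = [s]_R = [t]_R$ of a hypothetical branching $\varepsilon$-bisimulation. First, $u$ and $w$ cannot both lie in $C$, as the pair $(u,w)$ would then violate condition \emph{(ii)} for $A=\{y\}$ (their gap is $2\varepsilon$). If exactly one of them lies in $C$, testing $A=\{x\}$ or $A=\{y\}$ produces a difference of $\tfrac14+\tfrac\varepsilon2 > \varepsilon$; and if neither lies in $C$, I test $A=[u]_R$ (or $[w]_R$), where the asymmetric reachability of $u$ from $s$ versus $t$ forces a difference of at least $\tfrac12 > \varepsilon$, using again that $u \not\approx^b_\varepsilon w$ keeps them in distinct classes. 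Exhausting these cases establishes the second non-implication.
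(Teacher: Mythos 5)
Your proposal is correct and takes essentially the same approach as the paper: it uses the same two LMCs from \Cref{Figure: Combined LMCs for proof of Proposition: Branching and Weak are incomparable}, the same branching $\varepsilon$-bisimulation with classes $\{s,t\},\{s_1,t_1\},\{x\},\{y\}$ on the left, and an exhaustive case analysis ruling out every branching $\varepsilon$-bisimulation containing $(s,t)$ on the right. Your deviations are only cosmetic --- you test $A=\{y\}$ where the paper tests $A=\{x\}$ (a mirror computation), you exhibit the smaller weak $\varepsilon$-bisimulation generated by $\{(s,t),(u,v),(v,w)\}$ instead of the full $\approx_\varepsilon^w$ (which additionally contains $(s,u),(s,v),(t,v),(t,w)$; your smaller relation needs, and passes, its own check since shrinking $R$ shrinks $R(A)$), and you organize the case split around membership of $u,w$ in $C=[s]_R$ rather than around individual pairs --- and all your witnessing gaps ($2\varepsilon$, $\tfrac14+\tfrac{\varepsilon}{2}$, and $\tfrac12$) verify.
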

	\begin{proof}
		Let $\varepsilon \in (0, \frac{1}{4})$ and consider \Cref{Figure: Combined LMCs for proof of Proposition: Branching and Weak are incomparable}. In the left LMC, $s \approx_\varepsilon^b t$, as the largest branching $\varepsilon$-bisimulation is induced by the equivalence classes $\{\{s, t\}, \{s_1, t_1\}, \{x\}, \{y\}\}$ and, in particular, $s \not \approx_\varepsilon^b s_1$ and $t \not \approx_\varepsilon^b t_1$. However, $s \not \approx_\varepsilon^w t$ as $\mathrm{Pr}_t(L(t) \Until \{x\}) = \frac{5}{8} + \frac{5}{4} \varepsilon - \varepsilon^2 > \frac{5}{8} + \varepsilon = \mathrm{Pr}_s(L(s) \Until \{x\}) + \varepsilon$. Furthermore, in the right LMC, $s \approx_\varepsilon^w t$ while $s \not \approx_\varepsilon^b t$ since any branching $\varepsilon$-bisimulation $R$ that contains $(s,t)$ must also contain $(u,w)$ due to transitivity, which is not possible as, e.g., $\vert \mathrm{Pr}_u([u]_R \Until [x]_R) - \mathrm{Pr}_w([w]_R \Until [x]_R) \vert > \varepsilon$.
	\end{proof}
	
	The major difference between $\approx_\varepsilon^w$, $\approx_\varepsilon^b$ and $\sim_\varepsilon, \equiv_\varepsilon$ is that the former can abstract from (some) stutter steps. Consequently, if no stuttering is possible, i.e., when $\prob(s)(L(s)) = 0$ for all $s \in S$, we have ${\sim_\varepsilon^*} = {\approx_\varepsilon^b}$ and ${\sim_\varepsilon} = {\approx_\varepsilon^w}$.
	Otherwise, the notions become incomparable.
	
	\begin{figure}[t]
		\centering
		\resizebox{!}{0.1\textheight}{\begin{tikzpicture}[->,>=stealth',shorten >=1pt,auto, semithick]
				\tikzstyle{every state} = [text = black]
				
				%lhs
				\node[state] (s) [fill = yellow] {$s$}; 
				\node[] (temp) [right of = s, node distance = 3cm] {};
				\node[state] (x1) [above of = temp, fill = green, node distance = 0.6cm] {$x_1$};
				\node[state] (x2) [below of = temp, node distance = 0.6cm] {$x_2$};
				\node[state] (t) [fill = yellow, right of = s, node distance = 6cm] {$t$};
				\node(sinit) [above of = s, node distance = 1.1cm] {};
				
				\path 
				(sinit) edge (s)
				(s) edge[bend right = 8] node [above, pos = 0.3] {$\varepsilon_1$} (x1)
				(s) edge [bend left = 8]node [below, pos = 0.3] {$\varepsilon_2$} (x2)
				(s) edge [loop below] node [left, pos = 0.85] {$1{-}\varepsilon_1{-}\varepsilon_2$} (s)
				(x1) edge [loop left] node {$1$} (x1)
				(x2) edge [loop left] node {$1$} (x2)
				(t) edge [bend left = 8] node [above, pos = 0.3] {$\varepsilon_2$} (x1)
				(t) edge [bend right = 8] node [below, pos = 0.3] {$\varepsilon_1$} (x2)
				(t) edge [loop below] node [right, pos = 0.15] {$1{-}\varepsilon_1{-}\varepsilon_2$} (t)
				;
				
				\node [left of = s, node distance = 0.75cm] {$\{a\}$};
				\node [right of = t, node distance = 0.75cm] {$\{a\}$};
				\node [right of = x1, node distance = 0.8cm, yshift = 0.2cm] {$\{b\}$};
				\node [right of = x2, node distance = 0.7cm, yshift = -0.2cm] {$\emptyset$};

				%rhs
				\node[state] (sr) [fill = yellow, right of = t, node distance = 2.75cm] {$s$}; 
				\node[state] (s1r) [right of = sr, node distance = 1.7cm, fill = yellow] {$s_1$};
				\node[state] (xr) [right of = s1r, node distance = 1.7cm, fill = green] {$x$}; 
				\node[state] (yr) [below of = xr, node distance = 1.2cm] {$y$};
				\node[state] (t2r) [right of = xr, node distance = 1.7cm, fill = yellow] {$t_1$}; 
				\node[state] (t1r) [right of = t2r, node distance = 1.7cm, fill = yellow] {$t$};
				\node (initr) [left of = sr, node distance = 1.2cm] {};
				
				\path 
				(initr) edge (sr)
				(sr) edge node [above, pos = 0.4] {$1{-}\varepsilon$}  (s1r)
				(sr) edge [loop above] node [left, pos =0.2] {$\frac{\varepsilon}{2}$} (sr)
				(sr) edge [bend right = 20] node [below, pos = 0.2] {$\frac{\varepsilon}{2}$} (yr)
				
				(s1r) edge node [above, pos = 0.4] {$1{-}\varepsilon$}  (xr)
				(s1r) edge [loop above] node [left, pos =0.2] {$\frac{\varepsilon}{2}$} (s1r)
				(s1r) edge [bend right = 20] node [above, pos = 0.5] {$\frac{\varepsilon}{2}$} (yr)
				
				(t1r) edge node [above, pos = 0.3] {$1$}  (t2r)
				
				(t2r) edge node [above, pos = 0.3] {$1$}  (xr)
				
				(yr) edge [loop right] node {$1$} (yr)
				(xr) edge [loop above] node [left, pos = 0.2]{$1$} (xr)
				;
				
				%label
				\node [below of = sr, node distance = 0.7cm] {$\{a\}$};
				\node [below of = s1r, node distance = 0.7cm] {$\{a\}$}; 
				\node [below of = t1r, node distance = 0.7cm] {$\{a\}$}; 
				\node [below of = t2r, node distance = 0.7cm] {$\{a\}$}; 
				\node [right of = xr, node distance = 0.6cm, yshift = 0.4cm] {$\{b\}$}; 
				\node [right of = yr, node distance = 0.55cm, yshift = 0.4cm] {$\emptyset$};
		\end{tikzpicture}}
		\caption{The LMCs used in the proof of (i) of \Cref{Proposition: Weak and Branching Epsilon-Bisimulation and Epsilon-Bisimulation are incomparable}.}
		\label{Figure: LMCs for showing that eps-bisim does not imply branching or weak}
	\end{figure}
	
	\begin{restatable}{proposition}{WeakAndBanchingIncompWithEpsAPBAndBisim}\label{Proposition: Weak and Branching Epsilon-Bisimulation and Epsilon-Bisimulation are incomparable}
		Let ${\approx_\varepsilon} \in \{{\approx^b_\varepsilon}, {\approx_\varepsilon^w}\}.$ Then there are LMCs with states $s,t \in S$ such that
		\emph{(i)} $s \sim_\varepsilon t$ and $s \equiv_\varepsilon t$ but $s \not \approx_\varepsilon t$, and
		\emph{(ii)} $s \approx_\varepsilon t$ but $s \nsim_\varepsilon t$ and $s \not\equiv_\varepsilon t$. Hence, $\approx_\varepsilon$ and $\sim_\varepsilon, \equiv_\varepsilon$ are incomparable.
		Furthermore, \emph{(i)} and \emph{(ii)} also hold for $\sim_\varepsilon^*$ and $\equiv_\varepsilon^*$ instead of $\sim_\varepsilon$ and $\equiv_\varepsilon$.
	\end{restatable}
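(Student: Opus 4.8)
The statement asserts incomparability in two directions, and for each of $\approx_\varepsilon^b$ and $\approx_\varepsilon^w$, so the plan is to supply witnessing LMCs for (i) and (ii) and then to harvest the transitive variants from containments. The useful bookkeeping is this: for (i) it suffices, on the positive side, to exhibit a single \emph{equivalence} $R$ that is an $\varepsilon$-bisimulation, because restricting the defining inequality of an $\varepsilon$-bisimulation to $R$-closed sets yields (using symmetry) the $\varepsilon$-APB inequality, while transitivity of $R$ promotes both to $\sim_\varepsilon^*$ and $\equiv_\varepsilon^*$; thus one witness simultaneously gives $s \sim_\varepsilon t$, $s \equiv_\varepsilon t$, $s \sim_\varepsilon^* t$ and $s \equiv_\varepsilon^* t$. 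For (ii) the reductions run the other way: every transitive $\varepsilon$-bisimulation is an $\varepsilon$-bisimulation and, as recalled in the excerpt, every transitive $\varepsilon$-APB is an $\varepsilon$-bisimulation, so ${\sim_\varepsilon^*}, {\equiv_\varepsilon^*} \subseteq {\sim_\varepsilon}$, and $s \nsim_\varepsilon t$ already entails $s \nsim_\varepsilon^* t$ and $s \not\equiv_\varepsilon^* t$. Hence I only need the unstarred relations, taking equivalence witnesses on the positive side of (i).

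For (i) I would use the two LMCs of \Cref{Figure: LMCs for showing that eps-bisim does not imply branching or weak}, the label-swap on the left for the branching case and the stutter-chain on the right for the weak case. In the swap LMC, the classes $\{s,t\}, \{x_1\}, \{x_2\}$ form a transitive $\varepsilon$-bisimulation as soon as $|\varepsilon_1 - \varepsilon_2| \leq \varepsilon$, giving all four premises at once. For branching, any $R$ with $(s,t) \in R$ forces $[s]_R = [t]_R = \{s,t\}$ (the only $\{a\}$-states) and $[x_1]_R = \{x_1\}$ (the only $\{b\}$-state), so condition (ii) of \Cref{Definition: Branching Epsilon Bisimulation} must be tested at the fixed $R$-closed set $\{x_1\}$: since stuttering at $s$ resp.\ $t$ makes $\mathrm{Pr}_s([s]_R \Until \{x_1\}) = \frac{\varepsilon_1}{\varepsilon_1+\varepsilon_2}$ and $\mathrm{Pr}_t([t]_R \Until \{x_1\}) = \frac{\varepsilon_2}{\varepsilon_1+\varepsilon_2}$, choosing e.g.\ $\varepsilon_1 = \varepsilon$ and $\varepsilon_2 = \varepsilon^2$ makes the difference $\frac{1-\varepsilon}{1+\varepsilon} > \varepsilon$, so $s \not\approx_\varepsilon^b t$. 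In the chain LMC the classes $\{s,t\},\{s_1,t_1\},\{x\},\{y\}$ form a transitive $\varepsilon$-bisimulation (again yielding the four premises), but weak $\varepsilon$-bisimilarity fails: $L(s)$ is the set of \emph{all} $\{a\}$-states irrespective of $R$, and $x$ being the unique $\{b\}$-state fixes $R(\{x\}) = \{x\}$, so the two stutter phases give $\mathrm{Pr}_s(L(s) \Until \{x\}) = \left(\frac{1-\varepsilon}{1-\varepsilon/2}\right)^2$ against $\mathrm{Pr}_t(L(t) \Until \{x\}) = 1$, and $1 - \left(\frac{1-\varepsilon}{1-\varepsilon/2}\right)^2 = \varepsilon + \frac{\varepsilon^2(1-\varepsilon)}{4(1-\varepsilon/2)^2} > \varepsilon$ for $\varepsilon \in (0,1)$, whence $s \not\approx_\varepsilon^w t$.

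For (ii) a single stutter example serves both notions: take $\{a\}$-labeled states $s, s', t$ and a $\{b\}$-labeled absorbing state $x$, with deterministic transitions $s \to s' \to x$ and $t \to x$. Collapsing the $\{a\}$-states shows $s$ and $t$ are \emph{exactly} weakly and branching bisimilar — both reach $x$ with probability $1$ while only passing through $\{a\}$-states — so $s \approx_\varepsilon^w t$ and $s \approx_\varepsilon^b t$ for every $\varepsilon \geq 0$. On the other hand $x$ is the unique $\{b\}$-state, so for any label-respecting reflexive and symmetric $R$ the set $\{x\}$ is $R$-closed with $R(\{x\}) = \{x\}$; since $\prob(s)(\{x\}) = 0$ while $\prob(t)(\{x\}) = 1$, both the $\varepsilon$-bisimulation inequality and the $\varepsilon$-APB inequality are violated at $\{x\}$ for every $\varepsilon < 1$, giving $s \nsim_\varepsilon t$ and $s \not\equiv_\varepsilon t$, and the starred versions by the containments above.

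The genuine work lies in the negative, universally quantified claims, where every candidate relation must be excluded rather than a fixed one inspected, and this is the step I expect to be the main obstacle. In all three of them the leverage is that the distinguishing target carries a unique label: this pins its class to a singleton (hence $R$-closed) and removes the slack that $R(A)$ would otherwise provide, and in the branching case it additionally forces $[s]_R = [t]_R = \{s,t\}$ so that condition (ii) is evaluated at one explicit set. The positive directions are then routine: on the $\varepsilon$-bisimulation side one checks the defining inequality on the finitely many relevant $A$, and on the weak/branching side one only computes and compares two reachability-under-stuttering probabilities.
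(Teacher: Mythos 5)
Your proposal is correct and follows essentially the same route as the paper: the same two LMCs of \Cref{Figure: LMCs for showing that eps-bisim does not imply branching or weak} for claim (i) (with the same computed until-probabilities $\frac{\varepsilon_i}{\varepsilon_1+\varepsilon_2}$ and $\frac{4(1-\varepsilon)^2}{(2-\varepsilon)^2}$, and the same unique-label argument pinning the relevant $R$-closed sets), a deterministic stutter chain into a uniquely labeled absorbing state for claim (ii) (the paper uses three states $s \to t \to x$ rather than your four, which is immaterial), and the same bookkeeping that equivalence witnesses plus the containments ${\sim_\varepsilon^*}, {\equiv_\varepsilon^*} \subseteq {\sim_\varepsilon}$ dispose of the starred variants. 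The one slip is your instantiation $\varepsilon_1 = \varepsilon$, $\varepsilon_2 = \varepsilon^2$ in the branching case: the resulting gap $\frac{1-\varepsilon}{1+\varepsilon}$ exceeds $\varepsilon$ only for $\varepsilon < \sqrt{2}-1$, so your witness does not cover all $\varepsilon \in (0,1)$; the paper instead takes $\varepsilon_1 - \varepsilon_2 = \varepsilon$ with $\varepsilon_1 + \varepsilon_2 < 1$, giving a gap of $\frac{\varepsilon}{\varepsilon_1 + \varepsilon_2} > \varepsilon$ for every such $\varepsilon$ --- an immediate repair within your own family of examples.
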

	
	\begin{proof}
		To show (i) we do a case distinction on $\approx_\varepsilon$. 
		If ${\approx_\varepsilon} = {\approx^b_\varepsilon}$, consider the LMC on the left of \Cref{Figure: LMCs for showing that eps-bisim does not imply branching or weak} where $\varepsilon_1, \varepsilon_2 \in (0,1)$, $\varepsilon_1 \neq \varepsilon_2$, $\varepsilon_1 + \varepsilon_2 < 1$, and $\varepsilon = \vert \varepsilon_1 - \varepsilon_2 \vert$. 
		In this model, both $s \sim_\varepsilon t$ and $s \equiv_\varepsilon t$. However, for any equivalence $R$ that only relates states with the same label, $\vert \mathrm{Pr}_s([s]_R \Until \{x_1\}) - \mathrm{Pr}_t([t]_R \Until \{x_1\}) \vert = \frac{\vert \varepsilon_1 - \varepsilon_2\vert }{\varepsilon_1 + \varepsilon_2} \overset{\varepsilon_1 + \varepsilon_2 < 1}{>} \vert \varepsilon_1 - \varepsilon_2 \vert = \varepsilon$, so $s \not \approx_\varepsilon^b t$.
		
		If ${\approx_\varepsilon} = {\approx^w_\varepsilon}$, consider the right of \Cref{Figure: LMCs for showing that eps-bisim does not imply branching or weak} with $\varepsilon \in (0,1)$. There, $s \sim_\varepsilon t$ and $s \equiv_\varepsilon t$. However, $\mathrm{Pr}_t(L(t) \Until \{x\}) = 1 > \frac{4(1-\varepsilon)^2}{(2-\varepsilon)^2} = \mathrm{Pr}_s(L(t) \Until \{x\})$ for all $\varepsilon \in (0,1)$, so $s \not \approx_\varepsilon^w t$.
		
		The second claim follows when considering an LMC with three states, say $s,t$ and $x$, with initial state $s$ and $l(s) = l(t) \neq l(x)$ as well as $\prob(s)(t) = \prob(t)(x) = \prob(x)(x) = 1$. There, $s \approx^b_\varepsilon t$ and $s \approx^w_\varepsilon t$ for any $\varepsilon$, but neither $s \equiv_\varepsilon t$ nor $s \sim_\varepsilon t$. 
		
		The claims are shown analogously when replacing $\sim_\varepsilon$ and $\equiv_\varepsilon$ with $\sim_\varepsilon^*$ resp. $\equiv_\varepsilon^*$.
	\end{proof}
	
	Note that the anomaly of $\equiv_\varepsilon$ described in \Cref{Example: Epsilon-APBs} does not occur for branching $\varepsilon$-bisimilarity, as here transitivity would enforce $u_i \approx_\varepsilon^b u_j$ for all $i,j$ in \Cref{Figure: LMC not all epsilon-APB are epsilon-bisim} if $s \approx_\varepsilon^b t$. 
	
	The next lemma bounds the probabilities of states related by $\approx_\varepsilon^b$ or $\approx_\varepsilon^w$ to stutter forever.
	
	\begin{restatable}{lemma}{EndlessStutteringBranchingBisim}\label{Lemma: Properties Endless Stuttering}
		Let $\varepsilon \in [0,1]$ and let $R$ be a branching $\varepsilon$-bisimulation. 
		\begin{enumerate}
			\item If $(s,t) \in R$ and $C = [s]_R = [t]_R$ then $\vert \mathrm{Pr}_s(\Box C) - \mathrm{Pr}_t(\Box C) \vert \leq \varepsilon$.
			\item If $\mathcal{M}$ is finite and $(s,t) \in R$ then, for any $C\in \xfrac{S}{R}$, either \emph{(i)} $\mathrm{Pr}_s(\Box C) = 0$ or \\\emph{(ii)} $\mathrm{Pr}_s(\Box C) \geq 1 - \varepsilon$ for all $s \in C$.
			\item If $s \approx_\varepsilon^w t$ and $b = l(s) = l(t)$ then $\vert \mathrm{Pr}_s(\Box b) - \mathrm{Pr}_t(\Box b) \vert \leq \varepsilon$.
		\end{enumerate}
		
	\end{restatable}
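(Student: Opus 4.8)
The plan is to reduce each of the three statements to the corresponding bisimulation condition by expressing the always-operator $\Box$ as the complement of a suitable until-property. For the first statement, since $s,t \in C$ with $C = [s]_R = [t]_R$, starting from either state the event $\Box C$ is exactly the complement of eventually leaving $C$, i.e. $\mathrm{Pr}_s(\Box C) = 1 - \mathrm{Pr}_s([s]_R \Until (S \setminus C))$, and analogously for $t$. As $C$ is an equivalence class, its complement $S \setminus C$ is a union of classes and hence $R$-closed, so condition (ii) of \Cref{Definition: Branching Epsilon Bisimulation} applied with $A = S \setminus C$ gives $\vert \mathrm{Pr}_s([s]_R \Until (S\setminus C)) - \mathrm{Pr}_t([t]_R \Until (S \setminus C)) \vert \leq \varepsilon$; subtracting from $1$ preserves this bound and yields the first claim.

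The third statement follows the same pattern in the weak setting. Here $L(s) = L(t)$ equals the set $L_b$ of all $b$-labeled states, and from $s$ resp. $t$ (which both lie in $L_b$) the event $\Box b$ is the complement of $L(s) \Until (S \setminus L_b)$. Because every weak $\varepsilon$-bisimulation relates only equally labeled states, $S \setminus L_b$ is $R$-invariant, i.e. $R(S \setminus L_b) = S \setminus L_b$. Applying condition (ii) of the weak definition with $A = S \setminus L_b$ (so that $R(A) = A$) gives $\mathrm{Pr}_s(L(s) \Until (S\setminus L_b)) \leq \mathrm{Pr}_t(L(t) \Until (S \setminus L_b)) + \varepsilon$; using symmetry of $R$ for the reverse inequality and complementing both sides yields $\vert \mathrm{Pr}_s(\Box b) - \mathrm{Pr}_t(\Box b) \vert \leq \varepsilon$.

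For the second statement I would write $p_s = \mathrm{Pr}_s(\Box C)$ for $s \in C$ and exploit the harmonic identity $p_s = \sum_{u \in C} \prob(s)(u)\, p_u$, which holds since staying in $C$ forever amounts to first stepping to some $u \in C$ and then never leaving. Let $p^* = \max_{s \in C} p_s$, which exists as $\mathcal{M}$ is finite, and fix a maximizer $s^*$. The identity forces $p^* \leq p^* \cdot \prob(s^*)(C)$, so whenever $p^* > 0$ we obtain $\prob(s^*)(C) = 1$ together with $p_u = p^*$ for every successor $u$ of $s^*$. Propagating this to all states reachable from $s^*$ shows the chain never leaves $C$ from $s^*$, whence $p^* = p_{s^*} = 1$. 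Thus $p^* \in \{0,1\}$: if $p^* = 0$ then $\mathrm{Pr}_s(\Box C) = 0$ for all $s \in C$, and if $p^* = 1$ then the first statement, applied to pairs of states inside the class $C$, gives $p_s \geq p^* - \varepsilon = 1 - \varepsilon$ for all $s \in C$.

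I expect the main obstacle to be precisely this maximum-principle step in the second statement, namely arguing that a positive stay-forever probability at the maximizer is in fact equal to $1$; this is where finiteness of $\mathcal{M}$ is essential (to ensure the maximum is attained and that the closed reachable set argument terminates). The two complementation arguments for the first and third statements are then routine, relying only on the fact that complements of $R$-closed (resp. label-invariant) sets are again of the required form.
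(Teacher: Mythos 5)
Your proofs of claims 1 and 3 coincide with the paper's own arguments: in both cases the paper complements $\Box C$ (resp.\ $\Box b$) against the until-event $C \Until (S\setminus C)$ (resp.\ $b \Until (S \setminus L(s))$), notes that the complement of the class $C$ is $R$-closed (resp.\ that $R(S\setminus L(s)) = S\setminus L(s)$ because weak $\varepsilon$-bisimulations are reflexive and relate only equally labeled states), and applies the defining probability condition in both directions — exactly as you do. For claim 2, however, you take a genuinely different route. The paper argues via structure theory: a state $s \in C$ with $\mathrm{Pr}_s(\Box C) > 0$ forces the existence of a bottom strongly connected component $B \subseteq C$; any $t \in B$ has $\mathrm{Pr}_t(C \Until (S\setminus C)) = 0$, and since $(s,t)\in R$ and $S \setminus C$ is $R$-closed, the branching condition bounds $\mathrm{Pr}_s(C \Until (S\setminus C))$ by $\varepsilon$ for every $s \in C$. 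You instead prove a maximum principle for the function $p_s = \mathrm{Pr}_s(\Box C)$, which is harmonic for the substochastic kernel restricted to $C$: at a maximizer $s^*$ (attained by finiteness) positivity of $p^* $ forces $\prob(s^*)(C)=1$ and $p_u = p^*$ on all successors, and propagating along the reachable set shows every path from $s^*$ stays in $C$, so $p^* = 1$; claim 1 applied to pairs within $C$ then yields $p_s \geq 1-\varepsilon$ for all $s \in C$. Both arguments are correct and both locate a state with stay-probability exactly $1$ before transferring the bound via the branching condition; yours is more self-contained (it avoids invoking the standard fact that finite Markov chains are almost surely absorbed into BSCCs, using finiteness only to attain the maximum and terminate the propagation), while the paper's is shorter once that BSCC fact is taken as given. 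One cosmetic remark: the paper's proof of the dichotomy explicitly notes the edge case $\varepsilon = 1$ (its non-vacuity argument assumes $\varepsilon \in [0,1)$), whereas your $p^* \in \{0,1\}$ dichotomy handles all $\varepsilon \in [0,1]$ uniformly, which is a small robustness gain.
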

	
	Since $\approx_\varepsilon^b$ and $\approx_\varepsilon^w$ cannot differentiate single steps from steps after an arbitrary (but finite) amount of stuttering, they do not preserve any next-step probabilities. Furthermore, in \Cref{Figure: LMCs for maximal difference in reachability probabilities in Branching or Weak Epsilon Bisimilar States} both $s \approx_\varepsilon^b t$ and $s \approx_\varepsilon^w t$, so by \Cref{Example: Maximal difference in Reachability Probabilities in Branching or Weak Epsilon Bisimilar States} we cannot expect a better bound for finite horizon satisfaction probabilities in related states than the one from \cite{RBBTEAPC} stated in \Cref{Theorem: Main Theorem RBBTEAPC}. We can, however, extend \Cref{thm:unbounded_reach} to states related by $\approx_\varepsilon^b$ and $\approx_\varepsilon^w$. 
	
	Given an equivalence $R$ on a finite LMC $\mathcal{M}$, let $div_R \subseteq \xfrac{S}{R}$ be the set of \emph{divergent} $R$-equivalence classes, i.e., $C \in div_R$ iff $\mathrm{Pr}_s(\Box C) \geq 1 - \varepsilon$ for all $s \in C$. 
	We construct from $\mathcal{M}$ an LMC $\mathcal{M}_R$ and an equivalence $R^b$ on $\mathcal{M}_R$ with $R \subseteq R^b$. Intuitively, $\mathcal{M}_R$ is obtained from $\mathcal{M}$ by redirecting the probabilities $\mathrm{Pr}_s(\Box C)$ for $C = [s]_R$ to fresh ``divergence states'' $s_C$. 
	
	\begin{definition}\label{Definition: MR}
		Given a finite LMC $\mathcal{M}$ and an equivalence $R$ that only relates states with the same label, let $\mathcal{M}_R = (S_R, \prob_R, s_{init}, l_R)$ with 
		\begin{itemize}
			\item $S_R = S \cup \{s_C \mid C \in div_R\}$ where the $s_C$ are fresh, pairwise different states
			\item $l_R(s) = l(s)$ if $s \in S$ and $l(s_C) = l(s)$ for some $s \in C$
			\item for $s \in S$ and $C = [s]_R$, the values of the distribution $\prob_R(s)$ are defined by
			\begin{align*}
				\prob_R(s)(t) = \begin{cases}
					\mathrm{Pr}_s(C \Until t), &\text{if } t \in S \setminus C \\
					\mathrm{Pr}_s(\Box C), &\text{if } s \neq s_C \text{ and } t = s_C \\
					1, &\text{if} s = t = s_C  \\ 
					0, &\text{otherwise}
				\end{cases}.
			\end{align*}
		\end{itemize}
	\end{definition}
	
	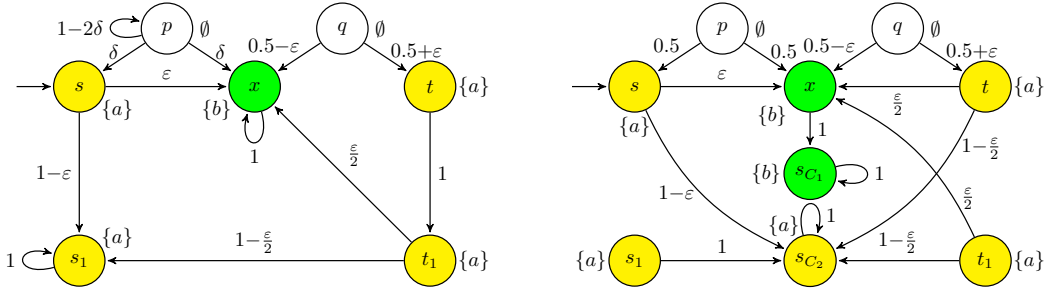
\begin{figure}[t]
		\centering
		\resizebox{!}{0.17\textheight}
		{\begin{tikzpicture}[->,>=stealth',shorten >=1pt,auto, semithick]
				\tikzstyle{every state} = [text = black]
				
				%lhs
				\node[state] (s) [fill = yellow] {$s$}; 
				\node (temp1) [right of = s, node distance = 1.5cm] {}; 
				\node[state] (x) [right of = temp1, node distance = 1.5cm, fill = green] {$x$}; 
				\node (temp2) [right of = x, node distance = 1.5cm] {};
				\node[state] (t) [right of = temp2, node distance = 1.5cm, fill = yellow] {$t$}; 
				\node[state] (s1) [below of = s, node distance = 3cm, fill = yellow] {$s_1$};
				\node[state] (t1) [below of = t, node distance = 3cm, fill = yellow] {$t_1$}; 
				\node[state] (p) [above of = temp1, node distance = 1cm] {$p$}; 
				\node[state] (q) [above of = temp2, node distance = 1cm] {$q$};
				\node (initl) [left of = s, node distance = 1.2cm] {};
				
				\path 
				(initl) edge (s)
				(s) edge node [left] {$1 {-} \varepsilon$} (s1)
				(s) edge node [above] {$\varepsilon$} (x)
				(s1) edge [loop left] node {$1$} (s1)
				(x) edge [loop below] node {$1$} (x)
				(t) edge node [right] {$1$} (t1)
				(t1) edge node [above] {$1{-}\frac{\varepsilon}{2}$} (s1)
				(t1) edge node [above, xshift = 0.2cm] {$\frac{\varepsilon}{2}$} (x)
				(p) edge [loop left] node [left] {$1{-}2\delta$} (p)
				(p) edge node [left, pos = 0.3, xshift = -0.1cm] {$\delta$} (s)
				(p) edge node [right, pos = 0.3, xshift = 0.1cm] {$\delta$} (x)
				(q) edge node [left, pos = 0.3, yshift = 0.1cm] {$0.5{-}\varepsilon$}(x)
				(q) edge node [right, pos =0.3, xshift = 0.1cm] {$0.5{+}\varepsilon$} (t)
				;
				
				\node [right of = s, node distance = 0.7cm, yshift = -0.4cm, xshift = -0.05cm] {$\{a\}$}; 
				\node [right of = t, node distance = 0.75cm] {$\{a\}$}; 
				\node [right of = s1, node distance = 0.7cm, yshift=0.4cm, xshift = -0.05cm] {$\{a\}$}; 
				\node [right of = t1, node distance = 0.75cm] {$\{a\}$}; 
				\node [right of = q, node distance = 0.65cm] {$\emptyset$}; 
				\node [right of = p, node distance = 0.65cm] {$\emptyset$};
				\node [left of = x, node distance = 0.7cm, yshift = -0.4cm, xshift = 0.05cm] {$\{b\}$}; 
				
				%rhs
				\node[state] (s) [fill = yellow, right of = t, node distance = 3.5cm] {$s$}; 
				\node (temp1) [right of = s, node distance = 1.5cm] {}; 
				\node[state] (x) [right of = temp1, node distance = 1.5cm, fill = green] {$x$}; 
				\node[state] (xdiv) [below of = x, node distance = 1.5cm, fill = green] {$s_{C_1}$};
				\node (temp2) [right of = x, node distance = 1.5cm] {};
				\node[state] (t) [right of = temp2, node distance = 1.5cm, fill = yellow] {$t$}; 
				\node[state] (s1) [below of = s, node distance = 3cm, fill = yellow] {$s_1$};
				\node[state] (t1) [below of = t, node distance = 3cm, fill = yellow] {$t_1$}; 
				\node[state] (p) [above of = temp1, node distance = 1cm] {$p$}; 
				\node[state] (q) [above of = temp2, node distance = 1cm] {$q$};
				\node[state] (sdiv) [right of = s1, node distance = 3cm, fill = yellow] {$s_{C_2}$};
				\node (initr) [left of = s, node distance = 1.2cm] {};
				
				\path 
				(initr) edge (s)
				(s) edge node [above] {$\varepsilon$} (x)
				(s) edge [bend right = 20] node [left] {$1{-}\varepsilon$} (sdiv)
				(s1) edge node [above] {$1$} (sdiv)
				(x) edge node [right] {$1$} (xdiv)
				(t) edge node [below] {$\frac{\varepsilon}{2}$} (x)
				(t) edge [bend left = 20] node [right, pos = 0.2] {$1{-}\frac{\varepsilon}{2}$} (sdiv)
				(t1) edge node [above] {$1{-}\frac{\varepsilon}{2}$} (sdiv)
				(t1) edge [bend right = 20] node [right, pos = 0.2] {$\frac{\varepsilon}{2}$} (x)
				(p) edge node [left, pos = 0.3, yshift = 0.1cm] {$0.5$} (s)
				(p) edge node [right, pos = 0.3, xshift = 0.1cm] {$0.5$} (x)
				(q) edge node [left, pos = 0.3, yshift = 0.1cm] {$0.5{-}\varepsilon$}(x)
				(q) edge node [right, pos =0.3, xshift = 0.1cm] {$0.5{+}\varepsilon$} (t)
				(sdiv) edge [loop above] node [right] [pos = 0.8] {$1$} (sdiv)
				(xdiv) edge [loop right] node {$1$} (xdiv)
				;
				
				\node [below of = s, node distance = 0.7cm] {$\{a\}$}; 
				\node [right of = t, node distance = 0.75cm] {$\{a\}$}; 
				\node [left of = s1, node distance = 0.75cm] {$\{a\}$}; 
				\node [above of = sdiv, node distance = 0.6cm, xshift = -0.45cm] {$\{a\}$}; 
				\node [right of = t1, node distance = 0.75cm] {$\{a\}$}; 
				\node [right of = q, node distance = 0.65cm] {$\emptyset$}; 
				\node [right of = p, node distance = 0.65cm] {$\emptyset$}; 
				\node [left of = x, node distance = 0.75cm, yshift = -0.5cm, xshift = 0.1cm] {$\{b\}$}; 
				\node [left of = xdiv, node distance = 0.75cm] {$\{b\}$}; 
		\end{tikzpicture}}
		\caption{An LMC $\mathcal{M}$ (left) and its transformation $\mathcal{M}_R$ (right) w.r.t. the branching $\varepsilon$-bisimulation $R$ with equivalence classes $\{s, t, s_1, t_1\}, \{p, q\}, \{x\}$ for $0 < \varepsilon < 1-2\delta$ and $div_R = \{\{s, t, s_1, t_1\}, \{x\}\}$.}
		\label{Figure: Example Transformation to MR}
	\end{figure}
	
	An example for the transformation from $\mathcal{M}$ to $\mathcal{M}_R$ can be found in \Cref{Figure: Example Transformation to MR}. 
	We now show the connection between branching $\varepsilon$-bisimulations $R$ on  finite LMCs $\mathcal{M}$ and transitive $\varepsilon$-bisimulations on their transformations $\mathcal{M}_R$. 
	\begin{restatable}{lemma}{EpsilonBisimFromBranchingBisimOnTranfsformation}\label{Lemma: Derivation of Epsilon-Bisimulation in MR}
		Let $\mathcal{M}$ be finite, $R$ an equivalence relating only equally labeled states, and $R^b$ the finest equivalence on $S_R$ with $R \subseteq R^b$ and $(s, s_C) \in R^b$ for all $C \in div_R$ and $s \in C$. Then $R$ is a branching $\varepsilon$-bisimulation on $\mathcal{M}$ iff $R^b$ is a transitive $\varepsilon$-bisimulation on $\mathcal{M}_R$.
	\end{restatable}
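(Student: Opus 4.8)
The plan is to prove both implications by reading the single-step distribution $\prob_R(s)$ in $\mathcal{M}_R$ as the \emph{exit distribution} of $s$ in $\mathcal{M}$. The governing identity is this: for $s \in S$ with $C = [s]_R$ and any $R$-closed set $A_0 \subseteq S$ with $A_0 \cap C = \emptyset$, \Cref{Definition: MR} gives $\prob_R(s)(A_0) = \sum_{t \in A_0} \mathrm{Pr}_s(C \Until t) = \mathrm{Pr}_s(C \Until A_0)$, since $\prob_R(s)$ places mass only on states outside $C$ and, if $C \in div_R$, on $s_C$ (carrying exactly the remaining mass $\mathrm{Pr}_s(\Box C)$), while $\mathrm{Pr}_s(C\Until A_0)$ is the probability that the first state reached outside $C$ lies in $A_0$. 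I would first record two structural facts: the $R^b$-classes are $C \cup \{s_C\}$ for $C \in div_R$ and $C$ otherwise, so the $R^b$-closed subsets of $S_R$ are exactly the sets $A' = A_0 \cup \{s_C \mid C \in div_R,\ C \subseteq A_0\}$ with $A_0 = A' \cap S$ an $R$-closed set; and, because $l_R(s_C) = l(s)$ for $s \in C$, the relation $R^b$ respects labels iff $R$ does.

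Next I would reduce the bisimulation requirement on $R^b$ to a convenient family of test sets. As $R^b$ is an equivalence, following the discussion after \Cref{Definition: Epsilon-APB} it is a transitive $\varepsilon$-bisimulation on $\mathcal{M}_R$ iff it respects labels and $|\prob_R(p)(A') - \prob_R(q)(A')| \le \varepsilon$ for all $(p,q) \in R^b$ and all $R^b$-closed $A'$ (one direction tests \Cref{Definition: Epsilon-Bisimulation}(ii) on $R^b$-closed sets, the other applies it to the $R^b$-closure $R^b(A')$ of an arbitrary $A'$). Since complementing an $R^b$-closed set preserves the absolute difference $|\prob_R(p)(\cdot) - \prob_R(q)(\cdot)|$ and flips whether it contains the class $[p]_{R^b}$, it suffices to check those $A'$ that are disjoint from $[p]_{R^b}$.

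With this reduction the two directions become short. For a pair $(s,t) \in R$ (both in $S$) and an $A'$ disjoint from the class of $s$, write $C = [s]_R$ and $A_0 = A' \cap S$; then $s_C \notin A'$ and $A_0$ is $R$-closed with $A_0 \cap C = \emptyset$, so the identity yields $|\prob_R(s)(A') - \prob_R(t)(A')| = |\mathrm{Pr}_s(C \Until A_0) - \mathrm{Pr}_t(C \Until A_0)|$, where $A_0$ ranges over all $R$-closed sets disjoint from $C$. The remaining instances of \Cref{Definition: Branching Epsilon Bisimulation}(ii), namely those with $C \subseteq A$, are trivially met as both until-probabilities equal $1$. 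Hence the reduced constraints on $R$-pairs coincide exactly with the branching $\varepsilon$-bisimulation conditions, delivering both implications for these pairs once labels are matched. The only other pairs of $R^b$ (up to symmetry and reflexivity) are $(s_C, u)$ with $u \in C \in div_R$; here $\prob_R(s_C)$ is the point mass at $s_C$, so for $A'$ disjoint from $C \cup \{s_C\}$ one gets $|\prob_R(s_C)(A') - \prob_R(u)(A')| = \mathrm{Pr}_u(C \Until A_0) \le 1 - \mathrm{Pr}_u(\Box C) \le \varepsilon$, the last step using $C \in div_R$. Thus these pairs never violate the condition and impose no constraint back on $R$.

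The single delicate point is well-definedness of $\mathcal{M}_R$: for $C \notin div_R$ there is no state $s_C$, so $\prob_R(s)$ is a distribution only if $\mathrm{Pr}_s(\Box C) = 0$ for every $s \in C$. In the direction ``$R$ is a branching $\varepsilon$-bisimulation $\Rightarrow$ $R^b$ is a transitive $\varepsilon$-bisimulation'' this is exactly the dichotomy of \Cref{Lemma: Properties Endless Stuttering}(2), which I would invoke at the outset to ensure that $\mathcal{M}_R$ is a genuine LMC before running the computation above. In the converse direction $\mathcal{M}_R$ is well-defined by hypothesis, and the same computation is simply read backwards, extracting \Cref{Definition: Branching Epsilon Bisimulation}(ii) for each $R$-closed $A$ from the $\varepsilon$-bisimulation constraint applied to its $R^b$-closure. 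I expect this interplay between the divergence bound and the well-definedness of $\mathcal{M}_R$ to be the main obstacle; everything else is the bookkeeping identity $\prob_R(s)(A') = \mathrm{Pr}_s(C \Until A_0)$ together with the complementation reduction.
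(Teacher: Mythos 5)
Your proposal is correct and follows essentially the same route as the paper's proof: both reduce transitive $\varepsilon$-bisimilarity of $R^b$ to the equivalent $\varepsilon$-APB condition on $R^b$-closed sets (\Cref{Lemma: Epsilon-Bisim = Epsilon-APB in equivalence case}), identify the $R^b$-classes as $C \cup \{s_C\}$ for $C \in div_R$ and $C$ otherwise, translate single-step probabilities in $\mathcal{M}_R$ into until-probabilities via $\prob_R(s)(A') = \mathrm{Pr}_s(C \Until A_0)$, and dispatch the pairs involving $s_C$ with the divergence bound $\mathrm{Pr}_u(\Box C) \geq 1 - \varepsilon$ from \Cref{Lemma: Properties Endless Stuttering}. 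Your two small deviations are both sound improvements in presentation rather than a different method: the complementation reduction (restricting to $R^b$-closed $A'$ disjoint from $[p]_{R^b}$) subsumes the paper's separate case $C' \subseteq A'$, where the paper instead argues directly that both probabilities lie in $[1-\varepsilon, 1]$, and your explicit invocation of \Cref{Lemma: Properties Endless Stuttering}(2) to guarantee that $\prob_R(s)$ is a genuine distribution when $C \notin div_R$ makes precise a well-definedness point that the paper leaves implicit.
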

	
	It is clear from the definition of $\mathcal{M}_R$ that for every $C \in \xfrac{S}{R}$ and all $s \in S$ with $s \notin C$ we have $\mathrm{Pr}_s^\mathcal{M}([s]_R \Until C) = \prob_R(s)(C)$. Hence, \Cref{Lemma: Derivation of Epsilon-Bisimulation in MR} allows us to transfer \Cref{thm:unbounded_reach} to states $s \approx_\varepsilon^b t$, since they are $\varepsilon$-bisimilar in $\mathcal{M}_R$. As in $\mathcal{M}_R$ any transition from $s$ to a $u \in S$ represents an equivalence class change in $\mathcal{M}$, the random variable $N^b$ can be modified to count the number of \emph{equivalence class changes} on paths to a $g$- or $f$-labeled state. 
	
	\begin{corollary}\label{Corollary: Unbounded Reach for Branching Epsilon Bisim}
		Let $\mathcal{M}$ be finite, let some states in $\mathcal{M}$ be labeled with $g$, and let exactly the states that cannot reach a $g$-labeled state be labeled with $f$. Further, let $s \approx_\varepsilon^b t$, and let $N^b$ denote the random variable that counts the number of equivalence class changes until a $g$- or $f$-labeled state is reached. Then $\vert \mathrm{Pr}_s(\lozenge g) - \mathrm{Pr}_t(\lozenge g) \vert \leq \varepsilon \cdot \mathbb{E}_s(N^b).$
	\end{corollary}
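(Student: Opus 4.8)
The plan is to reduce the statement to \Cref{thm:unbounded_reach} applied to the transformed chain $\mathcal{M}_R$. Since $s \approx_\varepsilon^b t$, fix a branching $\varepsilon$-bisimulation $R$ with $(s,t) \in R$ and construct $\mathcal{M}_R$ together with the equivalence $R^b$ as in \Cref{Definition: MR} and \Cref{Lemma: Derivation of Epsilon-Bisimulation in MR}. By that lemma $R^b$ is a transitive $\varepsilon$-bisimulation on $\mathcal{M}_R$, and since $R \subseteq R^b$ we have $(s,t) \in R^b$; as every transitive $\varepsilon$-bisimulation is in particular an $\varepsilon$-bisimulation, this gives $s \sim_\varepsilon t$ in $\mathcal{M}_R$. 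It then remains to verify three things: (a) the labeling of $\mathcal{M}_R$ satisfies the hypotheses of \Cref{thm:unbounded_reach}, i.e. exactly the states of $\mathcal{M}_R$ that cannot reach a $g$-state are labeled $f$; (b) $g$-reachability is preserved, $\mathrm{Pr}^{\mathcal{M}_R}_v(\lozenge g) = \mathrm{Pr}^{\mathcal{M}}_v(\lozenge g)$ for every original state $v$ (in particular for $s$ and $t$); and (c) the step counter $N$ of \Cref{thm:unbounded_reach} in $\mathcal{M}_R$ has, under $\mathrm{Pr}_s$, the same distribution as the class-change counter $N^b$ in $\mathcal{M}$. Granting these, \Cref{thm:unbounded_reach} yields $\vert \mathrm{Pr}^{\mathcal{M}}_s(\lozenge g) - \mathrm{Pr}^{\mathcal{M}}_t(\lozenge g) \vert = \vert \mathrm{Pr}^{\mathcal{M}_R}_s(\lozenge g) - \mathrm{Pr}^{\mathcal{M}_R}_t(\lozenge g) \vert \le \varepsilon \cdot \mathbb{E}^{\mathcal{M}_R}_s(N) = \varepsilon \cdot \mathbb{E}^{\mathcal{M}}_s(N^b)$, as desired.

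I expect step (a) to be the main obstacle: the fresh divergence states $s_C$ are absorbing, so if some class $C \in div_R$ were neither $g$- nor $f$-labeled, then $s_C$ would be a non-$f$ state unable to reach $g$, violating the precondition. I would remove this danger by showing that, under the corollary's labeling assumption, \emph{every divergent class is already $g$- or $f$-labeled}. Assume first $\varepsilon < 1$ (the case $\varepsilon = 1$ is separate and trivial, since then $s \in g \cup f$ forces $\mathbb{E}_s(N^b) = 0$ and, by $l(s) = l(t)$, equal reachability, while $s \notin g \cup f$ forces $N^b \ge 1$ and hence $\mathbb{E}_s(N^b) \ge 1 \ge \vert \mathrm{Pr}_s(\lozenge g) - \mathrm{Pr}_t(\lozenge g)\vert$). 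For $C \in div_R$ we have $\mathrm{Pr}_{s'}(\Box C) \ge 1 - \varepsilon > 0$ for $s' \in C$, so in the finite chain $\mathcal{M}$ there is a bottom strongly connected component $B \subseteq C$ reachable from $s'$ within $C$. If $C$ carried no $g$-label, then $B$ would contain no $g$-state and, being closed under transitions, none of its states could reach $g$; they would therefore be $f$-labeled, and as $R$ relates only equally labeled states all of $C$ would be $f$-labeled. Hence each $s_C$ inherits a label ($g$ or $f$) consistent with its class, so the merging performed by $R^b$ stays label-uniform, and a direct check then shows that the $f$-labeled states of $\mathcal{M}_R$ are exactly those that cannot reach $g$.

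For (b) I would use the identity $\prob_R(v)(w) = \mathrm{Pr}^{\mathcal{M}}_v([v]_R \Until w)$ for $w \notin [v]_R$ from \Cref{Definition: MR}. Take $v \notin g$ and put $C = [v]_R$, which is not $g$-labeled. By the dichotomy of \Cref{Lemma: Properties Endless Stuttering} either $C \notin div_R$, so $\mathrm{Pr}_{v}(\Box C) = 0$ and $\mathcal{M}_R$ places no mass on $s_C$, or $C \in div_R$ and hence, by the structural fact above, $C$ is $f$-labeled, so $v$ and every state it reaches is $f$-labeled and both reachability values are $0$. In either case $\mathrm{Pr}^{\mathcal{M}_R}_\bullet(\lozenge g)$ and $\mathrm{Pr}^{\mathcal{M}}_\bullet(\lozenge g)$ solve the same linear reachability system with the same zero-boundary on the non-$g$-reaching states, so they coincide by the standard unique characterization of reachability probabilities in finite Markov chains. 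For (c) note that a non-$g$/non-$f$ state lies in a non-divergent class (again by the structural fact), so its outgoing $\mathcal{M}_R$-mass on $s_C$ is $0$ and almost surely each $\mathcal{M}_R$-step leaves the current $R$-class, realizing exactly one class change in $\mathcal{M}$, while both processes halt upon first entering $g \cup f$. Coupling the first-passage recursions of the two chains then identifies the distributions of $N$ and $N^b$ under $\mathrm{Pr}_s$, giving $\mathbb{E}^{\mathcal{M}_R}_s(N) = \mathbb{E}^{\mathcal{M}}_s(N^b)$ and completing the argument.
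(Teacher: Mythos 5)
Your proposal is correct and follows essentially the same route as the paper: pass to the transformed chain $\mathcal{M}_R$ of \Cref{Definition: MR}, use \Cref{Lemma: Derivation of Epsilon-Bisimulation in MR} to conclude that $s$ and $t$ are $\varepsilon$-bisimilar in $\mathcal{M}_R$, apply \Cref{thm:unbounded_reach} there, and reinterpret its step counter $N$ as the class-change counter $N^b$. Your verifications (a)--(c) --- in particular the observation that every divergent class is already $g$- or $f$-labeled (for $\varepsilon<1$), so the absorbing states $s_C$ do not violate the labeling hypothesis of \Cref{thm:unbounded_reach} --- correctly spell out exactly the steps the paper leaves implicit in the discussion preceding the corollary.
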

	
	Furthermore, it is possible to extend \Cref{thm:unbounded_reach} to weakly $\varepsilon$-bisimilar states.  
	\begin{restatable}{proposition}{BoundWeakBisim}\label{Corollary: Bound for weak bisim}
		Let $\mathcal{M}$, $f$ and $g$ be as in \Cref{Corollary: Unbounded Reach for Branching Epsilon Bisim}, let $s \approx_\varepsilon^w t$ and let $N^w$ denote the random variable that counts the number of label changes until a $g$- or $f$-labeled state is reached. Then $\vert \mathrm{Pr}_s(\lozenge g) - \mathrm{Pr}_t(\lozenge g) \vert \leq \varepsilon \cdot \mathbb{E}_s(N^w).$
	\end{restatable}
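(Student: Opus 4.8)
The plan is to mirror the proof of \Cref{Corollary: Unbounded Reach for Branching Epsilon Bisim}, trading the equivalence-class stuttering of branching bisimilarity for the label-block stuttering of weak bisimilarity. Concretely, I would transform the finite LMC $\mathcal{M}$ into a chain $\mathcal{M}^w$ whose single steps realize entire weak steps: for a state $s$ with $b = l(s)$ and every $u$ with $l(u) \neq b$ I set $\prob^w(s)(u) = \mathrm{Pr}_s(L(s) \Until u)$, and I redirect the remaining mass $\mathrm{Pr}_s(\Box b)$ to one fresh absorbing sink $\star$ that I label with $f$. Labeling $\star$ with $f$ (rather than with $b$, as the divergence states in \Cref{Definition: MR} are labeled) is exactly what preserves the hypothesis of \Cref{thm:unbounded_reach}: reachability of $g$ is unchanged by the transformation, since escape probabilities are copied verbatim and both divergence and the $f$-states of $\mathcal{M}$ contribute $0$ to $\lozenge g$, so the states of $\mathcal{M}^w$ that cannot reach $g$ are again precisely the $f$-labeled ones.

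The core step is to show that a weak $\varepsilon$-bisimulation $R$ on $\mathcal{M}$ with $(s,t) \in R$ lifts to an $\varepsilon$-bisimulation $R^w = R \cup \{(\star,\star)\}$ on $\mathcal{M}^w$. By construction the restriction of $\prob^w(s)$ to escape states is the weak-step distribution $\mathrm{Pr}_s(L(s) \Until \cdot\,)$, so for every $A \subseteq S^w$ with $\star \notin A$ the inequality $\prob^w(s)(A) \leq \prob^w(t)(R^w(A)) + \varepsilon$ of \Cref{Definition: Epsilon-Bisimulation} is literally condition (ii) of weak $\varepsilon$-bisimilarity. For $A$ with $\star \in A$ I would argue by complementation: writing $E = \{u : l(u) \neq b\}$, $B = E \setminus A$ and $B' = E \setminus R(A \cap E)$, one has $\prob^w(s)(A) = 1 - \mathrm{Pr}_s(L(s) \Until B)$ and $\prob^w(t)(R^w(A)) = 1 - \mathrm{Pr}_t(L(t) \Until B')$. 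Because $R$ only relates equally labeled states, $R(B') \subseteq E$, and because $R$ is symmetric, $R(B')$ is disjoint from $A \cap E$; hence $R(B') \subseteq B$, and the weak inequality in the $t \to s$ direction yields $\mathrm{Pr}_t(L(t) \Until B') \leq \mathrm{Pr}_s(L(s) \Until R(B')) + \varepsilon \leq \mathrm{Pr}_s(L(s) \Until B) + \varepsilon$, which rearranges to the required bound with a single $\varepsilon$. Part (3) of \Cref{Lemma: Properties Endless Stuttering} is the intuitive reason the divergence masses of $s$ and $t$ stay within $\varepsilon$, but the complementation argument delivers this automatically.

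Once $s \sim_\varepsilon t$ holds in $\mathcal{M}^w$, the rest is bookkeeping. The fixed-point characterization of reachability gives $\mathrm{Pr}_s^{\mathcal{M}^w}(\lozenge g) = \mathrm{Pr}_s^{\mathcal{M}}(\lozenge g)$ and likewise for $t$, and every step of $\mathcal{M}^w$ corresponds to exactly one label change of $\mathcal{M}$, with the step into $\star$ playing the role of the terminal move into an $f$-state; hence the step-counting variable $N$ of \Cref{thm:unbounded_reach} evaluated on $\mathcal{M}^w$ coincides with $N^w$. Applying \Cref{thm:unbounded_reach} to $\mathcal{M}^w$ then gives $\vert \mathrm{Pr}_s(\lozenge g) - \mathrm{Pr}_t(\lozenge g) \vert \leq \varepsilon \cdot \mathbb{E}_s(N^w)$, as desired.

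The main obstacle I anticipate is precisely the $\star \in A$ case of the lifting, where the over-approximation $R(A)$ baked into the weak condition must be reconciled with the complementation used to account for divergence. The whole argument rests on the two set relations $R(B') \subseteq E$ and $R(B') \cap (A \cap E) = \emptyset$, which follow from label-preservation and symmetry of $R$; getting these exactly right, and confirming that divergence across different blocks may safely be merged into the single sink $\star$, is the delicate part. A secondary point to check is that re-labeling $\star$ with $f$ leaves the ``can reach $g$'' status of every original state unchanged, so that the hypothesis of \Cref{thm:unbounded_reach} genuinely holds in $\mathcal{M}^w$.
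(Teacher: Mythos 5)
Your proposal is correct and follows essentially the same route as the paper's proof: the paper likewise builds a weak-step chain $\mathcal{M}^w$ (with $\prob^w(s)(t) = \mathrm{Pr}_s(L(s) \Until t)$ for $l(s) \neq l(t)$ and the divergence mass $\mathrm{Pr}_s(\Box b)$ diverted to fresh absorbing states), lifts the weak $\varepsilon$-bisimulation $R$ to an $\varepsilon$-bisimulation on $\mathcal{M}^w$ via a case distinction whose crux is the same complementation argument you give for the $\star \in A$ case, and then invokes \Cref{thm:unbounded_reach}. The remaining differences are bookkeeping: the paper introduces one divergence state $s_b$ per divergent label, labeled $b$ rather than $f$ (so that $N$ on $\mathcal{M}^w$ agrees with $N^w$ exactly, whereas your single $f$-labeled sink only yields $\mathbb{E}_s^{\mathcal{M}^w}(N) \leq \mathbb{E}_s(N^w)$ because divergent paths have $N^w = \infty$ --- still sufficient for the bound, though you should state the inequality rather than the claimed coincidence), and it enlarges the lifted relation by the pairs $(s, s_b)$ with $\mathrm{Pr}_s(\Box b) \geq 1 - \varepsilon$, extra pairs that your smaller relation $R \cup \{(\star,\star)\}$, thanks to the complementation step, never needs to verify.
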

	\begin{proof}[Proof sketch]
		Let $\mathcal{L} = \{b \in 2^{AP} \mid \exists \, s \in S \colon \mathrm{Pr}_s(\Box b) > 0\}$. From $\mathcal{M}$ we construct an LMC $\mathcal{M}^w$, almost similar to $\mathcal{M}_R$ in \Cref{Definition: MR}. The main differences are that we introduce fresh states $s_b$ for all $b \in \mathcal{L}$, and that we set $\prob^w(s)(t) = \mathrm{Pr}_s(L(s) \Until t)$ for all $s,t \in S$ with $l(s) \neq l(t)$ as well as $\prob^w(s)(s_b) = \mathrm{Pr}_s(\Box b)$ if $l(s) = b \in \mathcal{L}$. Because for any weak $\varepsilon$-bisimulation $R$ the finest reflexive and symmetric relation $R^w$ on $\mathcal{M}^w$ with $R \subseteq R^w$ and $(s, s_b) \in R^w$ iff $b = l(s)$ and $\mathrm{Pr}_s(\Box b) \geq 1 - \varepsilon$ is an $\varepsilon$-bisimulation on $\mathcal{M}^w$, the result follows from \Cref{thm:unbounded_reach}. 
	\end{proof}
	
	As the LMCs in \Cref{Figure: Example unbounded reachability} are both branching $\frac{\varepsilon}{p}$-bisimilar and weak $\frac{\varepsilon}{p}$-bisimilar, and since in these models $\mathbb{E}_s(N^b) = \mathbb{E}_s(N^w) = 1$, the bounds are again tight by \Cref{ex:unbounded_reach}. 
	
	We finish this section by analyzing the complexity of deciding if two given states $s,t$ are branching $\varepsilon$-bisimilar, i.e., if $s \approx_\varepsilon^b t$. The analogous problem for $\approx_\varepsilon^w$ is left open.  
	\begin{restatable}{theorem}{ThmNPCompleteBranching}\label{Theorem: Deciding Branching Bisimilarity is NP-complete}
		Given a finite $\mathcal{M}$, $s, t \in S$, and $\varepsilon \in (0,1]$, deciding if $s \approx_\varepsilon^b t$ is  \emph{\textsf{NP}}-complete.
	\end{restatable}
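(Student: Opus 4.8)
\textbf{Membership in \textsf{NP}.} The plan is the standard guess-and-check strategy. A certificate is a partition of $S$, i.e.\ an equivalence $R \subseteq S \times S$ with $(s,t) \in R$, and it remains to verify in polynomial time that $R$ is a branching $\varepsilon$-bisimulation. First I would reject if $R$ relates two differently labeled states. Otherwise I build the LMC $\mathcal{M}_R$ and the equivalence $R^b$ of \Cref{Definition: MR} and \Cref{Lemma: Derivation of Epsilon-Bisimulation in MR}. This is a polynomial-time step: the divergent classes $div_R$, the divergence probabilities $\mathrm{Pr}_s(\Box C)$, and the escape probabilities $\mathrm{Pr}_s(C \Until u) = \prob_R(s)(u)$ all arise from linear equation systems, and $\mathcal{M}_R$ has at most $2|S|$ states. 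By \Cref{Lemma: Derivation of Epsilon-Bisimulation in MR}, $R$ is a branching $\varepsilon$-bisimulation on $\mathcal{M}$ iff $R^b$ is a transitive $\varepsilon$-bisimulation on $\mathcal{M}_R$, and the latter is decidable in polynomial time: checking condition (ii) amounts to maximizing $\prob_R(p)(A) - \prob_R(q)(A)$ over $R^b$-closed sets $A$, and the maximum is attained by taking $A$ to be the union of all classes on which $p$ has the larger probability. Hence a correct guess can be verified in polynomial time, so $s \approx_\varepsilon^b t$ lies in \textsf{NP}.

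\textbf{\textsf{NP}-hardness.} The plan is to reduce from the problem of deciding $\mathcal{M} \sim_\varepsilon^* \mathcal{N}$, which is \textsf{NP}-hard by \Cref{Theorem: NP-completeness of common epsilon-quotient}. The lever is the coincidence ${\sim_\varepsilon^*} = {\approx_\varepsilon^b}$ on \emph{stutter-free} LMCs, i.e.\ those with $\prob(s)(L(s)) = 0$ for all $s$, as observed in \Cref{Section: Weak and Branching Epsilon Bisimulation}: if no state can remain in its own class for a single step, then $\mathrm{Pr}_s([s]_R \Until A) = \prob(s)(A)$ whenever $[s]_R \not\subseteq A$, so condition (ii) of \Cref{Definition: Branching Epsilon Bisimulation} collapses to the transitive $\varepsilon$-APB, hence transitive $\varepsilon$-bisimulation, condition. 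I would therefore carry out the reduction underlying \Cref{Theorem: NP-completeness of common epsilon-quotient} with its gadget arranged to be stutter-free (the constructions in the spirit of \cite{ABMfull} are naturally layered and introduce no same-label self-transitions). On $\mathcal{M} \oplus \mathcal{N}$ with $s = s_{init}^\mathcal{M}$ and $t = s_{init}^\mathcal{N}$ one then has $s \sim_\varepsilon^* t$ iff $s \approx_\varepsilon^b t$, and the reduction transfers verbatim.

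\textbf{Main obstacle.} The hard part will be the hardness direction of the gadget: showing that a branching (equivalently, transitive) $\varepsilon$-bisimulation relating $s$ and $t$ exists \emph{exactly} when the encoded combinatorial instance is positive. Two points deserve care. First, transitivity is essential, since it forces the local choices of how to group successor states into a single globally consistent equivalence; this is precisely what separates the \textsf{NP}-hard transitive notion from $\sim_\varepsilon$, which is computable in polynomial time. Second, the tolerance $\varepsilon$ must be calibrated so that the constraints $\lvert \prob(p)(A) - \prob(q)(A) \rvert \le \varepsilon$ over all $R$-closed $A$ permit the intended merges while ruling out every spurious one. Proving the converse implication --- that any equivalence meeting these balance constraints induces a valid solution of the source instance --- is where the bulk of the work lies, and I would isolate it in a dedicated lemma about the successor distributions of the gadget, exploiting that in the stutter-free setting condition (ii) reduces to a tight balance condition on those distributions.
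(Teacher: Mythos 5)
Your \textsf{NP}-membership argument is essentially the paper's: guess a label-respecting partition containing $s,t$, build $\mathcal{M}_R$ and $R^b$ as in \Cref{Definition: MR} and \Cref{Lemma: Derivation of Epsilon-Bisimulation in MR}, and verify the transitive $\varepsilon$-bisimulation condition by maximizing over closed sets via the union of classes where one state has the larger probability; this part is fine. Your hardness skeleton is also the paper's: reduce from deciding $\sim_\varepsilon^*$ (\Cref{Theorem: NP-completeness of common epsilon-quotient}) and exploit that ${\sim_\varepsilon^*} = {\approx_\varepsilon^b}$ on stutter-free LMCs (\Cref{Lemma: No Stuter Steps makes notions coincide}).

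However, the step where you obtain a stutter-free instance rests on a false premise. The gadget from the hardness proof of \Cref{Theorem: NP-completeness of common epsilon-quotient} (\Cref{Figure: LMCs in hardness of common Epsilon-Quotient}) is \emph{not} stutter-free, and not ``naturally layered'' in the sense you need: stutter-freeness requires $\prob(q)(L(q)) = 0$ where $L(q)$ is the set of \emph{all} states carrying the label of $q$, not merely the absence of self-loops. In that gadget the absorbing states $s_a, s_b, t_a, t_b$ have probability-$1$ self-loops (an absorbing state always stutters), and moreover $s$ and all of $s_1,\dots,s_n,s_a$ share the label $\{a\}$, so every transition $s \to s_i$ and $s_i \to s_a$ is itself a stutter step. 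Repairing the gadget by relabeling is delicate because the shared labels are exactly what makes the intended merges possible, so your ``transfers verbatim'' claim does not go through, and the ``dedicated lemma about the successor distributions of the gadget'' you defer to is precisely the re-analysis one wants to avoid. The paper instead sidesteps any gadget surgery with a generic, black-box transformation: given an arbitrary $\mathcal{M}$, interpose a fresh duplicate $s'$ with a \emph{primed} (fresh) label on every transition, i.e.\ $\prob'(s)(s') = 1$ and $\prob'(s')(t) = \prob(s)(t)$. The resulting LMC alternates between primed and unprimed labels, hence is stutter-free, and there is a $1$-to-$1$ correspondence between $\varepsilon$-bisimulations of $\mathcal{M}$ and of the transformed chain, so $s \sim_\varepsilon^* t$ in $\mathcal{M}$ iff $s \approx_\varepsilon^b t$ in the new chain, with no tolerance recalibration and no reproof of the source reduction. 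You are one idea short of this construction; with it, your outline becomes the paper's proof.
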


	%%%%%%%%%%%%%%%%%%%%%%%%%%%%%%%%
	
	\section{Conclusion and Future Work} \label{Section: Conclusion}
	We investigated several new types of approximate probabilistic bisimulation and showed how they interrelate, as well as how they are connected to notions from the literature like, e.g., $\sim_\varepsilon$ and $\equiv_\varepsilon$ (see \Cref{Figure: Visualization of Results}). These connections in turn allowed the transfer of known preservation results for logical formulas between the different notions, which we extended by tight bounds on the absolute difference of unbounded reachability probabilities in weak and branching $\varepsilon$-bisimilar states. Additionally, we established complexity results for most of our relations.
	
	The results of \Cref{Section: Epsilon-Quotients} indicate that $\varepsilon$-perturbed bisimilarity $\simeq_\varepsilon$ and transitive $\varepsilon$-bisimilarity $\sim_\varepsilon^*$ show some anomalies (lack of additivity, the possibility to differentiate bisimilar models and the fact that they themselves are not necessarily an $\varepsilon$-perturbed resp. a transitive $\varepsilon$-bisimulation) when viewed as process relations. However, both relations can be interesting for algorithmic purposes as they permit efficient quotienting techniques: given a transitive $\varepsilon$-bisimulation $R$ (with or without the centroid property) on an LMC $\mathcal{M}$, one can build in polynomial time a quotient LMC that arises from $\mathcal{M}$ by collapsing all $R$-equivalence classes into single states.
	The quotient under an $\varepsilon$-perturbed bisimulation $R_1$ enjoys the property that every state $s$ and its $R_1$-equivalence class $[s]_{R_1}$ are $\frac{\varepsilon}{2}$-bisimilar \cite{ABM}, while for the quotients under a transitive $\varepsilon$-bisimulation $R_2$ that lacks the centroid property we can only guarantee $s \sim_\varepsilon [s]_{R_2}$. On the other hand, transitive $\varepsilon$-bisimulations can identify more states and hence can induce smaller quotients. 
	
	Similarly, the transitivity of branching $\varepsilon$-bisimulations causes the same anomalies as for $\simeq_\varepsilon$ and $\sim_\varepsilon^*$. However, checking if a given equivalence is a branching $\varepsilon$-bisimulation and constructing a corresponding quotient is again possible in polynomial time. Hence, investigating the potential of transitive (or branching) $\varepsilon$-bisimulations as abstraction techniques for an approximate analysis of LMCs in practice is an interesting future research direction.
	
	Other open questions include the search for a characterization of logical formulas that distinguish ${\sim_\varepsilon}, {\sim_\varepsilon^*},{\approx^w_\varepsilon}, {\approx^b_\varepsilon}$ and $\simeq_\varepsilon$, and how our results relate to bisimilarity distances \cite{PBD}.
	
	\bibliography{references.bib}{}
	
	\newpage 
	\appendix
	\section{Proofs of Section 2}\label{Appendix: Proofs of Section 2}
	For the sake of completeness, we prove the characterization of approximate probabilistic bisimulation from the preliminaries which is also used in, e.g., \cite{ALMP,AAPP,RPCTLMC,RBBTEAPC}. \unskip
	\begin{lemma}
		Let $\mathcal{M}$ be a finitely branching LMC, and let $R \subseteq S \times S$ be an equivalence. Then $R$ is a probabilistic bisimulation if and only if, for all $(s,t) \in R$ and all $R$-closed subsets $A \subseteq R$, it holds that $l(s) = l(t)$ and $\prob(s)(A) = \prob(t)(A)$.
	\end{lemma}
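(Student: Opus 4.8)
The plan is to exploit the fact, recorded in the preliminaries, that for an equivalence $R$ the $R$-closed sets are exactly the unions of $R$-equivalence classes. Both directions then reduce to moving between individual equivalence classes and their unions via the (countable) additivity of the distributions $\prob(s)$ and $\prob(t)$.

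For the forward direction, I would assume $R$ is a probabilistic bisimulation in the sense of the definition, so that $l(s)=l(t)$ and $\prob(s)(C)=\prob(t)(C)$ for every $R$-equivalence class $C$ and every $(s,t)\in R$. Given an arbitrary $R$-closed set $A$, I would write $A$ as the (at most countable) disjoint union $A = \bigcup_{i} C_i$ of the equivalence classes it contains, and compute $\prob(s)(A) = \sum_i \prob(s)(C_i) = \sum_i \prob(t)(C_i) = \prob(t)(A)$, using additivity of the distribution for the outer equalities and the bisimulation hypothesis for the middle one. The label condition $l(s)=l(t)$ is immediate.

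For the converse, I would observe that every single $R$-equivalence class $C$ is itself $R$-closed, so instantiating the hypothesis with $A = C$ yields $\prob(s)(C) = \prob(t)(C)$ for all classes $C$ and all $(s,t)\in R$; together with $l(s)=l(t)$ this is exactly the defining condition of a probabilistic bisimulation.

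There is no genuine obstacle here; the only point requiring a line of care is the possibly infinite decomposition of $A$ into equivalence classes, which is harmless because $\prob(s)$ is a countably additive distribution. In fact, since $\mathcal{M}$ is finitely branching, only finitely many classes meet $\mathit{Succ}(s)\cup\mathit{Succ}(t)$ with positive mass, so the sums involved are effectively finite.
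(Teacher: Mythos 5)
Your proposal is correct and matches the paper's own proof essentially step for step: the forward direction decomposes an $R$-closed set into a countable disjoint union of equivalence classes and uses additivity, and the converse instantiates the hypothesis with a single class after noting that every equivalence class is $R$-closed. Your extra remark about finite branching is harmless but not needed, since countable additivity already covers the general case.
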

	\begin{proof}
		For the direction from left to right, let $R \subseteq S \times S$ be a bisimulation and let $(s,t) \in R$. By the definition of bisimulations, $l(s) = l(t)$. Since $R$ is an equivalence, any $R$-closed set $A \subseteq S$ is a disjoint union of countably many equivalence classes $(C_i)_{i \in I}$ for an index set $I \subseteq \mathbb{N}$, i.e., $A = \uplus_{i \in I} C_i$ \cite{RBBTEAPC}. Furthermore, as $R$ is a bisimulation, $(s,t) \in R$ implies $\prob(s)(C) = \prob(t)(C)$ for all equivalence classes of $R$, and hence it holds that
		\begin{align*}
			\prob(s)(A) = \prob(s)(\uplus_{i \in I} C_i) = \sum_{i \in I}\prob(s)(C_i) = \sum_{i \in I} \prob(t)(C_i) = \prob(t)(\uplus_{i \in I} C_i) = \prob(t)(A). 
		\end{align*}
		
		For the reverse direction, let $R$ be an equivalence on $S$ such that for all $(s,t) \in R$ it holds that (i) $l(s) = l(t)$, and that (ii) for all $R$-closed sets $A \subseteq S$ we have $\prob(s)(A) = \prob(t)(A)$. 
		
		Further, let $C$ be an equivalence class of $R$. Then $C$ is $R$-closed, as for every $t \in R(C)$ there is a $s \in C$ with $(s,t) \in R$, and because $C$ is an equivalence class of $R$ we get $t \in C$. 
		
		Now let $(s,t) \in R$. Then $l(s) = l(t)$ by condition (i), and since any equivalence class $C$ is $R$-closed it follows from 
		condition (ii) that $\prob(s)(C) = \prob(t)(C)$ for all of them. Hence, $R$ is a probabilistic bisimulation.
	\end{proof}
	
	\section{Proofs of Section 3}\label{Appendix: Proofs of Section 3}

	\LemmaEpsilonBisim*
	
	For the proof of the lemma, we make use of the following technical statement that is shown in \cite{bollobas1975representation}.
	An atomless measure space is a measure space in which any measurable set $A$ with positive measure has a measurable subset $B$ 
	with smaller positive measure.
	
	\begin{lemma}[\cite{bollobas1975representation}]
		\label{lem:atomless}
		Let $(Z,\mathcal{F},\mu)$ be an atomless measure space. Let $J$ be an index set, $(X_{j})_{j\in J}$ be a family of measurable sets in $\mathcal{F}$ 
		of finite measure, and $(\lambda_j)_{j\in J}$ be a family of non-negative real numbers. The following two statements are equivalent:
		\begin{enumerate}
			\item
			There exists a family of measurable sets $(Y_j)_{j\in J}$ in $\mathcal{F}$ such that we have $Y_j\subseteq X_j$ and $\mu(Y_j)=\lambda_j$ for all $j\in J$ and $\mu(Y_j\cap Y_{j'})=0 $ for all $j,j'\in J$ with $j\not=j'$.
			\item
			For every finite subset $I\subseteq J$, we have $\mu\left(\bigcup_{i\in I} X_i \right) \geq \sum_{i\in I}\lambda_i$.
		\end{enumerate}
	\end{lemma}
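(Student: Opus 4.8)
The plan is to prove the two implications separately; only the forward direction (from $\varepsilon$-bisimulation to the existence of $\Delta$) will require \Cref{lem:atomless}.

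For the easier converse direction, I assume such a $\Delta$ exists for every $(s,t) \in R$ and verify condition~(ii) of \Cref{Definition: Epsilon-Bisimulation} directly. I fix $(s,t) \in R$ and $A \subseteq S$, use condition~1 to rewrite $\prob(t)(R(A))$ as a double sum over $\Delta$, and then retain only the terms with $s' \in A \cap \mathit{Succ}(s)$ and $t' \in R(s') \cap \mathit{Succ}(t)$; each such $t'$ lies in $R(A)$, so these terms lower-bound $\prob(t)(R(A))$ by $\sum_{s' \in A \cap \mathit{Succ}(s)} \prob(s)(s') \cdot \Delta(s')(R(s') \cap \mathit{Succ}(t))$. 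Subtracting this from $\prob(s)(A) = \sum_{s' \in A \cap \mathit{Succ}(s)} \prob(s)(s')$ and invoking condition~2 (which states exactly that the complementary mass $\sum_{s'} \prob(s)(s')\bigl(1 - \Delta(s')(R(s') \cap \mathit{Succ}(t))\bigr)$ is at most $\varepsilon$) yields $\prob(s)(A) - \prob(t)(R(A)) \le \varepsilon$. Together with $l(s) = l(t)$ this is the $\varepsilon$-bisimulation condition, and no measure theory is needed here.

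For the forward direction I fix $(s,t) \in R$ and realize $\Delta$ as a coupling inside an atomless space. I take $Z = [0,1)$ with Lebesgue measure $\mu$ and partition it into measurable sets $(B_{t'})_{t' \in \mathit{Succ}(t)}$ with $\mu(B_{t'}) = \prob(t)(t')$. For each source $s' \in \mathit{Succ}(s)$ I set the relation-compatible region $X_{s'} = \bigcup_{t' \in R(s') \cap \mathit{Succ}(t)} B_{t'}$, so that $\mu\bigl(\bigcup_{s' \in I} X_{s'}\bigr) = \prob(t)(R(I))$ for every $I \subseteq \mathit{Succ}(s)$. I then choose ``in-relation'' masses $\lambda_{s'} \in [0, \prob(s)(s')]$ maximizing $\sum_{s'} \lambda_{s'}$ subject to $\sum_{s' \in I} \lambda_{s'} \le \prob(t)(R(I))$ for all $I$; since $\mathcal{M}$ is finitely branching this is a finite linear program, and because $I \mapsto \prob(t)(R(I))$ is submodular, LP duality (equivalently, the max-flow/min-cut theorem for the associated transportation network) identifies its optimum with $\min_{I}\bigl(\prob(t)(R(I)) + 1 - \prob(s)(I)\bigr)$. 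This is $\ge 1 - \varepsilon$ precisely because condition~(ii) gives $\prob(s)(I) \le \prob(t)(R(I)) + \varepsilon$ for every $I$, so I obtain $\lambda$ with $\sum_{s'} \lambda_{s'} \ge 1 - \varepsilon$. By construction the union condition of \Cref{lem:atomless} holds, so the lemma supplies pairwise almost-disjoint $Y_{s'} \subseteq X_{s'}$ with $\mu(Y_{s'}) = \lambda_{s'}$. The leftover $Z \setminus \bigcup_{s'} Y_{s'}$ has measure $1 - \sum_{s'}\lambda_{s'} \le \varepsilon$, and I distribute it arbitrarily to extend each $Y_{s'}$ to $W_{s'}$ with $\mu(W_{s'}) = \prob(s)(s')$ so that $(W_{s'})_{s'}$ partitions $Z$. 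Defining $\Delta(s')(t') = \mu(W_{s'} \cap B_{t'})/\prob(s)(s')$ (well-defined as $\prob(s)(s')>0$ on $\mathit{Succ}(s)$) gives $\Delta(s') \in \mathit{Distr}(\mathit{Succ}(t))$; condition~1 follows from $\sum_{s'} \mu(W_{s'} \cap B_{t'}) = \mu(B_{t'}) = \prob(t)(t')$, and condition~2 from $Y_{s'} \subseteq W_{s'} \cap X_{s'}$, which gives $\sum_{s'} \prob(s)(s')\,\Delta(s')(R(s') \cap \mathit{Succ}(t)) = \sum_{s'} \mu(W_{s'} \cap X_{s'}) \ge \sum_{s'} \mu(Y_{s'}) \ge 1 - \varepsilon$.

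The main obstacle is this quantitative step: converting the combinatorial inequality $\prob(s)(I) \le \prob(t)(R(I)) + \varepsilon$ into the existence of a coupling that keeps mass $\ge 1 - \varepsilon$ inside $R$. This is handled by the min-cut characterization of the maximal in-relation mass together with the measurable selection of \Cref{lem:atomless}; the remaining bookkeeping of assembling $\Delta$ and checking conditions~1 and~2 is routine.
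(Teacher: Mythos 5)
Your proposal does not prove the statement it was assigned. The statement is \Cref{lem:atomless} itself --- the measure-theoretic Hall-type theorem of \cite{bollobas1975representation} asserting, over an atomless measure space, the equivalence between the existence of essentially disjoint representatives $Y_j \subseteq X_j$ with $\mu(Y_j) = \lambda_j$ and the finite union condition $\mu\left(\bigcup_{i\in I} X_i\right) \geq \sum_{i\in I}\lambda_i$. What you wrote is instead a proof of \Cref{lem:epsilon-bisimulation-distribution}, the characterization of $\varepsilon$-bisimulations via weight functions $\Delta$, and it invokes \Cref{lem:atomless} as a black box (``the lemma supplies pairwise almost-disjoint $Y_{s'} \subseteq X_{s'}$''). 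Relative to the assigned statement this is circular: the substantive implication $(2) \Rightarrow (1)$ is never argued, and even the easy implication $(1) \Rightarrow (2)$ --- which follows in one line from essential disjointness, $\sum_{i \in I} \lambda_i = \sum_{i \in I} \mu(Y_i) = \mu\left(\bigcup_{i\in I} Y_i\right) \leq \mu\left(\bigcup_{i\in I} X_i\right)$ --- does not appear. A genuine proof of $(2) \Rightarrow (1)$ would have to use atomlessness in an essential way (e.g., via Sierpi\'nski's intermediate-value theorem, which lets one split a measurable set into pieces of any prescribed smaller measure), combined with a Hall/defect-style exhaustion over finite subfamilies and a limiting argument for infinite $J$; none of these ingredients is present in your text. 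The paper itself supplies no proof either --- it imports the lemma from the literature --- so there was a real proof obligation here, and your submission leaves it entirely open.

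For what it is worth, had the target been \Cref{lem:epsilon-bisimulation-distribution}, your argument would be essentially sound and would diverge from the paper's in an interesting way: the paper's proof builds the space $Z = \mathit{Succ}(t)\times[0,1] \cup \{\ast\}\times[0,\varepsilon]$ with an explicit slack region, takes $\lambda_{s'} = \prob(s)(s')$ directly, and verifies the union condition of \Cref{lem:atomless} from the $\varepsilon$-bisimulation inequality, whereas you work in $[0,1)$ without slack, obtain reduced in-relation masses $\lambda_{s'} \leq \prob(s)(s')$ from an LP/min-cut computation, and redistribute the leftover mass of at most $\varepsilon$ afterwards; your converse direction matches the paper's computation almost verbatim. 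But none of this repairs the central defect: as a proof of the assigned lemma, the submission assumes exactly what it was supposed to establish.
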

	
	\begin{proof}[Proof of \Cref{lem:epsilon-bisimulation-distribution}]
		Let $R$ be a reflexive and symmetric relation that only relates states with the same label. We show the claim by proving both implications separately. 
		
		For the direction from left to right, assume that $R$ is an $\varepsilon$-bisimulation. We construct an atomless measure space  as in the proof of \cite[Theorem 4]{RBBTEAPC} (cf. the full version \cite{RBBTEAPCExtended} of \cite{RBBTEAPC}) on the set
		\[
		Z = \mathit{Succ}(t)\times [0,1] \cup \{\ast\}\times [0,\varepsilon],
		\]
		where $\ast \notin S$ is a fresh dummy element. 
		We let the $\sigma$-algebra $\mathcal{F}$ be the smallest $\sigma$-algebra on $Z$ containing 
		$\left( 2^{\mathit{Succ}(t)} \times \mathcal{B}([0,1]) \right) \cup  \left( \{\ast\}\times  \mathcal{B}([0,\varepsilon])  \right)$,
		where $\mathcal{B}(X)$ denotes the Borel $\sigma$-algebra on $X$.
		The measure $\mu$ is (uniquely) defined by
		\[
		\mu(D\times I) = \prob(t)(D) \cdot \lambda (I) \quad \text{ for all $D\times I \in 2^\mathit{Succ}(t) \times \mathcal{B}([0,1])$}
		\]
		and
		\[
		\mu(\{\ast\} \times I) = \lambda (I) \quad \text{ for all $ I \in  \mathcal{B}([0,\varepsilon])$},
		\]
		where $\lambda$ denotes the Lebesgue measure.
		
		In order to apply \Cref{lem:atomless}, we let 
		\[
		X_{s^\prime} = \left( R(s^\prime) \times [0,1] \right) \cup \left(  \{\ast\} \times [0,\varepsilon]  \right)
		\qquad \text{and} \qquad
		\lambda_{s^\prime} = \prob(s)(s^\prime)
		\]
		for all $s^\prime \in \mathit{Succ}(s)$.
		
		Observe that statement 2 of \Cref{lem:atomless} holds:
		For any (finite) subset $B\subseteq \mathit{Succ}(s)$, we have 
		\begin{align*}
			\mu\left(   \bigcup_{s^\prime \in B} X_{s^\prime} \right) & = \mu\left(  \left( \bigcup_{s^\prime \in B} R (s^\prime) \times [0,1] \right) \cup
			\left( \{\ast\}\times [0,\varepsilon]  \right)   \right) \\
			& = \prob(t)(R(B)) + \varepsilon \geq  \prob(s)(B) = \sum_{s^\prime \in B} \lambda_{s^\prime}
		\end{align*}
		where the inequality holds by $s\sim_{\varepsilon} t$ and the definition of $\sim_{\varepsilon}$.
		So, by \Cref{lem:atomless}, there are sets $Y_{s^\prime}\subseteq X_{s^\prime}$ for all $s^\prime \in \mathit{Succ}(s)$ 
		with $\mu(Y_{s^\prime} \cap Y_{s^{\prime\prime}}) = 0$ for all $s^\prime,s^{\prime\prime}\in \mathit{Succ}(s)$ with $s^\prime \not=  s^{\prime\prime}$ 
		and $\mu(Y_{s^\prime}) = \prob(s)(s^\prime) $ for all $s^\prime \in \mathit{Succ}(s)$.
		
		We use these sets to construct the map $\Delta$.
		First, observe that 
		\[
		\mu\left( \bigcup_{s^\prime\in \mathit{Succ}(s)} Y_{s^\prime} \right) = \sum_{s^\prime\in \mathit{Succ}(s)}  \lambda_{s^\prime} = 1.
		\]
		As $\mu(Z)=1+\varepsilon$ and $\mu( \mathit{Succ}(t) \times [0,1] )=1$, we furthermore have 
		\[
		\mu\left( \bigcup_{s^\prime\in \mathit{Succ}(s)} Y_{s^\prime} \setminus ( \mathit{Succ}(t) \times [0,1] ) \right)
		= \mu \left(  ( \mathit{Succ}(t) \times [0,1]  ) \setminus   \bigcup_{s^\prime\in \mathit{Succ}(s)} Y_{s^\prime}   \right) \leq \varepsilon.
		\tag{$\dagger$}
		\]
		
		Now, we construct sets $Y^\prime_{s^\prime}\subseteq \mathit{Succ}(t) \times [0,1]$ for all $s^\prime \in \mathit{Succ}(s)$ such that 
		\[
		Y_{s^\prime} \cap ( \mathit{Succ}(t) \times [0,1] )   \subseteq Y^\prime_{s^\prime} \tag{C1}
		\]
		and 
		\[
		\mu(Y^\prime_{s^\prime} )=\lambda_{s^\prime},
		\tag{C2}
		\] 
		and such that 
		\[
		\mu(Y^\prime_{s^\prime} \cap Y^\prime_{s^{\prime\prime}}) = 0
		\tag{C3}
		\]
		for all $s^\prime, s^{\prime\prime} \in \mathit{Succ}(s)$ with $s^\prime\not= s^{\prime\prime}$.
		To this end, we partition $ ( \mathit{Succ}(t) \times [0,1]  ) \setminus   \bigcup_{s^\prime\in \mathit{Succ}(s)} Y_{s^\prime} $
		into sets $(W_{s^\prime})_{s^\prime\in \mathit{Succ}(s)}$ with $\mu(W_{s^\prime}) = \mu(Y_{s^\prime} \setminus ( \mathit{Succ}(s) \times [0,1] ))$.
		This is possible by Equation ($\dagger$). We let
		\[
		Y^\prime_{s^\prime} = \left(Y_{s^\prime}\cap ( \mathit{Succ}(t) \times [0,1]  )  \right) \cup W_{s^\prime}
		\]
		for all $s^\prime\in\mathit{Succ}(s)$. These sets now satisfy conditions (C1)--(C3). Furthermore,
		$\bigcup_{s^\prime\in \mathit{Succ}(s)} Y^\prime_{s^\prime}$ is equal to $\mathit{Succ}(s) \times [0,1]$  up to measure $0$.
		
		Now, we define the distributions $\Delta(s^\prime)$ for all $s^\prime\in\mathit{Succ}(s)$ by
		\[
		\Delta(s^\prime) (t^\prime) = \frac{\mu(Y^{\prime}_{s^\prime} \cap (\{t^\prime\} \times [0,1]) ) }{\lambda_{s^\prime}}
		\]
		for all $t^\prime \in \mathit{Succ}(t)$.
		As $Y^\prime_{s^\prime}\subseteq \mathit{Succ}(t) \times [0,1]$ and by (C2), this indeed defines a distribution of $\mathit{Succ}(t)$.
		
		\begin{enumerate}
			\item
			First, we show that  statement 1 holds:
			For all $t^\prime \in \mathit{Succ}(t)$, we have
			\begin{align*}
				\prob(t)(t^\prime) &= \mu(\{t^\prime\} \times [0,1]) \\
				&= \mu\left( \bigcup_{s^\prime\in \mathit{Succ}(s)} Y^\prime_{s^\prime} \cap (\{t^\prime\} \times [0,1])    \right) \\
				&
				=\sum_{s^\prime\in \mathit{Succ}(s)} \mu(Y^\prime_{s^\prime} \cap (\{t^\prime\} \times [0,1])  ) \\
				&
				= \sum_{s^\prime\in \mathit{Succ}(s)} \lambda_{s^\prime}\cdot  \Delta(s^\prime) (t^\prime) \\
				&= 
				\sum_{s^\prime\in \mathit{Succ}(s)} \prob(s)(s^\prime) \cdot  \Delta(s^\prime) (t^\prime).
			\end{align*}
			
			\item
			For statement 2, 
			we have
			\begin{align*}
				& \sum_{s^\prime\in \mathit{Succ}(s)} \prob(s)(s^\prime)\cdot \Delta(s^\prime)(R(s^\prime) \cap \mathit{Succ}(t)) \\
				=& \sum_{s^\prime\in \mathit{Succ}(s)} \sum_{t^\prime \in R(s^\prime) \cap \mathit{Succ}(t)} \mu(Y^{\prime}_{s^\prime} \cap (\{t^\prime\} \times [0,1]) ) \\
				=& \sum_{s^\prime\in \mathit{Succ}(s)} \sum_{t^\prime \in R(s^\prime) \cap \mathit{Succ}(t)} \mu(Y_{s^\prime} \cap (\{t^\prime\} \times [0,1]) \cup W_{s^\prime} \cap (\{t^\prime\} \times [0,1]) ) \\
				\geq & \sum_{s^\prime\in \mathit{Succ}(s)} \sum_{t^\prime \in R(s^\prime) \cap \mathit{Succ}(t)} \mu(Y_{s^\prime} \cap (\{t^\prime\} \times [0,1]) ) \\
				\overset{Y_{s^\prime}\subseteq X_{s^\prime}}= & \sum_{s^\prime\in \mathit{Succ}(s)}  \mu(Y_{s^\prime} \cap (\mathit{Succ}(t) \times [0,1]) ) \\
				= &  \mu\left( \bigcup_{s^\prime \in \mathit{Succ}(s)} Y_{s^\prime} \cap (\mathit{Succ}(t) \times [0,1]) \right) \geq 1-\varepsilon. 
			\end{align*}
			The last inequality holds because 
			$
			\mu\left( \bigcup_{s^\prime \in \mathit{Succ}(s)} Y_{s^\prime}\right) = 1$
			and because   Equation ($\dagger$) states that
			\[
			\mu\left( \bigcup_{s^\prime\in \mathit{Succ}(s)} Y_{s^\prime} \setminus ( \mathit{Succ}(t) \times [0,1] ) \right) \leq \varepsilon. 
			\]
			\end{enumerate}
			
			We now turn our attention to the implication from right to left. To this end, assume that for every $(s,t) \in R$ there is some map $\Delta \colon Succ(s) \to Distr(Succ(t))$ such that statements 1 and 2 hold. Because we assume $R$ to be reflexive, symmetric and only relate states with the same label, the only thing left to show for $R$ to be an $\varepsilon$-bisimulation is condition (ii) in \Cref{Definition: Epsilon-Bisimulation}. Let $A \subseteq S$ be arbitrary. By statement 2 we have  
			\begin{align}
				1 - \varepsilon &\leq \sum_{s' \in Succ(s)}\prob(s)(s') \cdot \Delta (s') (R(s') \cap Succ(t)) \nonumber
				\\&= \sum_{s' \in Succ(s) \cap A} \prob(s)(s') \cdot \Delta (s') (R(s') \cap Succ(t)) \nonumber
				\\& \phantom{+++} + \sum_{s ' \in Succ(s) \setminus A} \prob(s)(s') \cdot \Delta (s') (R(s') \cap Succ(t)). \label{Eq: Char-1}
			\end{align}
			As $R(s') \subseteq R(A)$ for every $s' \in A$, the first sum is bounded from above by
			\begin{align} 
				&\sum_{s' \in Succ(s) \cap A} \prob(s)(s') \cdot \Delta (s') (R(s') \cap Succ(t)) \nonumber \\ &\leq \sum_{s' \in Succ(s) \cap A} \prob(s)(s') \cdot \Delta (s') (R(A) \cap Succ(t)), \label{Eq: Char-2}
			\end{align}
			while in the second sum we can replace $\Delta(s')(R(s') \cap Succ(t))$ by the maximal probability of $1$ to obtain 
			\begin{align}
				 \sum_{s ' \in Succ(s) \setminus A} \prob(s)(s') \cdot \Delta (s') (R(s') \cap Succ(t)) &\leq \sum_{s' \in Succ(s) \setminus A} \prob(s)(s') \nonumber \\&= \prob(s)(S \setminus A) \nonumber\\&= 1 - \prob(s)(A). \label{Eq: Char-3}
			\end{align}
			Plugging \Cref{Eq: Char-2,Eq: Char-3} into \Cref{Eq: Char-1} now yields 
			\begin{align*}
				1 - \varepsilon \leq \sum_{s' \in Succ(s) \cap A} \prob(s)(s') \cdot \Delta (s') (R(A) \cap Succ(t)) + 1 - \prob(s)(A),
			\end{align*}
			or equivalently
			\begin{align*}
				\prob(s)(A) &\leq \sum_{s' \in Succ(s) \cap A} \prob(s)(s') \cdot \Delta (s') (R(A) \cap Succ(t)) + \varepsilon \\
				&= \sum_{t' \in R(A) \cap Succ(t)} \sum_{s' \in Succ(s) \cap A} \prob(s)(s') \cdot \Delta(s')(t') + \varepsilon \\
				&\leq \sum_{t' \in R(A) \cap Succ(t)} \sum_{s' \in Succ(s)} \prob(s)(s') \cdot \Delta(s')(t') + \varepsilon
				\\&= \sum_{t' \in R(A) \cap Succ(t)} \prob(t)(t') + \varepsilon 
				\\&= \prob(t)(R(A)) + \varepsilon,
			\end{align*}
			where the second to last equality follows from statement 1. Because $A \subseteq S$ was chosen arbitrarily, this shows that for any $(s,t) \in R$ and all $A \subseteq S$ we have $\prob(s)(A) \leq \prob(t)(R(A)) + \varepsilon$. Hence, $R$ satisfies condition (ii) of \Cref{Definition: Epsilon-Bisimulation}, and it follows that $R$ is indeed an $\varepsilon$-bisimulation.
	\end{proof}
	
	\TheoremRelationshipEpsilonAPBAndUpToBisimilarity*
	\begin{proof}
		Let $s,t \in S$ be such that $s \equiv_\varepsilon t$. 
		For $n = 0$ the claim is trivial, as $ {\sim_{\varepsilon}^0} = S \times S$. 
		
		If $n = 1$ then ${\sim_{\varepsilon}^1} = \{(x,y) \in S \times S \mid l(x) = l(y)\}$. Since $\equiv_\varepsilon$ only relates states with the same label, it follows that for all $s,t$ with $s \equiv_\varepsilon t$ we have $s \sim_{\varepsilon}^1 t$. 
		
		Now let $n = 2$ and $A \subseteq S$. Then
		\begin{align*}
			{\sim_\varepsilon^1}(A) = \{y \in S \mid \exists \, x \in A \colon x \sim_\varepsilon^1 y\} = \{y \in S \mid \exists \, x \in A\colon l(x) = l(y)\}.
		\end{align*} 
		We start by showing that this set is $\equiv_\varepsilon$-closed. To see this, first observe that
		\begin{align*}
			{\equiv_\varepsilon}({\sim_\varepsilon^1}(A)) &= \{q \in S \mid \exists \, p \in {\sim_\varepsilon^1}(A)\colon p \equiv_\varepsilon q\} \\&= \{q \in S \mid \exists \, p \in S \colon \exists \, x \in A\colon l(x) = l(p) \land p \equiv_\varepsilon q \}.
		\end{align*}
		Hence, for $y \in {\equiv_\varepsilon}({\sim_\varepsilon^1}(A))$, there is a $p \in S$ and a $x \in A$ such that $l(x) = l(p)$ and $p \equiv_\varepsilon y$. But, as $\equiv_\varepsilon$ only relates states with the same label, this implies $l(x) = l(p) = l(y)$. Thus, there is an element in $A$, namely $x$, such that $x \sim_\varepsilon^1 y$, which shows that $y \in {\sim_\varepsilon^1}(A)$. By choice of $y$ this further implies ${\equiv_\varepsilon}({\sim_\varepsilon^1}(A)) \subseteq {\sim_\varepsilon^1}(A)$, so the set is indeed $\equiv_\varepsilon$-closed.
		
		It follows from $s \equiv_\varepsilon t$ that $	\vert \prob(s)({\sim_\varepsilon^1}(A)) - \prob(t)({\sim_\varepsilon^1}(A)) \vert \leq \varepsilon$
		which, using the reflexivity of $\sim_\varepsilon^1$ \cite{SAAPBPCTL} to establish the first inequality, yields
		\begin{align*}
			\prob(s)(A) \leq \prob(s)({\sim_\varepsilon^1}(A)) \leq \prob(t)({\sim_\varepsilon^1}(A)) + \varepsilon.
		\end{align*}
		With a similar argument, it can be shown that $\prob(t)(A) \leq \prob(s)({\sim_\varepsilon^1}(A)) + \varepsilon$. As $l(s) = l(t)$ by the definition of $\varepsilon$-APBs, this proves $s \sim_\varepsilon^2 t$.
		
		Regarding $n \geq 3$, we know from \Cref{Example: Epsilon-APBs} that the states $s, t$ in \Cref{Figure: LMC not all epsilon-APB are epsilon-bisim} satisfy $s \equiv_\varepsilon t$, but it is easy to see that $s \nsim_\varepsilon^n t$ for any $n \geq 3$.
	\end{proof}
	
	\TheoremUnboundedReach*
	\begin{proof}
		Let $\mathcal{M}$ and $N$ be as in the theorem, and let $s \sim_\varepsilon t$. 
		The result follows from the following auxiliary claim:  
		For all $k\in \mathbb{N}$, we have
		\begin{align}
			\vert \mathrm{Pr}_s(\lozenge^{\leq k} g) - \mathrm{Pr}_t(\lozenge^{\leq k} g) \vert \leq \varepsilon \cdot \sum_{i = 1}^k \mathrm{Pr}_s(N \geq i). \label{Eq: Auxiliary Claim}
		\end{align}
	
		\paragraph*{Proof of \Cref{Eq: Auxiliary Claim}:}
		\Cref{Eq: Auxiliary Claim} is shown by proving the two inequalities 
		\begin{align*}
			\mathrm{Pr}_t (\lozenge^{\leq k}  g) \geq \mathrm{Pr}_s (\lozenge^{\leq k}  g) - \sum_{i=1}^k \mathrm{Pr}_s(N\geq i)\cdot \varepsilon 
		\end{align*}
		and 
		\begin{align*}
			\mathrm{Pr}_t (\lozenge^{\leq k}  g) \leq \mathrm{Pr}_s (\lozenge^{\leq k}  g) + \sum_{i=1}^k \mathrm{Pr}_s(N\geq i)\cdot \varepsilon
		\end{align*}
		separately.
		\paragraph*{Lower bound:}
		First, we prove that for all $k \in \mathbb{N}$
		\[
		\mathrm{Pr}_{t} (\lozenge^{\leq k}  g) \geq \mathrm{Pr}_{s} (\lozenge^{\leq k}  g) - \sum_{i=1}^k \mathrm{Pr}_{s}(N\geq i)\cdot \varepsilon
		\]
		by induction on $k$.
		
		\textbf{Base case:}
		For $k=0$, there is nothing to prove as $\mathrm{Pr}_t (\lozenge^{\leq k}  g) = \mathrm{Pr}_s (\lozenge^{\leq k}  g)\in\{0,1\}$ in this case.
		
		\textbf{Induction hypothesis:}
		For the induction, we now suppose that for all $\varepsilon$-bisimilar states $s^\prime$ and $t^\prime$ in $\mathcal{M}$, we have
		\[
		\mathrm{Pr}_{t^\prime} (\lozenge^{\leq k}  g) \geq \mathrm{Pr}_{s^\prime} (\lozenge^{\leq k}  g) - \sum_{i=1}^k \mathrm{Pr}_{s^\prime}(N\geq i)\cdot \varepsilon. \tag{IH}
		\]
		
		\textbf{Induction step:}
		If $s$, and hence $t$, is labeled with $g$ or $f$, the statement is trivial. So, we assume that $s$ and $t$ are not labeled with $g$ or $f$.
		Let $\Delta\colon \mathit{Succ}(s) \to \mathit{Distr}(\mathit{Succ}(t))$ be as in \Cref{lem:epsilon-bisimulation-distribution}.
		We obtain
		\begin{align*}
			&\mathrm{Pr}_{t} (\lozenge^{\leq k+1}g) \\
			&= \sum_{t^\prime \in \mathit{Succ}(t)} \prob(t)(t^\prime) \cdot \mathrm{Pr}_{t^\prime} (\lozenge^{\leq k}g) \\
			&= \sum_{t^\prime \in  \mathit{Succ}(t)} \sum_{s^\prime\in \mathit{Succ}(s)}\prob(s)(s^\prime)\cdot\Delta(s^\prime)(t^\prime)
			\cdot \mathrm{Pr}_{t^\prime} (\lozenge^{\leq k}g) \\
			&= \sum_{s^\prime\in \mathit{Succ}(s)}\prob(s)(s^\prime) \cdot \sum_{t^\prime \in  \mathit{Succ}(t)} \Delta(s^\prime)(t^\prime)
			\cdot \mathrm{Pr}_{t^\prime} (\lozenge^{\leq k}g) \\
			&\geq \sum_{s^\prime\in \mathit{Succ}(s)}\prob(s)(s^\prime) \cdot \sum_{\substack{t^\prime \in  \mathit{Succ}(t),\\ s^\prime \sim_{\varepsilon} t^\prime}} \Delta(s^\prime)(t^\prime)
			\cdot \mathrm{Pr}_{t^\prime} (\lozenge^{\leq k}g) \\
			& \overset{(\text{IH})}{\geq} \sum_{s^\prime\in \mathit{Succ}(s)}\prob(s)(s^\prime) \cdot \sum_{\substack{t^\prime \in  \mathit{Succ}(t),\\ s^\prime \sim_{\varepsilon} t^\prime}} \Delta(s^\prime)(t^\prime)
			\cdot \left(\mathrm{Pr}_{s^\prime} (\lozenge^{\leq k}  g) - \sum_{i=1}^k \mathrm{Pr}_{s^\prime}(N\geq i)\cdot \varepsilon \right) \\
			& =  \sum_{s^\prime\in \mathit{Succ}(s)}\prob(s)(s^\prime)
			\cdot \left(\mathrm{Pr}_{s^\prime} (\lozenge^{\leq k}  g) - \sum_{i=1}^k \mathrm{Pr}_{s^\prime}(N\geq i)\cdot \varepsilon \right) \\
			& \phantom{=} -
			\sum_{s^\prime\in \mathit{Succ}(s)}\prob(s)(s^\prime) \cdot \left(1 - \sum_{\substack{t^\prime \in  \mathit{Succ}(t),\\ s^\prime \sim_{\varepsilon} t^\prime}} \Delta(s^\prime)(t^\prime) \right)
			\cdot \underbrace{\left(\mathrm{Pr}_{s^\prime} (\lozenge^{\leq k}  g) - \sum_{i=1}^k \mathrm{Pr}_{s^\prime}(N\geq i)\cdot \varepsilon \right)}_{\text{($=:c_{s^\prime}$)}} \\
			&\overset{c_{s^\prime}\leq 1\text{ for all $s^\prime$}}{ \geq}
			\sum_{s^\prime\in \mathit{Succ}(s)}\prob(s)(s^\prime)
			\cdot \left(\mathrm{Pr}_{s^\prime} (\lozenge^{\leq k}  g) - \sum_{i=1}^k \mathrm{Pr}_{s^\prime}(N\geq i)\cdot \varepsilon \right) \\
			& \phantom{======} -
			\sum_{s^\prime\in \mathit{Succ}(s)}\prob(s)(s^\prime) \cdot \left(1 - \sum_{\substack{t^\prime \in  \mathit{Succ}(t),\\ s^\prime \sim_{\varepsilon} t^\prime}} \Delta(s^\prime)(t^\prime) \right) \\
			& =
			\sum_{s^\prime\in \mathit{Succ}(s)}\prob(s)(s^\prime)
			\cdot \left(\mathrm{Pr}_{s^\prime} (\lozenge^{\leq k}  g) - \sum_{i=1}^k \mathrm{Pr}_{s^\prime}(N\geq i)\cdot \varepsilon \right) \\
			& \phantom{=} - 1 +
			\sum_{s^\prime\in \mathit{Succ}(s)}\prob(s)(s^\prime) \cdot  \sum_{\substack{t^\prime \in  \mathit{Succ}(t),\\ s^\prime \sim_{\varepsilon} t^\prime}} \Delta(s^\prime)(t^\prime)  \\
			& \overset{\text{Property of $\Delta$}}{\geq }
			\left( \sum_{s^\prime\in \mathit{Succ}(s)}\prob(s)(s^\prime)  
			\cdot \left(\mathrm{Pr}_{s^\prime} (\lozenge^{\leq k}  g) - \sum_{i=1}^k \mathrm{Pr}_{s^\prime}(N\geq i)\cdot \varepsilon \right) \right) - 1+ (1- \varepsilon) \\
			&=\mathrm{Pr}_{s} (\lozenge^{\leq k+1}  g) -  \left(\sum_{i=2}^{k+1} \mathrm{Pr}_{s}(N\geq i)\cdot \varepsilon  \right)
			- \varepsilon \\
			&= \mathrm{Pr}_{s} (\lozenge^{\leq k+1}  g) -  \left(\sum_{i=2}^{k+1} \mathrm{Pr}_{s}(N\geq i)\cdot \varepsilon  \right)
			- \mathrm{Pr}_{s}(N\geq 1)\cdot\varepsilon   \\
			&=\mathrm{Pr}_{s} (\lozenge^{\leq k+1}  g)-  \sum_{i=1}^{k+1} \mathrm{Pr}_{s}(N\geq i)\cdot \varepsilon. 
		\end{align*}
		
		\paragraph*{Upper bound:}
		Now, we prove that for all $k$
		\[
		\mathrm{Pr}_{t} (\lozenge^{\leq k}  g) \leq \mathrm{Pr}_{s} (\lozenge^{\leq k}  g) + \sum_{i=1}^k \mathrm{Pr}_{s}(N\geq i)\cdot \varepsilon
		\]
		by induction on $k$. We essentially repeat the same induction with minor modifications in the induction step.
		
		\textbf{Base case:}
		Again, for $k=0$, there is nothing to prove as $\mathrm{Pr}_t (\lozenge^{\leq 0}  g) = \mathrm{Pr}_s (\lozenge^{\leq 0}  g)\in\{0,1\}$.
		
		\textbf{Induction hypothesis:}
		For the induction, we now suppose that for all $\varepsilon$-bisimilar states $s^\prime$ and $t^\prime$ in $\mathcal{M}$, we have
		\[
		\mathrm{Pr}_{t^\prime} (\lozenge^{\leq k}  g) \leq \mathrm{Pr}_{s^\prime} (\lozenge^{\leq k}  g) + \sum_{i=1}^k \mathrm{Pr}_{s^\prime}(N\geq i)\cdot \varepsilon. \tag{IH}
		\]
		
		\textbf{Induction step:}
		If $s$, and hence $t$, is labeled with $g$ or $f$, the statement is trivial. So, we assume that $s$ and $t$ are not labeled with $g$ or $f$.
		Let $\Delta\colon \mathit{Succ}(s) \to \mathit{Distr}(\mathit{Succ}(t))$ be as in \Cref{lem:epsilon-bisimulation-distribution}.
		We obtain
		\begin{align*}
			&\mathrm{Pr}_{t} (\lozenge^{\leq k+1}g) \\
			&= \sum_{t^\prime \in \mathit{Succ}(t)} \prob(t)(t^\prime) \cdot \mathrm{Pr}_{t^\prime} (\lozenge^{\leq k}g) \\
			&= \sum_{t^\prime \in  \mathit{Succ}(t)} \sum_{s^\prime\in \mathit{Succ}(s)}\prob(s)(s^\prime)\cdot\Delta(s^\prime)(t^\prime)
			\cdot \mathrm{Pr}_{t^\prime} (\lozenge^{\leq k}g) \\
			&= \sum_{s^\prime\in \mathit{Succ}(s)}\prob(s)(s^\prime) \cdot \sum_{t^\prime \in  \mathit{Succ}(t)} \Delta(s^\prime)(t^\prime)
			\cdot \mathrm{Pr}_{t^\prime} (\lozenge^{\leq k}g) \\
			&\leq  \left(\sum_{s^\prime\in \mathit{Succ}(s)}\prob(s)(s^\prime) \cdot \sum_{\substack{t^\prime \in  \mathit{Succ}(t),\\ s^\prime \sim_{\varepsilon} t^\prime}} \Delta(s^\prime)(t^\prime)
			\cdot \mathrm{Pr}_{t^\prime} (\lozenge^{\leq k}g) \right)  \\
			& \phantom{===}  + \sum_{s^\prime\in \mathit{Succ}(s)}\prob(s)(s^\prime) \cdot \sum_{\substack{t^\prime \in  \mathit{Succ}(t),\\ s^\prime \nsim_{\varepsilon} t^\prime}} \Delta(s^\prime)(t^\prime) \cdot 1 \\
			&\overset{\text{Property of $\Delta$}}{\leq  } \left(\sum_{s^\prime\in \mathit{Succ}(s)}\prob(s)(s^\prime) \cdot \sum_{\substack{t^\prime \in  \mathit{Succ}(t),\\ s^\prime \sim_{\varepsilon} t^\prime}} \Delta(s^\prime)(t^\prime)
			\cdot \mathrm{Pr}_{t^\prime} (\lozenge^{\leq k}g) \right) +\varepsilon \\
			& \overset{(\text{IH})}{\leq} \left(\sum_{s^\prime\in \mathit{Succ}(s)}\prob(s)(s^\prime) \cdot \sum_{\substack{t^\prime \in  \mathit{Succ}(t),\\ s^\prime \sim_{\varepsilon} t^\prime}} \Delta(s^\prime)(t^\prime)
			\cdot \left(\mathrm{Pr}_{s^\prime} (\lozenge^{\leq k}  g) + \sum_{i=1}^k \mathrm{Pr}_{s^\prime}(N\geq i)\cdot \varepsilon \right) \right)+\varepsilon \\
			& \leq \left(\sum_{s^\prime\in \mathit{Succ}(s)}\prob(s)(s^\prime) \cdot \left(\mathrm{Pr}_{s^\prime} (\lozenge^{\leq k}  g) + \sum_{i=1}^k \mathrm{Pr}_{s^\prime}(N\geq i)\cdot \varepsilon \right) \right)+\varepsilon \\
			& = \left(\mathrm{Pr}_{s} (\lozenge^{\leq k+1}  g) + \sum_{i=2}^{k+1} \mathrm{Pr}_{s}(N\geq i)\cdot \varepsilon \right) +  \mathrm{Pr}_{s}(N\geq 1)\cdot\varepsilon \\
			&=\mathrm{Pr}_{s} (\lozenge^{\leq k+1}  g) + \sum_{i=1}^{k+1} \mathrm{Pr}_{s}(N\geq i)\cdot \varepsilon. 
		\end{align*}
		
		\paragraph*{Proof of \Cref{thm:unbounded_reach}:}
		As $\mathbb{E}(X) = \sum_{i=1}^\infty \mathrm{Pr}(X\geq i)$ for any random variable $X$ with values in $\mathbb{N}$ it follows that $\sum_{i=1}^\infty \mathrm{Pr}_s(N\geq i)  = \mathbb{E}_s(N).$ Hence, \Cref{thm:unbounded_reach} follows when taking the limit $k \to \infty$ in \Cref{Eq: Auxiliary Claim}.
	\end{proof}
	
	\section{Proofs of Section 4}\label{Appendix: Proofs of Section 4}
	\LemmaDifferenceOfEpsilonPerturbationProbabilities*
	\begin{proof}
		Let $s \in S$ and $A \subseteq S$. We show the claim by proving that
		\begin{align*}
			\prob(s)(A) \leq \prob'(s')(A') + \frac{\varepsilon}{2} \qquad \text{ and } \qquad \prob'(s')(A') \leq \prob(s)(A) + \frac{\varepsilon}{2}, 
		\end{align*}
		from which the lemma directly follows. As both inequalities can be shown similarly, we only focus on the first one to avoid repetition. 
		
		The claim is proved by contraposition. Hence, towards a contradiction, assume that $\prob(s)(A) > \prob'(s')(A') + \frac{\varepsilon}{2}$ or, equivalently, that $\prob(s)(A) - \prob'(s')(A') > \frac{\varepsilon}{2}$. First, note that
		\begin{align*}
			\Vert \prob(s) - \prob'(s') \Vert_1 &= \sum_{t \in A} \vert \prob(s)(t) - \prob'(s')(t') \vert + \sum_{t \in S \setminus A} \vert \prob(s)(t) - \prob'(s')(t') \vert.
		\end{align*}
		For the first summand, the assumption $\prob(s)(A) - \prob'(s')(A') > \frac{\varepsilon}{2}$ implies 
		\begin{align}
			\sum_{t \in A} \vert \prob(s)(t) - \prob'(s')(t') \vert \geq \sum_{t \in A} (\prob(s)(t) - \prob'(s')(t')) = \prob(s)(A) - \prob'(s')(A') > \frac{\varepsilon}{2},\label{Proof: Lemma - Difference of Epsilon-Perturbation Probabilities - S2}
		\end{align}
		while we can apply the (reversed) triangle inequality to the second term to get
		\begin{align*}
			\sum_{t \in S \setminus A} \vert \prob(s)(t) - \prob'(s')(t') \vert &\geq \big \vert \sum_{t \in S \setminus A} \left(\prob(s)(t) - \prob'(s')(t')  \right) \big \vert \\&= \vert \prob(s)(S \setminus A) - \prob'(s')(S' \setminus A')\vert.
		\end{align*}
		Since $\prob(s)(S) = \prob'(s')(S') = 1$, the assumption $\prob(s)(A) > \prob'(s')(A') + \frac{\varepsilon}{2}$ entails
		\begin{align*}
			\prob(s)(S \setminus A) &= \prob(s)(S) - \prob(s)(A) =
			\prob'(s')(S') - \prob(s)(A) \\
			&< \prob'(s') (S') - \prob'(s')(A') - \frac{\varepsilon}{2} 
			= \prob'(s')(S' \setminus A') - \frac{\varepsilon}{2}, 
		\end{align*}
		which is equivalent to 
		\begin{align*}
			\prob(s)(S \setminus A) - \prob'(s')(S' \setminus A') < - \frac{\varepsilon}{2}.
		\end{align*}
		Thus, $\vert \prob(s)(S \setminus A) - \prob'(s')(S' \setminus A')\vert > \frac{\varepsilon}{2}$, and together with \Cref{Proof: Lemma - Difference of Epsilon-Perturbation Probabilities - S2} we get
		\begin{align*}
			\Vert \prob(s) - \prob'(s') \Vert_1 > \frac{\varepsilon}{2} + \vert \prob(s)(S \setminus A) - \prob'(s')(S' \setminus A')\vert > \frac{\varepsilon}{2} + \frac{\varepsilon}{2} = \varepsilon.
		\end{align*}
		This contradicts $\mathcal{M}'$ being an $\varepsilon$-perturbation of $\mathcal{M}$. 
	\end{proof}

	\begin{lemma}\label{Lemma: Epsilon-Bisim = Epsilon-APB in equivalence case}
		Let $R$ be an equivalence that only relates equally labeled states. Then $R$ is an $\varepsilon$-APB iff it is an $\varepsilon$-bisimulation.
	\end{lemma}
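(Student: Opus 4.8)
The plan is to prove the two implications separately, noting first that condition (i) — namely $l(s) = l(t)$ for all $(s,t) \in R$ — is immediate in both directions from the standing assumption that $R$ only relates equally labeled states. Hence in each direction only the respective second condition requires attention, and the key structural fact I will exploit is that, since $R$ is an equivalence, for every $A \subseteq S$ the image $R(A)$ is the smallest $R$-closed set containing $A$.

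For the direction from $\varepsilon$-bisimulation to $\varepsilon$-APB, I would fix $(s,t) \in R$ and an $R$-closed set $A$. Since $A$ is $R$-closed and $R$ is reflexive, $R(A) = A$. Applying condition (ii) of \Cref{Definition: Epsilon-Bisimulation} to the set $A$ gives $\prob(s)(A) \leq \prob(t)(R(A)) + \varepsilon = \prob(t)(A) + \varepsilon$, and applying the same condition to the pair $(t,s) \in R$ (using symmetry of $R$) yields the reverse bound $\prob(t)(A) \leq \prob(s)(A) + \varepsilon$. Together these give $\vert \prob(s)(A) - \prob(t)(A) \vert \leq \varepsilon$, which is exactly condition (ii) of \Cref{Definition: Epsilon-APB}. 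I note that this direction uses only reflexivity and symmetry.

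For the converse, I would fix $(s,t) \in R$ and an \emph{arbitrary} $A \subseteq S$, and establish $\prob(s)(A) \leq \prob(t)(R(A)) + \varepsilon$. The crucial observation is that $R(A)$ is itself $R$-closed: if $u \in R(R(A))$, then there is some $v \in R(A)$ with $(v,u) \in R$ and some $a \in A$ with $(a,v) \in R$, so transitivity gives $(a,u) \in R$ and hence $u \in R(A)$. Since $R$ is reflexive we also have $A \subseteq R(A)$, so monotonicity of the measure $\prob(s)$ gives $\prob(s)(A) \leq \prob(s)(R(A))$. Applying the $\varepsilon$-APB condition to the $R$-closed set $R(A)$ then yields $\prob(s)(R(A)) \leq \prob(t)(R(A)) + \varepsilon$, and chaining the two inequalities gives $\prob(s)(A) \leq \prob(t)(R(A)) + \varepsilon$ for every $A$.

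I expect the only real obstacle to lie in the converse direction, specifically in the verification that $R(A)$ is $R$-closed. This is precisely where transitivity of the equivalence $R$ is needed, and it is exactly what makes the statement fail for merely reflexive and symmetric relations — mirroring the contrast with the non-transitive case remarked on earlier, where an $\varepsilon$-APB need not be an $\varepsilon$-bisimulation (cf.\ \Cref{Example: Epsilon-APBs}).
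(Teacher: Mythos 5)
Your proposal is correct and takes essentially the same route as the paper: for the nontrivial direction ($\varepsilon$-APB $\Rightarrow$ $\varepsilon$-bisimulation) you use exactly the paper's argument, namely that transitivity makes $R(A)$ an $R$-closed set, reflexivity gives $A \subseteq R(A)$, and chaining monotonicity with the APB condition on $R(A)$ yields $\prob(s)(A) \leq \prob(t)(R(A)) + \varepsilon$. The only difference is that you prove the easy direction directly (observing $R(A) = A$ for $R$-closed $A$ and applying the bisimulation condition to both $(s,t)$ and $(t,s)$), whereas the paper simply cites \cite{RBBTEAPC} for it; your self-contained argument there is also correct.
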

	\begin{proof}
		It follows from \cite{RBBTEAPC} that every $\varepsilon$-bisimulation is an $\varepsilon$-APB, and the same holds in particular if the relation in question is transitive. Now, let $R$ be a transitive $\varepsilon$-APB, and let $(s,t) \in R$. Then for all $A \subseteq S$ we have that (i) $R(A)$ is $R$-closed and (ii) $A \subseteq R(A)$. Thus,
		\begin{align*}
			\prob(s)(A) \overset{\text{(ii)}}{\leq} \prob(s)(R(A)) \overset{\text{(i)} + (s,t) \in R}{\leq} \prob(t)(R(A)) + \varepsilon.
		\end{align*}
		Similarly, $\prob(t)(A) \leq \prob(s)(R(A)) + \varepsilon$. Hence, $R$ is a transitive $\varepsilon$-bisimulation.
	\end{proof}

	\PropAnomalySimeq*
	\begin{proof}
		The second claim follows as described before the proposition. Regarding the non-additivity, assume the opposite. Then $\mathcal{M}_1 \sim \mathcal{M}_2$ and $\mathcal{M}_1 \simeq_\varepsilon \mathcal{N}$ would always yield $\mathcal{M}_2 \simeq_\varepsilon \mathcal{N}$ since ${\sim} = {\simeq_0}$, which is a contradiction to the possibility of $\simeq_\varepsilon$ to distinguish bisimilar states. Hence, $\simeq_\varepsilon$ cannot be additive.
	\end{proof}

	\PerturbedEpsilonBisimImpliesEpsilonBisim*
	\begin{proof}
		As $\mathcal{M} \simeq_\varepsilon \mathcal{N}$ there are $\varepsilon$-perturbations $\mathcal{M}'$ and $\mathcal{N}'$ of $\mathcal{M}$ and $\mathcal{N}$, respectively, such that $\mathcal{M}' \sim \mathcal{N}'$.
		
		Denote by $\sim'$ the bisimilarity relation on $\mathcal{M}' \oplus \mathcal{N}'$. We now  show that $\sim'$ is a transitive $\varepsilon$-bisimulation on $\mathcal{M} \oplus \mathcal{N}$. To this end, denote by $S$ and $S'$ the state spaces and by $\prob$ and $\prob'$ the transition distribution functions of $\mathcal{M} \oplus \mathcal{N}$ and $\mathcal{M}' \oplus \mathcal{N}'$, respectively.
		
		Let $s,t \in S$ with $s' \sim' t'$. Further, let $C \subseteq S'$ be a $\sim'$-closed set. Then, as $\sim'$ is an equivalence, $C$ is of the form $C = C_1' \uplus \dots \uplus C_n'$ for disjoint equivalence classes $C_1', \dots, C_n'$ of $\sim'$. But then, as $s' \sim' t'$,
		\begin{align*}
			\vert \prob(s)(C) - \prob(t)(C) \vert &= \vert \prob(s)(C) - \prob'(s')(C') + \prob'(s')(C') - \prob(t)(C) \vert \\
			&= \vert \prob(s)(C) - \prob'(s')(C') + \prob'(t')(C') - \prob(t)(C) \vert \\
			&\leq \vert \prob(s)(C) - \prob'(s')(C') \vert + \vert \prob(t)(C) - \prob'(t')(C') \vert \\
			&\leq \frac{\varepsilon}{2} + \frac{\varepsilon}{2} = \varepsilon,
		\end{align*}
		where the last inequality follows from \Cref{Lemma: Difference of Epsilon-Perturbation Probabilities} and the fact that both $\mathcal{M}'$ and $\mathcal{N}'$ are $\varepsilon$-perturbations of $\mathcal{M}$ and $\mathcal{N}$, respectively. Hence, $\sim'$ is an $\varepsilon$-APB on $\mathcal{M} \oplus \mathcal{N}$ and, by \Cref{Lemma: Epsilon-Bisim = Epsilon-APB in equivalence case}, it is also an $\varepsilon$-bisimulation. Further, as ${s_{init}^\mathcal{M}}' \sim' {s_{init}^\mathcal{N}}'$ by assumption, it follows that $\mathcal{M} \sim_\varepsilon^* \mathcal{N}$.
	\end{proof}
	
	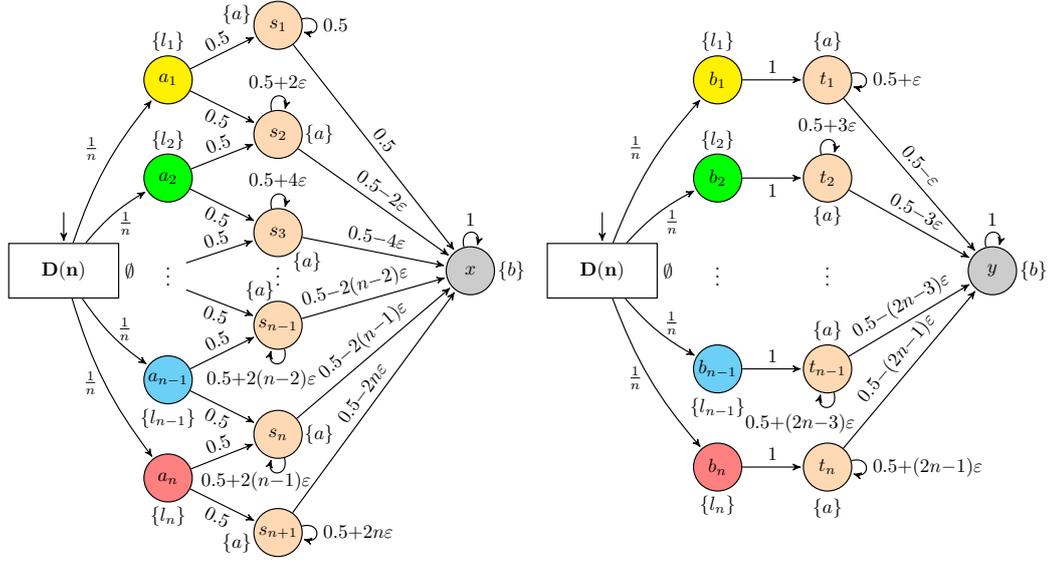
\begin{figure}[t!]
		\centering
		\resizebox{!}{0.335\textheight}{
			\begin{tikzpicture}[->,>=stealth',shorten >=1pt,auto, semithick]
				\tikzstyle{every state} = [text = black]
				%lhs
				\node[state] (a1) [fill = yellow] {$a_1$};
				\node[state] (a2) [fill = green, below of = a1, node distance = 1.8cm] {$a_2$};
				\node[] (lhsdots) [below of = a2, node distance = 1.7cm] {\vdots};
				\node[] (lhsdotstemp1) [above of = lhsdots, node distance = 0.1cm, xshift = 0.2cm] {};
				\node[] (lhsdotstemp2) [below of = lhsdots, node distance = 0.3cm, xshift = 0.2cm] {};
				\node[state] (an-1) [fill = cyan!50, below of = lhsdots, node distance = 2cm, inner sep = 0cm] {$a_{n-1}$};
				\node[state] (an) [fill = red!50, below of = an-1, node distance = 1.8cm] {$a_n$};
				
				\node[] (a1temp) [right of = a1, node distance = 2cm] {};
				\node[] (a2temp) [right of = a2, node distance = 2cm] {};
				\node[] (lhsdotstemp) [right of = lhsdots, node distance = 2cm] {\vdots};
				\node[] (an-1temp) [right of = an-1, node distance = 2cm] {};
				\node[] (antemp) [right of = an, node distance = 2cm] {};
				
				\node[state] (s1) [above of = a1temp, node distance = 1cm, fill = orange!30] {$s_1$};
				\node[state] (s2) [below of = a1temp, node distance = 1cm, fill = orange!30] {$s_2$};
				\node[state] (s3) [below of = a2temp, node distance = 1cm, fill = orange!30] {$s_3$};
				\node[state] (sn-1) [above of = an-1temp, node distance = 1cm, inner sep = 0cm, fill = orange!30] {$s_{n-1}$};
				\node[state] (sn) [below of = an-1temp, node distance = 1cm, fill = orange!30] {$s_{n}$};
				\node[state] (sn+1) [below of = antemp, node distance = 1cm, inner sep = 0cm, fill = orange!30] {$s_{n+1}$};
				
				\node[state] (x) [fill = gray!40, right of = lhsdotstemp, node distance = 3.5cm] {$x$};
				
				\node[state, draw, rectangle] (s) [left of = lhsdots, node distance = 1.9cm, minimum width = 2cm, minimum height = 1cm] {$\mathbf{D(n)}$};
				\node (initl) [above of = s, node distance = 1.2cm] {};
				
				\path 
				
				(initl) edge (s)
				(s) edge [bend left = 10] node {$\frac{1}{n}$} (a1)
				(s) edge [bend left = 10] node [below, pos = 0.7] {$\frac{1}{n}$} (a2)
				(s) edge [bend right = 10] node [above, pos = 0.7] {$\frac{1}{n}$} (an-1)
				(s) edge [bend right = 10] node  [left]{$\frac{1}{n}$} (an)
				(a1) edge node [above,sloped] {$0.5$} (s1)
				(a1) edge node [below,sloped] {$0.5$} (s2)
				(a2) edge node [above,sloped] {$0.5$} (s2)
				(a2) edge node [below,sloped] {$0.5$} (s3)
				(an-1) edge node [above,sloped] {$0.5$} (sn-1)
				(an-1) edge node [below,sloped] {$0.5$} (sn)
				(an) edge node [above,sloped] {$0.5$} (sn)
				(an) edge node [below,sloped] {$0.5$} (sn+1)
				(lhsdotstemp1) edge node [above,sloped] {$0.5$} (s3)
				(lhsdotstemp2) edge node [below,sloped] {$0.5$} (sn-1)
				
				(s1) edge node [above,sloped] {$0.5$} (x)
				(s2) edge node [above,sloped] {$0.5 {-} 2\varepsilon$} (x)
				(s3) edge node [above,sloped] {$0.5 {-} 4\varepsilon$} (x)
				(sn-1) edge node [pos=0.4,above,sloped] {$0.5 {-} 2(n{-}2)\varepsilon$} (x)
				(sn) edge node [above,sloped] {$0.5 {-} 2(n{-}1)\varepsilon$} (x)
				(sn+1) edge node [above,sloped] {$0.5 {-} 2n\varepsilon$} (x)
				
				(s1) edge [loop right, looseness = 4] node {$0.5$} (s1)
				(s2) edge [loop above, looseness = 4] node {$0.5 {+} 2\varepsilon$}  (s2)
				(s3) edge [loop above, looseness = 4] node {$0.5 {+} 4\varepsilon$}  (s3)
				
				(sn-1) edge [loop below, looseness = 4] node [xshift = -0.3cm] {$0.5 {+} 2(n{-}2)\varepsilon$} (sn-1)
				(sn) edge [loop below, looseness = 4] node [xshift = -0.4cm, yshift = 0.15cm] {$0.5 {+} 2(n{-}1)\varepsilon$} (sn)
				(sn+1) edge [loop right, looseness = 4] node {$0.5 {+} 2n\varepsilon$} (sn+1)
				
				(x) edge [loop above, looseness = 4] node {$1$} (x)
				;
				
				%rhs
				\node[state] (b1) [fill = yellow, right of = a1, node distance = 10cm] {$b_1$};
				\node[state] (b2) [fill = green, below of = b1, node distance = 1.8cm] {$b_2$};
				\node[] (rhsdots) [below of = b2, node distance = 1.7cm] {\vdots};
				\node[state] (bn-1) [fill = cyan!50, below of = rhsdots, node distance = 1.8cm, inner sep = 0cm] {$b_{n-1}$};
				\node[state] (bn) [fill = red!50, below of = bn-1, node distance = 1.8cm] {$b_n$};
				
				\node[] (rhsdotstemp) [right of = rhsdots, node distance = 2cm] {\vdots};
				
				\node[state] (t1) [right of = b1, node distance = 2cm, fill = orange!30] {$t_1$};
				\node[state] (t2) [right of = b2, node distance = 2cm, fill = orange!30] {$t_2$};
				\node[state] (tn-1) [right of = bn-1, node distance = 2cm, inner sep = 0cm, fill = orange!30] {$t_{n-1}$};
				\node[state] (tn) [right of = bn, node distance = 2cm, fill = orange!30] {$t_{n}$};
				
				\node[state] (y) [fill = gray!40, right of = rhsdotstemp, node distance = 3cm] {$y$};
				
				\node[state, draw, rectangle] (init) [left of = rhsdots, node distance = 2.1cm, minimum width = 2cm, minimum height = 1cm] {$\mathbf{D(n)}$};
				\node (initr) [above of = init, node distance = 1.2cm] {};
				
				\path 
				
				(initr) edge (init)
				(init) edge [bend left = 10] node {$\frac{1}{n}$} (b1)
				(init) edge [bend left = 10] node [below, pos = 0.7] {$\frac{1}{n}$} (b2)
				(init) edge [bend right = 10] node [above, pos = 0.7] {$\frac{1}{n}$} (bn-1)
				(init) edge [bend right = 10] node  [left]{$\frac{1}{n}$} (bn)
				(b1) edge node [above,sloped] {$1$} (t1)
				(b2) edge node [below,sloped] {$1$} (t2)
				(bn-1) edge node [above,sloped] {$1$} (tn-1)
				(bn) edge node [above,sloped] {$1$} (tn)
				
				(t1) edge node [above,sloped] {$0.5 {-} \varepsilon$} (y)
				(t2) edge node [above,sloped] {$0.5 {-} 3\varepsilon$} (y)
				(tn-1) edge node [above,sloped] {$0.5 {-} (2n{-}3)\varepsilon$} (y)
				(tn) edge node [above,sloped] {$0.5 {-} (2n{-}1)\varepsilon$} (y)
				
				(t1) edge [loop right, looseness = 4] node {$0.5 {+} \varepsilon$} (t1)
				(t2) edge [loop above, looseness = 4] node {$0.5 {+} 3\varepsilon$}  (t2)
				(tn-1) edge [loop below, looseness = 4] node [xshift = -0.5cm] {$0.5 {+} (2n{-}3)\varepsilon$} (tn-1)
				(tn) edge [loop right, looseness = 4] node {$0.5 {+} (2n{-}1)\varepsilon$} (tn)
				
				(y) edge [loop above, looseness = 4] node {$1$} (y)
				;
				
				%label
				\node[right of = s, node distance = 1.2cm] {$\emptyset$};
				\node[above of = a1, node distance = 0.7cm] {$\{l_1\}$};
				\node[above of = a2, node distance = 0.7cm] {$\{l_2\}$};
				\node[below of = an-1, node distance = 0.7cm] {$\{l_{n-1}\}$};
				\node[below of = an, node distance = 0.7cm] {$\{l_n\}$};
				\node[left of = s1, node distance = 0.75cm, yshift = 0.2cm] {$\{a\}$};
				\node[right of = s2, node distance = 0.75cm] {$\{a\}$};
				\node[below right of = s3, node distance = 0.75cm] {$\{a\}$};
				\node[above left of = sn-1, node distance = 0.75cm, yshift = 0.1cm, xshift = 0.2cm] {$\{a\}$};
				\node[right of = sn, node distance = 0.75cm] {$\{a\}$};
				\node[left of = sn+1, node distance = 0.75cm, yshift = -0.2cm] {$\{a\}$};
				\node[right of = x, node distance = 0.75cm] {$\{b\}$};
				
				\node[right of = init, node distance = 1.2cm] {$\emptyset$};
				\node[above of = b1, node distance = 0.7cm] {$\{l_1\}$};
				\node[above of = b2, node distance = 0.7cm] {$\{l_2\}$};
				\node[below of = bn-1, node distance = 0.7cm] {$\{l_{n-1}\}$};
				\node[below of = bn, node distance = 0.7cm] {$\{l_n\}$};
				\node[above of = t1, node distance = 0.7cm] {$\{a\}$};
				\node[below of = t2, node distance = 0.7cm] {$\{a\}$};
				\node[above of = tn-1, node distance = 0.7cm] {$\{a\}$};
				\node[below of = tn, node distance = 0.75cm] {$\{a\}$};
				\node[right of = y, node distance = 0.75cm] {$\{b\}$};
		\end{tikzpicture}}
		\caption{Visualization of LMCs $\mathcal{M}_{n}$ and $\mathcal{N}_{n}$ used in the proof of \Cref{Theorem: Epsilon-bisimilarity does not imply common approximate quotients with small tolerance}.}
		\label{Figure: LMC-family used in the proof of Theorem: Epsilon-bisimilarity does not imply common approximate quotients with small tolerance}
	\end{figure}
	
	\TheoremEpsilonBisimilarityDoesNotImplyExistenceOfCommonQuotient*
	This theorem is a consequence of the following, more general result.
	\begin{lemma}\label{Theorem: Epsilon-bisimilarity does not imply common approximate quotients with small tolerance}
		There is a family $\{(\mathcal{M}_{n}, \mathcal{N}_{n}) \mid n \in \mathbb{N}_{\geq 1} \}$ of pairs of finite LMCs such that for all $n \in \mathbb{N}_{\geq 1}$ we have $\mathcal{M}_{n} \sim_\varepsilon \mathcal{N}_{n}$ for $0 < \varepsilon \leq \frac{1}{4n}$, but $\mathcal{M}_n \not \simeq_\delta \mathcal{N}_n$ for any $\delta < 2n  \varepsilon$.
	\end{lemma}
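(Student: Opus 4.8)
The plan is to use the family $(\mathcal{M}_n,\mathcal{N}_n)$ depicted in \Cref{Figure: LMC-family used in the proof of Theorem: Epsilon-bisimilarity does not imply common approximate quotients with small tolerance} and to prove the two directions separately. Throughout I write $p_j = \frac12 - 2(j-1)\varepsilon = \prob(s_j)(x)$ for $j \in \{1,\dots,n+1\}$ and $q_i = \frac12 - (2i-1)\varepsilon = \prob(t_i)(y)$ for $i \in \{1,\dots,n\}$ for the one-step escape probabilities into the $b$-labeled sinks. The construction is rigged so that $q_i = \tfrac{1}{2}(p_i + p_{i+1})$, the $p_j$ form an arithmetic progression of common difference $2\varepsilon$ with total spread $p_1 - p_{n+1} = 2n\varepsilon$, and all transition probabilities stay in $[0,1]$ exactly when $\varepsilon \le \frac{1}{4n}$.

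For the positive direction $\mathcal{M}_n \sim_\varepsilon \mathcal{N}_n$, I would exhibit an explicit $\varepsilon$-bisimulation $R$ on $\mathcal{M}_n \oplus \mathcal{N}_n$: the reflexive–symmetric closure of the identity on the shared gadget $D(n)$ together with the pairs $(a_i, b_i)$, $(x,y)$, $(s_i, t_i)$ and $(s_{i+1}, t_i)$ for all $i$. Using \Cref{lem:epsilon-bisimulation-distribution}, the pair $(a_i,b_i)$ is handled by the weight function sending both $s_i$ and $s_{i+1}$ to $t_i$, and for the $\{a\}$-states one checks condition (ii) of \Cref{Definition: Epsilon-Bisimulation} directly on the only nontrivial sets $A \in \{\{x\},\{s_j\}\}$: since $p_i = q_i + \varepsilon$ and $p_{i+1} = q_i - \varepsilon$, every comparison reduces to $|p_j - q_i| \le \varepsilon$, which holds by design. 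This part is routine bookkeeping.

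The hard part will be the negative direction. Fix $\delta < 2n\varepsilon$ and suppose, for contradiction, that there are $\delta$-perturbations $\mathcal{M}', \mathcal{N}'$ with $\mathcal{M}' \sim \mathcal{N}'$; let $\sim$ denote bisimilarity on $\mathcal{M}' \oplus \mathcal{N}'$. First I would argue that the pairwise-distinct labels $l_1,\dots,l_n$ carried through $D(n)$ force $a_i' \sim b_i'$ for every $i$, and that the two absorbing $b$-states form a single class $[x'] = \{x', y'\}$. The crucial step is then to show that all $\{a\}$-states collapse into one class: because $b_i'$ is a $\delta$-perturbation of a point mass on $t_i'$, \Cref{Lemma: Difference of Epsilon-Perturbation Probabilities} gives $\prob'(b_i')([t_i']) \ge 1 - \frac{\delta}{2}$, whereas $a_i'$ still assigns mass $\ge \frac12 - \frac{\delta}{2}$ to each of $s_i'$ and $s_{i+1}'$; since $\delta < \frac12$ (from $\delta < 2n\varepsilon \le \frac12$), the equality $\prob'(a_i')(C) = \prob'(b_i')(C)$ demanded by $a_i' \sim b_i'$ for $C = [s_i']$ cannot hold unless $s_i' \sim t_i' \sim s_{i+1}'$. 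Running this for all $i$ and using transitivity merges $s_1',\dots,s_{n+1}'$ and $t_1',\dots,t_n'$ into a single class $K$.

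To conclude, I would use that bisimilarity forces every member of $K$ to have the same probability $\theta := \prob'(\cdot)([x'])$ of escaping to the $b$-class. Applying \Cref{Lemma: Difference of Epsilon-Perturbation Probabilities} with $A = \{x,y\}$ to each $s_j$ yields $|p_j - \theta| \le \frac{\delta}{2}$; instantiating at the two extreme states $s_1$ and $s_{n+1}$ and using the triangle inequality gives $2n\varepsilon = |p_1 - p_{n+1}| \le |p_1 - \theta| + |\theta - p_{n+1}| \le \delta$, contradicting $\delta < 2n\varepsilon$. Hence $\mathcal{M}_n \not\simeq_\delta \mathcal{N}_n$. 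I expect the main obstacle to be making the collapsing step fully rigorous, namely ruling out that a clever $\delta$-perturbation redistributes mass so as to avoid merging $s_i'$ and $s_{i+1}'$; the simultaneous $L_1$-bound of \Cref{Lemma: Difference of Epsilon-Perturbation Probabilities} on $a_i'$ and $b_i'$ is what keeps this under control, and the restriction $\delta < \frac12$ coming from $\varepsilon \le \frac{1}{4n}$ is exactly what the argument needs.
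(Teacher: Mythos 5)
Your proposal is correct and takes essentially the same route as the paper's proof: the same family from \Cref{Figure: LMC-family used in the proof of Theorem: Epsilon-bisimilarity does not imply common approximate quotients with small tolerance}, the same witnessing $\varepsilon$-bisimulation for the positive direction, and the same three-step negative direction (forcing $a_i' \sim b_i'$ via the distinct labels $l_i$, collapsing $s_i' \sim t_i' \sim s_{i+1}'$ by transitivity, and deriving $\delta \geq 2n\varepsilon$ from the spread $p_1 - p_{n+1} = 2n\varepsilon$). Your uniform use of \Cref{Lemma: Difference of Epsilon-Perturbation Probabilities} to obtain the $\frac{\delta}{2}$-bounds (with the restriction $\delta < 2n\varepsilon \leq \frac{1}{2}$) and the triangle-inequality finish via a common escape probability $\theta$ is simply a more explicit rendering of the minimal-perturbation bounds the paper asserts case by case.
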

	\begin{proof}
		Let $n \in \mathbb{N}_{\geq 1}$ and $\varepsilon \in \left (0, \frac{1}{4n}\right]$. We consider the family $\{(\mathcal{M}_{n}, \mathcal{N}_{n}) \mid n \in \mathbb{N}_{\geq 1}\}$ consisting of pairs of LMCs constructed as those depicted in \Cref{Figure: LMC-family used in the proof of Theorem: Epsilon-bisimilarity does not imply common approximate quotients with small tolerance}. 
		
		In such a pair, both LMCs $\mathcal{M}_n$ and $\mathcal{N}_n$ consist of two parts. The first one, abstractly denoted $\mathbf{D}(n)$ in the figure, contains the initial state of the respective LMC and consists of a Markov chain that, by solely doing coin flips, generates a uniform distribution over the values $\{1,\dots,n \}$. The case for $n = 6$ is the well-known Knuth-Yao-Dice \cite{KY76}, which can be generalized to uniformly sample over arbitrary nonnegative integers $n \in \mathbb{N}_{\geq 1}$ \cite{ODUGCFA}. We assume that none of the states in $\mathbf{D}(n)$ is labeled by any atomic proposition and that every inner node of the LMC contains at most $2$ outgoing edges, both with probability exactly $\frac{1}{2}$. In the terminology of trees, the model has precisely $n$ leaves, and from the $i$-th leaf a transition from $\mathbf{D}(n)$ to the state $a_i$ and $b_i$, respectively, occurs with probability $1$. 
		
		The second part of the respective chains is given as follows: for $\mathcal{M}_{n}$, it has states $\{a_1, \dots, a_n, s_1, \dots, s_{n+1}, x\}$ with transition probabilities 
		\begin{alignat*}{3}
			\prob(a_i)(s_i) &= \frac{1}{2} \qquad &&\text{and} \qquad  \prob(a_i)(s_{i+1}) &&= \frac{1}{2} \\
			\prob(s_i)(s_i) &= \frac{1}{2} + 2  (i-1) \varepsilon \qquad&& \text{and} \qquad \prob(s_i)(x) &&= \frac{1}{2} - 2(i-1)\varepsilon \\
			\prob(x)(x) &= 1,
		\end{alignat*}
		while its state space in $\mathcal{N}_{n}$ is $\{b_1, \dots, b_n, t_1, \dots, t_n, y\}$ and the transition probabilities are given as 
		\begin{align*}
			\prob(b_i)(t_i) &= 1\\
			\prob(t_i)(t_i) &= \frac{1}{2} + (2i -1) \varepsilon \qquad \text{and} \qquad \prob(t_i)(y) = \frac{1}{2} - (2i - 1) \varepsilon\\
			\prob(y)(y) &= 1.
		\end{align*}
		Furthermore, $AP$ consists of the pairwise distinct atomic propositions $l_1, \dots, l_n, a, b$. The state labels are $l(a_i) = l(b_i) = \{l_i\}$ and $l(s_i) = l(t_i) = \{a\}$ for $1 \leq i \leq n$ as well as $l(x) = l(y) = \{b\}$. Note that the states in $\mathbf{D}(n)$ all have empty labels, so they cannot be related to any of the states $a_i, b_i, s_i, t_i, x$ or $y$ by any relation that preserves the state labels.
		
		The relation $\sim_\varepsilon$ in $\mathcal{M}_n \oplus \mathcal{N}_n$ is the symmetric and reflexive closure of the relation that contains all pairs of corresponding states in the respective copies of $\mathbf{D}(n)$, as well as
		\begin{align*}
			\{(a_i, b_i) \mid 1 \leq i \leq n \} \cup \{(x, y)\} \cup \{(s_i, t_i), (s_{i+1}, t_i) \mid 1 \leq i \leq n \}.
		\end{align*}
		In particular, $s_{init}^{\mathcal{M}_n} \sim_\varepsilon s_{init}^{\mathcal{N}_n}$.

		Let $\delta \geq 0$ such that $\mathcal{M}_n \simeq_\delta \mathcal{N}_n$, and let $\mathcal{M}_{n}', \mathcal{N}_{n}'$ be the corresponding $\delta$-perturbations, i.e., $\mathcal{M}_n' \sim \mathcal{N}_n'$. We now show $\delta \geq 2n\varepsilon$. 
		
		To this end, we first observe that every two states $a_i, a_j$ and $a_i, b_j$ and $b_i, b_j$ with $i \neq j$ have different labels, while $l(a_i) = l(b_i)$ for every $1 \leq i \leq n$. Furthermore, $\mathcal{M}_n' \sim \mathcal{N}_n'$ iff ${s_{init}^{\mathcal{M}_n}}' \sim {s_{init}^{\mathcal{N}_n}}'$.
		However, this can only be the case if $a_i'\sim b_i'$ for all $i$. Otherwise, we would have to perturb the states in $\mathbf{D}(n)$ in such a way that, for some $i$, the states $a_i$ and $b_i$ are not reachable anymore when traversing $\mathbf{D}(n)$. But, as all transition probabilities from one state in $\mathbf{D}(n)$ to another one in $\mathbf{D}(n)$ are equal to $\frac{1}{2}$, and the probability to transition from the $i$-th leaf of $\mathbf{D}(n)$ to $a_i$ or $b_i$ equals $1$ for all $i$, this requires a perturbation by at least $1 > 2n\varepsilon$. Hence, the only possible way to ensure $\mathcal{M}_n' \sim \mathcal{N}_n'$ for some $\delta$-perturbations with $\delta < 2n\varepsilon$ is to enforce $a_i' \sim b_i'$ for all $i$. 
		
		Let $1 \leq i \leq n$. First assume that, in $\mathcal{M}_n'$ and $\mathcal{N}_n'$, we have $s_i' \nsim t_i'$ and $s_{i+1}' \nsim t_i'$. Then it holds that $\prob(a_i)([t_i']_{\sim}) = 0$ while $\prob(b_i)([t_i']_{\sim}) = 1$, and it is easy to see that a perturbation of at least $1 \geq 2n\varepsilon$ is required to make the states $a_i'$ and $b_i'$ bisimilar. Now, assume that $s_i' \sim t_i'$ and $s_{i+1}' \nsim t_i'$. Here, $\prob(a_i)([t_i']_{\sim}) = \frac{1}{2}$ while $\prob(b_i)([t_i']_{\sim}) = 1$, and thus a perturbation of at least $\frac{1}{2} \geq 2n\varepsilon$ is necessary to obtain $a_i' \sim b_i'$ (by assigning, for example, a probability of $\frac{3}{4}$ to the transition $a_i' \to s_i'$ and $b_i' \to t_i'$, as well as a total probability of $\frac{1}{4}$ to transitions $a_i' \to [s_{i+1}']_\sim$ and $b_i' \to [s_{i+1}']_\sim$, respectively). By an analogous argument we can show that if $s_i' \nsim t_i'$ and $s_{i+1}' \sim t_i'$, a perturbation of at least $2n \varepsilon$ is required as well. 
		
		Thus, we can w.l.o.g. assume that, if $\delta < 2n\varepsilon$, it has to holds that $s_i' \sim t_i' \sim s_{i+1}'$ for every $1 \leq i \leq n$. But then the transitivity of $\sim$ implies $s_1' \sim s_{n+1}'$, which by construction is only possible when perturbing by at least $2n\varepsilon$: in the resulting model, probabilities of $\frac{1}{2} + n \varepsilon$ are assigned to the transitions $s_1' \to s_1'$ and $s_{n+1}' \to s_{n+1}'$, while the transitions $s_1' \to x'$ and $s_{n+1}' \to x'$ are taken with probability $\frac{1}{2} -n\varepsilon$. 
		
		All in all, this shows that for $\mathcal{M}_n \simeq_\delta \mathcal{N}_n$ a tolerance of at least $\delta = 2n\varepsilon$ is required, or equivalently, that $\mathcal{M}_n \not \simeq_\delta \mathcal{N}_n$ for any $\delta < 2n\varepsilon$. 
	\end{proof}
	
	\begin{proof}[Proof of \Cref{Theorem: Epsilon-Bisimulation does not imply existence of common 1/4-quotient}]
		Let $\varepsilon \in \left( 0, \frac{1}{4} \right]$ and set $n = n(\varepsilon) = \left \lfloor \frac{1}{4 \varepsilon} \right \rfloor$. As $\left \lfloor \frac{1}{4\varepsilon} \right \rfloor \leq \frac{1}{4 \varepsilon}$, this choice of $n$ implies both $\varepsilon \leq \frac{1}{4n}$ and $\frac{1}{4\varepsilon} < n+1$. 
		
		From the first inequality, it follows by \Cref{Theorem: Epsilon-bisimilarity does not imply common approximate quotients with small tolerance} that $\mathcal{M}_n \sim_{\varepsilon} \mathcal{N}_n$ and $\mathcal{M}_n \not \simeq_\delta \mathcal{N}_n$ for any $\delta < 2 n \varepsilon$, where $\mathcal{M}_n$ and $\mathcal{N}_n$ are again as depicted in \Cref{Figure: LMC-family used in the proof of Theorem: Epsilon-bisimilarity does not imply common approximate quotients with small tolerance}.
		Furthermore, $\frac{1}{4 \varepsilon} < n+1$ implies $\frac{1}{2} \cdot \frac{n}{n+1} < \frac{1}{2} \cdot \frac{n}{1/(4\varepsilon)} = 2 n \varepsilon$, so in particular $\mathcal{M}_n \not \simeq_\delta \mathcal{N}_n$ for any $\delta \leq \frac{1}{2} \cdot \frac{n}{n+1}$. As $f(x) = \frac{1}{2} \cdot \frac{x}{x+1}$ is strictly increasing in $x > 0$, and because $n = \lfloor \frac{1}{4\varepsilon} \rfloor \geq 1$ for any $\varepsilon \in (0, \frac{1}{4}]$, it follows that for any such $\varepsilon$ and $n$ we have $\frac{1}{4} = \frac{1}{2} \cdot \frac{1}{1+1} = f(1) \leq f(n) = \frac{1}{2} \cdot \frac{n}{n+1} < 2n\varepsilon$. In combination, this yields $\mathcal{M}_n \not \simeq_\delta \mathcal{N}_n$ for any $\delta \leq \frac{1}{4}$.
	\end{proof}
	
	\begin{figure}[t!]
		\centering
		\resizebox{!}{0.25\textheight}{
			\begin{tikzpicture}[->,>=stealth',shorten >=1pt,auto, semithick]
				\tikzstyle{every state} = [text = black]
				%lhs
				\node[state] (s) [fill = yellow] {$s$};
				\node (stemp) [right of = s, node distance = 2.8cm] {};
				\node (stempl) [left of = stemp, node distance = 2cm, yshift = 0.1cm] {$\vdots$};
				\node (stempr) [right of = stemp, node distance = 2cm, yshift = 0.1cm] {$\vdots$};
				\node[state] (s1) [above of = stemp, node distance = 3cm, fill = green] {$s_1$};
				\node[state] (s2) [above of = stemp, node distance = 1.5cm, fill = green] {$s_2$};
				\node[state] (sn) [below of = stemp, node distance = 1.5cm, fill = green] {$s_{n}$};
				\node[state] (sn+1) [below of = stemp, node distance = 3cm, fill = green, inner sep = 0cm] {$s_{n+1}$};
				\node[state] (x) [right of = stemp, node distance = 2.8cm] {$x$};
				
				\node (sinit) [above of = s, node distance = 1cm] {};
				
				\path
				(sinit) edge (s)
				(s) edge[bend left = 20] node [above,sloped] {$\frac{n+2}{(n+1)^2}$} (s1)
				(s) edge node [above,sloped] {$\frac{n+2}{(n+1)^2}$} (s2)
				(s) edge node [below,sloped] {$\frac{n+2}{(n+1)^2}$} (sn)
				(s) edge [bend right = 20] node [below,sloped] {$\frac{1}{(n+1)^2}$} (sn+1)
				
				(s1) edge [bend left = 20] node [above,sloped] {$0.5 {+} \varepsilon(n{-}1)$} (x)
				(s2) edge node [above,sloped] {$0.5 {+} \varepsilon(n{-}3)$} (x)
				(sn) edge node [below,sloped] {$0.5 {+} \varepsilon({-}n {+} 1)$} (x)
				(sn+1) edge [bend right = 20] node [below,sloped, pos = 0.6] {$0.5 {+} \varepsilon({-}n{-}1)$} (x)
				
				(s1) edge [loop right] node {$0.5 {-} \varepsilon(n{-}1)$} (s1)
				(s2) edge [loop below] node {$0.5 {-} \varepsilon(n{-}3)$} (s2)
				(sn) edge [loop above] node {$0.5 {-} \varepsilon({-}n{+}1)$} (sn)
				(sn+1) edge [loop right] node {$0.5 {-} \varepsilon({-}n{-}1)$} (sn+1)
				
				(x) edge [loop above] node [above] {$1$} (x)
				;
				
				%rhs
				\node[state] (t) [fill = yellow, right of = x, node distance = 2cm] {$t$};
				\node (ttemp) [right of = t, node distance = 2.8cm] {};
				\node (ttempr) [right of = ttemp, node distance = 2cm, yshift = 0.1cm] {$\vdots$};
				\node (ttempl) [left of = ttemp, node distance = 2cm, yshift = 0.1cm] {$\vdots$};
				\node[state] (t1) [above of = ttemp, node distance = 3cm, fill = green] {$t_1$};
				\node[state] (t2) [above of = ttemp, node distance = 1.5cm, fill = green] {$t_2$};
				\node[state] (tn) [below of = ttemp, node distance = 1.5cm, fill = green] {$t_{n}$};
				\node[state] (tn+1) [below of = ttemp, node distance = 3cm, fill = green, inner sep = 0cm] {$t_{n+1}$};
				\node[state] (y) [right of = ttemp, node distance = 2.8cm] {$y$};
				\node (tinit) [above of = t, node distance = 1cm] {};
				
				\path
				(tinit) edge (t)
				(t) edge [bend left = 20] node [above,sloped] {$\frac{1}{(n+1)^2}$} (t1)
				(t) edge node [above,sloped] {$\frac{n+2}{(n+1)^2}$} (t2)
				(t) edge node [below,sloped] {$\frac{n+2}{(n+1)^2}$} (tn)
				(t) edge [bend right = 20] node [below,sloped] {$\frac{n+2}{(n+1)^2}$} (tn+1)
				
				(t1) edge [bend left = 20] node [above,sloped] {$0.5 {+} n\varepsilon$} (y)
				(t2) edge node [above,sloped] {$0.5 {+} \varepsilon(n{-}2)$} (y)
				(tn) edge node [below,sloped] {$0.5 {+} \varepsilon({-}n {+} 2)$} (y)
				(tn+1) edge [bend right = 20] node [below,sloped] {$0.5 {+} \varepsilon({-}n)$} (y)
				
				(t1) edge [loop right] node {$0.5 {-} n\varepsilon$} (t1)
				(t2) edge [loop below] node {$0.5 {-} \varepsilon(n{-}2)$} (t2)
				(tn) edge [loop above] node {$0.5 {-} \varepsilon({-}n{+}2)$} (tn)
				(tn+1) edge [loop right] node {$0.5 {-} \varepsilon({-}n)$} (tn+1)
				
				(y) edge [loop above] node[above] {$1$} (y)
				;
				
				%label
				\node[below of = s, node distance = 0.7cm] {$\{a\}$};
				\node[left of = s1, node distance = 0.75cm] {$\{b\}$};
				\node[above of = s2, node distance = 0.7cm] {$\{b\}$};
				\node[below of = sn, node distance = 0.7cm] {$\{b\}$};
				\node[left of = sn+1, node distance = 0.75cm] {$\{b\}$};
				\node[right of = x, node distance = 0.75cm] {$\{c\}$};

				\node[below of = t, node distance = 0.7cm] {$\{a\}$};
				\node[left of = t1, node distance = 0.75cm] {$\{b\}$};
				\node[above of = t2, node distance = 0.7cm] {$\{b\}$};
				\node[below of = tn, node distance = 0.7cm] {$\{b\}$};
				\node[left of = tn+1, node distance = 0.75cm] {$\{b\}$};
				\node[right of = y, node distance = 0.75cm] {$\{c\}$};
		\end{tikzpicture}}
		\caption{The LMCs $\mathcal{M}_n$ (left) and $\mathcal{N}_n$ (right) for $n \in \mathbb{N}_{\geq 1}$ used in the proof of \Cref{Theorem: No common nepsilon quotient with graph isomorphism}. }
		\label{Figure: LMCs for even case of Theorem: No common nepsilon quotient with graph isomorphism}
	\end{figure}
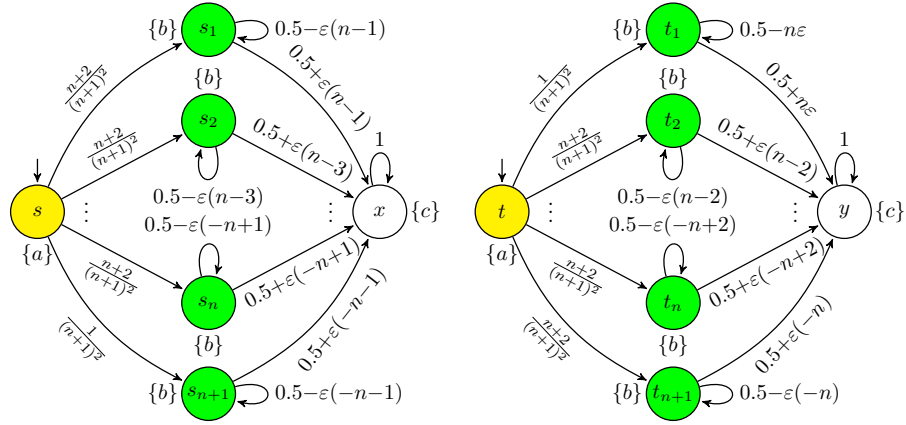
	
	\NoCommonNEpsilonQuotientWithGraphIsomorphism*
	\begin{proof}
		Let $n \in \mathbb{N}$ with $n \geq 1$. The family $\mathcal{F} = \{(\mathcal{M}_n, \mathcal{N}_n) \mid n \geq 1\}$ of LMCs that we consider is constructed as in \Cref{Figure: LMCs for even case of Theorem: No common nepsilon quotient with graph isomorphism}. All of these models are well-defined, as every transition probability is in $[0,1]$ by the choice of $\varepsilon$, and
		\begin{align*}
			\frac{1}{(n+1)^2} + n \cdot \frac{n+2}{(n+1)^2} = \frac{1 + n(n+2)}{(n+1)^2} = 1.
		\end{align*}
		To illustrate our construction, the LMCs $\mathcal{M}_2$ and $\mathcal{N}_2$ can be found in \Cref{Figure: Example LMCs for case n = 2 for construction in Theorem: No common nepsilon quotient with graph isomorphism}. By checking the conditions of $\varepsilon$-bisimilarity, it follows that always $\mathcal{M}_n \sim_\varepsilon \mathcal{N}_n$, and that the relations $\sim_{\varepsilon}$ are the reflexive and symmetric closures of 
		\begin{align*}
			\textcolor{white}{\cup}&\{(s,t), (x,y)\} 
			\cup \{(s_i, t_i) \mid i = 1,\dots,n + 1 \} \cup \{ s_i, t_{i+1}) \mid i = 1, \dots, n\}.
		\end{align*}
		In particular, we have $\vert {\sim_\varepsilon}(\{s_{n+1}\}) \vert = \vert {\sim_\varepsilon}(\{t_{1}\}) \vert = 2$, while $\vert {\sim_\varepsilon}(\{s_{i}\}) \vert = \vert {\sim_\varepsilon}(\{t_{j}\}) \vert = 3$ for all remaining choices of $i, j$. 
		
		Now assume that $\mathcal{M}_n \simeq_\delta \mathcal{N}_n$, and let $\mathcal{M}_n', \mathcal{N}_n'$ be the corresponding $\delta$-perturbations of $\mathcal{M}_n$ and $\mathcal{N}_n$, i.e., let $\mathcal{M}_n' \sim \mathcal{N}_n'$.
		
		For $\mathcal{M}_n' \sim \mathcal{N}_n'$ to hold, it follows by construction that we need to have $s' \sim t'$ and $x' \sim y'$ in $\mathcal{M}_n' \oplus \mathcal{N}_n'$. While the latter is obviously possible for all $\delta \geq 0$, the former requires for every $\sim$-class $C$ that $\prob'(s')(C') = \prob'(t')(C')$, and in particular that 
		\begin{align}
			\prob'(s')([s_{n+1}']_\sim) = \prob'(t')([s_{n+1}']_\sim). \label{Proof: Theorem: No common nepsilon quotient with graph isomorphism - Star 1}
		\end{align}
		By the choice of $\varepsilon$ it follows that if we perturb $\mathcal{M}_n$ such that the probability to move to $s_{n+1}' \in [s_{n+1}']$ becomes $0$, $\delta$ already has to be $\geq n \varepsilon$, which shows the claim. Hence, we can w.l.o.g. assume that $\prob'(s')([s_{n+1}']_\sim) > 0$, and therefore also that $\prob'(t')([s_{n+1}']_\sim) > 0$.
		The equivalence class in question is thus of the form
		\begin{align*}
			[s_{n+1}']_{\sim} = \{m_1', \dots, m_k', n_1', \dots, n_l' \mid m_1', \dots, m_k' \in S^{\mathcal{M}_n'}, n_1', \dots, n_l' \in S^{\mathcal{N}_n'}\},
		\end{align*}
		for $k, l \geq 1$. Further, the construction of $\mathcal{M}_n$ yields that the probability to move from $s$ to any set of states $A \subseteq S$ in $\mathcal{M}_n$ that contains $s_{n+1}$ is precisely 
		\begin{align*}
			\prob(s)(A) = \frac{1}{(n+1)^2} + (\vert A \vert -1) \frac{n+2}{(n+1)^2} = \frac{1 + (\vert A \vert - 1)(n+2)}{(n+1)^2}.
		\end{align*}
		Because we assume $\mathcal{M}_n'$ to be a $\delta$-perturbation of $\mathcal{M}_n$, it now follows from \Cref{Lemma: Difference of Epsilon-Perturbation Probabilities} that in $\mathcal{M}_n'$ we have $\prob'(s')([s_{n+1}']_\sim)  \in [lb(s'), ub(s')]$ for 
		\begin{align*}
			lb(s') &= \frac{1+(k-1)(n+2)}{(n+1)^2} - \frac{\delta}{2} \quad \text{ and } \quad
			ub(s') = \frac{1+(k-1)(n+2)}{(n+1)^2} + \frac{\delta}{2}.
		\end{align*}
		
		For the probability $\prob'(t')([s_{n+1}']_\sim)$, on the other hand, we have to consider two cases. If $t_1' \in [s_{n+1}']_\sim$ then $t_1' \sim s_{n+1}'$. But, by construction, making these two states bisimilar requires a perturbation by at least $\delta = 2n \varepsilon$ for both $\mathcal{M}_n$ and $\mathcal{N}_n$. Hence, in this case, $\delta \geq n \varepsilon$, which proves the claim. 
		
		Otherwise, if $t_1' \notin [s_{n+1}']_\sim$, then it follows again from the construction and \Cref{Lemma: Difference of Epsilon-Perturbation Probabilities} that $\prob'(t')([s_{n+1}']_\sim) \in [lb(t'), ub(t')]$, where the bounds are given as 
		\begin{align*}
			lb(t') = \frac{l(n+2)}{(n+1)^2} - \frac{\delta}{2} \quad \text{ and } \quad ub(t') = \frac{l(n+2)}{(n+1)^2} + \frac{\delta}{2}.
		\end{align*}
		
		In particular, we \Cref{Proof: Theorem: No common nepsilon quotient with graph isomorphism - Star 1} can only hold if there is a $p \in [lb(s'), ub(s')] \cap [lb(t'), ub(t')]$ with $p = \prob'(s')([s_{n+1}']_\sim) = \prob'(t')([s_{n+1}']_\sim)$.
		We utilize this fact to show $\delta \geq n\varepsilon$, by doing a case distinction on how $[lb(s'), ub(s')]$ and $[lb(t'), ub(t')]$ can intersect.
		\begin{enumerate}
			\item [] \textbf{Case 1. $lb(s') \leq lb(t') \leq ub(s') \leq ub(t')$.}
			The first and third inequality are equivalent to $1 \leq (l - k + 1)(n+2)$, and since $n+2 \geq 3$ and $l,k \in \mathbb{N}_{\geq 1}$ this implies $l - k + 1 \geq 1$. Moreover, the inequality $lb(t') \leq ub(s')$ is equivalent to $\frac{(n+2)(l-k+1) -1}{(n+1)^2} \leq \delta$. As $\varepsilon \in \left(0, \frac{1}{n(n+1)^2} \right]$, we can combine all of this to get
			\begin{align*}
				n \varepsilon \leq \frac{1}{(n+1)^2} \leq \frac{n+2 - 1}{(n+1)^2} \leq \frac{(n+2)(l-k+1) - 1}{(n+1)^2} \leq \delta.
			\end{align*}
			
			\item [] \textbf{Case 2. $lb(s') \leq lb(t') \leq ub(t') \leq ub(s')$.}
			Similar to Case 1, $lb(s') \leq lb(t')$ implies $l - k + 1\geq 1$, and thus $l \geq k$. On the other hand, $ub(t') \leq ub(s')$ yields $(l - k + 1)(n+2) \leq 1$, which, because $n +2 \geq 3$ and $l,k \in \mathbb{N}_{\geq 1}$, gives us $l-k+1 \leq 0$. But then $l + 1 \leq k$, so $l < k$, a contradiction to $l \geq k$. This case can therefore not occur. 
			
			\item [] \textbf{Case 3. $lb(t') \leq lb(s') \leq ub(s') \leq ub(t')$.}
			As for Case 2 we can reach a contradiction by observing that $lb(t') \leq lb(s')$ implies $l < k$ while $ub(s') \leq ub(t')$ yields $l \geq k$.
			
			\item [] \textbf{Case 4. $lb(t') \leq lb(s') \leq ub(t') \leq ub(s')$.}
			The first and third inequality are equivalent to $(n+2)(l-k+1) \leq 0$. Together with $n + 2 \geq 3$ and $l,k \in \mathbb{N}_{\geq 1}$ this yields $l-k+1 \leq 0$, or, equivalently, $k - l - 1 \geq 0$. Furthermore, $lb(s') \leq ub(t')$ can be transformed into $\frac{(n+2)(k-l-1) +1}{(n+1)^2} \leq \delta$. Similar to Case 1 we conclude 
			\begin{align*}
				n \varepsilon \leq \frac{1}{(n+1)^2} \leq \frac{(n+2)(k-l-1) + 1}{(n+1)^2} \leq \delta. 
			\end{align*}
		\end{enumerate}
		All in all, this shows that $\mathcal{M}_n \simeq_\delta \mathcal{N}_n$ only if $\delta \geq n \varepsilon$, as otherwise \Cref{Proof: Theorem: No common nepsilon quotient with graph isomorphism - Star 1} is violated.
	\end{proof}
	
	\CharacterizationPertubedEpsBisimTransitiveEpsBisim*
	\begin{proof}
		We set $S = S^\mathcal{M} \oplus S^\mathcal{N}$ and $\prob = \prob^{\mathcal{M} \oplus \mathcal{N}}$.
		
		\noindent
		\textbf{(i) $\Leftrightarrow$ (ii).} The equivalence is clear due to the $1$-to-$1$-correspondence of $\varepsilon$-perturbations of $\mathcal{M} \oplus \mathcal{N}$ and pairs of $\varepsilon$-perturbations of $\mathcal{M}$ and $\mathcal{N}$. 
		
		\noindent
		\textbf{(ii) $\Rightarrow$ (iii).} Let $\mathcal{L}$ be an $\varepsilon$-perturbation of $\mathcal{M} \oplus \mathcal{N}$ in which $s_{init}^\mathcal{M} \sim s_{init}^\mathcal{N}$. Following an argument similar to the one in the proof of \Cref{Corollary: Perturbed Epsilon Bisim implies Epsilon Bisim}, $\sim_\mathcal{L}$ on $\mathcal{L}$ is an $\varepsilon$-bisimulation on $\mathcal{M} \oplus \mathcal{N}$. Further, as $s_{init}^\mathcal{M} \sim_\mathcal{L} s_{init}^\mathcal{N}$ by assumption and $\sim_\mathcal{L}$ is clearly transitive, this relation satisfies all the requirements posed in (iii). 
		
		Now let $A \in \xfrac{S}{\sim_\mathcal{L}}$. The distribution $\prob_A^* \in Distr(\xfrac{S}{\sim_\mathcal{L}})$ with $\prob_A^*(B) = \prob_\mathcal{L}(u')(B')$ for some (or, equivalently, all) $u \in A$ satisfies the requirements of (iii), as by \Cref{Lemma: Difference of Epsilon-Perturbation Probabilities} it holds for all $\sim_\mathcal{L}$-closed $C \subseteq S$ that 
		\begin{align*}
			\vert \prob(s)(C) - \prob_A^*(C) \vert = \vert \prob(s)(C) - \prob_\mathcal{L}(s')(C') \vert \leq \frac{\varepsilon}{2}.
		\end{align*}
		\textbf{(iii) $\Rightarrow$ (ii).} We set $S = S_\mathcal{M} \oplus S_\mathcal{N}$. Let $R$ be a transitive $\varepsilon$-bisimulation on $\mathcal{M} \oplus \mathcal{N}$ such that $(s_{init}^\mathcal{M}, s_{init}^\mathcal{N}) \in R$, and such that for every $A \in \xfrac{S}{R}$ there is a distribution $\prob_A^*\in Distr(\xfrac{S}{R})$ which satisfies for all $R$-closed sets $C \subseteq S$ and all $s \in A$ that $\vert \prob(s)(C) - \prob_A^*(C) \vert \leq \frac{\varepsilon}{2}$.
		
		For $B \in \xfrac{S}{R}$, we define for fixed $A \in \xfrac{S}{R}$ and $s \in A$ the value $\Delta(s, B) = \prob(s)(B) - \prob_A^*(B)$, and set 
		\begin{align*}
			B^+_s = \{B \in \xfrac{S}{R} \mid \Delta(s, B) > 0\} \qquad \text{ and } \qquad B^-_s = \{B \in \xfrac{S}{R} \mid \Delta(s, B) < 0\}.
		\end{align*}
		Note that, if $B \in B^+_s$ we can pick, for all $t \in B$, values $0 \leq \delta^+(s,t) \leq \prob(s)(t)$ such that $\sum_{t \in B} \delta^+(s,t) = \Delta(s,B)$. Similarly, if $B \in B^-_s$, there are $0 \leq \delta^-(s,t) \leq 1-\prob(s)(t)$ for all $t \in B$ such that $\sum_{t \in B} \delta^-(s,t) = \vert \Delta(s,B) \vert$.
		
		Define a new LMC $\mathcal{L}$ on $S$, whose labeling is just like that of $\mathcal{M} \oplus \mathcal{N}$ and whose transition distributions $\prob_\mathcal{L}(s) \in Distr(S)$ for any $s \in S$ are, for $B \in \xfrac{S}{R}$ and $t \in B$, given by 
		\begin{align*}
			\prob_\mathcal{L}(s)(t) = \begin{cases}
				\prob(s)(t), & \text{if } \Delta(s,B) = 0 \\
				\prob(s)(t) - \delta^+(s, t) & \text{if } B \in B^+_s \\
				\prob(s)(t) + \delta^-(s,t) & \text{if} B \in B^-_s
			\end{cases}.
		\end{align*}
		
		Then $\prob_\mathcal{L}(s)$ is a distribution for all $s \in S$, as
		\begin{align*}
			&\sum_{t \in S} \prob_\mathcal{L}(s)(t) \\&= \sum_{B \in S/R} \prob_\mathcal{L}(s)(B) \\
			&= \sum_{\substack{B \in S/R,\\ \Delta(s,B) = 0}} \prob_\mathcal{L}(s)(B) + \sum_{B \in B^+_s} \prob_\mathcal{L}(s)(B) + \sum_{B \in B^-_s} \prob_\mathcal{L}(s)(B) \\
			&= \sum_{\substack{B \in S/R,\\ \Delta(s,B) = 0}} \prob(s)(B) + \sum_{B \in B^+_s} (\prob(s)(B) - \Delta(s,B)) + \sum_{B \in B^-_s} (\prob(s)(B) + \vert \Delta(s,B) \vert) \\
			&= \sum_{\substack{B \in S/R,\\ \Delta(s,B) = 0}} \prob_A^*(B) + \sum_{B \in B^+_s} \prob_A^*(B) + \sum_{B \in B^-_s} (\prob(s)(B) + \vert \prob(s)(B) - \prob_A^*(B) \vert) \\
			&= \sum_{\substack{B \in S/R,\\ \Delta(s,B) \geq 0}} \prob_A^*(B) + \sum_{B \in B^-_s} (\prob(s)(B) - \prob(s)(B) + \prob_A^*(B)) \\
			&=  \sum_{B \in S/R} \prob_A^*(B) = 1.
		\end{align*}
		Furthermore, $R$ is a probabilistic bisimulation on $\mathcal{L}$, as for all $s,t \in A$ and $B \in \xfrac{S}{R}$ the construction yields $\prob_\mathcal{L}(s)(B) = \prob_A^*(B) = \prob_\mathcal{L}(t)(B)$.  
		
		Now fix $s \in A$. Then 
		\begin{align*}
			1&= \sum_{B \in S/R} \prob(s)(B) = \sum_{B \in S/R} (\prob(s)(B) + \prob_A^*(B) - \prob_A^*(B)) \\&= \sum_{B \in S/R} \prob_A^*(B) + \sum_{B \in S/R} \Delta(s, B) = 1 + \sum_{B \in S/R} \Delta(s, B),
		\end{align*}
		so $\sum_{B \in S/R} \Delta(s, B) = 0$
		and hence $\sum_{B \in B^+_s} \Delta(s, B) = - \sum_{B \in B^-_s} \Delta(s, B)$, or, equivalently,
		\begin{align}
			\sum_{B\in B^+_s} \Delta(s,B) = \sum_{B \in B^-_s} \vert \Delta(s,B) \vert \label{Proof: CharacterizationPertubedEpsBisimTransitiveEpsBisim - Eq1}.
		\end{align}
		Therefore
		\begin{align*}
			\Vert \prob(s) - \prob_\mathcal{L}(s) \Vert_1 &= \sum_{t \in S} \vert \prob(s)(t) - \prob_\mathcal{L}(s)(t) \vert \\
			&= \sum_{B \in S/R} \sum_{t \in B} \vert \prob(s)(t) - \prob_\mathcal{L}(s)(t) \vert \\
			&= \underbrace{\sum_{\substack{B \in S/R, \\ \Delta(s, B) = 0}} \sum_{t \in B} \vert \prob(s)(t) - \prob_\mathcal{L}(s)(t) \vert}_{=0} + \sum_{\substack{B \in S/R, \\ \Delta(s, B) \neq 0}} \sum_{t \in B} \vert \prob(s)(t) - \prob_\mathcal{L}(s)(t) \vert \\
			&= \underbrace{\sum_{B \in B^+_s} \sum_{t \in B} \vert \prob(s)(t) - \prob_\mathcal{L}(s)(t) \vert}_{= \sum_{B \in B^+_s} \Delta(s, B)}  + \underbrace{\sum_{B \in B^-_s} \sum_{t \in B} \vert \prob(s)(t) - \prob_\mathcal{L}(s)(t) \vert}_{= \sum_{B \in B^-_s} \vert \Delta(s, B) \vert \overset{(\ref{Proof: CharacterizationPertubedEpsBisimTransitiveEpsBisim - Eq1})}{=} \sum_{B \in B^+_s} \Delta(s,B)}\\
			&= 2 \cdot \sum_{B \in B^+_s} \Delta(s,B) \\
			&= 2 \cdot \sum_{B \in B^+_s} (\prob(s)(B) - \prob_A^*(B)) \\
			&= 2 \cdot \left \vert \prob(s)\left(\bigcup_{B \in B^+_s}B\right) - \prob_A^*\left(\bigcup_{B \in B^+_s}B\right) \right \vert \\
			&\leq 2 \cdot \frac{\varepsilon}{2} = \varepsilon,
		\end{align*}
		where the inequality follows from the properties of $\prob_A^*$ and the fact that $\bigcup_{B \in B^+_s} B$ is, as a union of $R$-equivalence classes, $R$-closed. All in all, this shows that $\mathcal{L}$ is indeed an $\varepsilon$-perturbation of $\mathcal{M} \oplus \mathcal{N}$ in which, as $R$ is a bisimulation on this model and since $(s_{init}^\mathcal{M}, s_{init}^\mathcal{N}) \in R$ by assumption, it holds that $s_{init}^\mathcal{M} \sim s_{init}^\mathcal{N}$. 
	\end{proof}
	
	\TauStarConditionEquivalence*
	\begin{proof}
		\textbf{(i) $\Rightarrow$ (ii).} Let $\mu^* \in Distr(X)$ be as in (i), and let $l \in \{1, \dots, k\}$. We define $B_{l}^+ = \{i \in X \mid \mu_l(i) \geq \mu^*(i)\}$ and $B_{l}^- = \{i \in X \mid \mu_l(i) < \mu^*(i)\}$. Then $X = B_{l}^+ \uplus B_l^{-}$ and
		\begin{align*}
			\Vert \mu_l - \mu^* \Vert_1 &= \sum_{i \in X} \vert \mu_l(i) - \mu^*(i) \vert \\
			&= \sum_{i \in B_{l}^+} \vert \mu_l(i) - \mu^*(i) \vert + \sum_{i \in B_l^-} \vert \mu_l(i) - \mu^*(i) \vert \\
			&= \sum_{i \in B_{l}^+} (\mu_l(i) - \mu^*(i)) + \sum_{i \in B_l^-} (\mu^*(i) - \mu_l(i)) \\
			&= \underbrace{\mu_l(B_{l}^+) - \mu^*(B_{l}^+)}_{\geq 0} + \underbrace{\mu^*(B_{l}^-) - \mu_l(B_l^-)}_{\geq 0} \\
			&= \vert \mu_l(B_{l}^+) - \mu^*(B_{l}^+) \vert + \vert \mu_l(B_{l}^-) - \mu^*(B_l^-) \vert \\
			&\leq \frac{\varepsilon}{2}+ \frac{\varepsilon}{2} = \varepsilon
		\end{align*}
		where the last inequality follows from the fact that $B_{l}^+, B_l^- \subseteq X$ and the properties of $\mu^*$. As we have chosen $l \in \{1, \dots, k\}$ arbitrarily it follows that $\mu = \mu^*$ is a suitable choice for a distribution that satisfies the conditions in (ii).
		
		\textbf{(ii) $\Rightarrow$ (i).} This is just a special case of \Cref{Lemma: Difference of Epsilon-Perturbation Probabilities} and can be proved analogously. 
		
		\textbf{(ii) $\Rightarrow$ (iii).} Let $\mu \in Distr(X)$ be a distribution as in (ii). We have to show that there is a non-negative solution for the linear constraint system of (iii). To this end, set $x_i = \mu(i)$ and $\delta_{l, i} = \vert \mu_l(i) - \mu(i) \vert$ for all $i \in X, l \in \{1, \dots, k\}$. Then $x_i, \delta_{l, i} \geq 0$ for all $i, l,$ and: 
		\begin{itemize}
			\item $\sum_{i \in X} x_i = \sum_{i \in X} \mu(i) = 1$ as $\mu \in Distr(X)$, so the first constraint is satisfied. 
			\item For all $i, l,$ we have $\delta_{l, i} = \vert \mu_l(i) - \mu(i) \vert$ which implies $- \delta_{l, i} \leq \mu_l(i) - \mu(i) \leq \delta_{l, i}$. As $x_i = \mu(i)$ for all i it follows that $- \delta_{l, i} \leq \mu_l(i) - x_i \leq \delta_{l, i}$, so the second and third constraint are satisfied. 
			\item $\sum_{i \in X} \delta_{l, i} = \sum_{i \in X} \vert \mu_l(i) - \mu(i) \vert = \Vert \mu_l - \mu \Vert_1 \leq \varepsilon$ for all $i, l,$ by the assumption on $\mu$, so the fourth constraint is satisfied.
		\end{itemize}
		All in all this shows that the chosen assignment of values to the variables $x_i$ and $\delta_{l, i}$ does indeed constitute a non-negative solution of the linear constraint system.
		
		\textbf{(iii) $\Rightarrow$ (ii).} Let $\{x_i \mid i \in X\} \cup \{\delta_{l, i} \mid l \in \{1, \dots, k\}, i \in X\}$ be a non-negative solution of the linear constraint system of (iii). Define $\mu: X \to [0,1]$ via $\mu(i) = x_i$ for all $i \in X$. Then $\mu$ is a distribution as $\sum_{i \in X} \mu(i) = \sum_{i \in X} x_i = 1$ by the first constraint, and for $l \in \{1, \dots, k\}$,
		\begin{align*}
			\Vert \mu_l - \mu \Vert_1 = \sum_{i \in X} \vert \mu_l(i) - \mu(i) \vert = \sum_{i \in X} \vert \mu_l(i) - x_i \vert \overset{\text{Constr. } 2 + 3} = \sum_{i \in X} \delta_{l, i} \overset{\text{Constr. } 4}{\leq} \varepsilon.
		\end{align*}
		Thus, $\mu$ is a distribution satisfying the requirements of (ii).
	\end{proof}
	
	\TheoremNpCompleteCommonEpsilonQuotient*
	\begin{proof}
		We start with the first claim, i.e., we start by showing that it is \textsf{NP}-complete to decide if $\mathcal{M} \simeq_\varepsilon \mathcal{N}$ for given $\mathcal{M}, \mathcal{N}$ and $\varepsilon > 0$.
		The proof ideas take inspiration from the proof of \cite[Thm. 1]{ABM}, which can be found in the full version \cite{ABMfull} of \cite{ABM}.
		
		\smallskip 
		\noindent
		\textbf{\textsf{NP}-membership: } 
		Let $\mathcal{M}$ and $\mathcal{N}$ be two LMCs, and set $S = S^\mathcal{M} \oplus S^\mathcal{N}$. We consider the following nondeterministic polynomial time algorithm: 
		\begin{enumerate}
			\item Nondeterministically guess a partition $X$ of $S$ such that 
			\begin{enumerate}
				\item $X$ only relates states with the same label 
				\item For each block $B \in X$ it holds that $B \cap S_1 \neq \emptyset \neq B \cap S_2$
				\item There is a block $B_{init} \in X$ such that $s_{init}^{\mathcal{M}_1}, s_{init}^{\mathcal{M}_2} \in B_{init}$
			\end{enumerate}
			\item For every $B \in X$, apply \Cref{Lemma: Tau-Star Condition Equivalence} to $X$ and $\{\prob(s) \mid s \in B\}$, where the distributions are lifted to $X$ in the natural way, to check if there is a $\prob_B \in Distr(X)$ that satisfies condition (ii) of the lemma. If not, go back to the first step. 
		\end{enumerate}
		
		This algorithm has a solution iff $\mathcal{M} \simeq_\varepsilon \mathcal{N}$. The direction from right to left is clear, so we focus on the implication from left to right. Let $X, \{\prob_B \mid B \in X\}$ be a solution of the algorithm, and let $s \in B$ for some $B \in X$. By Lemma 13 of the full version \cite{ABMfull} of \cite{ABM} we can compute, in polynomial time, a distribution $\widehat{\prob}(s) \in Distr(S)$ such that $\widehat{\prob}(s)(C) = \prob_B(C)$ for all $C \in X$ and $\Vert \prob(s) - \widehat{\prob}(s) \Vert_1 = \sum_{C \in X} \vert \prob(s)(C) - \prob_B(C) \vert$. Define new LMCs $\mathcal{M}', \mathcal{N}'$ that differ from $\mathcal{M}$ and $\mathcal{N}$ only in the transition distribution functions, which are defined as $\prob'(s)(t) = \widehat{\prob}(s)(t)$. Then $\mathcal{M}'$ and $\mathcal{N}'$ are $\varepsilon$-perturbations of $\mathcal{M}$ and $\mathcal{N}$, respectively, as for any $s \in S$ it holds that
		\begin{align*}
			\Vert \prob(s) - \prob'(s') \Vert_1 = \Vert \prob(s) - \widehat{\prob}(s) \Vert_1 = \sum_{C \in X} \vert \prob(s)(C) - \prob_B(C) \vert \leq \varepsilon
		\end{align*}
		Furthermore, for $B, C \in X$, and $s,t \in B$,
		\begin{align*}
			\prob'(s')(C) = \widehat{\prob}(s)(C) = \prob_B(C) = \widehat{\prob}(t)(C) = \prob'(t')(C), 
		\end{align*}
		so the equivalence induced by $X$ is a probabilistic bisimulation on $\mathcal{M}' \oplus \mathcal{N}'$. Because there is a dedicated block $B_{init} \in X$ with $s_{init}^{\mathcal{M}}, s_{init}^{\mathcal{N}} \in B_{init}$ we have $\mathcal{M}' \sim \mathcal{N}'$, and hence $\mathcal{M} \simeq_\varepsilon \mathcal{N}$. 
		
		\begin{figure}[tb]
			\centering
			\resizebox{!}{0.135\textheight}{
				\begin{tikzpicture}[->,>=stealth',shorten >=1pt,auto, semithick]
					\tikzstyle{every state} = [text = black]
					%lhs
					\node[state] (s) [] {$s$};
					\node (stemp) [below of = s, node distance = 1.2cm] {\dots};
					\node[state] (s1) [left of = stemp, node distance = 2cm] {$s_1$};
					\node[state] (sn) [right of = stemp, node distance = 2cm] {$s_n$};
					\node[state] (sa) [below of = s1, node distance = 1.5cm] {$s_a$};
					\node[state] (sb) [below of = sn, node distance = 1.5cm, fill = green] {$s_b$};
					
					\node (sinit) [left of = s, node distance = 1.2cm] {};
					
					\path
					(sinit) edge (s)
					(s) edge node [pos = 0.4, left, yshift = 0.1cm] {$\frac{p_1}{T}$} (s1)
					(s) edge node [pos = 0.4, right, yshift = 0.1cm] {$\frac{p_n}{T}$} (sn)
					(s1) edge node [left, pos = 0.4] {$\frac{1}{2}$} (sa)
					(s1) edge node [above, pos = 0.25] {$\frac{1}{2}$} (sb)
					(sn) edge node [right, pos = 0.4] {$\frac{1}{2}$} (sb)
					(sn) edge node[above, pos = 0.25] {$\frac{1}{2}$} (sa)
					(sa) edge [loop left] node {$1$} (sa)
					(sb) edge [loop right] node {$1$} (sb)
					;
					
					%rhs
					\node[state] (t) [right of = s, node distance = 8cm] {$t$};
					\node (ttemp) [below of = t, node distance = 1.2cm] {};
					\node[state] (t1) [left of = ttemp, node distance = 2cm] {$t_y$};
					\node[state] (t2) [right of = ttemp, node distance = 2cm] {$t_n$};
					\node[state] (ta) [below of = t1, node distance = 1.5cm] {$t_a$};
					\node[state] (tb) [below of = t2, node distance = 1.5cm, fill = green] {$t_b$};
					
					\node (tinit) [left of = t, node distance = 1.2cm] {};
					
					\path
					(tinit) edge (t)
					(t) edge node [pos = 0.4, left, yshift = 0.1cm] {$\frac{N}{T}$} (t1)
					(t) edge node [pos = 0.4, right, yshift = 0.1cm] {$1 - \frac{N}{T}$} (t2)
					(t1) edge node [left, pos = 0.4] {$\frac{1}{2} - \varepsilon$} (ta)
					(t1) edge node [above, pos = 0.25, xshift = 0.2cm] {$\frac{1}{2}+ \varepsilon$} (tb)
					(t2) edge node [right, pos = 0.4] {$\frac{1}{2}- \varepsilon$} (tb)
					(t2) edge node[above, pos = 0.25] {$\frac{1}{2}  + \varepsilon$} (ta)
					(ta) edge [loop left] node {$1$} (ta)
					(tb) edge [loop right] node {$1$} (tb)
					;
					
					%label
					\node[right of = s, node distance = 0.75cm] {$\{a\}$};
					\node[left of = s1, node distance = 0.75cm] {$\{a\}$};
					\node[right of = sn, node distance = 0.75cm] {$\{a\}$};
					\node[right of = sa, node distance = 0.75cm, yshift = -0.2cm] {$\{a\}$};
					\node[left of = sb, node distance = 0.75cm, yshift = -0.2cm] {$\{b\}$};
					
					\node[right of = t, node distance = 0.75cm] {$\{a\}$};
					\node[left of = t1, node distance = 0.75cm] {$\{a\}$};
					\node[right of = t2, node distance = 0.75cm] {$\{a\}$};
					\node[right of = ta, node distance = 0.75cm, yshift = -0.2cm] {$\{a\}$};
					\node[left of = tb, node distance = 0.75cm, yshift = -0.2cm] {$\{b\}$};
			\end{tikzpicture}}
			\caption{The LMCs $\mathcal{M}$ (left) and $\mathcal{N}$ (right) used in the \textsf{NP}-hardness proof of \Cref{Theorem: NP-completeness of common epsilon-quotient} (adapted from \cite[Fig. 3]{ABM}).}
			\label{Figure: LMCs in hardness of common Epsilon-Quotient}
		\end{figure}
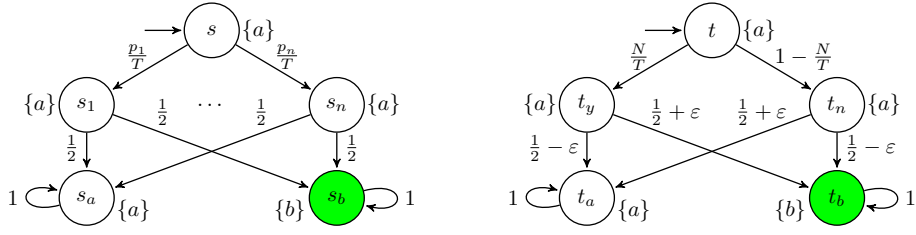
		
		\smallskip
		\noindent
		\textbf{\textsf{NP}-hardness: }
		We adapt the \textsf{NP}-hardness proof of \cite[Thm. 1]{ABM} (cf. the full version \cite{ABMfull} of \cite{ABM}) that uses a reduction from the well-known \textsc{SubsetSum}-problem.
		An instance $(P, N)$ of \textsc{SubsetSum} consists of a nonempty set $P = \{p_1, \dots, p_n\} \subseteq \mathbb{N}$ together with a value $N \in \mathbb{N}$, and poses the question if there is an index set $I \subseteq \{1,\dots,n\}$ such that the sum of all elements associated to the indices in $I$ equals $N$, i.e., such that $\sum_{i \in I} p_i = N$.
		
		For a \textsc{SubsetSum}-instance $(P,N)$, let $T = \sum_{i=1, \dots, n} p_i$ be the sum of all elements in $P$, and consider the LMCs $\mathcal{M}$ and $\mathcal{N}$ depicted in \Cref{Figure: LMCs in hardness of common Epsilon-Quotient}, which can be constructed from $(P,N)$ in polynomial time.
		
		In \cite{ABMfull}, it is shown that the \textsc{SubsetSum}-instance $(P, N)$ has a solution iff the LMC $\mathcal{M} \oplus \mathcal{N}$ has an $\varepsilon=\frac{1}{2T}$-quotient with $k=5$ states.
		We use this result to show hardness of the question whether $\mathcal{M} \simeq_\varepsilon \mathcal{N}$. 
		
		Let $\varepsilon=\frac{1}{2T}$.
		First observe that there is no $\varepsilon$-pertubation $\mathcal{N}'$ of $\mathcal{N}$ in which two distinct states are bisimilar. Now assume that $\mathcal{M} \simeq_\varepsilon \mathcal{N}$, and let $\mathcal{M}', \mathcal{N}'$ be the corresponding $\varepsilon$-perturbations with $\mathcal{M}' \sim \mathcal{N}'$.
		For $i \in \{ 1, \dots, n \}$, the perturbed states satisfy either $s_i' \sim t_y'$ or $s_i' \sim t_n'$.
		Due to $s' \sim t'$, it can be shown that $I = \{ i \in \{1, \dots, n \} \mid s_i' \sim t_y'\}$ is a solution to $(P,N)$. 
		On the other hand, any index set $I \subseteq \{1, \dots, n \}$ induces $\varepsilon$-perturbations such that $s_i' \sim t_y'$ if $i \in I$  and $s_i' \sim t_n'$ if $i \notin I$.
		If $I$ is a solution to $(P,N)$, this yields $s' \sim t'$ and hence that $\mathcal{M}' \sim \mathcal{N}'$.
		For more elaborative calculations, we refer to the proof of \cite[Thm. 1]{ABM} that can be found in the full version \cite{ABMfull} of \cite{ABM}. 
		
		\smallskip 
		
		Now we turn our attention to the second claim of the theorem, i.e., to the \textsf{NP}-completeness of deciding $\mathcal{M} \sim_\varepsilon^* \mathcal{N}$. As, by definition, this holds iff $s_{init}^\mathcal{M} \sim_\varepsilon^* s_{init}^\mathcal{N}$, we instead show \textsf{NP}-completeness of the problem to decide if, for given states $s,t$ of a finite LMC $\mathcal{M}$ and a given $\varepsilon \in (0,1]$, $s \sim_\varepsilon^* t$. 
		
		\smallskip 
		\noindent
		\textbf{\textsf{NP}-membership: } 
		The following nondeterministic, polynomial time guess-and-check algorithm computes a transitive $\varepsilon$-bisimulation $R$ on $\mathcal{M}$ with $(s,t) \in R$: 
		\begin{enumerate}
			\item Nondeterministically guess a partition $X$ of $S$ that only relates states with the same label and such that there is a block $B \in X$ with $s,t \in B$. 
			\item Check if the induced equivalence $R$ with $X = \xfrac{S}{R}$ is an $\varepsilon$-bisimulation on $\mathcal{M}$ by deciding for all $(u,v) \in R$ if $\prob(u)(A(u,v)) \leq \prob(v)(A(u,v)) + \varepsilon$, where $A(u,v)$ is the union of all $R$-equivalence classes $B \in X$ with $\prob(u)(B) > \prob(v)(B)$.
		\end{enumerate}
		
		Obviously, if $s \sim_\varepsilon^* t$, the algorithm finds a solution as the partition induced by $\sim_\varepsilon^*$ satisfies all the requirements for $X$ and $R$, respectively. On the other hand, if $X$ is a  solution of the algorithm, then the equivalence $R$ induced by $X$, i.e., such that $X = \xfrac{S}{R}$, is an $\varepsilon$-bisimulation as the condition on the probabilities posed in the second part of the algorithm ensures that, for all $(p,q) \in R$, the maximal difference in transition probabilities to subsets of $S$ is bounded from above by $\varepsilon$. As, by the first part of the algorithm $(s,t) \in X$, it follows that $s \sim_\varepsilon^* t$.
		
		\smallskip 
		\noindent
		\textbf{\textsf{NP}-hardness: } The hardness follows similar to that of the first claim, as in $\mathcal{M}$ and $\mathcal{N}$ as in \Cref{Figure: LMCs in hardness of common Epsilon-Quotient} it holds that $s \sim_\varepsilon^* t$ iff $\mathcal{M} \simeq_\varepsilon \mathcal{N}$. This is due to the fact that all states $s_1, \dots, s_n$ in $\mathcal{M}$ have the same next-state distribution, and so the only way for them to become bisimilar to either $t_y$ or $t_n$ is to be perturbed uniformly, making the induced $\varepsilon$-bisimulation transitive. 
	\end{proof}
	
	\PropPolyTime*
	\begin{proof}
		Let $R$ be an equivalence. As described in the \textsf{NP}-membership proof of (ii) in \Cref{Theorem: NP-completeness of common epsilon-quotient}, we can check in polynomial time if $R$ is a (transitive) $\varepsilon$-bisimulation by deciding, for all $(s,t) \in R$, if $\prob(s)(A(s,t)) \leq \prob(t)(A(s,t)) + \varepsilon$, where $A(s,t)$ is the union of all $R$-equivalence classes $C$ with $\prob(s)(C) > \prob(t)(C)$. This shows (i). Furthermore, we know from \Cref{Lemma: Tau-Star Condition Equivalence} that, given $R$, we can decide in polynomial time if, for each $C \in \xfrac{S}{R}$ the centroid property (see \Cref{centroid condition}) holds, i.e., if $R$ satisfies (iii) of \Cref{Theorem: Characterisation Perturbed Epsilon Bisimilar and Transitive Epsilon Bisimulation}. As it is decidable in polynomial time if $R$ is a transitive $\varepsilon$-bisimulation, deciding if $R$ is a $\varepsilon$-perturbed bisimulation is also possible in polynomial time. 
	\end{proof}
	
	\section{Proofs of Section 5}\label{Appendix: Proofs of Section 5}
	
	\LemAdditivityWeakEpsilonBisim*
	\begin{proof}
		Let $s \approx_\varepsilon^w t$ and $t \approx_\delta^w u$. Define the relation 
		\begin{align*}
			R= \{(p,q) \mid \exists \, a \in S \text{ with} p \approx_\varepsilon^w a \text{ and} a \approx_\delta^w q\}.
		\end{align*}
		Then $(s,u) \in R$, so the claim follows if $R$ is a weak $(\varepsilon + \delta)$-bisimulation. To this end, we first observe that for all $(p,q) \in R$ there is a $b \in 2^{AP}$ with $b = l(p) = l(q)$ as both $\approx_\varepsilon^w$ and $\approx_\delta^w$ only relate states with the same label. Furthermore, for any $A \subseteq S$ it holds that 
		\begin{align*}
			\mathrm{Pr}_p(b \Until A) \overset{p \approx_\varepsilon^w a}{\leq} \mathrm{Pr}_a(b \Until {\approx_\varepsilon^w}(A)) + \varepsilon \overset{a \approx_\delta^w q}{\leq} \mathrm{Pr}_q(b \Until {\approx_\delta^w}({\approx_\varepsilon^w}(A))) + \varepsilon + \delta
		\end{align*}
		and, as ${\approx_\delta^w}({\approx_\varepsilon^w}(A)) \subseteq R(A)$, this implies 
		\begin{align*}
			\mathrm{Pr}_p(b \Until A) \leq \mathrm{Pr}_q(b \Until R(A)) + \varepsilon + \delta.
		\end{align*}
		Therefore, $R$ is indeed a weak $(\varepsilon + \delta)$-bisimulation, so $s \approx_{\varepsilon + \delta}^w u$.
	\end{proof}
		
	\ProbBranchingDoesNotImplyWeak*
	\begin{proof}
		For the first claim, consider the LMC $\mathcal{M}$ depicted on the left of \Cref{Figure: Combined LMCs for proof of Proposition: Branching and Weak are incomparable}. The equivalence $R$ given via the equivalence classes $R = \{\{s, t\}, \{s_1, t_1\}, \{x\}, \{y\}\}$ is the largest branching $\varepsilon$-bisimulation on $\mathcal{M}$. In particular, $s \not \approx^b_\varepsilon s_1$ because otherwise we would have
		\begin{align*}
			\vert \mathrm{Pr}_{s}([s]_R \Until \{y\}) - \mathrm{Pr}_{s_1}([s_1]_R \Until \{y\}) \vert = \left \vert \frac{3}{8} - \frac{3}{4} \right \vert = \frac{3}{8} > \varepsilon,
		\end{align*}
		and $t \not \approx^b_\varepsilon t_1$ as, otherwise,
		\begin{align*}
			\vert \mathrm{Pr}_{t}([t]_R \Until \{x\}) - \mathrm{Pr}_{t_1}([t_1]_R \Until \{x\}) \vert = \left \vert \frac{5}{8} + \frac{5}{4} \varepsilon - \varepsilon^2 - \left (\frac{1}{4} + \varepsilon \right) \right \vert = \left \vert \frac{3}{8} + \frac{1}{4}  \varepsilon - \varepsilon^2 \right \vert > \varepsilon,
		\end{align*}
		where the inequality holds for all $\varepsilon < \frac{1}{8} \cdot (\sqrt{33} - 3) \approx 0.34307$, so in particular for all $\varepsilon < \frac{1}{4}$. 
		
		As $(s,t) \in R$, $s \approx^b_\varepsilon t$ in the LMC. However, $s \not \approx^w_\varepsilon t$ since 
		\begin{align*}
			\mathrm{Pr}_t(L(t) \Until \{x\}) = \frac{5}{8} + \frac{5}{4} \varepsilon - \varepsilon^2 > \frac{5}{8} + \varepsilon = \mathrm{Pr}_s(L(s) \Until \{x\})
		\end{align*}
		for all $0 < \varepsilon < \frac{1}{4}$. 
		Hence, $s \approx^b_\varepsilon t$ does, in general, not imply $s \approx^w_\varepsilon t$. 
		
		Regarding the second claim, consider the LMC on the right of \Cref{Figure: Combined LMCs for proof of Proposition: Branching and Weak are incomparable}. There, $\approx_\varepsilon^w$ is the symmetric and reflexive closure of $\{(s,t), (s,u), (s,v), (t,v), (t,w), (u,v), (v,w)\}$. In particular, $s \approx_\varepsilon^w t$ and $u \not \approx_\varepsilon^w w$. 
		
		However, $(s,t) \notin R$ for any branching $\varepsilon$-bisimulation $R$ on $\mathcal{M}$. To see this, first observe that, due to the state labeling, $x$ and $y$ cannot be related to any other state. It follows that $(u,w) \notin R$ for any such $R$ as, e.g., $\vert \mathrm{Pr}_u([u]_R \Until [x]_R) - \mathrm{Pr}_w([w]_R \Until [x]_R) \vert = 2 \varepsilon > \varepsilon$. 
		
		Now, assume that $(s,u) \in R$ for a branching $\varepsilon$-bisimulation $R$. Then 
		\begin{align*}
			\vert \mathrm{Pr}_u([u]_R \Until [x]_R) - \mathrm{Pr}_s([s]_R \Until [x]_R) \vert = \begin{cases}
				\frac{1}{4} + \frac{\varepsilon}{2}, &\text{if } (s, v) \notin R \\
				\frac{\varepsilon}{2}, & \text{if } (s,v) \in R
			\end{cases}
		\end{align*}
		and since in the first case we have $\frac{1}{4} + \frac{\varepsilon}{2} > \varepsilon$ because $0 < \varepsilon < \frac{1}{4}$, $(s,u) \in R$ only if $(s,v) \in R$. But then 
		\begin{align*}
			\vert \mathrm{Pr}_t([t]_R \Until [x]_R) - \mathrm{Pr}_s([s]_R \Until [x]_R) \vert = \begin{cases}
				\frac{1}{2} + \frac{\varepsilon}{2}, &\text{if } (t, v), (t, w) \notin R \\
				\frac{1}{4} + \frac{\varepsilon}{2}, &\text{if } (t,v) \in R, (t, w) \notin R \\
				\frac{1}{4} + \varepsilon, &\text{if } (t,v) \notin R, (t, w) \in R \\
				\varepsilon, & \text{if } (t,v), (t,w) \in R
			\end{cases}
		\end{align*}
		and as in the first three cases the differences are $> \varepsilon$, $(s,t) \in R$ requires $(t,v), (t,w) \in R$. However, in this case, the transitivity of $R$ implies $(u,w) \in R$, which is a contradiction. 
		
		Otherwise, if $(s,u) \notin R$, then $[s]_R \neq [u]_R$. Thus, $(s,t) \in R$ requires in particular that $\vert \mathrm{Pr}_s([s]_R \Until [u]_R) - \mathrm{Pr}_t([t]_R \Until [u]_R) \vert < \varepsilon$. If $(u,v) \notin R$ we have $\mathrm{Pr}_t([t]_R \Until [u]_R) = 0$, so the above difference becomes $\frac{1}{2} > \varepsilon$. Otherwise, if $(u,v) \in R$ then $\vert \mathrm{Pr}_s([s]_R \Until [u]_R) - \mathrm{Pr}_t([t]_R \Until [u]_R) \vert = \vert 1 - \mathrm{Pr}_t([t]_R \Until [u]_R) \vert$ and, since we know that $(u,w) \notin R$, we have  $\mathrm{Pr}_t([t]_R \Until [u]_R) \leq \frac{1}{2}$, so $\vert \mathrm{Pr}_s([s]_R \Until [u]_R) - \mathrm{Pr}_t([t]_R \Until [u]_R) \vert \geq \frac{1}{2} > \varepsilon$. Hence, $(s,t) \notin R$. 
		
		All in all, this shows that there can be no branching $\varepsilon$-bisimulation $R$ on $\mathcal{M}$ that contains $(s,t)$, and therefore it holds that $s \approx_\varepsilon^w t$ while $s \not \approx_\varepsilon^b t$. 	
	\end{proof}
	
	\begin{lemma}\label{Lemma: No Stuter Steps makes notions coincide}
		Assume that $\prob(s)(L(s)) = 0$ for all $s \in S$. Then ${\sim_\varepsilon^*} = {\approx_\varepsilon^b}$ and ${\sim_\varepsilon} = {\approx_\varepsilon^w}$.
	\end{lemma}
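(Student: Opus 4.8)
The plan is to show that, under the hypothesis $\prob(s)(L(s)) = 0$ for all $s$, every stuttering phase has length at most one step, so that the ``until'' probabilities appearing in the branching and weak conditions collapse to ordinary one-step probabilities; the two claimed identities then follow by comparing definitions. Concretely, I would first establish the computational core: for an equivalence $R$ relating only equally labeled states and any $R$-closed $A$, since $[s]_R \subseteq L(s)$ and hence $\prob(s)([s]_R) = 0$, the only paths witnessing $[s]_R \Until A$ are the trivial one (when $s \in A$) and the one-step ones (when $s \notin A$), because with probability $1$ the first transition already leaves $[s]_R$. This yields
\[
\mathrm{Pr}_s([s]_R \Until A) = \begin{cases} 1 & \text{if } s \in A,\\ \prob(s)(A) & \text{if } s \notin A, \end{cases}
\]
and, by the same reasoning with $L(s)$ in place of $[s]_R$, $\mathrm{Pr}_s(L(s) \Until A) = 1$ if $s \in A$ and $\prob(s)(A)$ otherwise, for arbitrary $A \subseteq S$.

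For the branching identity I would argue that the two classes of relations coincide. Fix an equivalence $R$ relating only equally labeled states and $(s,t) \in R$, so $[s]_R = [t]_R$ and $s \in A \iff t \in A$ for $R$-closed $A$. Using the displayed formula, condition (ii) of \Cref{Definition: Branching Epsilon Bisimulation} is trivially satisfied when $s,t \in A$ (both probabilities equal $1$) and reduces to $\lvert \prob(s)(A) - \prob(t)(A)\rvert \leq \varepsilon$ when $s,t \notin A$. Since for an $R$-closed $A$ with $s \in A$ the complement $S \setminus A$ is $R$-closed with $s \notin S \setminus A$ and the two differences agree, the $s \notin A$ instances already imply the $s \in A$ instances. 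Hence $R$ is a branching $\varepsilon$-bisimulation iff it is an $\varepsilon$-APB, which by \Cref{Lemma: Epsilon-Bisim = Epsilon-APB in equivalence case} holds iff $R$ is a (transitive) $\varepsilon$-bisimulation. As $\approx_\varepsilon^b$ and $\sim_\varepsilon^*$ are the unions of the respective relation classes, this gives $\approx_\varepsilon^b = \sim_\varepsilon^*$.

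For the weak identity I would again show the relation classes coincide, working from \Cref{Definition: Epsilon-Bisimulation} and the definition of weak $\varepsilon$-bisimulation. The direction ``$\varepsilon$-bisimulation $\Rightarrow$ weak $\varepsilon$-bisimulation'' is a short case distinction on whether $s \in A$ and whether $t \in R(A)$, each case reducing either to the $\varepsilon$-bisimulation inequality $\prob(s)(A) \leq \prob(t)(R(A)) + \varepsilon$ or to a trivial bound by $1$. For the converse I would first reduce the $\varepsilon$-bisimulation condition to sets $A$ with $A \cap L(s) = \emptyset$: writing $A_0 = A \setminus L(s)$ we have $\prob(s)(A) = \prob(s)(A_0)$ and $R(A_0) \subseteq R(A)$, so verifying the inequality for $A_0$ suffices. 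For such $A_0$, since $R$ relates only equally labeled states, any $a$ with $(a,t) \in R$ would force $a \in L(s)$, whence $t \notin R(A_0)$; thus both ``until'' probabilities collapse to one-step probabilities and the weak inequality for $A_0$ is literally $\prob(s)(A_0) \leq \prob(t)(R(A_0)) + \varepsilon$. This yields ``weak $\Rightarrow$ $\varepsilon$-bisimulation'', and therefore $\approx_\varepsilon^w = \sim_\varepsilon$.

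The main obstacle I anticipate is making the ``collapse'' of the until probabilities rigorous: one must argue carefully that every path of length at least two staying inside $[s]_R$ (resp. $L(s)$) contributes probability $0$, which is exactly where $\prob(s)(L(s)) = 0$ enters, and one must not overlook the trivial-satisfaction case $s \in A$. The only other delicate point is the reduction to $A \cap L(s) = \emptyset$ in the weak converse, together with the observation that this disjointness automatically forces $t \notin R(A_0)$; once these bookkeeping steps are in place, both identities drop out of the definitions.
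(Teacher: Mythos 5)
Your proof is correct and follows essentially the same route as the paper's: collapse the until-probabilities to one-step probabilities using $\prob(s)(L(s))=0$, identify branching $\varepsilon$-bisimulations with $\varepsilon$-APBs and invoke \Cref{Lemma: Epsilon-Bisim = Epsilon-APB in equivalence case}, and match the weak condition with the plain $\varepsilon$-bisimulation condition set by set. If anything, your write-up is more careful than the paper's, which asserts $\mathrm{Pr}_s(L(s) \Until A) = \prob(s)(A)$ for \emph{every} $A \subseteq S$ and thereby glosses over the trivially satisfied case $s \in A$, which your complementation trick (for the branching part) and the reduction to $A_0 = A \setminus L(s)$ with $t \notin R(A_0)$ (for the weak part) handle explicitly.
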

	\begin{proof}
		We start by showing the first claim, i.e. that, if there are no stutter steps ${\sim_\varepsilon^*} = {\approx_\varepsilon^b}$.
		
		Let $R$ be a branching $\varepsilon$-bisimulation, and let $A \subseteq S$ be $R$-closed. Then, for all $(s,t) \in R$, we have $l(s) = l(t)$ and further 
		\begin{align}
			\prob(s)(A) = \mathrm{Pr}_s([s]_R \Until A) \qquad \text{ and } \qquad \prob(t)(A) = \mathrm{Pr}_t([t]_R \Until A) \label{eq-1}
		\end{align}
		as no successor of either $s$ or $t$ is in $[s]_R = [t]_R$. But then it follows that 
		\begin{align*}
			\vert \prob(s)(A) - \prob(t)(A) \vert = \vert \mathrm{Pr}_s([s]_R \Until A) - \mathrm{Pr}_t([t]_R \Until A) \vert \overset{(s,t) \in R}{\leq} \varepsilon.
		\end{align*}
		Hence, $R$ is an equivalence and an $\varepsilon$-APB, and thus a transitive $\varepsilon$-bisimulation by \Cref{Lemma: Epsilon-Bisim = Epsilon-APB in equivalence case}.
		
		Now let $R$ be a transitive $\varepsilon$-bisimulation, and let $(s,t) \in R$. Then $R$ is also an $\varepsilon$-APB by \Cref{Lemma: Epsilon-Bisim = Epsilon-APB in equivalence case}, and it holds for all $R$-closed sets $A \subseteq S$ that
		\begin{align*}
			\vert \mathrm{Pr}_s([s]_R \Until A) - \mathrm{Pr}_t([t]_R \Until A) \vert = \vert \prob(s)(A) - \prob(t)(A) \vert \leq \varepsilon, 
		\end{align*}
		where the equality follows from \Cref{eq-1}. Therefore, $R$ is a branching $\varepsilon$-bisimulation.
		
		Regarding the second claim, the result that ${\sim_\varepsilon} = {\approx_\varepsilon^w}$ follows directly from the fact that, if $\prob(s)(L(s)) = 0$ for all $s \in S$, we have $\mathrm{Pr}_s(L(s) \Until A) = \prob(s)(A)$ for every $A \subseteq S$, so it follows that for any relation $R$
		\begin{align*}
			\prob(s)(A) &\leq \prob(t)(R(A)) + \varepsilon \qquad \text{ iff } \qquad \mathrm{Pr}_s(L(s) \Until A) \leq \mathrm{Pr}_t(L(t) \Until R(A)) + \varepsilon. 
		\end{align*}
		Note that we cannot replace ${\sim_\varepsilon}$ by $\sim_\varepsilon^*$. To see this consider, for example, a modification of the LMCs of \Cref{Figure: Example that simeq can differentiate bisimilar LMC}, where we replace the self-loops at $v, w, [v]$ and $[w]$ with the new transition probabilities $\prob(v)(w) = \prob(w)(v) = \prob([v])([w]) = \prob([w])([v]) = 1$. Then, in the direct sum $\mathcal{M}$ of these LMCs, we have $\prob(q)(L(q)) = 0$ for all states $q$. In particular, the largest weak $\varepsilon$-bisimulation $\approx_\varepsilon^w$ on $\mathcal{M}$ contains the pair $(t, [s])$, but we have already argued in \Cref{Section: Epsilon-Quotients} that there can be no transitive $\varepsilon$-bisimulation that relates these two states. 
	\end{proof}
	
	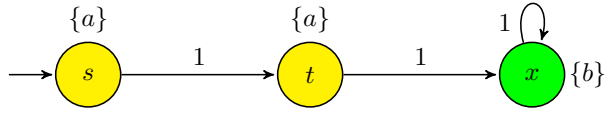
\begin{figure}[t]
		\centering
		\resizebox{!}{0.07\textheight}{\begin{tikzpicture}[->,>=stealth',shorten >=1pt,auto, semithick]
				\tikzstyle{every state} = [text = black]
				
				%lhs
				\node[state] (s0) [fill = yellow] {$s$}; 
				\node[state] (s1) [right of = s0, node distance = 3cm, fill = yellow] {$t$};
				\node[state] (x) [right of = s1, fill = green, node distance = 3cm] {$x$};
				\node(sinit) [left of = s0, node distance = 1.2cm] {};
				
				\path 
				(sinit) edge (s)
				(s0) edge node [above] {$1$} (s1)
				(s1) edge node [above] {$1$} (x)
				(x) edge [loop above] node [left, pos = 0.15] {$1$} (x)
				;
				
				%label 
				\node [above of = s0, node distance = 0.75cm] {$\{a\}$}; 
				\node [above of = s1, node distance = 0.75cm] {$\{a\}$};
				\node [right of = x, node distance = 0.75cm] {$\{b\}$};
		\end{tikzpicture}}
		\caption{LMC in which $s \approx^b t$ and $s \approx^w t$, but neither $s \sim_\varepsilon t$ nor $s \equiv_\varepsilon t$ for any $\varepsilon \in [0,1)$.}
		\label{Figure: Branching Bisimilarity does not imply Epsilon-Bisimilarity or Epsilon-APB}
	\end{figure}
	
	\WeakAndBanchingIncompWithEpsAPBAndBisim*
	\begin{proof}
		We first deal with (ii). Consider the LMC depicted in \Cref{Figure: Branching Bisimilarity does not imply Epsilon-Bisimilarity or Epsilon-APB}, and let $\varepsilon \in [0,1)$. The equivalence $R$ induced by the classes $C_1 = \{s, t\}$ and $C_2 = \{x\}$ is both a branching bisimulation and a weak bisimulation, as $x$ has a unique label and every state in $C_1$ reaches $C_2$ via a $C_1$-path, resp. an $\{a\}$-labeled path, with probability $1$. Hence, $s \approx^b t$ and $s \approx^w t$, so in particular also $s \approx^b_\varepsilon t$ and $s \approx^w_\varepsilon t$ for any $\varepsilon$. 
		However, $\vert \prob(s)(\{x\}) - \prob(t)(\{x\}) \vert = 1 > \varepsilon$ and $\prob(s)(\{t\}) = 1 > \varepsilon = \prob(t)(R(\{t\}) + \varepsilon$ for any $R$ that only relates states with the same label, so neither $s \equiv_\varepsilon t$ nor $s \sim_\varepsilon t$.
		
		To show (i) we do a case distinction on $\approx_\varepsilon$. 
		If ${\approx_\varepsilon} = {\approx^b_\varepsilon}$, consider the LMC depicted on the left-hand side of \Cref{Figure: LMCs for showing that eps-bisim does not imply branching or weak} where $\varepsilon_1, \varepsilon_2 \in (0,1)$, $\varepsilon_1 \neq \varepsilon_2$, $\varepsilon_1 + \varepsilon_2 < 1$, and $\varepsilon = \vert \varepsilon_1 - \varepsilon_2 \vert$. 
		In this LMC, both $s \sim_\varepsilon t$ and $s \equiv_\varepsilon t$. However, $\mathrm{Pr}_s([s]_R \Until \{x_1\}) = \frac{\varepsilon_1}{\varepsilon_1 + \varepsilon_2} = \mathrm{Pr}_t([t]_R\Until \{x_2\})$ and $\mathrm{Pr}_s([s]_R \Until \{x_2\}) = \frac{\varepsilon_2}{\varepsilon_1 + \varepsilon_2} = \mathrm{Pr}_t([t]_R \Until \{x_1\})$ for any equivalence $R$ that only relates states with the same label. Hence,
		\begin{align*}
			\vert \mathrm{Pr}_s([s]_R \Until \{x_1\}) - \mathrm{Pr}_t([t]_R \Until \{x_1\}) \vert = \frac{\vert \varepsilon_1 - \varepsilon_2\vert }{\varepsilon_1 + \varepsilon_2} \overset{\varepsilon_1 + \varepsilon_2 < 1}{>} \vert \varepsilon_1 - \varepsilon_2 \vert = \varepsilon.
		\end{align*}
		Thus, there can be no branching bisimulation $R$ with $(s,t) \in R$, i.e., $s \not \approx_\varepsilon^b t$.
		
		On the other hand, if ${\approx_\varepsilon} = {\approx^w_\varepsilon}$, consider the LMC on the right-hand side of \Cref{Figure: LMCs for showing that eps-bisim does not imply branching or weak} with $\varepsilon \in (0,1)$. In this LMC, $\sim_\varepsilon$ is given as the symmetric and reflexive closure of $\{(s, t), (s_1, t_1)\}$. As $\sim_\varepsilon$ is an equivalence, it is also an $\varepsilon$-APB by \Cref{Lemma: Epsilon-Bisim = Epsilon-APB in equivalence case}, which yields $s \sim_\varepsilon t$ and $s \equiv_\varepsilon t$. Furthermore, $\mathrm{Pr}_t(L(t) \Until \{x\}) = 1$ while $\mathrm{Pr}_s(L(t) \Until \{x\}) = \frac{4(1-\varepsilon)^2}{(2-\varepsilon)^2}$. But then the inequality 
		\begin{align*}
			\mathrm{Pr}_t(L(t) \Until \{x\}) = 1 > \mathrm{Pr}_s(L(s) \Until \{x\}) + \varepsilon = \frac{4(1-\varepsilon)^2}{(2-\varepsilon)^2} + \varepsilon
		\end{align*}
		holds for all $\varepsilon \in (0,1)$, so $s \not \approx_\varepsilon^w t$.
		
		As $\sim_\varepsilon$ and $\equiv_\varepsilon$ above are transitive, the claim follows analogously for $\sim_\varepsilon^*$ and $\equiv_\varepsilon^*$.
	\end{proof}

	\EndlessStutteringBranchingBisim*
	\begin{proof}
		We start with the first claim. Let $R$ be a branching $\varepsilon$-bisimulation and $(s,t) \in R$. We write $C$ for $[s]_R = [t]_R$ and get, since $S \setminus C$ is $R$-closed,
		\begin{align*}
			\vert \mathrm{Pr}_s(\Box C) - \mathrm{Pr}_t(\Box C) \vert &= \vert 1 - \mathrm{Pr}_s(C \Until (S \setminus C)) - (1 - \mathrm{Pr}_t(C \Until (S \setminus C))) \vert \\
			&= \vert \mathrm{Pr}_s(C \Until (S \setminus C)) - \mathrm{Pr}_t(C \Until (S \setminus C)) \vert \overset{(s,t) \in R}{\leq} \varepsilon. 
		\end{align*}
		
		Regarding the second claim assume that, for $C \in \xfrac{S}{R}$, there is an $s \in C$ with $\mathrm{Pr}_s(\Box C) > 0$. Then $\mathcal{M}$ must contain a non-empty bottom strongly connected component $B \subseteq S$ such that $B \subseteq C$. In particular, for any $t \in B$ it holds that $(s,t) \in R$ and $\mathrm{Pr}_t(\Box C) = 1$. Hence, $\mathrm{Pr}_t(C \Until (S \setminus C)) = 0$ and it follows that, since again $S \setminus C$ is $R$-closed, 
		\begin{align*}
			\vert \mathrm{Pr}_s(C \Until (S \setminus C)) - \mathrm{Pr}_t(C \Until (S \setminus C)) \vert = \vert \mathrm{Pr}_s(C \Until (S \setminus C)) \vert \overset{(s,t) \in R}{\leq} \varepsilon. 
		\end{align*}
		Thus, the probability to reach any state outside of $C$ from $s$ is at most $\varepsilon$, and we get 
		\begin{align*}
			\mathrm{Pr}_s(\Box C) = 1 - \mathrm{Pr}_s(C \Until (S\setminus C)) \geq 1 - \varepsilon.
		\end{align*}
		Further, $B \neq \emptyset$ implies that there can be no $s \in C$ with $\mathrm{Pr}_s(\Box C) = 0$, as otherwise $1 = \mathrm{Pr}_s(C \Until (S \setminus C)) \leq \varepsilon$, which contradicts $\varepsilon \in [0,1)$. So, $\mathrm{Pr}_s(\Box C) \geq 1 - \varepsilon$ for all $s \in C$.
		
		To show the third claim, let $R$ be a weak $\varepsilon$-bisimulation on $\mathcal{M}$ with $(s,t) \in R$, set $b = l(s) = l(t)$ and let $A  = S \setminus L(b)$. Then $A$ is $R$-closed, and $R(A) = A$. Therefore,
		\begin{align*}
			\vert \mathrm{Pr}_s(\Box b) - \mathrm{Pr}_t (\Box b) \vert = \vert 1 - \mathrm{Pr}_s(b \Until A) - (1 - \mathrm{Pr}_t(b \Until A)) \vert = \vert \mathrm{Pr}_s(b \Until A) - \mathrm{Pr}_t (b \Until A) \vert \leq \varepsilon,
		\end{align*}
		where the inequality follows form the fact that $(s,t) \in R$, and hence 
		\begin{align*}
			\mathrm{Pr}_s(b \Until A) \leq \mathrm{Pr}_t(b \Until R(A)) + \varepsilon = \mathrm{Pr}_t(b \Until A) + \varepsilon
		\end{align*}
		as well as $\mathrm{Pr}_t(b \Until A) \leq \mathrm{Pr}_s(b \Until A) + \varepsilon$ by a similar argument.
	\end{proof}
	
	\EpsilonBisimFromBranchingBisimOnTranfsformation*
	\begin{proof}
		We start with the direction from left to right. Let $R$ be a branching $\varepsilon$-bisimulation on $\mathcal{M}$. By construction, $\xfrac{S_R}{R^b} = \{C \cup \{s_C\} \mid C \in div_R\} \cup \{C \mid C \in \xfrac{S}{R} \setminus div_R\}$, so the only difference between $\xfrac{S}{R}$ and $\xfrac{S_R}{R^b}$ is the addition of the ``divergence states'' $s_C$ to all equivalence classes in $div_R$.
		
		As $R^b$ is an equivalence that only relates states with the same label, the only thing left to show is that $R^b$ satisfies the condition on the probabilities. Following \Cref{Lemma: Epsilon-Bisim = Epsilon-APB in equivalence case}, however, it is sufficient to show that $R^*$ is an $\varepsilon$-APB on $S_R$, i.e., that for all $(s,t) \in R^b$ and all $R^b$-closed sets $A' \subseteq S_R$ it holds that 
		\begin{align*}
			\vert \prob_R(s)(A') - \prob_R(t)(A') \vert \leq \varepsilon.
		\end{align*}
		
		Let $C' \in \xfrac{S_R}{R^b}$ and define $C = C' \cap S$. Then $C = C'$ if $C \notin div_R$ and $C = C' \setminus \{s_C\}$ otherwise. Furthermore, let $(s,t) \in R^b$ with $s,t \in C'$, let $A'$ as above and define $A = A' \cap S$. Then, as $A'$ is $R^b$-closed, it is clear that $A$ must be $R$-closed.
		
		If $s_C \notin C'$, i.e., if $C \notin div_R$, then $\mathrm{Pr}_u(\Box C) = 0$ for all $u \in C$. Thus, neither $s$ nor $t$ can have a transition in $\mathcal{M}_R$ to any state in $A' \setminus A$. But then 
		\begin{align}
			\vert \prob_R(s)(A') - \prob_R(t)(A') \vert = \vert \prob_R(s)(A) - \prob_R(t)(A) \vert = \vert \mathrm{Pr}_s(C \Until A) - \mathrm{Pr}_t(C \Until A) \vert \leq \varepsilon \label{Proof: EpsilonBisimFromBranchingBisimOnTranfsformation - Eq1}
		\end{align}
		where the inequality follows from $(s,t) \in R$ and $A$ being $R$-closed. 
		
		Now assume that $C \in div_R$, i.e., that $s_C \in C'$. If both $s = t = s_C$ then the claim is clear. Thus, we can w.l.o.g. assume that $s \in C$ and $t \in C'$ (the case $s \in C'$ and $t \in C$ works analogously). We do a case distinction on the inclusion of $C'$ in $A'$. 
		
		\noindent
		\textbf{Case 1: $C'$ is not contained in $A'$.} Note that, since $A'$ is $R^b$-closed, the assumption implies that  $u \notin A'$ for all $u \in C'$, and in particular that $s_C \notin A'$.
		
		Now, if both $s,t \in S$, then again $\prob_R(s)(A') = \prob_R(s)(A)$ and $\prob_R(t)(A') = \prob_R(t)(A)$, so the claim follows as in \Cref{Proof: EpsilonBisimFromBranchingBisimOnTranfsformation - Eq1}. Otherwise, $s \in S$ and $t = s_C$. But then $\mathrm{Pr}_s(\Box C) \geq 1-\varepsilon$ by \Cref{Lemma: Properties Endless Stuttering} and $\prob_R(t)(s_C) = 1$. Hence, as $s_C \notin A'$, $\prob_R(t)(A') = 0$ and 
		\begin{align*}
			\prob_R(s)(A') = \prob_R(s)(A) = \mathrm{Pr}_s(C \Until A) \leq \mathrm{Pr}_s(C \Until (S \setminus C)) = 1 - \mathrm{Pr}_s(\Box C) \leq \varepsilon,
		\end{align*}
		which implies $\vert \prob_R(s)(A') - \prob_R(t)(A') \vert = \vert \prob_R(s)(A') \vert \leq \varepsilon$.
		
		\noindent
		\textbf{Case 2: $C' \subseteq A'$.} 
		By \Cref{Lemma: Properties Endless Stuttering} we know that, as $C \in div_R$, both $\mathrm{Pr}_s(\Box C) \geq 1 - \varepsilon$ and $\mathrm{Pr}_t(\Box C) \geq 1- \varepsilon$, so it follows that $\prob_R(s)(s_C) \geq 1- \varepsilon$ and $\prob_R(t)(s_C) \geq 1 - \varepsilon$. Hence, it holds for $u \in \{s,t\}$ that
		\begin{align*}
			1-\varepsilon \leq \prob_R(u)(s_C) \overset{s_C \in A'}{\leq} \prob_R(u)(A') \leq 1
		\end{align*}
		and thus $\prob_R(s)(A') - \prob_R(t)(A') \leq \varepsilon \text{ and } \prob_R(s)(A') - \prob_R(t)(A') \geq - \varepsilon$ or, equivalently, $\vert \prob_R(s)(A') - \prob_R(t)(A') \vert \leq \varepsilon$.
		Therefore, $R^b$ is indeed an $\varepsilon$-APB.
		
		\smallskip
		For the other direction, assume that $R^b$ is a transitive $\varepsilon$-bisimulation on $\mathcal{M}_R$, and let $(s,t) \in R$. Furthermore, let $C = [s]_R = [t]_R$ and $B \subseteq S$ be $R$-closed with $B \neq C$. In particular, $B$ is a disjoint union of $R$ equivalence classes $C_1, \dots, C_n$, i.e., $B = C_1 \uplus \dots \uplus C_n$. We can w.l.o.g. assume that $C \neq C_i$ for all $i$, as otherwise $\vert \mathrm{Pr}_s(C \Until B) - \mathrm{Pr}_t(C \Until B) \vert = 0$. But then it follows that $\mathrm{Pr}_s(C \Until B) = \prob_R(s)(B)$ and $\mathrm{Pr}_t(C \Until B) = \prob_R(s)(B)$ as $B \subseteq S \setminus C$. Moreover, since $\prob_R(s)(s_D) = 0$ for all $D \in \xfrac{S}{R}$ with $D \neq C$, we have $\mathrm{Pr}_s(C \Until B) = \prob_R(s)(B_R)$ and $\mathrm{Pr}_t(C \Until B) = \prob_R(s)(B_R)$, where $B_R = B \cup \bigcup_{i = 1, \dots, n} s_{C_i}$ is $R^b$-closed. Since $R^b$ is a transitive $\varepsilon$-bisimulation on $\mathcal{M}_R$, which by \Cref{Lemma: Epsilon-Bisim = Epsilon-APB in equivalence case} is the case iff it is an $\varepsilon$-APB, it follows that 
		\begin{align*}
			\vert \mathrm{Pr}_s(C \Until B) - \mathrm{Pr}_t(C \Until B) \vert = \vert \prob_R(s)(B_R) - \prob_R(t)(B_R) \vert \leq \varepsilon
		\end{align*}
		which, as $R$ only relates states with the same label, proves the claim.
	\end{proof}
	
	\BoundWeakBisim*
	\begin{proof}
		Let $\mathcal{M}$ be a finite LMC. Let $\mathcal{L} = \{b \in 2^{AP} \mid \exists \, s \in S \colon \mathrm{Pr}_s(\Box b) > 0\}$ be the set of ``divergent labels'' in $\mathcal{M}$. 
		As a first step, we construct from $\mathcal{M}$ a transformed LMC $\mathcal{M}^w = (S^w, \prob^w, s_{init}, l^w)$ with 
		\begin{itemize}
			\item $S^w = S \cup \{s_b \mid b \in \mathcal{L}\}$ where the $s_b$ are fresh, pairwise different states
			\item $l^w(s) = l(s)$ if $s \in S$ and $l(s_b) = b$ for all $b \in \mathcal{L}$
			\item For $s, t \in S^w$ the distribution $\prob^w(s)$ is defined by
			\begin{align*}
				\prob^w(s)(t) = \begin{cases}
					\mathrm{Pr}_s(L(s) \Until t), & \text{if } s,t \in S \text{ and} l(s) \neq l(t) \\
					\mathrm{Pr}_s(\Box b), & \text{if } s\neq s_b, t = s_b, b = l(s) \text{ and} b \in \mathcal{L} \\
					1, &\text{if} s = t = s_b \text{ for some } b \in \mathcal{L}\\
					0, &\text{otherwise}
				\end{cases}.
			\end{align*}
		\end{itemize}
	
		Then $\mathcal{M}^w$ is again a finite LMC, and it is clear that the transformation from $\mathcal{M}$ to $\mathcal{M}^w$ preserves the probabilities of any given state $s \in S$ to reach a differently labeled state $t \in S$. 
		
		Now let, for a given $\varepsilon$-bisimulation $R$ on $\mathcal{M}$, $R^w$ be the finest reflexive and symmetric relation on $S^w$ with $R \subseteq R^w$ and $(s, s_b) \in R^w$ iff $s \in S$, $b = l(s)$ and $\mathrm{Pr}_s(\Box b) \geq 1 - \varepsilon$. As $R \subseteq R^w$, it follows for all $(s,t) \in R$ that $(s,t) \in R^w$. 
		
		The main part of the proof is now to show that $R^w$ is actually an $\varepsilon$-bisimulation on $\mathcal{M}^w$. To this end, let $(s,t) \in R^w$, $b = l(s) = l(t)$ and $A \subseteq S^w$. We must show that 
		\begin{align*}
			\prob^w(s)(A) \leq \prob^w(t)(R^w(A)) + \varepsilon
		\end{align*}
		which we do by a case distinction on $s$ and $t$.
		
		\noindent
		\textbf{Case 1: $s = s_b$.}
		As $(s,t) = (s_b, t)\in R^w$ it holds by definition of $R^w$ that $\mathrm{Pr}_t(\Box b) \geq 1 - \varepsilon$. If now $s_b \in A$ then 
		\begin{align*}
			\prob^w(s)(A) = \prob^w(s_b)(A) = 1 = 1 - \varepsilon + \varepsilon \leq \prob^w(t)(s_B) + \varepsilon \overset{s_b \in A}{\leq} \prob^w(t)(R^w(A)) + \varepsilon, 
		\end{align*}
		while if $s_B \notin A$ we have 
		\begin{align*}
			\prob^w(s)(A) = \prob^w(s_b)(A) = 0 \leq  \varepsilon \leq \prob^w(t)(R^w(A)) +  \varepsilon.
		\end{align*}
		
		\noindent
		\textbf{Case 2: $t = s_b$.}
		Again, as $(s,t) = (s, s_b) \in R^w$ we have $\mathrm{Pr}_s(\Box b) \geq 1 - \varepsilon$. If $s_b \in R^w(A)$ then $\prob^w(t)(R^w(A)) \geq \prob^w(t)(s_b)  = 1$, so in particular 
		\begin{align*}
			\prob^w(t)(R^w(A)) +  \varepsilon \geq 1 +  \varepsilon \geq 1 \geq \prob^w(s)(A). 
		\end{align*}
		Otherwise, if $s_b \notin R^w(A)$, then also $s_b \notin A$. Thus, $\prob^w(t)(R^w(A)) = \prob^w(t)(A) = 0$ and hence 
		\begin{align*}
			\prob^w(s)(A) \leq \prob^w(s)(S \setminus \{s_b\}) = 1 - \underbrace{\prob^w(s)(\{s_b\})}_{\geq 1 - \varepsilon \text{ as} (s, s_b) \in R^w} \leq \varepsilon \leq \prob^w(t)(R^w(A)) + \varepsilon.
		\end{align*}
		
		\noindent
		\textbf{Case 3: $s,t \in S$ and $s_b \notin A$.}
		Be definition, neither $s$ nor $t$ can transition to any state $s_{b'} \in S^w \setminus S$ with $s_b \neq s_b'$. Thus, as $s_b \notin A$ by assumption, we get 
		\begin{align}
			\prob^w(s)(A) = \prob^w(s)(A \cap S) = \sum_{\substack{q \in A \cap S \\ l(q) \neq b}} \mathrm{Pr}_s(b \Until \{q\})  = \mathrm{Pr}_s(b \Until ((A \cap S) \setminus L(s))). \label{M*-eq4}
		\end{align}
		Next, we show that \begin{align}
			R((A \cap S) \setminus L(t)) = R(A \cap S) \setminus L(t). \label{M*-eq5}
		\end{align}
		To see this, let $q \in R((A \cap S) \setminus L(t))$. Then there is a $p \in (A \cap S) \setminus L(t)$ with $(p, q) \in R$. As $R$ only relates states with the same label, $q \notin L(t)$. Thus, as $p \in (A \cap S)$, we get $q \in R(A \cap S)$, and as $q \notin L(t)$ it follows that $q \in R(A \cap S) \setminus L(t)$. Hence, $R((A \cap S) \setminus L(t)) \subseteq R(A \cap S) \setminus L(t)$. Regarding the other direction, let $q \in R(A \cap S) \setminus L(t)$. Then $l(q) \neq l(t)$ and there is a $p \in A \cap S$ with $(p, q) \in R$. Again, as $R$ only relates states with the same label, we have $l(p) = l(q)$ and, in particular, $l(p) \neq l(t)$, so $p \in (A \cap S) \setminus L(t)$. Therefore $q \in R((A \cap S) \setminus L(t))$, yielding $R(A \cap S) \setminus L(t) \subseteq R((A \cap S) \setminus L(t))$. 
		
		Now, as $R \subseteq R^w$ by definition and $A \subseteq A \cap S$, it follows that 
		\begin{align}
			\prob^w(t)(R^w(A)) &\geq \prob^w(t)(R^w(A \cap S)) \geq \prob^w(t)(R(A \cap S))  = \sum_{\substack{q \in R(A \cap S) \\ L(q) \neq b}} \mathrm{Pr}_t(b \Until \{q\}) \nonumber \\&= \mathrm{Pr}_t(b \Until (R(A \cap S) \setminus L(t))) \overset{(\ref{M*-eq5})}{=} \mathrm{Pr}_t(b \Until R((A \cap S) \setminus L(t))). \label{M*-eq6}
		\end{align}
		By combining \Cref{M*-eq4,M*-eq6} we finally get
		\begin{align*}
			\prob^w(s)(A)& \overset{(\ref{M*-eq4})}{=} \mathrm{Pr}_s(b \Until ((A \cap S) \setminus L(s))) = \mathrm{Pr}_s(b \Until ((A \cap S) \setminus L(t))) \\
			&\overset{(s,t) \in R}{\leq} \mathrm{Pr}_t(b \Until R((A \cap S) \setminus L(t))) + \varepsilon \overset{(\ref{M*-eq6})}{\leq} \prob^w(t)(R^w(A)) + \varepsilon.
		\end{align*}
		
		\noindent
		\textbf{Case 4: $s,t \in S$ and $s_b \in A$.}
		First, we show that
		\begin{align}
			R^w(S^w \setminus R^w(A)) \subseteq S^w \setminus A. \label{M*-eq2}
		\end{align}
		To see this, let $q \in R^w(S^w \setminus R^w(A))$ which means that there is $p \in S^w \setminus R^w(A)$ with $(p,q) \in R^w$. But then we have $p \notin R^w(A)$ and hence it must hold that $q \notin A$. Thus, $q \in S^w \setminus A$. 
		
		Next, we argue that
		\begin{align}
			\prob^w(t)(S^w \setminus R^w(A)) \le \prob^w(s)(R^w(S^w \setminus R^w(A))) + \varepsilon. \label{M*-eq3}
		\end{align}
		This follows as in Case 3 by swapping the roles of states $s$ and $t$ and by using $A' = S^w \setminus R^w(A)$. In particular we have $s_b \notin A'$ since $s_b \in A \subseteq R^w(A)$.
		
		Together, this yields
		\begin{align*}
			\prob^w(s)(A) 
			~&=~  1- \prob^w(s)(S^w \setminus A)\\
			~&\overset{(\ref{M*-eq2})}{\leq}~ 1 - \prob^w(s)(R^w(S^w \setminus R^w(A)))\\
			~&\overset{(\ref{M*-eq3})}{\leq}~ 1 - \Big(\prob^w(t)(S^w \setminus R^w(A)) - \varepsilon\Big)\\
			~&=~ \Big(1 - \prob^w(t)(S^w \setminus R^w(A))\Big) + \varepsilon\\
			~&=~ \prob^w(t)( R^w(A)) + \varepsilon.
		\end{align*}
	
		All in all, this shows that $R^w$ is actually an $\varepsilon$-bisimulation on $\mathcal{M}^w$. The claim now follows from \Cref{thm:unbounded_reach} and the fact that the transformation from $\mathcal{M}$ to $\mathcal{M}^w$ preserves the probabilities for any state to reach a differently labeled state.
	\end{proof}
	
	\ThmNPCompleteBranching*
	\begin{proof}
		\smallskip 
		\noindent
		\textbf{\textsf{NP}-membership: } 
		The following nondeterministic, polynomial time guess-and-check algorithm computes a branching $\varepsilon$-bisimulation $R$ on $\mathcal{M}$ with $(s,t) \in R$: 
		\begin{enumerate}
			\item Nondeterministically guess a partition $X$ of $S$ that only relates states with the same label and such that there is a block $B \in X$ with $s,t \in B$. 
			\item Check if, for the induced equivalence $R$ with $X = \xfrac{S}{R}$, the equivalence $R^b$ from \Cref{Lemma: Derivation of Epsilon-Bisimulation in MR} is a transitive $\varepsilon$-bisimulation in the LMC $\mathcal{M}_R$ defined as in \Cref{Definition: MR}, by deciding for all $(u,v) \in R^b$ if $\prob_R(u)(A(u,v)) \leq \prob_R(v)(A(u,v)) + \varepsilon$, where $A$ is the union of all $R^b$-equivalence classes $B \in X$ with $\prob_R(u)(B) > \prob_R(v)(B)$. 
		\end{enumerate}
		
		If $s \approx_\varepsilon^b t$, the algorithm finds a solution as the partition induced by $\approx_\varepsilon^b$ satisfies all the requirements for $X$ and $R$, respectively. On the other hand, any solution $X$ of the algorithm induces a transitive $\varepsilon$-bisimulation $R^b$ on $\mathcal{M}_R$ that contains $(s,t)$, so $s \approx_\varepsilon^b t$ follows by \Cref{Lemma: Derivation of Epsilon-Bisimulation in MR}. 
		
		\smallskip 
		\noindent
		\textbf{\textsf{NP}-hardness: } The hardness follows by polynomial reduction from the decision problem if, for given $s,t \in S$ and $\varepsilon \in (0,1]$, $s \sim_\varepsilon^* t$, which is \textsf{NP}-complete by (the proof of) \Cref{Theorem: NP-completeness of common epsilon-quotient}. Given LMC $\mathcal{M}$, we construct in polynomial time a new LMC $\mathcal{N} = (S', \prob', s_{init}, l')$ as follows: 
		\begin{itemize}
			\item $S' = S \uplus \{s' \mid s \in S\}$, i.e., we add for each state $s \in S$ a fresh duplicate $s'$ to $S'$
			\item $\prob'(s)(s') = 1$ and $\prob'(s')(t) = \prob(s)(t)$ for all $s,t$
			\item $l'(s) = l(s)$ if $s \in S$ and $l'(s') = (l(s))' = l(s)'$ if $s \in S' \setminus S$, i.e., we duplicate the labels in $\mathcal{M}$ and label each state in $S' \setminus S$ by a primed version of the label of its corresponding state in $S$.
		\end{itemize}
		By construction, there are no stutter steps in $\mathcal{N}$, i.e., $\prob'(q)(L(q)) = 0$ for all $q \in S'$. Using \Cref{Lemma: No Stuter Steps makes notions coincide} it follows that ${\sim_\varepsilon^*} = {\approx_\varepsilon^b}$ in $\mathcal{N}$. Furthermore, there is a $1$-to-$1$-correspondence between $\varepsilon$-bisimulations in $\mathcal{M}$ and $\mathcal{N}$: from a given $\varepsilon$-bisimulation on $\mathcal{M}$ we can construct an $\varepsilon$-bisimulation on $\mathcal{N}$ by adding the primed version of all pairs of related states to the relation, and we obtain an $\varepsilon$-bisimulation on $\mathcal{M}$ from one on $\mathcal{N}$ by removing all of these pairs. All in all, this shows that $s \sim_\varepsilon^* t$ in $\mathcal{M}$ iff $s \sim_\varepsilon^* t$ in $\mathcal{N}$ iff $s \approx_\varepsilon^b t$ in $\mathcal{N}$. 
	\end{proof}
\end{document}